\documentclass[10pt,letterpaper]{article}

\usepackage[letterpaper,margin=1in]{geometry}
\usepackage[final]{pdfpages}

\usepackage{amssymb}
\usepackage{amsmath}
\usepackage{amsfonts}
\usepackage{amsthm}
\usepackage{bm} 
\usepackage{url}
\usepackage{hyperref}
\setlength{\marginparwidth}{2cm} 
\usepackage[colorinlistoftodos]{todonotes}
\usepackage{relsize}

\usepackage{xspace}
\usepackage{algorithmicx}
\usepackage{algorithm}
\usepackage{algpseudocode}
\usepackage[shortlabels]{enumitem}

\usepackage{colortbl}

\usepackage[para,online,flushleft]{threeparttable}
\usepackage{multirow}

\usepackage{thmtools}
\usepackage{thm-restate}

\usepackage{tikz}
\usetikzlibrary{calc}
\usetikzlibrary{shapes,decorations.markings,arrows.meta,fit}

\definecolor{lgray}{rgb}{.6, .6, .6}
\definecolor{llgray}{rgb}{.8, .8, .8}

\newcommand{\norm}[1]{\|#1\|}
\newcommand{\set}[1]{\{#1\}}                    
\newcommand{\setof}[2]{\{{#1}\mid{#2}\}}        
\newcommand{\dom}{\textsf{Dom}}

\newcommand{\degree}{\text{\sf deg}}

\newcommand{\OUT}{\textsf{OUT}}


\newcommand{\xiao}[1]{\mycomment{\todo[inline,color=cyan]{\textsf{#1} \hfill \textsc{--Xiao.}}}}


\newcommand{\mycomment}[1]{#1}


\newcommand{\shortorfull}[2]{#2}



\newcommand{\calS}{\mathcal S}

\newcommand{\calA}{\mathcal A}
\newcommand{\calB}{\mathcal B}
\newcommand{\calH}{\mathcal H}
\newcommand{\calV}{\mathcal V}
\newcommand{\calE}{\mathcal E}

\newcommand{\calF}{\mathcal F}
\newcommand{\calG}{\mathcal G}
\newcommand{\calT}{\mathcal T}

\newcommand{\calI}{\mathcal I}
\newcommand{\calP}{\mathcal P}


\newtheorem{theorem}{Theorem}[section]
\newtheorem{lemma}[theorem]{Lemma}
\newtheorem{proposition}[theorem]{Proposition}

\newtheorem{claim}[theorem]{Claim}

\theoremstyle{definition}              
\newtheorem{definition}[theorem]{Definition}
\newtheorem{example}[theorem]{Example}
\newtheorem{remark}[theorem]{Remark}


\newcommand{\defeq}{\stackrel{\mathrm{def}}{=}}

\newcommand{\R}{\mathbb R} 

\newcommand{\cd}{\text{ :- }}

\newcommand{\bigjoin}{\mathlarger{\mathlarger{\mathlarger{\Join}}}}

\newcommand{\inn}{\text{\sf in}}
\newcommand{\outt}{\text{\sf out}}
\newcommand{\out}{\textsf{OUT}}
\newcommand{\vars}{\text{\sf vars}}
\newcommand{\nodes}{\text{\sf nodes}}

\newcommand{\atoms}{\mathsf{atoms}}

\newcommand{\opt}{\mathsf{opt}}

\newcommand{\true}{\text{\sf true}\xspace}
\newcommand{\false}{\text{\sf false}\xspace}

\newcommand{\panda}{\mathsf{PANDA}\xspace}
\newcommand{\pandaexpress}{\textsf{PANDAExpress}\xspace}
\newcommand{\opandaexpress}{\textsf{$\omega$-PANDAExpress}\xspace}
\newcommand{\obcqsolver}{\textsf{$\omega$-BCQSolver}\xspace}

\newcommand{\fhtw}{\textsf{fhtw}}

\newcommand{\subw}{\textsf{subw}}

\newcommand{\osubw}{\textsf{$\omega$-subw}}

\newcommand{\ed}{\mathsf{ED}}

\newcommand{\mm}{\mathsf{MM}}

\newcommand{\emm}{\mathsf{EMM}}

\newcommand{\args}{\mathsf{args}}

\newcommand{\ov}{\overline}

\newcommand{\squareOmega}{
    \omega^{\square}
}
\newcommand{\rectOmega}{
    \ov{\omega}
}
\newcommand{\squareC}{
    c^{\square}
}
\newcommand{\rectC}{
    \ov{c}
}

\newcommand{\rectP}{
    \ov{P}
}

\newcommand{\applystep}{\textsf{apply-step}\xspace}
\newcommand{\resetineq}{\textsf{reset}\xspace}
\newcommand{\polylog}{\text{\sf polylog}}
\newcommand{\fourcycle}{Q_{\square}}
\newcommand{\fourcycleH}{\calH_{\square}}

\newcommand{\positiveterm}[1]{\fcolorbox{black}{red!35}{$#1$}}
\newcommand{\negativeterm}[1]{\colorbox{blue!35}{$#1$}}

\colorlet{lightergray}{gray!20!white}

\colorlet{darkgreen}{green!40!black}

\newcommand{\algocomment}[1]{\algorithmiccomment{{\color{blue!70}\em #1}}}

\allowdisplaybreaks

\begin{document}

\title{Fast Matrix Multiplication meets the Submodular Width}

\newcommand{\myemail}[1]{\small\texttt{#1}}

\author{
  {Mahmoud Abo Khamis}\\
  University of Oxford\\RelationalAI\\
  \myemail{mahmoud.abokhamis@relational.ai}
  \and
  {Xiao Hu}\\
  University of Waterloo\\
  \myemail{xiaohu@uwaterloo.ca}
  \and
  {Dan Suciu}\\
  University of Washington\\
  \myemail{suciu@cs.washington.edu}
  \thanks{This work was partially supported by NSF-BSF 2109922, NSF-IIS 2314527, NSF-SHF 2312195,
    and by the Natural Sciences and Engineering Research Council of Canada -- Discovery Grant,
    and was initiated while the authors participated in the
  Fall 2023 {\em Simons Program on Logic and Algorithms in Databases and AI.}}
}

\date{}

\maketitle

\begin{abstract}
One fundamental question in database theory is the following:
Given a Boolean Conjunctive Query (BCQ) $Q$,
what is the best complexity for computing the answer to $Q$ in terms of the input database size $N$?
When restricted to the class of combinatorial algorithms, the best known complexity for any query
$Q$ is captured by the {\em submodular width} of $Q$~\cite{DBLP:journals/jacm/Marx13,DBLP:conf/pods/Khamis0S17,theoretics:13722}.
However, beyond combinatorial algorithms, certain queries are known to admit faster algorithms
that often involve a clever combination of fast matrix multiplication and data partitioning.
Nevertheless, there is no systematic way to derive and analyze the complexity of such algorithms
for arbitrary queries $Q$.

In this work, we introduce a general framework that captures the best complexity for answering
any BCQ $Q$ using matrix multiplication.
Our framework unifies both combinatorial and non-combinatorial techniques under the umbrella of information theory.
It generalizes the notion of submodular width to a new stronger notion called the {\em $\omega$-submodular width} that naturally incorporates the power of fast matrix multiplication.
We describe a matching algorithm that computes the answer to any query $Q$ in time corresponding to the $\omega$-submodular width of $Q$.
We show that our framework recovers the best known complexities for Boolean queries that have been studied in the literature, to the best of our knowledge, and also discovers new algorithms for some classes of queries that improve upon the best known complexities.
\end{abstract}

\section{Introduction}

\label{sec:intro}

We focus on the problem of answering a {\em Boolean Conjunctive Query} (BCQ) $Q$. In particular,
we have a set $\vars(Q)$ of variables (or attributes)  and a set $\atoms(Q)$ of relations where each relation $R(\bm X)\in\atoms(Q)$ is over a variable set $\bm X \subseteq \vars(Q)$.
In particular, each relation $R(\bm X)$ is a list of satisfying assignments to the variables $\bm X$,
and the query $Q$ asks whether there exists an assignment to all variables $\vars(Q)$ that simultaneously satisfies all relations in $\atoms(Q)$:
\begin{align}
    Q() \cd \bigwedge_{R(\bm X) \in \atoms(Q)} R(\bm X)
    \label{eq:bcq}
\end{align}
We assume the query to be fixed, hence its size is a constant, and we measure the runtime in terms of the total size of the input relations, denoted by $N$, i.e., we use~{\em data complexity}.
For brevity, throughout the paper, we refer to a Boolean conjunctive query as just {\em ``query''}.

Using only combinatorial algorithms, the best known complexity for any query $Q$ is given by the {\em submodular width}~\cite{DBLP:journals/jacm/Marx13,DBLP:conf/pods/Khamis0S17,theoretics:13722}.
However, when fast matrix multiplication is allowed, some isolated queries admit faster algorithms,
but there is no general framework to derive such algorithms for any query. In this paper, we introduce such a framework that naturally unifies both combinatorial and non-combinatorial techniques under the umbrella of {\em information theory}. In particular, we generalize the submodular width to incorporate matrix multiplication
and develop a matching algorithm. We show that our general algorithm matches or improves upon the best known custom algorithms for queries that have been studied in the literature.


A preliminary, short version of this paper appeared in~\cite{DBLP:journals/pacmmod/KhamisHS25}.  That version described the algorithm only at an intuitive level.  The current paper contains the full details of the algorithm, proves its runtime, and removes a polylogarithmic factor from the runtime by incorporating new techniques described in~\cite{panda-express}.

\subsection{Background}
\subsubsection{Combinatorial Join Algorithms}
We start with some background on combinatorial algorithms for queries.
When restricted to the class of combinatorial algorithms, there are only three basic techniques that are sufficient to recover the best-known complexity over this class of algorithms for any query $Q$:
\begin{itemize}[leftmargin=*]
    \item {\em For-loops}: Worst-case optimal join (WCOJ) algorithms~\cite{10.1145/3196959.3196990,10.1145/3180143}, like GenericJoin~\cite{DBLP:journals/sigmod/NgoRR13}
    or LeapFrog-TrieJoin~\cite{DBLP:conf/icdt/Veldhuizen14}, can be viewed as a sequence of nested for-loops, each of which iterates over possible assignments of one variable. For example, consider the Boolean triangle query:
    \begin{align}
        Q_\triangle() \cd R(X, Y) \wedge S(Y, Z) \wedge T(X, Z)
        \label{eq:intro:triangle}
    \end{align}
    One possible WCOJ algorithm consists of a for-loop over the intersection of $X$-values
    from $R$ and $T$, and for each such assignment, a for-loop over the intersection of $Y$-values from $R$ and $S$, and for each such assignment, a for-loop over the intersection of $Z$-values from $S$ and $T$.
    This simple algorithm gives a runtime of $O(N^{3/2})$ for this query
    and $O(N^{\rho^*(Q)})$ in general, where
    $\rho^*(Q)$ is the {\em fractional edge cover number} of $Q$, which is also an upper bound on the join size~\cite{DBLP:conf/soda/GroheM06,DBLP:journals/siamcomp/AtseriasGM13,DBLP:conf/focs/AtseriasGM08}.

    \item {\em Tree Decompositions (TDs)}: Sometimes, two nested loops can be (conditionally) independent of one other. For example,
    consider the query:
    \begin{align}
        \label{eq:intro:triangletriangle}
        Q_{\triangle\triangle}() \cd R(X, Y) \wedge S(Y, Z) \wedge T(X, Z) \wedge S'(Y, Z') \wedge T'(X, Z')
    \end{align}
    We could solve it in time $O(N^2)$ using 4 nested for-loops over $X$, $Y$, $Z$, and $Z'$ in order.
    However, note that once we fix the values of $X$ and $Y$, the two inner loops over $Z$ and $Z'$ become independent, hence can be unnested. One way to capture and utilize such conditional independence is using the framework of {\em tree decompositions}, which are a form of query plans. In this example, we could break down the query using a tree decomposition, or {\em TD} for short, consisting of two ``bags'' (i.e.~two subqueries in the query plan) where one bag corresponds to a triangle query over $\{X, Y, Z\}$
    while the other bag is a triangle query over $\{X, Y, Z'\}$, thus leading to a runtime
    of $O(N^{3/2})$. Using TDs (alongside for-loops), we can answer any
    query $Q$ in time $O(N^{\fhtw(Q)})$ where $\fhtw(Q)$ is the {\em fractional hypertree width} of $Q$~\cite{DBLP:conf/soda/GroheM06,DBLP:journals/talg/GroheM14,DBLP:conf/pods/KhamisNR16}.

    \item {\em Data Partitioning}:
    For-loops and TDs alone are not sufficient to unleash the full power
    of combinatorial join algorithms for all queries.
    Consider the following {\em 4-cycle query}:
    \begin{align}
        Q_\square() \cd R(X, Y) \wedge S(Y, Z) \wedge T(Z, W) \wedge U(W, X)
        \label{eq:intro:4cycle}
    \end{align}
    $Q_\square$ admits two TDs: one with two bags $\{X, Y, Z\}$ and $\{Z, W, X\}$, while the other with two bags $\{Y, Z, W\}$ and $\{W, X, Y\}$.
    However, using either TD alone, we cannot achieve a runtime better than $O(N^2)$. On the other hand, if we partition the input relations carefully into multiple parts, and select a proper TD for each part, we could achieve
    a runtime of $O(N^{3/2})$~\cite{DBLP:journals/algorithmica/AlonYZ97}.
    Partitioning is done based on the ``degrees'' of relations where we think of a (binary) relation like $R(X, Y)$ as a bipartite graph, and compute degrees of vertices
    accordingly.
    Taking the partitioning approach to the extreme\footnote{By that, we mean partitioning not just the input relations but also intermediate relations that result from the join of (input or intermediate) relations.
    This ``multi-level'' partitioning can indeed lower the complexity further than one-level partitioning, for some classes of queries.}, the $\panda$ algorithm~\cite{DBLP:conf/pods/Khamis0S17,theoretics:13722} can solve any query $Q$
    in time $O(N^{\subw(Q)}\cdot \polylog(N))$ where $\subw(Q)$ is the {\em submodular width} of $Q$~\cite{DBLP:journals/jacm/Marx13}.
    An improved algorithm, called $\pandaexpress$~\cite{panda-express}, removes the polylogarithmic factor, achieving a runtime of $O(N^{\subw(Q)}\cdot \log N)$.
\end{itemize}

The submodular width is a single definition that combines the above three techniques, and the corresponding
$\panda$ algorithm achieves (up to a polylogarithmic factor) the best-known complexity for any query $Q$ over combinatorial algorithms.
The submodular width and $\panda$ have a deep connection to information theory.
At a very high level, the submodular width of a given query $Q$ can be thought of as aiming to capture the best complexity
of answering $Q$ using the following meta-algorithm: Think of (binary) input relations, like $R(X, Y)$
above, as bipartite graphs, and partition each of them into (a polylogarithmic number of) parts that are almost ``uniform'',~i.e. where
all vertices within the same part have roughly the same degree.
Now each part of the data can be described by its combination of degrees, called ``degree configuration''.
For each degree configuration, we pick the best TD, go over its bags, and solve
the corresponding subqueries, using for-loops.\footnote{Note that this algorithm uses
the three techniques mentioned above in reverse order.}
Instead of reasoning about degree configurations directly, we model them as {\em edge-dominated
polymatroids}, or {\em ED-polymatroids} for short, which is an information-theoretic concept.
In particular, given a degree configuration, the corresponding polymatroid is roughly the entropy
of a certain probability distribution over the join of input relations
having the given degree configuration. (Formal Definition will be given in Sec.~\ref{sec:prelims}.)
Using polymatroids as a proxy to degree configurations allows us to transform a database problem
into an information-theoretic problem. To sum up, the submodular width has the following skeleton (the formal definition will be given later):
\begin{align}
    \subw(Q) \quad\defeq\quad
        \underbrace{\max_{\text{ED-polymatroid $\bm h$}}}_{\substack{\text{worst part}\\\text{of the data}}} \quad
        \underbrace{\min_{\text{tree decomposition $T$}}}_{\substack{\text{best query plan}\\\text{for this part}}} \quad
        \underbrace{\max_{\text{bag $B\in T$}}}_{\substack{\text{worst subquery}\\\text{in the plan}}} \quad
        \underbrace{h(B)}_{\substack{\text{subquery cost}\\\text{(using for-loops)}}}
    \label{eq:intro:subw}
\end{align}

\subsubsection{Beyond Combinatorial Join Algorithms}
For certain queries, there are known non-combinatorial algorithms with lower complexity
than the best known combinatorial algorithms.
In addition to the three techniques mentioned above,
these non-combinatorial algorithms typically involve a fourth technique, which is {\em matrix multiplication}, or MM for short.
For background, given two $n\times n$ matrices, there are algebraic algorithms
that can multiply them in time $o(n^3)$. The {\em matrix multiplication exponent} $\omega$ is the smallest exponent $\alpha$
where this multiplication can be done in time $O(n^{\alpha})$.
It was first discovered by Strassen~\cite{Strassen1969GaussianEI} that $\omega < 3$,
and to date, the best known upper bound for $\omega$ is $2.371552$~\cite{doi:10.1137/1.9781611977912.134}.
For certain queries, incorporating MM can lead to faster algorithms by first partitioning the data based on degrees, and then for each part,  choosing to either perform an MM or to use a traditional combinatorial algorithm (consisting of a TD and for-loops). Which choice is better depends on the degree configuration of the part.
For parts with low degrees, combinatorial algorithms are typically better, while parts with high degrees
tend to benefit from MM. The complexities of such algorithms typically involve
$\omega$. For example, for the triangle query $Q_\triangle$, there is a non-combinatorial algorithm
with complexity $O(N^{\frac{2\omega}{\omega + 1}})$~\cite{DBLP:journals/algorithmica/AlonYZ97}.\footnote{Note that for $\omega = 3$, this complexity collapses back to $O(N^{3/2})$ which is the same as the combinatorial algorithm.}
However, such non-combinatorial algorithms are only known for some isolated queries;
see Table~\ref{tab:intro:comparison} for a summary of known results.
There is no general framework 
for answering any query $Q$ using
MM.

\begin{table}[t]
        \centering
        \begin{tabular}{|c|c|c|}
        \hline
            \bf Query \cellcolor{lgray} & \bf \cellcolor{lgray} Best Prior Algorithm & \cellcolor{lgray} \bf Our Algorithm \\
            \hline\hline
            \cellcolor{llgray}{\em Arbitrary} query $Q$& \cellcolor{llgray}$O(N^{\subw(Q)}\cdot\log N)$~\cite{DBLP:conf/pods/Khamis0S17,theoretics:13722,panda-express} &\cellcolor{llgray}
            $O(N^{\osubw(Q)}\cdot \log^2 N)$\\
            \hline
            Triangle $Q_\triangle$~(Eq.~\eqref{eq:intro:triangle}) & $O\left(N^{\frac{2\omega}{\omega+1}}\right)$~\cite{DBLP:journals/algorithmica/AlonYZ97}  & same \\ \hline
            4-Clique & $O\left(N^\frac{\omega+1}{2}\right)$ \cite{dalirrooyfard2024towards} & same  \\
            \hline
            5-Clique & $O\left(N^{\frac{\omega}{2} + 1}\right)$ \cite{dalirrooyfard2024towards} & same \\
            \hline
             \multirow{3}{*}{$k$-Clique ($k\ge 6$)} & \multirow{3}{*}{$O\left(N^{\rectOmega(\frac{1}{2} \cdot \lceil \frac{k}{3}\rceil, \frac{1}{2} \cdot \lceil \frac{k-1}{3}\rceil, \frac{1}{2} \cdot \lfloor \frac{k}{3}\rfloor)}\right)$ \cite{eisenbrand2004complexity}} & \multirow{3}{*}{$O\left(N^{\frac{1}{2} \cdot \lceil \frac{k}{3}\rceil +  \frac{1}{2} \cdot \lceil \frac{k-1}{3}\rceil + \frac{1}{2} \cdot \lfloor \frac{k}{3}\rfloor \cdot (\omega-2)}\right)$} \\
             & & \multirow{3}{*}{(same for $\omega = 2$)} \\ 
             & & \\ \hline 
             4-Cycle $Q_\square$ (Eq.~\eqref{eq:intro:4cycle}) & $O\left(N^{\frac{4\omega-1}{2\omega+1}}\right)$~\cite{yuster2004detecting, dalirrooyfard2019graph} & $O\left(N^{\frac{4\omega-1}{2\omega+1}}\cdot\log^2 N\right)$ \\ \hline
             $k$-Cycle & $O\left(N^{\rectC_k}\cdot{\color{red}\polylog(N)}\right)$ \cite{yuster2004detecting, dalirrooyfard2019graph} & \cellcolor{yellow}$O\left(N^{\squareC_k}\cdot {\color{darkgreen}\log^2 N}\right)$ \\[-\arrayrulewidth]
             & & \cellcolor{yellow}($\rectC_k = \squareC_k$ for $\omega = 2$)\\\hline
             $k$-Pyramid (Eq.~\eqref{eq:body:k-pyramid}) & $O\left(N^{\color{red}2-\frac{1}{k}}\cdot \log N\right)$ \cite{DBLP:conf/pods/Khamis0S17,theoretics:13722,panda-express} & \cellcolor{yellow}$O\left(N^{\color{darkgreen}2-\frac{2}{\omega(k-1)-k+3}}\cdot\log^2 N\right)$ \\ \hline
        \end{tabular}
        \caption{A summary of prior results and the corresponding results obtained by our framework.
        \colorbox{yellow}{Highlighted cells} show {\em improvements} of our results over the best known prior results.
        Our results assume $\omega$ to be rational, but still apply to irrational $\omega$ by using any rational upper bound on $\omega$ instead.
        The extra $\log^2 N$ factor in our general bound $O(N^{\osubw(Q)}\cdot \log^2 N)$ is not needed for every $Q$, e.g., $k$-Clique.
        The notation $\rectOmega(a,b,c)$ denotes the smallest exponent for multiplying two rectangular matrices of dimensions $n^a \times n^b$ and $n^b \times n^c$ within $O\left(n^{\rectOmega(a,b,c)}\right)$ time.
        In contrast, $\squareOmega(a,b,c)$ is the smallest upper bound on $\rectOmega(a,b,c)$
        that is obtained through {\em square} matrix multiplication. In particular, $\squareOmega(a, b, c) \defeq \max\{a + b+(\omega-2)c, a+(\omega-2)b + c, (\omega-2)a + b + c\}$; see Sec.~\ref{sec:prelims}.
        Obviously, $\rectOmega(a,b,c) \leq \squareOmega(a, b, c)$. Moreover, this becomes an equality when $\omega = 2$ or when $a=b=c$; see Sec.~\ref{sec:prelims}.
        The symbol $\rectC_k$ is the best-known exponent for detecting cycles using rectangular matrix multiplication~\cite[Theorem 1.3]{dalirrooyfard2019graph},
        while $\squareC_k$ is the smallest upper bound on $\rectC_k$ that is obtained through {\em square} matrix multiplication; see \shortorfull{\cite{full-version}}{Eq.~\eqref{eq:rect-ck} and~\eqref{eq:square-ck}}.
        By definition, $\rectC_k \leq \squareC_k$, and this becomes an equality when $\omega = 2$.
        Moreover, this is an equality when $k$ is odd as well as $k = 4$ or $6$; see~\cite{dalirrooyfard2019graph}.}
        \label{tab:intro:comparison}
    \end{table}

\subsection{Our Contributions}
In this paper, we make the following contributions: (Recall that a ``query'' refers to a Boolean conjunctive query.)
\begin{itemize}[leftmargin=*]
    \item We introduce a generalization of the submodular width of a query $Q$, called the {\em $\omega$-submodular width of $Q$}, and denoted by $\osubw(Q)$, that
    naturally incorporates the power of matrix multiplication.
    The $\omega$-submodular width is always
    upper bounded by the submodular width, and becomes identical when $\omega = 3$.
    \item We introduce a general framework to compute any query $Q$ in time $O(N^{\osubw(Q)}\cdot \log^2 N)$ for any rational\footnote{If $\omega$ is irrational, we can use any rational upper bound on $\omega$ instead.} value of $\omega$.
    Our framework unifies known combinatorial and non-combinatorial techniques under the umbrella of {\em information theory}. (The extra $\log^2 N$ factor is not needed for every query, e.g. $k$-Clique in Table\ref{tab:intro:comparison}.)
  \item We show that for any query $Q$, our framework recovers the best known complexity for $Q$
    over {\em both} combinatorial {\em and} non-combinatorial algorithms. See Table~\ref{tab:intro:comparison}.
    \item We show that there are classes of queries where our framework discovers {\em new}
    algorithms with strictly lower complexity than the best-known ones. See \colorbox{yellow}{highlighted cells} in Table~\ref{tab:intro:comparison}.
\end{itemize}

The intuition behind the connection between our framework and information theory is as follows:
Information inequalities can be used to prove an upper bound on the size of every intermediate result in the query plan. 
At the same time, any algorithm, with a proven runtime, also leads to an upper bound on the size of all intermediate results, since these cannot be larger than the runtime of the algorithm.
Hence, there is a connection between information inequalities and algorithms, namely both imply upper bounds on all intermediate results of the algorithm.  What is non-obvious is how to convert the information inequalities into an algorithm: this is what the $\panda$ algorithm did~\cite{DBLP:conf/pods/Khamis0S17,theoretics:13722}, and what we extend in our paper to handle matrix multiplication. We give an overview of how to convert information inequalities into an algorithm in Section~\ref{sec:overview}.

\subsection{Paper Outline}
In Section~\ref{sec:overview}, we present a high-level overview of our framework and illustrate it
using the triangle query $Q_\triangle$.
In Section~\ref{sec:prelims}, we give formal definitions for background concepts.
We formally define the $\omega$-submodular width in Section~\ref{sec:definition}.
We show in Section~\ref{sec:upper} the $\omega$-submodular width for several classes of queries, with proofs deferred to Appendix~\ref{app:upper}.
In Sections~\ref{sec:computing-osubw}, we give an algorithm for computing the $\omega$-submodular for any query $Q$.
Section~\ref{sec:algo} is the technical core of the paper where we present our algorithm for computing the answer to any query $Q$ in $\omega$-subdmodular width time. We conclude in Section~\ref{sec:conclusion} with some extensions and open problems.

\section{Overview}
\label{sec:overview}
We give here a simplified overview of our framework.
We start with how to generalize the subdmodular width to incorporate
MM, and to that end, we need to answer two basic questions:
\begin{itemize}[leftmargin=*]
    \item Q1: How can we express the complexity of MM in terms of polymatroids?
    \item Q2: How can we develop a notion of query plans that naturally reconciles TDs with MM?
\end{itemize}

\subsection{Q1: Translating MM complexity into polymatroids}
Given a query $Q$ of the form~\eqref{eq:bcq}, a polymatroid is a function
$\bm h:2^{\vars(Q)}\to \R_+$ that satisfies the  {\em basic Shannon inequalities}.
By that, we mean
$\bm h$ is {\em monotone} (i.e.~$h(\bm X)\leq h(\bm Y)$ for all $\bm X\subseteq \bm Y\subseteq \vars(Q)$),
{\em submodular} (i.e.~$h(\bm X)+h(\bm Y) \geq h(\bm X \cup \bm Y) + h(\bm X \cap \bm Y)$ for all $\bm X, \bm Y\subseteq\vars(Q)$), and satisfies $h(\emptyset) = 0$.
Given a query $Q$, a polymatroid $\bm h$ is {\em edge-dominated} if $h(\bm X)\leq 1$ for all $R(\bm X)\in\atoms(Q)$.
The subdmodular width, given by Eq.~\eqref{eq:intro:subw}, uses edge-dominated polymatroids as a proxy for the degree configurations of different parts of the data.

Throughout the paper, we assume that $\omega$ is a fixed constant within the range $[2, 3]$.
Given two rectangular matrices of dimensions $n^a\times n^b$ and $n^b\times n^c$, we can multiply them by partitioning them into square blocks of dimensions $n^d\times n^d$ where $d \defeq \min(a, b, c)$
and then multiplying each pair of blocks using square matrix multiplication in time $n^{d\cdot \omega}$.
This is a folklore idea, e.g.~\cite{Pan1984HowToMultiplyMatricesFaster}.
It leads to an overall runtime of $n^{\squareOmega(a, b, c)}$ where $\squareOmega(a, b, c)$ is defined below and $\gamma \defeq \omega - 2$: (Proof is in Sec.~\ref{sec:prelims}.)
\begin{align}
    \squareOmega(a, b, c) \defeq \max\{a + b + \gamma \cdot c,\quad a + \gamma \cdot b + c,\quad \gamma \cdot a + b + c\}
    \label{eq:rect-mat-mult}
\end{align}

Now suppose we have two relations $R(X, Y)$ and $S(Y, Z)$, and we want to compute
$P(X, Z) \cd$ $R(X, Y) \wedge S(Y, Z)$, by viewing $R$ and $S$ as two matrices of dimensions
$n^a\times n^b$ and $n^b \times n^c$ respectively and multiplying them.
In the polymatroid world, we can think of $h(X)$, $h(Y)$, and $h(Z)$
as representing $a$, $b$, and $c$, respectively. Motivated by this, we define the following new information measure:
\begin{align}
    \mm(X; Y; Z) \defeq \max(h(X)+h(Y)+\gamma h(Z), \;\; h(X)+\gamma h(Y)+h(Z), \;\; \gamma h(X)+h(Y)+h(Z))
    \label{eq:intro:mm}
\end{align}
And now, we can use $\mm(X; Y; Z)$ to capture the complexity of the above MM, on log-scale.
We also extend the $\mm$ notation to allow treating multiple variables as a single dimension.
For example, given the query $Q_{\triangle\triangle}$ from Eq.~\eqref{eq:intro:triangletriangle}, we use $\mm(X;Y;ZZ')$
to refer to the cost of MM where we treat $Z$ and $Z'$ as a single dimension.
In particular, we view $R(X, Y)$ as one matrix and $S(Y, Z) \wedge S'(Y, Z')$ as another matrix, and multiply them to get $P(X, Z, Z')$.

\subsection{Q2: Query plans for MM}
\label{subsubsec:intro:emm}
{\em Variable Elimination}~\cite{10.1016/S0004-3702_99_00059-4,rina-dechter-pgms,Zhang1994ASA,10.5555/1622756.1622765} is a language for expressing query plans that is known to be equivalent to TDs~\cite{DBLP:conf/pods/KhamisNR16} for combinatorial join algorithms.
We show that variable elimination can be naturally extended to incorporate MM
in its query plans. For background, given a query $Q$, {\em variable elimination} refers to the process of picking an order $\bm\sigma$ of the variables $\vars(Q)$, known as {\em variable elimination order}, or {\em VEO} for short, and then going through the variables in order and ``eliminating'' them one-by-one.
Eliminating a variable $X$ from a query $Q$ means transforming $Q$ (along with the associated input data) into an equivalent query $Q'$ (with new input data)
that doesn't contain $X$, and this is done by removing all relations that contain $X$ and creating a new relation with all variables that co-occurred with $X$.
(Formal Definition will be given in Sec.~\ref{sec:prelims}.)
For example, given the 4-cycle query $Q_\square$ from Eq.~\eqref{eq:intro:4cycle}, we can eliminate $Y$ by computing a new relation
$P(X, Z) \cd R(X, Y) \wedge S(Y, Z)$ and now the remaining query becomes a triangle query:
\begin{align*}
    Q_\square'() \cd P(X, Z) \wedge T(Z, W) \wedge U(W, X)
\end{align*}
We use $U_Y^{\bm\sigma}$ to refer to the set of variables involved in the subquery that eliminates $Y$, which is $\{X, Y, Z\}$ in this example.
Every VEO is equivalent to a TD, and vice versa~\cite{DBLP:conf/pods/KhamisNR16}.
In the 4-cycle example, any VEO that eliminates either $X$ or $Z$ first is equivalent to the TD with bags $\set{Y,Z,W}$ and $\set{W, X, Y}$, while
any VEO that eliminates either $Y$ or $W$ first
is equivalent to the TD with bags $\{X, Y, Z\}$ and $\{Z, W, X\}$.

In the presence of MM, VEOs become more expressive. For example, the triangle
query $Q_\triangle$ has only one trivial TD with a single bag $\{X, Y, Z\}$.
Now, suppose we have a VEO that eliminates $Y$ first.
There are two different ways to eliminate $Y$:
\begin{itemize}[leftmargin=*]
    \item Either compute the full join combinatorially using for-loops, and then project $Y$ away. This computation
    costs $h(XYZ)$ on log-scale.
    \item Or view $R(X, Y)$ and $S(Y, Z)$ as matrices and multiply them to get $P(X, Z)$.
    This costs $\mm(X; Y; Z)$.
\end{itemize}
Alternatively, we could have eliminated either $X$ or $Z$ first.
In this simple example, there is only a single way to eliminate a variable, say $Y$, using MM, but in general, there could be several, and we can choose the best of them.
For example, in the query $Q_{\triangle\triangle}$ from Eq.~\eqref{eq:intro:triangletriangle},
$Y$ occurs in three relations, and we can arrange them into two matrices
in different ways. One way is to join $S(Y,Z)$ and $S'(Y,Z')$ into a single matrix $S''(Y,ZZ')$ and multiply
it with the matrix $R(X,Y)$ leading to a cost of $\mm(X;Y;ZZ')$.
Alternatively, we could have obtained costs $\mm(XZ;Y;Z')$ or $\mm(XZ';Y;Z)$,
and we will see later that there are even more options!~\footnote{In particular, later we will extend the notion of $\mm$ to allow for {\em group-by variables}, and we will also generalize the concept of VEOs to allow eliminating {\em multiple variables} at once.}
Given a VEO $\bm\sigma$ and a variable $Y$,
we use $\emm_Y^{\bm\sigma}$ to denote the minimum cost of eliminating $Y$ using MM.
In contrast, $h(U_Y^{\bm\sigma})$ is the cost of eliminating $Y$ using for-loops.
For example, in $Q_\triangle$, $\emm_Y^{\bm\sigma}=\mm(X; Y; Z)$ and
$h(U_Y^{\bm\sigma}) = h(XYZ)$,
whereas in $Q_{\triangle\triangle}$,
$\emm_Y^{\bm\sigma}=$ $\min(\mm(X;Y;ZZ'),\mm(XZ;Y;Z'),\mm(XZ';Y;Z),\ldots)$
and $h(U_Y^{\bm\sigma}) = h(XYZZ')$.

\subsection{Defining the $\omega$-submodular width}
Putting pieces together, we are now ready to define our notion of $\omega$-subdmodular width
of a query $Q$. We take the maximum over all ED-polymatroids $\bm h$,
and for each polymatroid, we take the minimum over all VEOs $\bm\sigma$.
For each $\bm\sigma$, we take the maximum elimination cost over all variables $X$,
where the elimination cost of $X$ is the minimum over all possible ways to eliminate $X$ using
either for-loops or MM:
\begin{align}
    \osubw(Q) \defeq
        \underbrace{\max_{\text{ED-polymatroid $\bm h$}}}_{\substack{\text{worst part}\\\text{of the data}}} \quad
        \underbrace{\min_{\text{VEO $\bm\sigma$}}}_{\substack{\text{best query plan}\\\text{for this part}}} \quad
        \underbrace{\max_{\text{variable $X$}}}_{\substack{\text{worst variable}\\\text{elimination cost}}}
        {\color{red}\min\bigl(}
        \underbrace{h(U_X^{\bm\sigma})}_{\substack{\text{cost of eliminating}\\\text{$X$ using for-loops}}}{\color{red},
        \quad
        \underbrace{\emm_X^{\bm\sigma}}_{\substack{\text{cost of eliminating}\\\text{$X$ using MM}}}
        \bigr)}
    \label{eq:intro:osubw}
\end{align}
To compare the above to the submodular width, we include below an alternative
definition of the submodular width that is equivalent to Eq.~\eqref{eq:intro:subw}.
The equivalence follows from the equivalence of VEOs and TDs~\cite{DBLP:conf/pods/KhamisNR16}:
\begin{align}
    \subw(Q) \defeq
        \underbrace{\max_{\text{ED-polymatroid $\bm h$}}}_{\substack{\text{worst part}\\\text{of the data}}} \quad
        \underbrace{\min_{\text{VEO $\bm\sigma$}}}_{\substack{\text{best query plan}\\\text{for this part}}} \quad
        \underbrace{\max_{\text{variable $X$}}}_{\substack{\text{worst variable}\\\text{elimination cost}}}\quad
        \underbrace{h(U_X^{\bm\sigma})}_{\substack{\text{cost of eliminating}\\\text{$X$ using for-loops}}}
    \label{eq:intro:subw:ve}
\end{align}
The only difference between Eq.~\eqref{eq:intro:osubw} and Eq.~\eqref{eq:intro:subw:ve} is the inclusion of $\color{red}\emm_X^{\bm\sigma}$ in the former.
This shows that $\osubw(Q)$ is always upper bounded by $\subw(Q)$. We show later that they become
identical when $\omega = 3$.

For example, $Q_\triangle$ has 6 different VEOs. For a fixed VEO, the cost of eliminating
the first variable dominates the other two, thus we ignore the two.
We have seen before that the cost of eliminating $Y$ is $\min(h(XYZ), \mm(X;Y;Z))$,
which also happens to be the cost of eliminating either $X$ or $Z$.\footnote{Note that $\mm(X;Y;Z)$ is symmetric.}
Hence, the $\omega$-subdmodular width becomes:
\begin{align}
    \osubw(Q_\triangle) = \max_{\text{ED-polymatroid $\bm h$}} {\color{red}\min\bigl(}h(XYZ){\color{red}, \mm(X;Y;Z)\bigr)}
    \label{eq:intro:osubw:triangle}
\end{align}

\subsection{Computing the $\omega$-submodular width}
\label{subsubsec:intro:computing-osubw}
Now that we have defined the $\omega$-submodular width, our next concern is how to compute it
for a given query $Q$. The $\omega$-submodular width is a deeply nested expression of min and max.
(Recall that $\emm_X^{\bm\sigma}$ is a minimum of potentially many terms of the form $\mm(X; Y; Z)$, each of which is a maximum of three terms in Eq.~\eqref{eq:intro:mm}.)
To compute $\osubw(Q)$, we first pull all $\max$ operators outside by distributing $\min$ over $\max$\footnote{Note that $\min(a, \max(b, c)) = \max(\min(a, b), \min(a, c))$}, and then swap the order of the $\max$ operators so that the max over $\bm h$ is the inner most max. Applying this to Eq.~\eqref{eq:intro:osubw:triangle}, we get:
\begin{align}
    \osubw(Q_\triangle) = \max\bigl(
        & \max_{\text{ED-polymatroid $\bm h$}}\min(h(XYZ), h(X) + h(Y) + \gamma h(Z)),\nonumber\\
        & \max_{\text{ED-polymatroid $\bm h$}}\min(h(XYZ), h(X) + \gamma h(Y) + h(Z)),\nonumber\\
        & \max_{\text{ED-polymatroid $\bm h$}}\min(h(XYZ), \gamma h(X) + h(Y) + h(Z))
        \bigr)
    \label{eq:intro:osubw:triangle:distributed}
\end{align}
%
%
Assume that $\gamma\defeq \omega - 2$ is fixed, and
consider the first term inside the outermost max above. We can turn this term into a linear program (LP)
by introducing a new variable $t$ and replacing the $\min$ operator with a max of $t$
subject to some upper bounds on $t$:
\begin{align}
    \max_{\substack{t \in \R\\ \text{ED-polymatroid $\bm h$}}}
    \left\{
        t \quad\mid\quad
        t \leq h(XYZ), \quad t \leq h(X) + h(Y) + \gamma h(Z)
    \right\}
    \label{eq:intro:inner-lp:triangle}
\end{align}
Recall that the polymatroid $\bm h$ is a function $\bm h:2^{\set{X,Y,Z}}\to \R_+$ that satisfies the basic Shannon inequalities, or, equivalently, a vector in $\R_+^{2^3}$ subject to certain linear constraints.  Thus,  Eq.~\eqref{eq:intro:inner-lp:triangle} is linear program, and we will denote by
$\opt$ its  optimal objective value.
We will show that $\opt = \frac{2\omega}{\omega+1}$.
Since the other two LPs are similar, it follows that $\osubw(Q_\triangle)=\frac{2\omega}{\omega+1}$.

First, we show that $\opt \geq \frac{2\omega}{\omega+1}$.
To that end, consider the following polymatroid (where $h(\emptyset) = 0)$:
\begin{align*}
    h(X) = h(Y) = h(Z) = \frac{2}{\omega+1},\quad\quad
    h(XY) = h(YZ) = h(XZ) = 1,\quad\quad
    h(XYZ) = \frac{2\omega}{\omega+1}.
\end{align*}
It can be verified that this is a valid polymatroid (for any $\omega \in [2, 3]$), it is edge-dominated, and forms a feasible (primal) solution to the LP (alongside $t = \frac{2\omega}{\omega+1}$), thus proving that $\opt \geq \frac{2\omega}{\omega+1}$.
In the next section, we will show a feasible dual solution that proves $\opt \leq \frac{2\omega}{\omega+1}$.

\subsection{Computing query answers in $\omega$-subdmodular width time}
\label{subsec:intro:algo}
We now give a simplified overview of our algorithm for computing the answer to a query $Q$ in time $O(N^{\osubw(Q)}\cdot \log^2 N)$.
The extra factor of $\log^2 N$ does not manifest in every query, and it doesn't manifest in the triangle query $Q_\triangle$, which we will use as an example.
It should be noted however that this simple example of $Q_\triangle$ is not sufficient to reveal the major technical challenges of
designing the general algorithm. Many of these challenges are unique to matrix multiplication,
and are not encountered in the original $\panda$ algorithm~\cite{DBLP:conf/pods/Khamis0S17,theoretics:13722}, or its improved version~\cite{panda-express}.
We explain our general algorithm in detail in Section~\ref{sec:algo}.

A {\em Shannon inequality} is an inequality that holds over all polymatroids $\bm h$.
The following Shannon inequality corresponds to a feasible dual solution to the LP from Eq.~\eqref{eq:intro:inner-lp:triangle}:
\begin{align}
    \omega \underbrace{h(XYZ)}_{\substack{\rotatebox{90}{$\leq$}\\t \\\text{\color{red}(for-loop cost)}}} +
    \underbrace{h(X) + h(Y) + \gamma h(Z)}_{\substack{\rotatebox{90}{$\leq$}\\t \\\text{\color{red}(one term of MM cost)}}}
    \quad\leq\quad
    2 \underbrace{h(XY)}_{\substack{\rotatebox{-90}{$\leq$}\\1\\\color{red}(R(X, Y))}} +
    (\omega-1) \underbrace{h(YZ)}_{\substack{\rotatebox{-90}{$\leq$}\\1\\\color{red}(S(Y, Z))}} +
    (\omega - 1) \underbrace{h(XZ)}_{\substack{\rotatebox{-90}{$\leq$}\\1\\\color{red}(T(X, Z))}}
    \label{eq:intro:shannon:triangle}
\end{align}
In particular, the above is a Shannon inequality because it is a sum of the following submodularities:
\begin{align*}
    h(XYZ) + h(X) &\leq h(XY) + h(XZ)\\
    h(XYZ) + h(Y) &\leq h(XY) + h(YZ)\\
    \gamma h(XYZ) + \gamma h(Z) &\leq \gamma h(XZ) + \gamma h(YZ)
\end{align*}
Since $\bm h$ is edge-dominated, each term on the RHS of Eq.~\eqref{eq:intro:shannon:triangle}
is upper bounded by 1, hence the RHS is $\leq 2 \omega$.
On the other hand, by Eq.~\eqref{eq:intro:inner-lp:triangle}, the LHS is at least
$(\omega + 1)t$. Therefore, Inequality~\eqref{eq:intro:shannon:triangle} implies $t \leq \frac{2\omega}{\omega + 1}$, hence $\opt \leq\frac{2\omega}{\omega + 1}$.
Each term on the RHS of Eq.~\eqref{eq:intro:shannon:triangle} corresponds to one input relation,
whereas each {\em group of terms} on the LHS corresponds to the cost of solving a subquery in the plan. In particular, the group $h(XYZ)$ corresponds to the cost of solving the query
using for-loops, whereas the group $h(X) + h(Y) + \gamma h(Z)$ corresponds to one of three terms
that capture the cost of solving the query using MM.

First, the algorithm constructs a {\em proof sequence} of the Shannon inequality~\eqref{eq:intro:shannon:triangle}, which is a step-by-step proof of the inequality that transforms the
RHS into the LHS.
Figure~\ref{fig:ps-algo:triangle} (left) shows the proof sequence for Eq.~\eqref{eq:intro:shannon:triangle}.
Then, the algorithm translates each proof step into a corresponding database operation.
In particular, initially each term on the RHS of Eq.~\eqref{eq:intro:shannon:triangle} corresponds to an input relation.
Each time we apply a proof step replacing some terms on the RHS with some other terms,
we simultaneously apply a database operation replacing the corresponding relations with
some new relations.
Figure~\ref{fig:ps-algo:triangle} (right) shows the corresponding database operations, which
together make the algorithm for answering $Q_\triangle$.
In this example, there are only two types of proof steps:
\begin{itemize}[leftmargin=*]
    \item {\em Decomposition Step} of the form $h(XY) \to h(X) + h(Y|X)$.
    Let $R(X, Y)$ be the relation corresponding to $h(XY)$.
    The corresponding database operation is to {\em partition} $R(X, Y)$ into two parts
    based on the {\em degree} of $X$, i.e.~the number of matching $Y$-values for a given $X$.
    In particular, $X$-values with degree $> \Delta \defeq N^{\frac{\omega-1}{\omega+1}}$
    go into in the ``heavy'' part $R_h(X)$, whereas the remaining $X$-values (along with their matching $Y$-values) go into the ``light'' part $R_\ell(X, Y)$.
    Note that $|R_h|$ cannot exceed $N/\Delta = N^{\frac{2}{\omega+1}}$.
    \item {\em Submodularity Step}\footnote{Note that $h(XZ) + h(Y|X) \geq h(XYZ)$ is just another form of the submodularity $h(XZ) + h(XY) \geq h(X) + h(XYZ)$, hence the name.} of the form $h(XZ) + h(Y|X) \to h(XYZ)$. The corresponding database operation is to {\em join} the two
    corresponding relations, $T(X, Z)\Join R_\ell(X, Y)$.
    Since $R_\ell$ is the light part, this join takes time $N\cdot N^{\frac{\omega-1}{\omega+1}}=$
    $N^{\frac{2\omega}{\omega+1}}$, as desired. The same goes for the other submodularity steps.
\end{itemize}
The three submodularity steps compute three relations $Q_{\ell,1}, Q_{\ell,2}, Q_{\ell,3}$
covering triangles $(X, Y, Z)$ where either $X$, $Y$, or $Z$ is light, and three unary relations $R_h(X), S_h(Y), T_h(Z)$ containing all heavy elements.
We use the latter  to compute the triangles where all three nodes are heavy.
For that, we use $R_h(X)$, $S_h(Y)$, and $R(X,Y)$ to form a dense matrix $M_1(X,Y)$, use $S_h(Y), T_h(Z)$, and $S(Y,Z)$ to form a matrix 
$M_2(Y, Z)$, then multiply them to get $M(X, Z)$.
Since $|R_h|, |S_h|, |T_h| \leq N^{\frac{2}{\omega+1}}$, this multiplication takes time
$N^{\frac{2\omega}{\omega+1}}$, as desired.
Finally, we join $M(X, Z)$ with $T(X, Z)$ to get $Q_h(X, Z)$.
There exists a triangle if and only if either one of $Q_{\ell,1}, Q_{\ell,2}, Q_{\ell,3}$ or $Q_h$ is non-empty.
In Section~\ref{subsubsec:algo:panda:ddr:example}, we explain the general principle that leads
to the above simple algorithm.

\begin{figure}
    \begin{tabular}{c c c | c c c}
        \multicolumn{3}{c}{Proof Sequence} & \multicolumn{3}{c}{Algorithm}\\
        $h(XY)$ & $\to$ & $h(X) + h(Y|X)$ &
            $R(X, Y)$ & $\xrightarrow{\text{partition}}$ & $R_h(X), R_\ell(X, Y)$\\
        $h(XZ) + h(Y|X)$ & $\to$ & $h(XYZ)$ &
            $T(X,Z) \Join R_\ell(X, Y)$ & $\to$ & $Q_{\ell,1}(X, Y, Z)$\\
        $h(YZ)$ & $\to$ & $h(Y) + h(Z|Y)$ &
            $S(Y, Z)$ & $\xrightarrow{\text{partition}}$ & $S_h(Y), S_\ell(Y, Z)$\\
        $h(XY) + h(Z|Y)$ & $\to$ & $h(XYZ)$ &
            $R(X,Y) \Join S_\ell(Y, Z)$ & $\to$ & $Q_{\ell,2}(X, Y, Z)$\\
        $\gamma h(XZ)$ & $\to$ & $\gamma h(Z) + \gamma h(X|Z)$ &
            $T(X, Z)$ & $\xrightarrow{\text{partition}}$ & $T_h(Z), T_\ell(Z, X)$\\
        $\gamma h(YZ) + \gamma h(X|Z)$ & $\to$ & $\gamma h(XYZ)$ &
            $S(Y, Z) \Join T_\ell(Z, X)$ & $\to$ & $Q_{\ell,3}(X, Y, Z)$\\
        \hline
        & & &  \multicolumn{3}{c}{$R_h(X) \Join S_h(Y) \Join R(X, Y) \to M_1(X, Y)$}\\
        & & &  \multicolumn{3}{c}{$S_h(Y) \Join T_h(Z) \Join S(Y, Z) \to M_2(Y, Z)$}\\
        & & &  \multicolumn{3}{c}{$M_1(X, Y) \times M_2(Y, Z) \xrightarrow{\text{MM}} M(X, Z)$}\\
        & & &  \multicolumn{3}{c}{$M(X, Z) \Join T(X, Z) \to Q_h(X, Z)$}
    \end{tabular}
    \caption{The proof sequence for the Shannon inequality~\eqref{eq:intro:shannon:triangle}
    along with the corresponding algorithm for $Q_\triangle$.}
    \label{fig:ps-algo:triangle}
\end{figure}

\section{Preliminaries}
\label{sec:prelims}

In this section, we present the formal definitions and notations for various background concepts
that were introduced informally in the introduction.
Given a number $k$, we use $[k]$ to denote the set $\{1, \ldots, k\}$.

\subsection{Hypergraphs}
A hypergraph $\calH$ is a pair $\calH = (\calV, \calE)$, where $\calV$ is a set of vertices, and $\calE \subseteq 2^{\calV}$ is a set of hyperedges. Each hyperedge $\bm Z \in \calE$
is a subset of $\calV$. We typically use $k$ to denote the number of vertices in $\calV$.
Given a hypergraph $\calH = (\calV, \calE)$ and a vertex $X \in \calV$, we define:
\begin{itemize}[leftmargin=*]
    \item $\partial_\calH(X)$ is the set of hyperedges that contain $X$, i.e.~
    $\partial_\calH(X) \defeq\setof{\bm Z \in \calE}{X \in \bm Z}$.
    \item $U_\calH(X)$ is the union of all hyperedges that contain $X$, i.e.~
    $U_\calH(X) \defeq\bigcup_{\bm Z \in \partial_\calH(X)} \bm Z$.
    \item $N_\calH(X)$ is the set of neighbors of $X$ (excluding $X$), i.e.~
    $N_\calH(X) \defeq U_\calH(X) \setminus \set{X}$.
\end{itemize}
When $\calH$ is clear from the context, we drop the subscript and simply write $\partial(X)$, $U(X)$, and $N(X)$.

\begin{example}
    Consider a hypergraph $\calH = (\calV, \calE)$ with vertices $\calV = \{A, B, C, D, E\}$
    and hyperedges $\calE = \{\{A, B, C\}, \{A, B, D\}, \{C, D, E\}\}$. Then:
    \begin{align*}
        \partial(A) = \{\{A, B, C\}, \{A, B, D\}\}, \quad\quad
        U(A) = \{A, B, C, D\}, \quad\quad
        N(A) = \{B, C, D\}.
    \end{align*}
\end{example}

Given a query $Q$ of the form~\eqref{eq:bcq}, the {\em hypergraph of $Q$} is a hypergraph
$\calH=(\calV, \calE)$ where $\calV \defeq\vars(Q)$ and $\calE \defeq \{\bm Z \mid R(\bm Z)\in\atoms(Q)\}$.
\underline{We often use a query $Q$ and its hypergraph $\calH$ interchangeably,} e.g. in the contexts
of tree decompositions, submodular width, etc.


\subsection{Tree Decompositions}
    Given a hypergraph $\calH = (\calV, \calE)$ (or a query $Q$ whose hypergraph is $\calH$), a {\em tree decomposition}, or {\em TD} for short, is a pair $(T, \chi)$, where $T$ is a tree, and $\chi: \nodes(T) \to 2^\calV$ is a map
    from the nodes of $T$ to subsets of $\calV$, that satisfies the following properties:
    \begin{itemize}[leftmargin=*]
        \item For every hyperedge $\bm Z \in \calE$, there is a node $t \in \nodes(T)$ such that
        $\bm Z \subseteq \chi(t)$.
        \item For every vertex $X \in \calV$, the set $\setof{t \in \nodes(T)}{X \in \chi(t)}$ forms a connected sub-tree of $T$.
    \end{itemize}
    Each set $\chi(t)$ is called a {\em bag} of the tree decomposition.
    We use $\calT(\calH)$ to denote the set of all tree decompositions of $\calH$.


    Given two sets of sets $\calA, \calB \subseteq 2^\calV$, we write $\calA \sqsubseteq \calB$ if $\forall \bm A \in \calA, \exists \bm B \in \calB: \bm A \subseteq \bm B$.  The relation $\sqsubseteq$ is a preorder, and we write $\calA \equiv \calB$ when $\calA \sqsubseteq \calB$ and $\calB \sqsubseteq \calA$.  We identify a tree decomposition with the set of its bags, and extend $\sqsubseteq$ to a preorder on tree decompositions: $(T_1,\chi_1) \sqsubseteq (T_2,\chi_2)$ if $\setof{\chi_1(t)}{t \in \nodes(T_1)}\sqsubseteq \setof{\chi_2(t)}{t \in \nodes(T_2)}$.  The {\em trivial} tree decomposition, which consists of a single bag containing all vertices, is a maximal element of the preorder.
    A tree decomposition $(T,\chi)$ is {\em redundant} if it contains two different bags $\chi(t_1) \subsetneq \chi(t_2)$.  It is well-known that every tree decomposition is equivalent to a unique non-redundant one, obtained by removing bags that are contained in other bags.

    

\begin{example}
    \label{ex:4cycle:td}
    Consider the following hypergraph $\calH = (\calV, \calE)$ that represents a 4-cycle:
    \begin{align}
        \calV = \{A, B, C, D\}, \quad\quad
        \calE = \{\{A, B\}, \{B, C\}, \{C, D\}, \{D, A\}\} \label{eq:4-cycle}
    \end{align}
    This hypergraph has the following two (non-trivial and non-redundant) tree decompositions:
    \begin{itemize}[leftmargin=*]
        \item A tree decomposition $(T_1, \chi_1)$ with two nodes $t_{11}$ and $t_{12}$ corresponding to two bags,
        $\chi_1(t_{11}) = \{A, B, C\}$ and $\chi_1(t_{12}) = \{C, D, A\}$.
        \item A tree decomposition $(T_2, \chi_2)$ with two nodes $t_{21}$ and $t_{22}$ corresponding to two bags,
        $\chi_2(t_{21}) = \{B, C, D\}$ and $\chi_2(t_{22}) = \{D, A, B\}$.
    \end{itemize}
\end{example}


\subsection{Variable Elimination}
    Let $\calH = (\calV, \calE)$ be a hypergraph, $k\defeq |\calV|$, and $\pi(\calV)$ denote the set of all permutations of $\calV$. 
Given a fixed permutation $\bm \sigma = (X_1, \ldots, X_k) \in \pi(\calV)$, we define a sequence of hypergraphs $\calH_1^{\bm\sigma}, \ldots, \calH_{k+1}^{\bm\sigma}$, called an {\em elimination hypergraph sequence}, as follows:
    $\calH_1^{\bm\sigma} \defeq \calH$, and for $i = 1, ... , k$, the hypergraph
        $\calH_{i+1}^{\bm\sigma} = (\calV_{i+1}^{\bm\sigma},\calE_{i+1}^{\bm\sigma})$
        is defined recursively in terms of the previous hypergraph
        $\calH_{i}^{\bm\sigma} = (\calV_{i}^{\bm\sigma},\calE_{i}^{\bm\sigma})$
        using:
        \begin{align*}
            \calV_{i+1}^{\bm\sigma} \defeq \calV_{i}^{\bm\sigma} \setminus \set{X_{i}},\quad\quad\quad
            \calE_{i+1}^{\bm\sigma} \defeq \calE_{i}^{\bm\sigma}
                \setminus \partial_{\calH_{i}^{\bm\sigma}}(X_{i})
                \cup \set{N_{\calH_{i}^{\bm\sigma}}(X_i)}.
        \end{align*}
In words, $\calH_{i+1}^{\bm\sigma}$ results from $\calH_{i}^{\bm\sigma}$
by removing the vertex $X_i$ and replacing all hyperedges that contain $X_i$
with a single hyperedge which is their union {\em minus} $X_i$.
We refer to the permutation $\bm\sigma$ as a {\em variable elimination order}, or {\em VEO} for short.
For convenience, for any $i\in[k]$, we define
$\partial^{\bm\sigma}_i \defeq \partial_{\calH_{i}^{\bm\sigma}}(X_i)$
and also define $U^{\bm\sigma}_i$, and $N^{\bm\sigma}_i$
analogously.

\begin{example}
    \label{ex:4cycle:ve}
    Consider the 4-cycle hypergraph $\calH$ from Example~\ref{ex:4cycle:td} and the variable elimination order $\bm\sigma_1 = (B, C, D, A)$. This elimination order results in the following sequence of hypergraphs:
    \begin{align*}
        \calH_1^{\bm\sigma_1} &= (\{A, B, C, D\}, \{\{A, B\}, \{B, C\}, \{C, D\}, \{D, A\}\}) \\
        \calH_2^{\bm\sigma_1} &= (\{A, C, D\}, \{\{A, C\}, \{C, D\}, \{D, A\}\}) \\
        \calH_3^{\bm\sigma_1} &= (\{A, D\}, \{\{D, A\}\}) \\
        \calH_4^{\bm\sigma_1} &= (\{A\}, \{\{A\}\}) \\
        \calH_5^{\bm\sigma_1} &= (\{\}, \{\{\}\})
    \end{align*}
    In contrast, the order $\bm\sigma_2 = (A, B, C, D)$ results in the following sequence:
    \begin{align*}
        \calH_1^{\bm\sigma_2} &= (\{A, B, C, D\}, \{\{A, B\}, \{B, C\}, \{C, D\}, \{D, A\}\}) \\
        \calH_2^{\bm\sigma_2} &= (\{B, C, D\}, \{\{D, B\}, \{B, C\}, \{C, D\}\}) \\
        \calH_3^{\bm\sigma_2} &= (\{C, D\}, \{\{C, D\}\}) \\
        \calH_4^{\bm\sigma_2} &= (\{D\}, \{\{D\}\}) \\
        \calH_5^{\bm\sigma_2} &= (\{\}, \{\{\}\})
    \end{align*}
\end{example}



As with tree decompositions, we identify a VEO $\sigma$ with the set $\setof{U^\sigma_i}{i\in[k]}$
where $k \defeq |\calV|$, and we use the preorder $\sqsubseteq$ to compare tree decompositions and VEOs:

\begin{proposition}[Equivalence of TDs and VEOs~\cite{DBLP:conf/pods/KhamisNR16}]
    Given $\calH = (\calV, \calE)$:
    \begin{enumerate}
        \item For every TD $(T, \chi) \in \calT(\calH)$, there exists a VEO $\bm \sigma \in \pi(\calV)$ such that $\bm \sigma \sqsubseteq (T,\chi)$.
        \label{prop:td=>ve}
        \item For every VEO $\bm \sigma \in \pi(\calV)$, there exists a TD $(T, \chi) \in \calT(\calH)$ such that $(T,\chi) \sqsubseteq \bm \sigma$.
        \label{prop:ve=>td}
    \end{enumerate}
    \label{prop:td=ve}
\end{proposition}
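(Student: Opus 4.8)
The plan is to prove the two directions separately, since they have rather different flavors. Direction~\ref{prop:ve=>td} (from a VEO to a TD) is the constructive one, and I would handle it first; Direction~\ref{prop:td=>ve} (from a TD to a VEO) I would prove by induction on $|\calV|$, repeatedly peeling off a variable that is confined to a single leaf bag.

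For Direction~\ref{prop:ve=>td}, given a VEO $\bm\sigma=(X_1,\dots,X_k)$ I would take the \emph{elimination tree}: introduce one node $t_i$ for each $i\in[k]$ with bag $\chi(t_i)\defeq U_i^{\bm\sigma}$, and make $t_i$ a child of $t_p$ where $p\defeq\min\setof{j>i}{X_j\in N_i^{\bm\sigma}}$ (when $N_i^{\bm\sigma}=\emptyset$ the node $t_i$ is a root; I would then join the roots by arbitrary edges to obtain a single tree). With this definition the required property is immediate, since $\chi(t_i)=U_i^{\bm\sigma}\subseteq U_i^{\bm\sigma}$. It remains to check that $(T,\chi)$ is a genuine TD. Coverage is easy: for a hyperedge $\bm Z\in\calE$, let $i$ be the smallest index with $X_i\in\bm Z$; then $\bm Z$ is untouched by the first $i-1$ eliminations, so $\bm Z\in\partial_i^{\bm\sigma}$ and hence $\bm Z\subseteq U_i^{\bm\sigma}=\chi(t_i)$. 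The real work is the running-intersection (connectedness) property. Here the key observation is that a variable $X_m$ can only occur in bags $\chi(t_i)$ with $i\le m$ (for $i>m$ it has already been eliminated), and that this family closes upward under the parent map: if $X_m\in U_i^{\bm\sigma}$ with $i<m$, then $X_m\in N_i^{\bm\sigma}$, and because no variable of $N_i^{\bm\sigma}$ is eliminated strictly between steps $i$ and $p$, the hyperedge $N_i^{\bm\sigma}$ survives into $\calH_p^{\bm\sigma}$ and lies in $\partial_p^{\bm\sigma}$, giving $X_m\in N_i^{\bm\sigma}\subseteq U_p^{\bm\sigma}=\chi(t_p)$. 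Iterating the parent map keeps $X_m$ in every bag until it reaches $t_m$, so the nodes containing $X_m$ form a connected subtree rooted at $t_m$.

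For Direction~\ref{prop:td=>ve}, I would argue by induction on $k=|\calV|$, assuming WLOG that $(T,\chi)$ is non-redundant (removing a contained bag only helps, since the claim merely asks for \emph{some} bag containing each $U_i^{\bm\sigma}$). If $T$ has one node its bag is $\calV$ and any order works. Otherwise pick a leaf $t$ with neighbor $t'$; non-redundancy gives a variable $X\in\chi(t)\setminus\chi(t')$, and by connectedness $X$ occurs in no bag other than $\chi(t)$. Eliminating $X$ first forces every hyperedge containing $X$ to sit inside $\chi(t)$ (it lies in some bag, and that bag must then contain $X$), so $U_1^{\bm\sigma}=U_\calH(X)\subseteq\chi(t)$. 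I would then check that $\chi'(s)\defeq\chi(s)\setminus\set{X}$ is a TD of the reduced hypergraph $\calH_2^{\bm\sigma}$: the old hyperedges avoiding $X$ are still covered, the single new hyperedge $N_1^{\bm\sigma}=U_\calH(X)\setminus\set{X}$ lies in $\chi'(t)$, and connectedness is inherited verbatim. Since the elimination sequence of $\calH$ under $\bm\sigma$ from step $2$ onward coincides with that of $\calH_2^{\bm\sigma}$, the induction hypothesis supplies an order of $\calV\setminus\set{X}$ placing every later $U_i^{\bm\sigma}$ inside some $\chi'(s)\subseteq\chi(s)$, and prepending $X$ completes the VEO.

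I expect the main obstacle to be the connectedness verifications rather than coverage. In Direction~\ref{prop:ve=>td} the delicate point is the ``survival'' argument showing that $N_i^{\bm\sigma}$ persists unmodified up to step $p$ so that $X_m$ propagates to the parent bag; getting the indices and the minimality of $p$ exactly right is where a careless proof would break, and the disconnected case (a forest of elimination trees) must be patched by linking roots and re-checking that no variable's bag-set is split across components. In Direction~\ref{prop:td=>ve} the crux is confirming that deleting $X$ from every bag yields a valid TD of $\calH_2^{\bm\sigma}$, in particular that the freshly created hyperedge $N_1^{\bm\sigma}$ remains covered; once that is in place the induction runs smoothly.
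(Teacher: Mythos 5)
The paper does not actually contain a proof of this proposition: it is imported wholesale from~\cite{DBLP:conf/pods/KhamisNR16}, and Appendix~\ref{app:prelims} only illustrates it with 4-cycle examples. So there is no in-paper argument to compare against; judged on its own merits, your blind proof is correct and is the standard argument for this equivalence. For Direction~\ref{prop:ve=>td}, your elimination-tree construction with parent $t_p$, $p=\min\setof{j>i}{X_j\in N_i^{\bm\sigma}}$, is sound, and the ``survival'' step is exactly right: by minimality of $p$, the hyperedge $N_i^{\bm\sigma}$ added to $\calE_{i+1}^{\bm\sigma}$ is never in $\partial_j^{\bm\sigma}$ for $i<j<p$, so it persists into $\calH_p^{\bm\sigma}$, lands in $\partial_p^{\bm\sigma}$, and pushes every $X_m\in N_i^{\bm\sigma}$ into $\chi(t_p)$; iterating strictly increases indices bounded by $m$, so each bag containing $X_m$ is linked to $t_m$ through bags containing $X_m$. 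Note that this same argument already disposes of the forest worry you flag at the end: since every node whose bag contains $X_m$ reaches $t_m$ via the parent map, no variable's bag-set can straddle two components, so linking roots arbitrarily is harmless. For Direction~\ref{prop:td=>ve}, the leaf-peeling induction is also correct: non-redundancy yields $X\in\chi(t)\setminus\chi(t')$, leafness plus running intersection confines $X$ to $\chi(t)$, coverage forces $U_\calH(X)\subseteq\chi(t)$, and deleting $X$ from all bags gives a TD of $\calH_2^{\bm\sigma}$ to which the induction hypothesis applies. Two cosmetic quibbles only: under the paper's TD definition (which never requires every vertex to appear in a bag), a single-node TD's bag need not equal $\calV$, but your conclusion survives since every $U_i^{\bm\sigma}$ is a subset of $\bigcup\calE$, which any covering bag must contain; and in the inductive step the reduced TD $(T,\chi')$ may be redundant, which is fine because the induction hypothesis is the full statement for $k-1$ vertices, with the non-redundancy reduction re-applied inside that level.
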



\begin{example}
    \label{ex:4cycle:td=ve}
    Continuing with Example~\ref{ex:4cycle:ve}, consider the 4-cycle hypergraph in Eq.~\eqref{eq:4-cycle}. The variable elimination order $\bm\sigma_1 = (B, C, D, A)$ results in:
    \begin{align*}
        U^{\bm\sigma_1}_1 = \{A, B, C\}, \quad U^{\bm\sigma_1}_2 = \{C, D, A\},\quad
        U^{\bm\sigma_1}_3 = \{D, A\}, \quad U^{\bm\sigma_1}_4 = \{A\}.
    \end{align*}
    Note that $\bm\sigma_1 \sqsubseteq (T_1, \chi_1)$ and $(T_1, \chi_1) \sqsubseteq \bm\sigma_1$, hence $\bm\sigma_1\equiv(T_1, \chi_1)$.

    In contrast, the variable order $\bm\sigma_2 = (A, B, C, D)$ results in:
    \begin{align*}
        U^{\bm\sigma_2}_1 = \{D, A, B\}, \quad U^{\bm\sigma_2}_2 = \{B, C, D\},\quad
        U^{\bm\sigma_2}_3 = \{C, D\}, \quad U^{\bm\sigma_2}_4 = \{D\}.
    \end{align*}
    In particular, $\bm\sigma_2\equiv(T_2, \chi_2)$.
\end{example}


\subsection{The Submodular Width}
Given a set $\calV$, a function $\bm h:2^{\calV} \to \R_+$ is called a {\em polymatroid}
if it satisfies the following properties:
\begin{align}
    h(\bm X) + h(\bm Y) &\geq h(\bm X \cup \bm Y) + h(\bm X \cap \bm Y)
        &\forall \bm X, \bm Y \subseteq \calV\quad
        &\text{(submodularity)}\label{eq:submod} \\
    h(\bm X) &\leq h(\bm Y)
        &\forall \bm X \subset \bm Y \subseteq \calV\quad
        &\text{(monotonicity)}\label{eq:monotone} \\
    h(\emptyset) &= 0
    & &\text{(strictness)}\label{eq:emptyset}
\end{align}
The above properties are also known as {\em basic Shannon inequalities}.
We use $\Gamma_{\calV}$ to denote the set of all polymatroids over $\calV$.
When $\calV$ is clear from the context, we drop $\calV$ and simply write $\Gamma$.
Given a polymatroid $\bm h$ and sets $\bm X, \bm Y , \bm Z\subseteq \calV$,
we use $\bm X\bm Y$ as a shorthand for $\bm X \cup \bm Y$, and we define:
\begin{align}
    h(\bm Y|\bm X) &\defeq h(\bm X \bm Y) - h(\bm X) \label{eq:conditional}\\
    h(\bm Y ; \bm Z |\bm X) &\defeq h(\bm X\bm Y) + h(\bm X \bm Z) - h(\bm X) - h(\bm X \bm Y\bm Z)\label{eq:sub-measure}
\end{align}
Using the above notation, we can rewrite Eq.~\eqref{eq:monotone} as $h(\bm Y|\bm X) \geq 0$,
and Eq.~\eqref{eq:submod} as $h(\bm X ; \bm Y |\bm X\cap\bm Y) \geq 0$.
A term $h(\bm Y|\bm X)$ is called {\em unconditional} if $\bm X = \emptyset$.

Given a hypergraph $\calH = (\calV, \calE)$, a function $\bm h:2^{\calV} \to \R_+$ is called {\em edge-dominated} if it satisfies $h(\bm X) \leq 1$ for all $\bm X \in \calE$.
We use $\ed_\calH$ to denote the set of all edge-dominated functions over $\calH$.
When $\calH$ is clear from the context, we drop $\calH$ and simply write $\ed$.

Given a hypergraph $\calH = (\calV, \calE)$, the {\em submodular width}~\cite{DBLP:journals/jacm/Marx13} of $\calH$ is defined as follows:
\begin{align}
    \subw(\calH) \quad\defeq\quad
        \max_{\bm h \in \Gamma \cap \ed}\quad
        \min_{(T,\chi) \in \calT(\calH)}\quad
        \max_{t \in \nodes(T)}\quad
        h(\chi(t))
        \label{eq:subw}
\end{align}
Based on Proposition~\ref{prop:td=ve} along with the fact that a polymatroid $\bm h:2^{\calV}\to\R_+$ is monotone (Eq.~\eqref{eq:monotone}), we can equivalently define the submodular width using  Eq.~\eqref{eq:intro:subw:ve}, which we repeat here:
\begin{align}
    \subw(\calH) \quad\defeq\quad
        \max_{\bm h \in \Gamma \cap \ed}\quad
        \min_{\bm\sigma \in \pi(\calV)}\quad
        \max_{i \in [k]}\quad
        h(U^{\bm\sigma}_i)
        \label{eq:subw:vo}
\end{align}
Appendix~\ref{app:computing-subw} summarizes how to compute the submodular width of a given hypergraph $\calH$.


\subsection{Fast Matrix Multiplication}
We show here how to multiply two rectangular matrices $A$ and $B$ of dimensions $n^a\times n^b$ and $n^b\times n^c$ in the runtime $n^{\squareOmega(a, b, c)}$, as defined by Eq.~\eqref{eq:rect-mat-mult}.
This is a folklore result; see e.g.~\cite{Pan1984HowToMultiplyMatricesFaster}.
Let $d = \min(a, b, c)$.
We partition $A$ into $\frac{n^a}{n^d}\times \frac{n^b}{n^d}$ blocks
and $B$ into $\frac{n^b}{n^d}\times \frac{n^c}{n^d}$ blocks.
Therefore, we have to perform $n^{a+b+c-3d}$ block multiplications, each of which takes time $O(n^{d\cdot\omega})$. The overall complexity, on $(\log n)$-scale, is
$a + b + c -(3 - \omega) \min(a, b, c)$, which gives Eq.~\eqref{eq:rect-mat-mult}.
When $\omega = 2$, this algorithm for rectangular matrix multiplication is already optimal
because the complexity from Eq.~\eqref{eq:rect-mat-mult} becomes linear in the sizes of the input and output
matrices.
However, when $\omega > 2$, there could be faster algorithms
that are not based on square matrix multiplication; see e.g.~\cite{doi:10.1137/1.9781611975031.67}.
We define $\rectOmega(a, b, c)$ as the smallest exponent for multiplying two rectangular matrices of sizes $n^a\times n^b$ and $n^b\times n^c$ within $O(n^{\rectOmega(a, b, c)})$ time.
Based on the above discussion, we have $\rectOmega(a, b, c) \leq \squareOmega(a, b, c)$
and this becomes an equality when $\omega = 2$ or when $a = b = c$.


\subsection{Disjunctive Datalog Rules}
\label{subsec:prelims:ddr}

In order to evaluate a BCQ $Q$ (Eq.~\eqref{eq:bcq})
in submodular width time $O(N^{\subw(Q)})$,
the original $\panda$ algorithm~\cite{DBLP:conf/pods/Khamis0S17,theoretics:13722,panda-express}
reduces the problem of evaluating $Q$ into evaluating a collection of
{\em Disjunctive Datalog Rules}, which we review in this section.
To that end, we need some preliminaries.

Let $\calV$ be a set of variables. An {\em atom} is an expression of the form $R(\bm X)$,
where $R$ is a relation symbol, and $\bm X \subseteq \calV$ is a set of variables.
A {\em schema} $\Sigma$ is a set of atoms.

Given a finite domain $\dom$ and variables $\bm X \subseteq \calV$, let $\dom^{\bm X}$
denote the set of all tuples over the variables in $\bm X$ with values from $\dom$. Given a
tuple $\bm t \in \dom^{\calV}$, we use $\bm t_{\bm X}$ to denote the projection of $\bm t$
onto the variables in $\bm X$. A {\em database instance} $D$ over schema $\Sigma$ is a
mapping that assigns each atom $R(\bm X) \in \Sigma$ to a finite relation $R^{D} \subseteq
\dom^{\bm X}$. Usually the database instance is clear from the context, so we simply write
$R$ instead of $R^D$.
Given a relation $R\subseteq \dom^{\bm X}$, we use $\vars(R)$ to denote the set of variables $\bm X$.
Given a tuple $\bm t\in\dom^{\calV}$ and a relation $R(\bm X)$ for $\bm X\subseteq \calV$,
we say that $\bm t$ is {\em covered} by $R$ if $\bm t_{\bm X} \in R$.

Given a database instance over schema $\Sigma$,
the {\em full natural join} of the instance is the set of tuples
$\bm t \in \dom^{\calV}$ that satisfy all atoms in $\Sigma$:
\begin{align}
  \bigjoin \Sigma & \defeq \setof{\bm t \in \dom^{\calV}}{
    \bm t_{\bm X} \in R,
    \text{ for all } R(\bm X) \in \Sigma
  }
  \label{eq:full:join}
\end{align}
In other words, $\bigjoin\Sigma$ is the set of tuples $\bm t \in \dom^{\calV}$
that are covered by all relations in $\Sigma$.

\begin{definition}[DDR and its model]
Given two schemas $\Sigma_{\inn}$ and $\Sigma_{\outt}$, a {\em Disjunctive Datalog Rule} (DDR)
is the expression
\begin{align}
    \bigvee_{Q(\bm Z) \in \Sigma_{\outt}} Q(\bm Z)
        &\cd  \bigwedge_{R(\bm X)\in \Sigma_{\inn}} R(\bm X)
    \label{eq:ddr}
\end{align}
Given a database instance over the input schema $\Sigma_{\inn}$,
an {\em output instance} (or {\em model}) of the DDR~\eqref{eq:ddr}
is a database instance over the output schema $\Sigma_{\outt}$ such that
for every tuple $\bm t \in \bigjoin \Sigma_{\inn}$, there exists at least one atom
$Q(\bm Z) \in \Sigma_{\outt}$ such that $\bm t_{\bm Z} \in Q(\bm Z)$.
In other words, every tuple $\bm t \in \bigjoin\Sigma_\inn$ is {\em covered} by at least one relation in the output instance.
\label{defn:ddr}
\end{definition}

Notice that the output instance is not unique.  For example, consider the DDR:
\begin{align*}
  Q_1(X) \vee Q_2(Y) &\cd R(X,Y)
\end{align*}
One output instance is $Q_1 = \emptyset$, $Q_2 = \Pi_Y(R)$ (the projection of $R$ on $Y$), another instance is $Q_1 = \Pi_X(R), Q_2 = \emptyset$.  Many other instances exists: it suffices to ensure that, for every $(x,y) \in R$, at least one of $x \in Q_1$ or $y \in Q_2$ holds.

The {\em size} of an output instance to a DDR is the total number of tuples in all its relations:
\begin{align}
\norm{\Sigma_{\outt}} \defeq \sum_{Q(\bm Z) \in \Sigma_{\outt}} |Q|. \label{eqn:ddr:size}
\end{align}
An output instance of a DDR is {\em minimal} if no proper subset of it is also an output instance.
Note that a {\em conjunctive query} (CQ) is a special case of DDR where the output schema
$\Sigma_{\outt}$ contains only one atom.
The answer to a CQ is its unique minimal output instance.

\subsection{Degrees in a Relation}
\label{subsec:prelims:degrees}
The $\panda$ algorithm~\cite{DBLP:conf/pods/Khamis0S17,theoretics:13722,panda-express} relies heavily on the concept of {\em degrees} in a relation, which generalizes degrees in a graph. We review the definition below.

Consider a database instance $D$ over schema $\Sigma$.  If $R \in \Sigma$ is a relation name, and $\bm Z$ is a set of variables, then we denote by\footnote{$S \ltimes T$ denotes the {\em semi-join reduce} operator defined by $S \ltimes T \defeq \pi_{\vars(S)}(S \Join T)$.} $R^D_{|\bm Z} \defeq \dom^{\bm Z} \ltimes R^D$ the \emph{restriction} of $R^D$ to the variables $\bm Z$.  Notice that $\bm Z$ is not necessarily a subset of $\vars(R)$, in which case $R^D_{|\bm Z}$ may be an infinite relation (assuming the domain $\dom$ is infinite), but, when $\bm Z \subseteq \vars(R)$, then $R^D_{|\bm Z}=R^D$.
Let $\bm X, \bm Y$ be two disjoint subsets of variables, and $\bm x \in \dom^{\bm X}$ be a data tuple.  We associate three quantities: the {\em degree} $\degree_{R^D}(\bm Y | \bm X = \bm x)$ in $R^D$ of $\bm x$ w.r.t.  $\bm Y$, the degree $\degree_{R^D}(\bm Y|\bm X)$ in $R^D$ of the pair $(\bm Y|\bm X)$, and the degree $\degree_{\Sigma^D}(\bm Y|\bm X)$ of $(\bm Y|\bm X)$ in the $\Sigma$-instance $\Sigma^D$ ($=D$):
\begin{align}
    \degree_{R^D}(\bm Y | \bm X = \bm x) &\defeq  |\setof{\bm y \in \dom^{\bm Y}}{ (\bm x, \bm y) \in R^D_{|\bm X \cup \bm Y} }| \label{eq:degree:x} \\
    \degree_{R^D}(\bm Y|\bm X) &\defeq \max_{\bm x \in \dom^{\bm X}} \degree_{R^D}(\bm Y|\bm X = \bm x) \label{eq:degree:delta} \\
  \degree_{\Sigma^D}(\bm Y|\bm X) &\defeq \min_{R \in \Sigma}\degree_{R^D}(\bm Y|\bm X).
  \label{eqn:degree:sigma}
\end{align}
As usual, we drop the superscript $D$ when clear from the context, and write simply $\degree_R(\bm Y|\bm X)$, $\degree_{\Sigma}(\bm Y|\bm X)$.

\subsection{Sub-probability measures}
\label{subsec:prelims:sub-probs}
The original $\panda$ algorithm~\cite{DBLP:conf/pods/Khamis0S17,theoretics:13722}
had an extra polylog factor in the runtime, which was later removed in~\cite{panda-express} by a more refined algorithm called $\pandaexpress$.
This refined algorithm relies on the concept of sub-probability measures, which we review in
this section.
The goal of this paper is to  generalize $\pandaexpress$
to incorporate  fast matrix multiplication and achieve
the $\omega$-submodular width runtime.

Let $\Omega$ be a finite set. A function $p : 2^\Omega \to R_+$ is called a {\em measure}
on $\Omega$ if $p(\emptyset)=0$ and $p(A) = \sum_{x \in A} p(x)$
for any $A \subseteq \Omega$.
In particular, to define a measure on $\Omega$, it is sufficient to define the values
of $p(x)$ for all $x \in \Omega$.
It is called a {\em sub-probability measure} if $p(\Omega) \leq 1$,
and a {\em probability measure} if $p(\Omega) = 1$.

Let $p$ be a measure on $\Omega$. If $\Omega = \Omega_1 \times \Omega_2$ is a Cartesian
product of two domains, then the measure $p_1(x_1) \defeq \sum_{x_2 \in \Omega_2} p(x_1, x_2),
\forall x_1 \in \Omega_1$ is called the {\em marginal measure} of $p$ on $\Omega_1$. Let $A
\subseteq \Omega$ with positive measure $p(A) > 0$, then the measure $p_{|A}(B) \defeq \frac{p(B
\cap A)}{p(A)}$ for any $B \subseteq \Omega$ is called the {\em conditional measure} of $p$ on
$A$. Typically we write $p(B|A)$ instead of $p_{|A}(B)$. It is easy to see that, if $p$ is a
sub-probability measure then any of its marginals is also a sub-probability measure, and any
of its conditional measures is a probability measure.

In this paper, we will deal with domains $\Omega = \dom^{\bm X}$ where $\bm X$ is some
set of variables, and $\dom$ is a finite set. A sub-probability measure $p$ on $\dom^{\bm X}$
will be denoted by $p_{\bm X}$. The aforementioned facts are specialized as follows.
\begin{proposition}
Given a sub-probability measure $p_{\bm X \bm Y}$, the following hold:
\begin{itemize}
\item The following marginal measure is a sub-probability measure:
\begin{align}
    p_{\bm X}(\bm x) \defeq \sum_{\bm y \in \dom^{\bm Y}} p_{\bm X \bm Y}(\bm x, \bm y)
\label{eqn:marginal}
\end{align}
\item Given $\bm x \in \dom^{\bm X}$
the following conditional measure is a sub-probability measure:
\begin{align}
    p_{\bm Y | \bm X = \bm x}(\bm y) &\defeq
        \begin{cases}
            \frac{p_{\bm X \bm Y}(\bm x, \bm y)}{p_{\bm X}(\bm x)} & \text{if } p_{\bm X}(\bm x) > 0 \\
            0 & \text{otherwise}
        \end{cases}
\label{eqn:conditional}
\end{align}
\end{itemize}
\label{prop:marginal:conditional}
\end{proposition}
As is customary in probability theory, we write $p_{\bm Y | \bm X}(\bm y | \bm x)$ instead
of $p_{\bm Y | \bm X = \bm x}(\bm y)$, with the implicit understanding that there is a
measure for each $\bm x \in \dom^{\bm X}$.

\begin{definition}
Given $k$ sub-probability measures $p_1, p_2, \ldots, p_k$ on the same domain $\Omega$,
their {\em geometric mean} is a sub-probability measure $p$ defined by
$p(x) \defeq \left(\prod_{i=1}^k p_i(x)\right)^{\frac{1}{k}}, \forall x \in \Omega$.
\end{definition}

\begin{proposition}
The geometric mean of sub-probability measures is also a sub-probability measure.
\label{prop:geometric:mean}
\end{proposition}
\begin{proof}
This follows trivially from the AM-GM inequality:
\begin{align*}
    p(\Omega)
    = \sum_{x \in \Omega} \left(\prod_{i=1}^k p_i(x)\right)^{\frac{1}{k}}
    \leq \sum_{x \in \Omega} \frac{1}{k} \sum_{i=1}^k p_i(x)
    = \frac{1}{k} \sum_{i=1}^k \sum_{x \in \Omega} p_i(x)
    \leq 1
\end{align*}
\end{proof}

Given a sub-probability measure $p_{\bm Y|\bm X}$ and a tuple $\bm t \in \dom^{\calV}$
where $\bm X\cup\bm Y\subseteq \calV$, we define
\begin{align}
    p_{\bm Y|\bm X}(\bm t) \defeq p_{\bm Y|\bm X}(\bm t_{\bm Y}|\bm t_{\bm X})
\end{align}
\section{The $\omega$-Submodular Width: Formal Definition}
\label{sec:definition}

While Eq.~\eqref{eq:intro:osubw} gives a high-level sketch of our definition of the $\omega$-submodular width,
we aim here to provide the full formal definition.
To that end, we start with some auxiliary definitions.

\subsection{Generalizing variable elimination orders}
In the traditional submodular width (Eq.~\eqref{eq:intro:subw:ve} or~\eqref{eq:subw:vo}), it was sufficient to consider variable elimination orders that
eliminate only one variable at a time. However, once we allow using MM
to eliminate variables, there could be situations where eliminating multiple variables at once
using a single MM might be cheaper than eliminating the same set of variables one at a time
using multiple MMs.
For example, suppose we want to compute the query:
\[
Q(X, Z) \cd R(X, Y_1, Y_2) \wedge S(Y_1, Y_2, Z)
\]
There are at least two options to evaluate this query using MM:
\begin{itemize}[leftmargin=*]
    \item Option 1: For each value $y_1$ of $Y_1$, treat $\sigma_{Y_1 = y_1}R$ and
    $\sigma_{Y_1 = y_1}S$ as two matrices and multiply them.
    Finally, put together the resulting matrices for all $y_1$ and project $Y_1$ away.
    \item Option 2: Combine $Y_1$ and $Y_2$ into a single attribute $(Y_1Y_2)$, view $R$ and $S$
    as two matrices with dimensions $X \times (Y_1Y_2)$ and $(Y_1Y_2) \times Z$ respectively, and multiply them.
\end{itemize}
Depending on the relations $R$ and $S$, either option could be cheaper.
Motivated by this observation, our first task is to generalize
the notion of a VEO and an elimination hypergraph sequence to allow eliminating
multiple variables at once.

First, we lift the definitions of $\partial_\calH(X)$, $U_\calH(X)$, and $N_\calH(X)$ from a single variable $X$ to a set of variables $\bm X$.
Given a hypergraph $\calH = (\calV, \calE)$ and a non-empty set of variables $\bm X\subseteq\calV$, we define
$\partial_\calH(\bm X)$ as the set of hyperedges in $\calH$ that overlap with $\bm X$:
\begin{align*}
    \partial_\calH(\bm X) \defeq \setof{\bm Z \in \calE}{\bm X \cap \bm Z \neq \emptyset},\quad\quad
    U_\calH(\bm X) \defeq \bigcup_{\bm Z \in \partial_\calH(\bm X)} \bm Z,\quad\quad
    N_\calH(\bm X) \defeq U_\calH(\bm X) \setminus \bm X.
\end{align*}

\begin{definition}[Generalized Variable Elimination Order (GVEO)]
    Given a hypergraph $\calH=(\calV,\calE)$, let $\ov{\pi}(\calV)$
    denote the set of all {\em ordered partitions} of $\calV$. Namely, each element
    $\ov{\bm\sigma} \in \ov{\pi}(\calV)$ is a tuple
    $(\bm X_1, \bm X_2, \ldots, \bm X_{|\ov{\bm\sigma}|})$
    of non-empty and pairwise disjoint sets $\bm X_1, \ldots, \bm X_{|\ov{\bm\sigma}|}$
    whose union is $\calV$.
    We refer to each $\ov{\bm\sigma} \in \ov{\pi}(\calV)$ as a {\em generalized variable
    elimination order}, or {\em GVEO} for short.
    Given $\ov{\bm\sigma}\in \ov{\pi}(\calV)$, we define a {\em generalized elimination hypergraph sequence}
    $\calH^{\ov{\bm\sigma}}_1, \ldots, \calH^{\ov{\bm\sigma}}_{|\ov{\bm\sigma}|+1}$ as follows:
    $\calH_1^{\ov{\bm\sigma}} \defeq \calH$, and for $i = 1, ... , |\ov{\bm\sigma}|$:
        \[
            \calV_{i+1}^{\ov{\bm\sigma}} \defeq \calV_{i}^{\ov{\bm\sigma}} \setminus \bm X_i,
            \quad\quad\quad
            \calE_{i+1}^{\ov{\bm\sigma}} \defeq \calE_{i}^{\ov{\bm\sigma}}
                \setminus \partial_{\calH_{i}^{\ov{\bm\sigma}}}(\bm X_{i})
                \cup \set{N_{\calH_{i}^{\ov{\bm\sigma}}}(\bm X_i)}.
        \]
    For any $i \in [|\ov{\bm\sigma}|]$, we define $\partial^{\ov{\bm\sigma}}_i\defeq \partial_{\calH^{\ov{\bm\sigma}}_i}(\bm X_i)$,
    just like before, and the same goes for
    $U^{\ov{\bm\sigma}}_i$ and $N^{\ov{\bm\sigma}}_i$.
    \label{defn:gve}
\end{definition}


\subsection{Expressing MM runtime using polymatroids}
We now give the formal definition of the MM expression that generalizes
the special case $\mm(X; Y; Z)$ that was given earlier in Eq.~\eqref{eq:intro:mm}.
Recall that $\omega$ is a constant in the range $[2, 3]$ and $\gamma \defeq \omega - 2$.
\begin{definition}[Matrix multiplication expression, $\mm$]
Let $\bm h:2^{\calV}\to\R_+$ be a polymatroid.
Given four pairwise disjoint subsets, $\bm X, \bm Y, \bm Z, \bm G\subseteq \calV$, we define
the {\em matrix multiplication expression} $\mm(\bm X;\bm Y;\bm Z |\bm G)$ as follows:
\begin{align}
    \mm(\bm X;\bm Y;\bm Z |\bm G) \defeq \max\bigl(&h(\bm X | \bm G) + h(\bm Y | \bm G) + \gamma h(\bm Z | \bm G) + h(\bm G),\nonumber\\
        &h(\bm X | \bm G) + \gamma h(\bm Y | \bm G) + h(\bm Z | \bm G) + h(\bm G),\nonumber\\
        &\gamma h(\bm X | \bm G) + h(\bm Y | \bm G) + h(\bm Z | \bm G) + h(\bm G)\bigr)
        \label{eq:mm}
\end{align}
When $\bm G$ is empty, we write $\mm(\bm X;\bm Y;\bm Z)$ as a shorthand for $\mm(\bm X;\bm Y;\bm Z |\emptyset)$.
\label{defn:mm}
\end{definition}


The following proposition intuitively says that the MM runtime is at least linear
in the sizes of the two input matrices and the output matrix.
\begin{proposition}
    $\max(h(\bm X \bm Y \bm G),
        h(\bm Y \bm Z \bm G), h(\bm X \bm Z \bm G)) \quad\leq\quad
        \mm(\bm X ;\bm Y; \bm Z | \bm G)$
    \label{prop:mm:output:size}
\end{proposition}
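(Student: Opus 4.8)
The plan is to show that each of the three output-size quantities on the left-hand side is dominated by one specific term in the three-way maximum defining $\mm(\bm X; \bm Y; \bm Z \mid \bm G)$. The guiding intuition is that the term in which a given block carries the reduced coefficient $\gamma = \omega - 2$ is precisely the one that should bound the output size formed from the other two blocks, since that block is the one being ``eliminated'' in the corresponding matrix multiplication.

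First I would establish a conditional submodularity fact: for disjoint blocks $\bm A, \bm B$ and any $\bm G$, one has $h(\bm A \bm B \mid \bm G) \leq h(\bm A \mid \bm G) + h(\bm B \mid \bm G)$. This follows directly from applying submodularity~\eqref{eq:submod} to the sets $\bm A \bm G$ and $\bm B \bm G$, whose intersection is $\bm G$ (because the blocks are pairwise disjoint) and whose union is $\bm A \bm B \bm G$, and then rearranging using the definition of conditioning in Eq.~\eqref{eq:conditional}.

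Applying this with $(\bm A, \bm B) = (\bm X, \bm Y)$ and adding $h(\bm G)$ to both sides yields $h(\bm X \bm Y \bm G) \leq h(\bm X \mid \bm G) + h(\bm Y \mid \bm G) + h(\bm G)$. Since $\gamma \geq 0$ (because $\omega \geq 2$) and $h(\bm Z \mid \bm G) \geq 0$ by monotonicity~\eqref{eq:monotone}, the right-hand side is at most the first term in the maximum of Eq.~\eqref{eq:mm}, so $h(\bm X \bm Y \bm G) \leq \mm(\bm X; \bm Y; \bm Z \mid \bm G)$. The two remaining inequalities are symmetric: pairing $(\bm X, \bm Z)$ bounds $h(\bm X \bm Z \bm G)$ by the second term (in which $\bm Y$ carries the $\gamma$ coefficient), and pairing $(\bm Y, \bm Z)$ bounds $h(\bm Y \bm Z \bm G)$ by the third term (in which $\bm X$ carries the $\gamma$ coefficient). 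Taking the maximum over the three then gives the claim.

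I do not expect a genuine obstacle here; the only point requiring care is matching each output block to the correct term of the maximum, so that the variable dropped to coefficient $\gamma$ is exactly the one absent from that output and the slack $\gamma \, h(\cdot \mid \bm G) \geq 0$ works in our favor.
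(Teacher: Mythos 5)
Your proof is correct and follows essentially the same route as the paper's: the paper's one-line argument is exactly your chain $h(\bm X \bm Y \bm G) \leq h(\bm X|\bm G) + h(\bm Y|\bm G) + h(\bm G) \leq h(\bm X|\bm G) + h(\bm Y|\bm G) + \gamma h(\bm Z|\bm G) + h(\bm G)$, with the other two cases left to symmetry. You simply make explicit the conditional-subadditivity step and the matching of each output set to the term of the maximum in which the absent block carries the coefficient $\gamma$, which the paper leaves implicit.
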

\begin{proof}
    $
        h(\bm X \bm Y \bm G) \leq
        h(\bm X | \bm G) + h(\bm Y | \bm G) + h(\bm G)
        \leq
        h(\bm X | \bm G) +  h(\bm Y | \bm G) + \gamma h(\bm Z | \bm G) + h(\bm G).
    $
\end{proof}

\begin{proposition}
    \label{prop:mm:omega=3}
    If $\omega = 3$, then $h(\bm X\bm Y\bm Z\bm G) \leq \mm(\bm X; \bm Y; \bm Z | \bm G)$.
\end{proposition}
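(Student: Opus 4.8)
The plan is to exploit the fact that the three-way maximum in Definition~\ref{defn:mm} collapses when $\omega = 3$. Since $\gamma \defeq \omega - 2 = 1$ in this case, all three arguments of the $\max$ in Eq.~\eqref{eq:mm} become identical, so
\[
    \mm(\bm X; \bm Y; \bm Z \mid \bm G) = h(\bm X \mid \bm G) + h(\bm Y \mid \bm G) + h(\bm Z \mid \bm G) + h(\bm G).
\]
Thus the statement reduces to a pure Shannon-inequality claim with no remaining matrix-multiplication content, and the task is to bound $h(\bm X\bm Y\bm Z\bm G)$ by this sum.

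Next I would unfold the conditional notation using Eq.~\eqref{eq:conditional}, rewriting the right-hand side as $h(\bm X\bm G) + h(\bm Y\bm G) + h(\bm Z\bm G) - 2h(\bm G)$ (three conditional terms each subtract an $h(\bm G)$, while the standalone $+h(\bm G)$ adds one back). The goal then becomes
\[
    h(\bm X\bm Y\bm Z\bm G) \leq h(\bm X\bm G) + h(\bm Y\bm G) + h(\bm Z\bm G) - 2h(\bm G),
\]
which I would establish by two applications of submodularity, Eq.~\eqref{eq:submod}. Taking $\bm A = \bm X\bm G$ and $\bm B = \bm Y\bm G$, the pairwise disjointness of $\bm X, \bm Y, \bm G$ gives $\bm A \cap \bm B = \bm G$ and $\bm A \cup \bm B = \bm X\bm Y\bm G$, hence $h(\bm X\bm G) + h(\bm Y\bm G) \geq h(\bm X\bm Y\bm G) + h(\bm G)$. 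Taking $\bm A = \bm X\bm Y\bm G$ and $\bm B = \bm Z\bm G$ similarly yields $h(\bm X\bm Y\bm G) + h(\bm Z\bm G) \geq h(\bm X\bm Y\bm Z\bm G) + h(\bm G)$. Chaining the two eliminates the intermediate term $h(\bm X\bm Y\bm G)$ and produces exactly the desired bound.

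There is essentially no hard step here: once $\gamma = 1$ forces the $\max$ to degenerate, the claim is a standard telescoping of submodularity. The only points requiring care are the bookkeeping of the conditioning set $\bm G$, which must persist as the common intersection $\bm A \cap \bm B$ in both submodularity steps (this is precisely where the pairwise-disjointness hypothesis of Definition~\ref{defn:mm} is used), and verifying that the coefficient on $h(\bm G)$ matches: the two submodularity steps subtract $h(\bm G)$ twice, agreeing with the $-2h(\bm G)$ obtained from unfolding the conditional terms.
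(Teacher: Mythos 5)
Your proof is correct and follows essentially the same route as the paper: when $\omega = 3$ the three terms of the $\max$ in Eq.~\eqref{eq:mm} coincide, and the claim reduces to the Shannon inequality $h(\bm X\bm Y\bm Z\bm G) \leq h(\bm X|\bm G) + h(\bm Y|\bm G) + h(\bm Z|\bm G) + h(\bm G)$, which the paper simply asserts as a one-line consequence of polymatroid properties. The only difference is that you spell out the two submodularity applications justifying that inequality, which the paper leaves implicit.
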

\begin{proof}
    When $\omega = 3$,
    $
        h(\bm X \bm Y \bm Z\bm G) \leq h(\bm X | \bm G) + h(\bm Y | \bm G) + h (\bm Z|\bm G) + h(\bm G) = \mm(\bm X; \bm Y; \bm Z | \bm G).
    $
\end{proof}


\subsection{Variable elimination via matrix multiplication}
Here we present the formal definition of the quantity $\emm$ introduced earlier in Section~\ref{subsubsec:intro:emm}, which captures the cost of eliminating a variable (or set of variables) using matrix multiplication.
Given a hypergraph $\calH=(\calV,\calE)$ and a set of vertices $\bm X \subseteq \calV$,
in order to eliminate $\bm X$ using a multiplication of two matrices,
we take the neighboring hyperedges $\partial_\calH(\bm X)$ of $\bm X$
and assign them to two (potentially overlapping) sets of hyperedges $\calA$ and $\calB$,
each of which will form a matrix.
Since every neighboring hyperedge needs to participate in this multiplication,
we need $\calA$ and $\calB$ to cover $\partial_\calH(\bm X)$, i.e. $\calA\cup\calB = \partial_\calH(\bm X)$.
Let $\bm A$ and $\bm B$ be the sets of vertices in $\calA$ and $\calB$ respectively, i.e. $\bm A = \cup \calA$ and $\bm B = \cup\calB$.
In order to  eliminate $\bm X$ using matrix multiplication, we require that $\bm X$ is a subset of $\bm A \cap \bm B$.
Additionally, we are allowed to choose a set $\bm G \subseteq \bm A \cup \bm B$ of ``group-by variables'':
These are variables that do not participate in the matrix multiplication.
Instead, for every assignment $\bm g \in \dom^{\bm G}$ of $\bm G$, we perform a matrix multiplication over the remaining variables (hence the name ``group-by variables'').
%
%
Given $\bm G$, denote by $\bm A^{\bm G} \defeq \bm A \setminus \bm G$, $\bm B^{\bm G} \defeq \bm B\setminus \bm G$; for the choice of $\bm G$ to be meaningful we require that $\bm X = \bm A^{\bm G} \cap \bm B^{\bm G}$.  Then, for every assignment of the variables in $\bm G$, we perform a multiplication of a matrix with dimensions $(\bm A^{\bm G} \setminus \bm X) \times \bm X$ with a matrix with dimensions $\bm X \times (\bm B^{\bm G} \setminus \bm X)$.  The cost of this multiplication is given by $\mm(\bm A^G\setminus \bm X, \bm B^G\setminus \bm X, \bm X \mid  \bm G)$; see Figure~\ref{fig:venn}.  Finally, $\emm_\calH(\bm X)$ is the minimum cost over all valid assignments of $\calA$, $\calB$, and $\bm G$.

\begin{figure}
  \centering
  \begin{tikzpicture}

    \draw[thick, color=gray, fill = blue!20] (0,-0.34) ellipse (3cm and 1cm);
    
    \draw[thick, color=gray] (-1,0.3) ellipse (2.4cm and 1.8cm);
    
    \draw[thick, color=gray] (1,0.3) ellipse (2.4cm and 1.8cm);
    
    \node[font=\LARGE\bfseries] at (-3, 2)  {$\bm A$};
    \node[font=\LARGE\bfseries] at ( 3, 2)  {$\bm B$};
    
    \node[font=\large]          at (-1.25,1.1)  {$\bm A^{\bm G}$};
    
    \node[font=\large]          at ( 1.25,1.1)  {$\bm B^{\bm G}$};
    
    \node[font=\large\bfseries] at (0, 1.1)     {$\bm X$};
    
    \node[font=\LARGE\bfseries] at (0,-0.4)  {\color{blue}$\bm G$};
    
\end{tikzpicture}
  \caption{The Venn diagram of the sets $\bm A, \bm B, \bm G, \bm X$ from Definition~\ref{defn:emm}.}
  \label{fig:venn}
\end{figure}

%

\begin{definition}[Variable elimination expression via matrix multiplication, $\emm$]
    Let $\calH=(\calV,\calE)$ be a hypergraph and $\bm h:2^{\calV}\to\R_+$ be a polymatroid.
    Given a non-empty set of vertices $\bm X \in \calV$, we define the {\em variable elimination expression via matrix multiplication}, denoted by $\emm_\calH(\bm X)$, as:
    \begin{align}
        \emm_\calH(\bm X) \defeq \min\biggl\{&
            \mm\bigl(
                \bm A^{\bm G}\setminus \bm X\quad;\quad
                \bm B^{\bm G}\setminus \bm X\quad;\quad
                \bm X\quad|\quad
                \bm G
            \bigr) \quad\mid\nonumber\\
            &\quad \exists \calA, \calB \subseteq \partial_\calH(\bm X),\quad
            \calA\cup \calB = \partial_\calH(\bm X),\quad
            \bm A = \cup \calA,\quad \bm B = \cup \calB,\nonumber\\
            &\quad\bm G\subseteq N_{\calH}(\bm X),\quad \bm A^{\bm G}=\bm A\setminus \bm G,\quad \bm B^{\bm G} = \bm B\setminus \bm G, \quad \bm X = \bm A^{\bm G} \cap \bm B^{\bm G}
        \biggr\}
        \label{eq:emm}
    \end{align}
    \label{defn:emm}
\end{definition}
\vspace{-.4cm}
%
We can simplify the above expression further by excluding trivial combinations where
either one of the two sets $\bm A^{\bm G}\setminus \bm X$, or
$\bm B^{\bm G}\setminus \bm X$ is empty.



\begin{example}
    \label{ex:emm:4clique}
    Consider the following hypergraph representing a 4-clique:
    \begin{align}
        \calH = (\{X, Y, Z, W\},\quad \{\{X, Y\}, \{X, Z\}, \{X, W\}, \{Y, Z\}, \{Y, W\}, \{Z, W\}\})
        \label{eq:H:4clique}
    \end{align}
    Here we have $\partial_\calH(X) = \{\{X, Y\}, \{X, Z\}, \{X, W\}\}$, and there are 6 different
    and non-trivial ways to assign them to $\calA$ and $\calB$, resulting in 6 different MM
    expressions:
    \begin{align}
        \emm_{\calH}(X) = \min\bigl(
        &\mm(YZ; W; X),\quad
        \mm(YW; Z; X),\quad
        \mm(ZW; Y; X), \nonumber\\
        &\mm(Y; Z; X | W),\quad
        \mm(Y; W; X | Z),\quad
        \mm(Z; W; X | Y) 
        \bigr)
        \label{eq:emm_X:4clique}
    \end{align}
    The first three expressions above correspond to cases where $\calA$ and $\calB$ are disjoint,
    while the last three correspond to cases where $\calA$ and $\calB$ overlap.
    For example, the fourth expression $\mm(Y; Z; X | W)$ results from taking $\calA =\{\{X, Y\}, \{X, W\}\}$, 
    $\calB =\{\{X, Z\}, \{X, W\}\}$, and $\bm G = \{W\}$.
\end{example}

Let $\ov{\bm\sigma}=(\bm X_1, \ldots, \bm X_{|\ov{\bm\sigma}|})$ be a generalized variable elimination order
and $\calH^{\ov{\bm\sigma}}_1, \ldots, \calH^{\ov{\bm\sigma}}_{|\ov{\bm\sigma}|+1}$ be the corresponding generalized hypergraph sequence (Definition~\ref{defn:gve}).
For any $i \in [|\ov{\bm\sigma}|]$, we use $\emm^{\ov{\bm\sigma}}_i$ to denote $\emm_{\calH^{\ov{\bm\sigma}}_i}(\bm X_i)$.


\subsection{Putting pieces together}

We now have all the components in place to formally define the $\omega$-submodular width.
\begin{definition}[$\omega$-submodular width]
    Given a hypergraph $\calH=(\calV,\calE)$, the {\em $\omega$-submodular width} of $\calH$,
    denoted by $\osubw(\calH)$, is defined as follows:
    \begin{align}
        {\color{red}\osubw}(\calH) \quad\defeq\quad
            \max_{\bm h \in \Gamma \cap \ed}\quad
            \min_{\ov{\bm\sigma} \in \ov{\pi}(\calV)}\quad
            \max_{i \in [|\ov{\bm\sigma}|]}\quad
            {\color{red}\min(}h(U^{\ov{\bm\sigma}}_i){\color{red}, \emm^{\ov{\bm\sigma}}_i)}
            \label{eq:osubw}
    \end{align}
    \label{defn:osubw}
\end{definition}
\vspace{-.4cm}

To compare the $\omega$-submodular width with the traditional submodular width,
we include below a definition of the submodular width that is equivalent to Eq.~\eqref{eq:subw:vo}, except that it uses GVEOs:

\begin{proposition}
    The following is an equivalent definition of the submodular width of $\calH$:
\begin{align}
    \subw(\calH) \quad\defeq\quad
        \max_{\bm h \in \Gamma \cap \ed}\quad
        \min_{\ov{\bm\sigma} \in \ov{\pi}(\calV)}\quad
        \max_{i \in [|\ov{\bm\sigma}|]}\quad
        h(U^{\ov{\bm\sigma}}_i)
    \label{eq:subw:gvo}
\end{align}
    \label{prop:subw:vo}
\end{proposition}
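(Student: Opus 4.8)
The plan is to reduce the statement to a pointwise comparison of the two inner expressions. Since $\max_{\bm h \in \Gamma \cap \ed}$ is the outermost operator in both the established VEO definition Eq.~\eqref{eq:subw:vo} and the claimed GVEO definition Eq.~\eqref{eq:subw:gvo}, and since Eq.~\eqref{eq:subw:vo} is already known to equal $\subw(\calH)$, it suffices to prove that for every fixed $\bm h \in \Gamma \cap \ed$,
\[
\min_{\bm\sigma \in \pi(\calV)} \max_{i \in [k]} h(U^{\bm\sigma}_i) \;=\; \min_{\ov{\bm\sigma} \in \ov\pi(\calV)} \max_{i \in [|\ov{\bm\sigma}|]} h(U^{\ov{\bm\sigma}}_i),
\]
and then apply $\max_{\bm h}$ to both sides. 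One inequality is immediate: every permutation $\bm\sigma=(X_1,\dots,X_k)$ is the ordered partition into singletons $(\set{X_1},\dots,\set{X_k})$, and for singleton blocks the generalized operators $\partial_\calH(\bm X),U_\calH(\bm X),N_\calH(\bm X)$ collapse to the single-variable versions, so $U^{\ov{\bm\sigma}}_i = U^{\bm\sigma}_i$. Hence $\pi(\calV)\subseteq\ov\pi(\calV)$, and minimizing over the larger class gives ``$\le$'' (GVEO side no larger than VEO side).

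For the reverse inequality I would take an arbitrary GVEO $\ov{\bm\sigma}=(\bm X_1,\dots,\bm X_m)$ and \emph{refine} it into a VEO $\bm\sigma$ by listing the variables of each block $\bm X_i$ in any fixed order while preserving the block order. Running the block sequence $\calH^{\ov{\bm\sigma}}_1,\dots$ and the refined single-variable sequence in parallel, I would maintain the invariant that after the refined process has eliminated all variables of $\bm X_1,\dots,\bm X_i$, the refined hypergraph shares the vertex set $\calV^{\ov{\bm\sigma}}_{i+1}$ with $\calH^{\ov{\bm\sigma}}_{i+1}$ and each of its hyperedges is contained in some hyperedge of $\calH^{\ov{\bm\sigma}}_{i+1}$; that is, the refined hypergraph is \emph{dominated} by the block hypergraph. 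Granting this invariant, fix a block and set $W \defeq U^{\ov{\bm\sigma}}_{i+1} = U_{\calH^{\ov{\bm\sigma}}_{i+1}}(\bm X_{i+1})$. A short induction over the one-at-a-time eliminations inside $\bm X_{i+1}$ shows that every refined hyperedge meeting a not-yet-eliminated variable of $\bm X_{i+1}$ is a subset of $W$: at the start such a hyperedge sits inside a block hyperedge that overlaps $\bm X_{i+1}$, hence inside $W$; and each newly created hyperedge $N(\cdot)$ is contained in the union of such hyperedges, so also inside $W$. Therefore each single-variable set $U$ arising while eliminating $\bm X_{i+1}$ satisfies $U\subseteq W = U^{\ov{\bm\sigma}}_{i+1}$, and by monotonicity Eq.~\eqref{eq:monotone} its $h$-value is at most $h(U^{\ov{\bm\sigma}}_{i+1})$. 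Maximizing over all variables and all blocks yields $\max_i h(U^{\bm\sigma}_i)\le \max_i h(U^{\ov{\bm\sigma}}_i)$ for the refined $\bm\sigma$, and minimizing over $\ov{\bm\sigma}$ gives ``$\ge$''. Combining the two inequalities closes the equality, and $\max_{\bm h}$ then yields the proposition.

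The main obstacle is verifying that the domination invariant is preserved across the boundary between block $i$ and block $i+1$. Concretely, I must check that the several hyperedges created by the one-at-a-time eliminations inside $\bm X_{i+1}$ (which may temporarily still contain unremoved variables of $\bm X_{i+1}$) all collapse, once the block is fully eliminated, into subsets of the \emph{single} new block hyperedge $N(\bm X_{i+1}) = W\setminus \bm X_{i+1}$, and that every old refined hyperedge untouched by the block either survives dominated by a surviving block hyperedge or is absorbed into $N(\bm X_{i+1})$. This is exactly the place where the asymmetry between block elimination (one possibly large new hyperedge) and one-at-a-time elimination (several smaller ones) enters, and the crux is to show that refinement can only ever shrink hyperedges relative to the block sequence, so that monotonicity of $\bm h$ suffices to control the refined $U$-sets.
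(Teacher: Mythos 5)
Your proposal is correct, and at the top level it mirrors the paper's proof: both obtain ``$\le$'' for free from $\pi(\calV)\subseteq\ov{\pi}(\calV)$, and both prove ``$\ge$'' by refining an arbitrary GVEO into a VEO whose $U$-sets are contained in the GVEO's $U$-sets and then applying monotonicity of $\bm h$. Where you genuinely diverge is in how the refinement induction is organized, and your version is in fact the more careful one. The paper peels off a single variable $Y$ from the first non-singleton block $\bm X_1$, notes $U_1^{\ov{\bm\sigma'}},U_2^{\ov{\bm\sigma'}}\subseteq U_1^{\ov{\bm\sigma}}$, and then asserts that $\calH_3^{\ov{\bm\sigma'}}$ is \emph{identical} to $\calH_2^{\ov{\bm\sigma}}$, so that a potential-function induction on $\sum_i(|\bm X_i|-1)$ finishes. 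That identity claim fails when the hyperedges touching $Y$ share no vertex with $\bm X_1\setminus\{Y\}$: for example, with $\calE=\{\{A,B\},\{C,D\}\}$ and $\bm X_1=\{A,C\}$, eliminating $\{A,C\}$ in one shot leaves the single hyperedge $\{B,D\}$, while eliminating $A$ and then $C$ leaves the two hyperedges $\{B\}$ and $\{D\}$. What does survive --- and what your domination invariant states exactly --- is that every refined hyperedge is contained in some block hyperedge, and this weaker property is all the induction needs, since $U$-sets computed in a dominated hypergraph are contained in the corresponding $U$-sets of the dominating one. As for the boundary-crossing ``obstacle'' you flag, it closes with facts you have already established: every hyperedge created while refining block $\bm X_{i+1}$ lies inside $W=U_{i+1}^{\ov{\bm\sigma}}$, so those that survive to the end of the block (and are therefore disjoint from $\bm X_{i+1}$) lie inside $N(\bm X_{i+1})=W\setminus\bm X_{i+1}$, which is a hyperedge of $\calH_{i+2}^{\ov{\bm\sigma}}$; and every surviving pre-block hyperedge is dominated either by a surviving block hyperedge or, if its dominating hyperedge belonged to $\partial(\bm X_{i+1})$ and hence sat inside $W$, by $N(\bm X_{i+1})$ as well. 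So your plan completes to a full proof, and it moreover supplies precisely the statement that repairs the ``identical'' step in the paper's argument.
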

\begin{proof}
    The above definition is obviously upper bounded by Eq.~\eqref{eq:subw:vo} since
    $\ov{\pi}(\calV)$ is a superset of $\pi(\calV)$.
    In order to show the opposite direction, consider an arbitrary $\ov{\bm\sigma}=(\bm X_1, \ldots, \bm X_{|\ov{\bm\sigma}|}) \in \ov{\pi}(\calV)$.
    Consider the non-negative quantity
    $\varphi(|\ov{\bm\sigma}|) \defeq \sum_{i \in [|\ov{\bm\sigma}|]}(|\bm X_i| - 1)$.
    As long as $\ov{\bm\sigma}$ contains some $\bm X_i$ whose size it at least 2,
    this quantity will be positive, and we will construct another generalized variable elimination order $\ov{\bm\sigma'}$ that reduces this quantity and where 
    every $U_j^{\ov{\bm\sigma'}}$ is a subset of some $U_i^{\ov{\bm\sigma}}$.
    WLOG assume that $|\bm X_1| > 1$. Let $Y$ be an arbitrary element of $\bm X_1$, and define $\ov{\bm\sigma'}=(\{Y\}, \bm X\setminus \{Y\}, \bm X_2, \ldots, \bm X_{|\ov{\bm\sigma}|})$. Clearly, both $U_1^{\ov{\bm\sigma'}}$ and $U_2^{\ov{\bm\sigma'}}$
    are subsets of $U_1^{\ov{\bm\sigma}}$. Moreover,
    $\calH_3^{\ov{\bm\sigma'}}$ is identical to $\calH_2^{\ov{\bm\sigma}}$,
    hence by induction, every $U_j^{\ov{\bm\sigma'}}$ is a subset of some $U_i^{\ov{\bm\sigma}}$.
\end{proof}
\begin{proposition}
    For any hypergraph $\calH$, $\osubw(\calH) \leq \subw(\calH)$.
    \label{prop:osubw<=subw}
\end{proposition}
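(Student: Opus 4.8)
The plan is to reduce everything to a pointwise monotonicity argument, using Proposition~\ref{prop:subw:vo} to place $\subw(\calH)$ into a form directly comparable with Eq.~\eqref{eq:osubw}. By that proposition, the submodular width equals $\max_{\bm h \in \Gamma \cap \ed} \min_{\ov{\bm\sigma} \in \ov{\pi}(\calV)} \max_{i \in [|\ov{\bm\sigma}|]} h(U^{\ov{\bm\sigma}}_i)$ (Eq.~\eqref{eq:subw:gvo}). This matters because it guarantees that both $\osubw(\calH)$ and $\subw(\calH)$ range over \emph{exactly} the same objects: the same set $\Gamma \cap \ed$ of edge-dominated polymatroids, the same set $\ov{\pi}(\calV)$ of GVEOs, and, for each fixed $\ov{\bm\sigma}$, the same index set $[|\ov{\bm\sigma}|]$ together with the same quantities $h(U^{\ov{\bm\sigma}}_i)$. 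Thus the two definitions have an identical nesting $\max_{\bm h}\,\min_{\ov{\bm\sigma}}\,\max_i$, and differ only in the innermost expression.

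The key step is the elementary bound $\min(a,b) \le a$. Concretely, fix any $\bm h \in \Gamma \cap \ed$, any $\ov{\bm\sigma} \in \ov{\pi}(\calV)$, and any $i \in [|\ov{\bm\sigma}|]$; then
\begin{align*}
    \min\bigl(h(U^{\ov{\bm\sigma}}_i),\ \emm^{\ov{\bm\sigma}}_i\bigr) \ \leq\ h(U^{\ov{\bm\sigma}}_i).
\end{align*}
I would then propagate this pointwise inequality outward through the three operators in sequence. Taking $\max$ over $i$ on both sides preserves the inequality, since an upper bound that holds termwise also bounds the maximum; this gives, for each fixed $\bm h$ and $\ov{\bm\sigma}$, that $\max_i \min(h(U^{\ov{\bm\sigma}}_i), \emm^{\ov{\bm\sigma}}_i) \le \max_i h(U^{\ov{\bm\sigma}}_i)$. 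Next, since $\min$ is monotone, taking $\min$ over $\ov{\bm\sigma}$ preserves the inequality; and finally, since $\max$ is monotone, taking $\max$ over $\bm h$ preserves it as well. Chaining these yields $\osubw(\calH) \le \subw(\calH)$, as claimed.

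There is essentially no genuine obstacle here once Proposition~\ref{prop:subw:vo} is available: the entire argument is monotonicity of $\max$ and $\min$ under a termwise inequality. The only point requiring care — and the one real reason the proof needs Proposition~\ref{prop:subw:vo} rather than Eq.~\eqref{eq:subw:vo} — is to ensure that both widths are expressed over the \emph{same} combinatorial objects (GVEOs, not plain VEOs), so that the comparison is literally term-by-term with matching $U^{\ov{\bm\sigma}}_i$; without this normalization the two expressions would not be structurally aligned and the pointwise comparison could not be invoked directly.
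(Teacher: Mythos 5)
Your proof is correct and matches the paper's own argument exactly: the paper proves this proposition simply ``by comparing Eq.~\eqref{eq:osubw} to~\eqref{eq:subw:gvo}'', which is precisely the termwise bound $\min\bigl(h(U^{\ov{\bm\sigma}}_i), \emm^{\ov{\bm\sigma}}_i\bigr) \leq h(U^{\ov{\bm\sigma}}_i)$ propagated through the identical $\max$--$\min$--$\max$ nesting that you spell out. Your observation that Proposition~\ref{prop:subw:vo} is what aligns the two definitions over the same GVEOs is also the same normalization the paper relies on.
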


\begin{proposition}
    If $\omega = 3$, then for any hypergraph $\calH$, $\osubw(\calH) = \subw(\calH)$.
    \label{prop:osubw=subw}
\end{proposition}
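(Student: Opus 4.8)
The plan is to leverage Proposition~\ref{prop:osubw<=subw}, which already supplies $\osubw(\calH) \leq \subw(\calH)$, so it suffices to argue the reverse inequality; in fact I would argue something sharper that yields equality directly. Comparing the definition of $\osubw$ in Eq.~\eqref{eq:osubw} with the GVEO form of the submodular width in Eq.~\eqref{eq:subw:gvo} (Proposition~\ref{prop:subw:vo}), the two expressions are syntactically identical except that $\osubw$ replaces each term $h(U^{\ov{\bm\sigma}}_i)$ by $\min(h(U^{\ov{\bm\sigma}}_i), \emm^{\ov{\bm\sigma}}_i)$. Hence it is enough to establish the single pointwise claim that, for every polymatroid $\bm h \in \Gamma$, every hypergraph $\calH$, and every non-empty $\bm X \subseteq \calV$, we have $\emm_\calH(\bm X) \geq h(U_\calH(\bm X))$ whenever $\omega = 3$. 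This claim forces $\min(h(U^{\ov{\bm\sigma}}_i), \emm^{\ov{\bm\sigma}}_i) = h(U^{\ov{\bm\sigma}}_i)$ termwise inside the nested $\max/\min$, so the two expressions coincide, giving $\osubw(\calH) = \subw(\calH)$.

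To prove the pointwise claim I would fix an arbitrary feasible triple $\calA, \calB, \bm G$ appearing in Definition~\ref{defn:emm}, set $\bm P \defeq (\bm A \setminus \bm B) \setminus \bm G$ and $\bm Q \defeq (\bm B \setminus \bm A) \setminus \bm G$, and invoke Proposition~\ref{prop:mm:omega=3} (which is exactly where the hypothesis $\omega = 3$ is used) to obtain $\mm(\bm P; \bm Q; \bm X | \bm G) \geq h(\bm P \cup \bm Q \cup \bm X \cup \bm G)$. The crux is then the set identity $\bm P \cup \bm Q \cup \bm X \cup \bm G = \bm A \cup \bm B$. Since $\calA \cup \calB = \partial_\calH(\bm X)$ and $\bm A = \cup\calA$, $\bm B = \cup\calB$, we have $\bm A \cup \bm B = U_\calH(\bm X)$, so the identity turns the bound into $\mm(\bm P; \bm Q; \bm X | \bm G) \geq h(U_\calH(\bm X))$ for this feasible triple. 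Taking the minimum over all feasible triples (and reading the empty case as $+\infty$) yields $\emm_\calH(\bm X) \geq h(U_\calH(\bm X))$, as required.

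Establishing the set identity is a short two-sided containment, and it is the only place the feasibility constraints do any work; I expect this bookkeeping to be the main obstacle, though it is routine. The inclusion $\subseteq$ is immediate, since $\bm P, \bm Q \subseteq \bm A \cup \bm B$, $\bm X \subseteq \bm A \cap \bm B$, and $\bm G \subseteq (\bm A \cup \bm B) \setminus \bm X$. For $\supseteq$ I would take $v \in \bm A \cup \bm B$ and split on where it lies relative to the symmetric difference $\bm A \triangle \bm B$ and the intersection $\bm A \cap \bm B$: an element of $\bm A \setminus \bm B$ falls in $\bm P$ if it avoids $\bm G$ and in $\bm G$ otherwise, symmetrically for $\bm B \setminus \bm A$ and $\bm Q$, while an element of $\bm A \cap \bm B$ lies in $\bm X$ or, failing that, in $\bm G$ by the constraint $(\bm A \cap \bm B) \setminus \bm X \subseteq \bm G$. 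The decisive use of the two-sided bound $(\bm A \cap \bm B) \setminus \bm X \subseteq \bm G \subseteq (\bm A \cup \bm B) \setminus \bm X$ is what makes all three cases land inside $\bm P \cup \bm Q \cup \bm X \cup \bm G$; everything else reduces to a direct substitution into Proposition~\ref{prop:mm:omega=3} together with Propositions~\ref{prop:osubw<=subw} and~\ref{prop:subw:vo}.
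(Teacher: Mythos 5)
Your proof is correct and follows essentially the same route as the paper: the paper's (one-line) argument is precisely that Proposition~\ref{prop:mm:omega=3} gives $h(U^{\ov{\bm\sigma}}_i) \leq \emm^{\ov{\bm\sigma}}_i$ when $\omega = 3$, so the $\min$ in Eq.~\eqref{eq:osubw} collapses and the expression coincides with Eq.~\eqref{eq:subw:gvo}. Your write-up merely fills in the detail the paper leaves implicit, namely the set identity $\bigl((\bm A\setminus\bm B)\setminus\bm G\bigr) \cup \bigl((\bm B\setminus\bm A)\setminus\bm G\bigr) \cup \bm X \cup \bm G = \bm A \cup \bm B = U_\calH(\bm X)$, which you verify correctly.
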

Proposition~\ref{prop:osubw<=subw} holds by comparing Eq.~\eqref{eq:osubw} to~\eqref{eq:subw:gvo}.
Proposition~\ref{prop:osubw=subw} holds because Proposition~\ref{prop:mm:omega=3} implies that
when $\omega = 3$, $h(U^{\ov{\bm\sigma}}_i) \leq \emm^{\ov{\bm\sigma}}_i$.


For the purpose of computing the $\omega$-submodular width,
we propose an equivalent form of Eq.~\eqref{eq:osubw} that is easier to compute.

\begin{proposition}
    The $\omega$-submodular width of a hypergraph can be equivalently defined as follows:
    \begin{align}
        {\osubw}(\calH) \quad\defeq\quad
            \max_{\bm h \in \Gamma \cap \ed}\quad
            \min_{\ov{\bm\sigma} \in \ov{\pi}(\calV)}\quad
            \max_{\substack{i \in [|\ov{\bm\sigma}|]\\
            {\color{red}\forall j < i,  U_i^{\ov{\bm\sigma}} \not\subseteq U_j^{\ov{\bm\sigma}}}}}\quad
            {\min(}h(U^{\ov{\bm\sigma}}_i){, \emm^{\ov{\bm\sigma}}_i)}
            \label{eq:osubw:trimmed}
    \end{align}
    \label{prop:osubw:trimmed}
\end{proposition}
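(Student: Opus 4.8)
The plan is to prove the stronger \emph{pointwise} statement that the trimming does not change the inner expression for any individual $\bm h$ and $\ov{\bm\sigma}$: writing $n\defeq|\ov{\bm\sigma}|$ and $\ov{\bm\sigma}=(\bm X_1,\ldots,\bm X_n)$, I would show, for every $\bm h\in\Gamma\cap\ed$ and every $\ov{\bm\sigma}\in\ov{\pi}(\calV)$, that
\[
\max_{i\in[n]}\min\!\big(h(U^{\ov{\bm\sigma}}_i),\emm^{\ov{\bm\sigma}}_i\big)
\;=\;
\max_{\substack{i\in[n]:\ \forall j<i,\\ U^{\ov{\bm\sigma}}_i\not\subseteq U^{\ov{\bm\sigma}}_j}}\min\!\big(h(U^{\ov{\bm\sigma}}_i),\emm^{\ov{\bm\sigma}}_i\big).
\]
Once this is in hand, the equality of Eq.~\eqref{eq:osubw} and Eq.~\eqref{eq:osubw:trimmed} follows immediately by applying $\min_{\ov{\bm\sigma}}$ and then $\max_{\bm h}$ to both sides. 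The ``$\geq$'' direction of the pointwise claim is trivial, since the right-hand maximum ranges over a subset of the indices; all the content is in the reverse direction.

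The heart of the argument is a single-step domination lemma: \emph{if $j<i$ and $U^{\ov{\bm\sigma}}_i\subseteq U^{\ov{\bm\sigma}}_j$, then $\min(h(U^{\ov{\bm\sigma}}_i),\emm^{\ov{\bm\sigma}}_i)\le\min(h(U^{\ov{\bm\sigma}}_j),\emm^{\ov{\bm\sigma}}_j)$.} I would verify the two halves of the right-hand $\min$ separately. The $h(U_j)$ half is immediate: $\min(h(U_i),\emm_i)\le h(U_i)\le h(U_j)$ by monotonicity, Eq.~\eqref{eq:monotone}. The $\emm_j$ half, namely $\min(h(U_i),\emm_i)\le\emm_j$, is where I expect the main obstacle to lie.

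To handle the $\emm_j$ half I would first establish the auxiliary bound $\emm^{\ov{\bm\sigma}}_j\ge h(U^{\ov{\bm\sigma}}_j\setminus\bm X_j)$, obtained by unfolding Definition~\ref{defn:emm}. For any admissible choice of $\calA,\calB,\bm G$, we have $\bm A\cup\bm B=U^{\ov{\bm\sigma}}_j$ and $(\bm A\cap\bm B)\setminus\bm X_j\subseteq\bm G$; Proposition~\ref{prop:mm:output:size} then lower-bounds the associated $\mm$-term by $h\big(((\bm A\setminus\bm B)\setminus\bm G)\cup((\bm B\setminus\bm A)\setminus\bm G)\cup\bm G\big)$. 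The set-algebra step is that this union equals $\big((\bm A\cup\bm B)\setminus(\bm A\cap\bm B)\big)\cup\bm G$, which contains $(U_j\setminus(\bm A\cap\bm B))\cup((\bm A\cap\bm B)\setminus\bm X_j)=U_j\setminus\bm X_j$ (using $\bm X_j\subseteq\bm A\cap\bm B$). As this holds for every admissible choice, it survives the minimum defining $\emm_j$, giving $\emm_j\ge h(U_j\setminus\bm X_j)$. Next I would use the generalized elimination order (Definition~\ref{defn:gve}) to argue $U^{\ov{\bm\sigma}}_i\subseteq U^{\ov{\bm\sigma}}_j\setminus\bm X_j$: since $j<i$, the block $\bm X_j$ has already been removed, so $\bm X_j\cap\calV^{\ov{\bm\sigma}}_i=\emptyset$, whereas $U^{\ov{\bm\sigma}}_i\subseteq\calV^{\ov{\bm\sigma}}_i$; combined with the hypothesis $U_i\subseteq U_j$ this yields $U_i\subseteq U_j\setminus\bm X_j$. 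Monotonicity then gives $\min(h(U_i),\emm_i)\le h(U_i)\le h(U_j\setminus\bm X_j)\le\emm_j$, completing the lemma.

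Finally I would lift the lemma to the pointwise claim by a descending-index argument. Let $i^\ast$ attain the left-hand (full) maximum. If no $j<i^\ast$ has $U_{i^\ast}\subseteq U_j$, then $i^\ast$ is a surviving index and we are done; otherwise the lemma gives $\min(h(U_j),\emm_j)\ge\min(h(U_{i^\ast}),\emm_{i^\ast})$ for such a $j$, so $j$ also attains the maximum. Since the index strictly decreases and index $1$ always survives (the condition $\forall j<1$ is vacuous), the process terminates at a surviving index whose value equals the full maximum; hence the trimmed maximum is at least the full maximum, and equality follows. The only delicate points are the set identity used in the auxiliary bound and the bookkeeping guaranteeing $\bm X_j\cap U^{\ov{\bm\sigma}}_i=\emptyset$, both of which are routine once the definitions are unwound.
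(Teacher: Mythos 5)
Your proposal is correct and follows essentially the same route as the paper's proof: both reduce to the single-step domination claim that $j<i$ and $U^{\ov{\bm\sigma}}_i\subseteq U^{\ov{\bm\sigma}}_j$ imply $h(U^{\ov{\bm\sigma}}_i)\le h(U^{\ov{\bm\sigma}}_j\setminus\bm X_j)\le\min(h(U^{\ov{\bm\sigma}}_j),\emm^{\ov{\bm\sigma}}_j)$, using $\bm X_j\cap U^{\ov{\bm\sigma}}_i=\emptyset$, monotonicity, and Proposition~\ref{prop:mm:output:size}. The only difference is that you spell out two steps the paper leaves implicit — the set algebra justifying $\emm^{\ov{\bm\sigma}}_j\ge h(U^{\ov{\bm\sigma}}_j\setminus\bm X_j)$ and the descending-index argument lifting the lemma to equality of the two maxima — both of which are correct.
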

\vspace{-.4cm}
The only difference between Eq.~\eqref{eq:osubw:trimmed} and Eq.~\eqref{eq:osubw} is that
$\max_{i \in [|\ov{\bm\sigma}|]}$ is more restricted in Eq.~\eqref{eq:osubw:trimmed}. In particular,
we only need to consider $i$ where $U_i^{\ov{\bm\sigma}}$ is not contained in any previous $U_j^{\ov{\bm\sigma}}$.
For example, given the hypergraph of the triangle query $Q_\triangle$
from Eq.~\eqref{eq:intro:triangle}, Eq.~\eqref{eq:osubw:trimmed} allows us to only consider
the first variable elimination step, thus simplifying the $\omega$-submodular width to become Eq.~\eqref{eq:intro:osubw:triangle}.

\begin{proof}[Proof of Proposition~\ref{prop:osubw:trimmed}]
    Consider an arbitrary polymatroid $\bm h \in \Gamma \cap \ed$ and a generalized elimination order $\ov{\bm\sigma} \in \ov{\pi}(\calV)$.
    It suffices to show that for any pair $(j,i)$ where $j < i$ and $U^{\ov{\bm \sigma}}_i \subseteq U^{\ov{\bm \sigma}}_j$,
    we must have
    $h(U^{\ov{\bm \sigma}}_i) \le \min\left\{h(U^{\ov{\bm \sigma}}_j), \emm^{\ov{\bm \sigma}}_j\right\}$. In this case, we further notice that $\bm X_j \cap U^{\ov{\bm \sigma}}_i = \emptyset$ since $j < i$ hence the variables $\bm X_j$ must have been eliminated before $U^{\ov{\bm \sigma}}_i$ was created.
    This implies that $U^{\ov{\bm \sigma}}_i \subseteq U^{\ov{\bm \sigma}}_j \setminus \bm X_j$. By the monotonicity of $\bm h$ (Eq.~\eqref{eq:monotone}), $h(U^{\ov{\bm \sigma}}_i) \le h(U^{\ov{\bm \sigma}}_j \setminus \bm X_j) \le h(U^{\ov{\bm \sigma}}_j)$.
    Moreover, by Proposition~\ref{prop:mm:output:size}, we have
    $h(U_j^{\ov{\bm\sigma}}\setminus \bm X_j) \leq \emm^{\ov{\bm\sigma}}_j$.
\end{proof}



\begin{example}
    Let's take the 4-clique hypergraph from Eq.~\eqref{eq:H:4clique}.
    By Proposition~\ref{prop:osubw:trimmed}, for any $\ov{\bm\sigma}\in \ov{\pi}(\calV)$,
    we only need to consider
    $U^{\ov{\bm\sigma}}_1$.
    Hence, the $\omega$-submodular width becomes:
    \begin{align}
        \osubw(\calH) = \max_{\bm h \in \Gamma \cap \ed} \min\bigl(
            &h(XYZW),\quad
            \mm(XY;Z;W),\quad
            \mm(XZ;Y;W),\nonumber\\
            &\mm(XW;Y;Z),\quad
            \mm(YZ;X;W),\quad
            \mm(YW;X;Z),\nonumber\\
            &\mm(ZW;X;Y),\quad
            \mm(Y;Z;W|X),\quad
            \mm(X;Z;W|Y),\nonumber\\
            &\mm(X;Y;W|Z),\quad
            \mm(X;Y;Z|W)
            \bigr)
        \label{eq:osubw:4clique}
    \end{align}
    \label{ex:osubw:4clique}
\end{example}

\section{\bf The $\omega$-Submodular Width for Example Queries.}
\label{sec:upper}
We show 
in Table~\ref{tab:comparison}
the $\omega$-submodular width for several classes of queries
and compare it with the submodular width for each of them.
Proofs can be found in \shortorfull{the full version~\cite{full-version}}{Appendix~\ref{app:upper}}.
They use a variety of techniques to obtain both upper and lower bounds on the $\omega$-submodular width. Below are the hypergraphs for these query classes:
\begin{align}
    \text{$k$-Clique:}\quad &\calH = \left(\{X_1,X_2,\cdots,X_k\},\ \{\{X_i,X_j\}: i,j \in [k], i \neq j\}\right)\label{eq:body:k-clique}\\
    \text{$k$-Cycle:}\quad &\calH = \left(\{X_1,X_2,\cdots,X_k\},\ \{\{X_i,X_{i+1}\}: i \in [k-1]\}\cup\{\{X_k, X_1\}\}\right)\label{eq:body:k-cycle}\\
    \text{$k$-Pyramid:}\quad &\calH = \left(\{Y, X_1, X_2,\cdots, X_k\}, \ \{\{Y,X_1,\}, \{Y,X_2\}, \cdots, \{Y, X_k\}, \{X_1,X_2,\cdots,X_k\}\}\right)\label{eq:body:k-pyramid}
\end{align}
Our main result, Theorem~\ref{thm:panda:osubw}, says that any query $Q$ can be evaluated in time $O(N^{\osubw(Q)}\cdot \log^2 N)$.
Combined with Table~\ref{tab:comparison}, this implies the results in Table~\ref{tab:intro:comparison} for the corresponding queries.

       \begin{table}[ht]
        \centering
        \begin{tabular}{|c|c|c|}
        \hline
            \bf Queries \cellcolor{lgray} & \bf \cellcolor{lgray} Submodular Width & \cellcolor{lgray} \bf $\omega$-Submodular Width \\
            \hline 
            \multirow{2}{*}{Triangle $Q_\triangle$~(Eq.\eqref{eq:intro:triangle})} & \multirow{2}{*}{$1.5$}  & \multirow{2}{*}{$\frac{2\omega}{\omega+1}$  \shortorfull{}{(Lemma~\ref{lem:clique-3})}}\\
            &&\\\hline
            \multirow{2}{*}{4-Clique} & \multirow{2}{*}{$2$} & \multirow{2}{*}{$\frac{\omega+1}{2}$ \shortorfull{}{(Lemma~\ref{lem:clique-4})}}\\
            &&\\
            \hline
            \multirow{2}{*}{5-Clique} & \multirow{2}{*}{$2.5$} & \multirow{2}{*}{$\frac{\omega}{2} + 1$ \shortorfull{}{(Lemma~\ref{lem:clique-5})}}\\
            &&\\
            \hline
            \multirow{2}{*}{$k$-Clique (Eq.~\eqref{eq:body:k-clique})} & \multirow{2}{*}{$\frac{k}{2}$} & $\frac{1}{2} \cdot \lceil \frac{k}{3}\rceil +  \frac{1}{2} \cdot \lceil \frac{k-1}{3}\rceil + \frac{1}{2} \cdot \lfloor \frac{k}{3}\rfloor \cdot (\omega-2)$ \\
            && \shortorfull{}{(Lemma~\ref{lem:clique})}
            \\ \hline 
            \multirow{2}{*}{4-Cycle $Q_\square$ (Eq.\eqref{eq:intro:4cycle})} & \multirow{2}{*}{$1.5$} & \multirow{2}{*}{$2 - \frac{3}{2\cdot \min\{\omega,\frac{5}{2}\}+1}$ \shortorfull{}{(Lemma~\ref{lem:4cycle})}} \\ &&\\
            \hline
            \multirow{2}{*}{$k$-Cycle (Eq.~\eqref{eq:body:k-cycle})} & \multirow{2}{*}{$2 - \frac{1}{\lceil k/2 \rceil}$} & \multirow{2}{*}{$\le \squareC_k$ \shortorfull{}{(Lemma~\ref{lem:cycle})}}\\
            &&\\\hline
            \multirow{2}{*}{$3$-Pyramid} & \multirow{2}{*}{$\frac{5}{3}$}  & \multirow{2}{*}{$2 - \frac{1}{\omega}$ \shortorfull{}{(Lemma~\ref{lmm:3-pyramid})}} \\ 
            &&\\ \hline
            \multirow{2}{*}{$k$-Pyramid (Eq.~\eqref{eq:body:k-pyramid})} & \multirow{2}{*}{$2-\frac{1}{k}$}  & \multirow{2}{*}{$\le 2 - \frac{2}{\omega \cdot (k-1) -k + 3}$ \shortorfull{}{(Lemma~\ref{lmm:k-pyramid})}}\\
            &&\\ \hline
        \end{tabular}
        \caption{Comparison of the submodular with and $\omega$-submodular width on example queries.
        Entries marked with ``$\leq$'' are upper bounds, while the rest are exact values.
        The symbol $\rectC_k$ is the best-known exponent for detecting cycles using rectangular matrix multiplication~\cite[Theorem 1.3]{dalirrooyfard2019graph},
        while $\squareC_k$ is the smallest upper bound on $\rectC_k$ that is obtained through {\em square} matrix multiplication; see \shortorfull{\cite{full-version}}{Eq.~\eqref{eq:rect-ck} and~\eqref{eq:square-ck}}.
        }
        \label{tab:comparison}
    \end{table}
       
\section{Algorithm for Computing the $\omega$-Submodular Width}
\label{sec:computing-osubw}
We present in this section an algorithm for computing the value of the $\omega$-submodular width for any given hypergraph.
This algorithm is a crucial component in our algorithm for computing the answer to a query $Q$
in $\omega$-submodular width time, as we will see in Section~\ref{sec:algo}.
The algorithm in this section can also be viewed as a generalization of the algorithm for computing the submodular width of a hypergraph, which is summarized in Appendix~\ref{app:computing-subw}.

Using Eq.~\eqref{eq:osubw:trimmed} alone, it is not immediately
clear how to compute the $\omega$-submodular width of a hypergraph $\calH$
because the outer $\max_{\bm h \in \Gamma \cap \ed}$ ranges over an infinite set.
Nevertheless, every other $\max$ and $\min$ in Eq.~\eqref{eq:osubw:trimmed} ranges over a finite set
whose cardinality only depends on $\calH$.
Note that by Eq.~\eqref{eq:emm}, $\emm^{\ov{\bm\sigma}}_i$ is a minimum of a collection of terms,
each of which has the form $\mm(\bm X; \bm Y; \bm Z | \bm G)$.
Note also that each term $\mm(\bm X; \bm Y; \bm Z | \bm G)$ is by itself a maximum of 3 terms,
as described by Eq.~\eqref{eq:mm}.

In order to compute the $\omega$-submodular width, the first step is to distribute every $\min$
over $\max$ in Eq.~\eqref{eq:osubw:trimmed}, thus pulling all $\max$ operators 
to the top level.
To formally describe this distribution, we need some notation.
Given an expression $e$ which is either a minimum or a maximum of a collection of terms,
we use $\args(e)$ to denote the collection of terms over which the minimum or maximum is taken.
Let $k$ be the number of vertices in $\calH$.
Let $f:\ov{\pi}(\calV)\to [k]$ be a function that maps every generalized
variable elimination order $\ov{\bm\sigma}\in\ov{\pi}(\calV)$ to a number $i$ between 1 and $|\ov{\bm\sigma}|$
that satisfies $U^{\ov{\bm\sigma}}_i \not\subseteq U^{\ov{\bm\sigma}}_j$ for all $j \in [i-1]$.
(Note that $|\ov{\bm\sigma}|\leq k$ for every $\ov{\bm\sigma}$.)
Let $\calF$ be the set of all such functions $f$.
For a fixed function $f \in\calF$, let
$g_f$ be a function 
that maps every pair $(\ov{\bm\sigma}, \mm(\bm X;\bm Y;\bm Z|\bm G))$
where $\ov{\bm\sigma}\in\ov{\pi}(\calV)$ and
$\mm(\bm X;\bm Y;\bm Z|\bm G)$ is a term in $\args(\emm^{\ov{\bm\sigma}}_{f(\ov{\bm\sigma})})$ to one
of the three terms in $\args(\mm(\bm X;\bm Y;\bm Z|\bm G))$ from Eq.~\eqref{eq:mm}.
Let $\calG_f$ be the set of all such functions $g_f$ for a fixed $f$.
Then, by distributing the min over the max in Eq.~\eqref{eq:osubw:trimmed}, we get:
\begin{align}
    &{\osubw}(\calH) \quad\nonumber\\
    &\defeq
        \max_{\bm h \in \Gamma \cap \ed}\quad
        \min_{\ov{\bm\sigma} \in \ov{\pi}(\calV)}\quad
        \max_{\substack{i \in [|\ov{\bm\sigma}|]\\
        {\forall j < i,  U_i^{\ov{\bm\sigma}} \not\subseteq U_j^{\ov{\bm\sigma}}}}}\quad
        \min(h(U^{\ov{\bm\sigma}}_i), \emm^{\ov{\bm\sigma}}_i)
        \nonumber\\&=
    \max_{\bm h \in \Gamma \cap \ed}\quad
        \max_{f \in \calF}\quad
        \min_{\ov{\bm\sigma} \in \ov{\pi}(\calV)}\quad
        \min\left(h(U^{\ov{\bm\sigma}}_{f(\ov{\bm\sigma})}), \emm^{\ov{\bm\sigma}}_{f(\ov{\bm\sigma})}\right)
        \nonumber\\&=
    \max_{\bm h \in \Gamma \cap \ed}\quad
        \max_{f \in \calF}\quad
        \min_{\ov{\bm\sigma} \in \ov{\pi}(\calV)}\quad
        \min\left(
            h(U^{\ov{\bm\sigma}}_{f(\ov{\bm\sigma})}),
            \min_{\mm(\bm X;\bm Y;\bm Z|\bm G) \in \args\left(\emm^{\ov{\bm\sigma}}_{f(\ov{\bm\sigma})}\right)} \mm(\bm X;\bm Y;\bm Z|\bm G)
            \right)
            \nonumber\\&=
    \max_{\bm h \in \Gamma \cap \ed}\quad
        \max_{f \in \calF}\quad
        \min_{\ov{\bm\sigma} \in \ov{\pi}(\calV)}\quad
        \min\left(
            h(U^{\ov{\bm\sigma}}_{f(\ov{\bm\sigma})}),
            \min_{\mm(\bm X;\bm Y;\bm Z|\bm G) \in \args\left(\emm^{\ov{\bm\sigma}}_{f(\ov{\bm\sigma})}\right)}
            \quad
            \max_{e \in \args(\mm(\bm X;\bm Y;\bm Z|\bm G))} e
            \right)
            \nonumber\\&=
    \max_{\bm h \in \Gamma \cap \ed}\quad
        \max_{f \in \calF}\quad
        \max_{g_f \in \calG_f}\quad
        \min_{\ov{\bm\sigma} \in \ov{\pi}(\calV)}
        \min\left(
            h(U^{\ov{\bm\sigma}}_{f(\ov{\bm\sigma})}),
            \min_{\mm(\bm X;\bm Y;\bm Z|\bm G) \in \args\left(\emm^{\ov{\bm\sigma}}_{f(\ov{\bm\sigma})}\right)}
            g_f(\ov{\bm\sigma}, \mm(\bm X;\bm Y;\bm Z|\bm G))
            \right)
            \nonumber\\&=
    \max_{f \in \calF}\quad
    \max_{g_f \in \calG_f}\quad\nonumber\\
    &\quad\quad\underbrace{\max_{\bm h \in \Gamma \cap \ed}\quad
        \min_{\ov{\bm\sigma} \in \ov{\pi}(\calV)}\quad
        \min\left(
            h(U^{\ov{\bm\sigma}}_{f(\ov{\bm\sigma})}),
            \min_{\mm(\bm X;\bm Y;\bm Z|\bm G) \in \args\left(\emm^{\ov{\bm\sigma}}_{f(\ov{\bm\sigma})}\right)}
            g_f(\ov{\bm\sigma}, \mm(\bm X;\bm Y;\bm Z|\bm G))
            \right)}_{\text{Equivalent to an LP}}
    \label{eq:osubw:distributed:swapped:raw}
\end{align}
The above expression can be rewritten into the following form:
\begin{align}
    &\osubw(\calH) =
    \max_{i \in [\calI]}\nonumber\\
    &\quad\quad\quad\underbrace{
    \max_{\bm h \in \Gamma \cap \ed}\quad
    \min\bigl(
    \min_{\ell \in [L_i]} h(\bm U_{i \ell}),\quad
    \min_{j \in [J_i]}
    h(\bm X_{ij}|\bm G_{ij}) + h(\bm Y_{ij}|\bm G_{ij}) + \gamma h(\bm Z_{ij}|\bm G_{ij}) + h(\bm G_{ij})
    \bigr)
    }_{\text{Equivalent to LP~\eqref{eq:osubw:inner-lp} {\em assuming} $\gamma$ is fixed}}
    \label{eq:osubw:distributed:swapped}
\end{align}
In the above expression, $\calI$ is the number of $\max$ terms at the top level. For every $i \in [\calI]$,
the $i$-th term is a maximum over $\bm h \in \Gamma\cap\ed$ of a minimum of $L_i + J_i$ terms that are divided into two sets:
\begin{itemize}[leftmargin=*]
    \item $L_i$ terms of the form $h(\bm U)$ corresponding to $h(U_i^{\ov{\bm\sigma}})$ from Eq.~\eqref{eq:osubw:trimmed}.
    \item $J_i$ terms of the form $h(\bm X | \bm G) + h(\bm Y | \bm G) + \gamma h(\bm Z | \bm G) + h(\bm G)$ corresponding to terms in Eq.~\eqref{eq:mm}.
\end{itemize}

Suppose that $\omega$ (and by extension $\gamma$) is a fixed constant.
For a fixed $i \in [\calI]$, the inner expression in Eq.~\eqref{eq:osubw:distributed:swapped}
is equivalent to an LP, as we now show.
The condition $\bm h \in \Gamma\cap\ed$ is a finite collection of linear constraints.
The objective function is a minimum of $L_i+J_i$ linear functions of $\bm h$.
To convert it to a linear objective function, we introduce a fresh variable $t$
that is lower bounded by each of these $L_i+J_i$ linear functions, and we set the objective to
maximize $t$ as follows:
\begin{samepage}
\begin{align}
    \max_{t, \bm h \in \Gamma \cap \ed}
    \bigl\{
        t \quad\mid\quad &\forall \ell \in [L_i],\quad
        t\leq h(\bm U_{i\ell}),\nonumber\\
        &\forall j \in [J_i],\quad
        t \leq h(\bm X_{ij}|\bm G_{ij})+
    h(\bm Y_{ij}|\bm G_{ij}) +
    \gamma h(\bm Z_{ij}|\bm G_{ij})
    +h(\bm G_{ij})
    \bigr\}
    \label{eq:osubw:inner-lp}
\end{align}
\end{samepage}
Hence, computing the $\omega$-submodular width of a hypergraph $\calH$ reduces
to solving $\calI$ linear programs of the form in Eq.~\eqref{eq:osubw:inner-lp},
and taking the maximum of their optimal values.\footnote{If $\omega$ is not fixed, then Eq.~\eqref{eq:osubw:inner-lp} is not an LP because the coefficient $\gamma$ depends on $\omega$. Hence, we end up having constraints
that are quadratic in terms of $\omega$ and $\bm h$.}
Section~\ref{subsubsec:intro:computing-osubw} exemplifies the above using the hypergraph
of the triangle query $Q_\triangle$ (Eq.~\eqref{eq:intro:triangle}).
\label{app:computing-osubw}

\begin{example}
    Consider the 4-clique hypergraph from Eq.~\eqref{eq:H:4clique}.
    The $\omega$-submodular width of this hypergraph is given by Eq.~\eqref{eq:osubw:4clique}.
    Inside the $\min$, we have 10 different terms $\mm$, each of which is a maximum of 3 terms.
    By distributing the $\min$ over the $\max$, we get $\calI = 3^{10} = 59049$ terms.
    Our mechanical algorithm for computing the $\omega$-submodular width of this hypergraph
    consists of exhaustively solving an LP
    for each one of these $\calI$ terms and takeing their maximum optimal objective value,
    which turns out to be $\frac{\omega+1}{2}$ in this example.
    There are smarter but non-algorithmic ways to reach the same conclusion, as shown in Lemma~\ref{lem:clique-4}.
\end{example}
\section{Algorithm for Answering Queries in $\omega$-Submodular Width Time}
\label{sec:algo}

We show below our main result about evaluating a BCQ of the form~\eqref{eq:bcq} in $\omega$-submodular width time.
The runtime is measured in terms of $N$, which is the size of the input database instance, i.e.,~$N\defeq \sum_{R(\bm Z)\in\atoms(Q)} |R|$.
\begin{restatable}{theorem}{ThmMainResult}
    Assume $\omega$ is a rational number.\footnote{If $\omega$ is irrational, we can use any rational upper bound on $\omega$ instead.}
    Given a Boolean Conjunctive Query (BCQ) $Q$ and a corresponding input database instance $D$,
    \obcqsolver (Algorithm~\ref{alg:omega-panda-express:outer}) computes the answer to $Q$ over $D$ in time
    $O(N^{\osubw(Q)}\cdot \log^2 N)$ in data complexity, where
    $N$ is the size of $D$.
    \label{thm:panda:osubw}
\end{restatable}
We will see later that the extra $\log^2 N$ factor in the runtime of \obcqsolver comes from Lemma~\ref{lem:omega-join-matrix-mult}, which will be described in Section~\ref{subsec:omega-ddrs}.
We note, however, that this extra $\log^2 N$ factor does not manifest in every query. See Table~\ref{tab:intro:comparison} for examples of some queries, e.g. $k$-Clique, where
the runtime of \obcqsolver becomes tighter.

\begin{algorithm}[ht!]
    \caption{$\obcqsolver(Q, D)$}
    \begin{algorithmic}[1]
        \Statex{{\bf Inputs:} A BCQ $Q$ (Eq.~\eqref{eq:bcq}) and an input database instance $D$ for $Q$}
        \Statex{{\bf Output:} The answer to $Q$ over $D$}
        \vspace{-.25cm}\Statex\hrulefill
        \State $\ov B \gets N^{\osubw(Q)}$
        \For{$f \in \calF, g \in \calG_f$} \algocomment{Recall definitions from Section~\ref{sec:computing-osubw} and Eq.~\eqref{eq:osubw:distributed:swapped:raw}}
        \label{alg:omega-panda-express:outer:loop1}
            \State Construct an {\em $\omega$-DDR} of the form~\eqref{eq:omega-ddr:subw:raw}
            \algocomment{$\omega$-DDRs generalize DDRs. See Section~\ref{subsec:omega-ddrs}}
            \State Rename the head of the $\omega$-DDR to match the form~\eqref{eq:omega-ddr:subw}
            \State Construct the LP~\eqref{eq:osubw:inner-lp}
            \State Let $\opt$ be its optimal objective value\algocomment{$\opt \leq \osubw(Q)$ by Eq.~\eqref{eq:osubw:distributed:swapped}}
            \State Construct an {\em $\omega$-Shannon-flow inequality~\eqref{eq:omega-shannon}} from the LP~\eqref{eq:osubw:inner-lp} using Lemma~\ref{lmm:lp-to-omega-shannon}\\
            \algocomment{$\omega$-Shannon-flow inequalities generalize Shannon-flow inequalities~\eqref{eq:shannon-flow}. See Section~\ref{subsec:omega-shannon}}
            \State $B \gets N^\opt$\algocomment{$B\leq \ov B$ since $\opt\leq\osubw(Q)$}
            \State Compute a {\em $B$-probabilistic model} for the $\omega$-DDR using $\opandaexpress$ (Algorithm~\ref{alg:omega-panda-express})
            \\\algocomment{A $B$-probabilistic model is also a $\ov B$-probabilistic model
            (Definition~\ref{defn:prob:model:omega-ddr} and Proposition~\ref{prop:model:omega-ddr:monotonicity})}
            \\\algocomment{$\opandaexpress$ generalizes $\pandaexpress$~\cite{panda-express}. See Section~\ref{subsec:algo:panda:ddr}}
        \EndFor
        \For{each {\em target selector $\calT$}}\algocomment{Definition~\ref{defn:target-selector}}
        \label{alg:omega-panda-express:outer:loop2}
            \State{Use Proposition~\ref{prop:target} to find an {\em $\omega$-query plan} $(\ov{\bm\sigma}, e)$ supported by $\calT$}\algocomment{Definition~\ref{defn:omega-query-plan}}
            \\\algocomment{$\omega$-query plans generalize tree decompositions}
            \State Let $\calP_\calT$ be the {\em probabilistic interpretation} of $\calT$ \algocomment{Definition~\ref{defn:target-selector:probabilistic-interpretation}}\\
            \algocomment{$\calP_\calT$ is computed
            by collecting the probabilistic interpretations of the targets in $\calT$}
            \State Evaluate $Q_\calT$ from~\eqref{eq:bcq:target-selector} using Proposition~\ref{prop:bcq:target-selector}
            \If {$Q_\calT$ evaluates to \true}
                \State Return \true
                \algocomment{Completeness follows from Proposition~\ref{prop:target-selector:completeness}:}\\
                \algocomment{Every $\bm t\in\bigjoin\Sigma_\inn$ is covered by $\calS_\calT \subseteq \ov\calS_\calT$ for some $\calT$}
            \EndIf
        \EndFor
        \State Return \false 
    \end{algorithmic}
    \label{alg:omega-panda-express:outer}
\end{algorithm}

Our $\obcqsolver$ algorithm is given in Algorithm~\ref{alg:omega-panda-express:outer}.
We give a high-level overview of the algorithm here, and then we spend the rest of this section zooming in on one technical component at a time.

Our algorithm is a huge generalization of the original $\panda$ algorithm~\cite{DBLP:conf/pods/Khamis0S17,theoretics:13722,panda-express}, where we have to deal with many technical challenges that are not encountered in the original $\panda$.
Nevertheless, there are still some similarities between the structures of the two algorithms in hind sight.
The original $\panda$ algorithm~\cite{DBLP:conf/pods/Khamis0S17,theoretics:13722,panda-express} evaluates a BCQ $Q$ in two phases:
\begin{itemize}
    \item In the first phase, we construct and solve a collection of Disjunctive Datalog Rules, or DRRs (Section~\ref{subsec:prelims:ddr}).
    In particular, there is one DDR for every {\em ``bag selector''},
    where a bag selector is a collection of bags that is obtained by picking one bag from each tree decomposition of $Q$.
    \item In the second phase, we take the outputs of these DDRs, and we use them to construct a collection of acyclic queries. We then use the Yannakakis algorithm~\cite{DBLP:conf/vldb/Yannakakis81} to evaluate each one of these queries,
    and return the union of the answers.
\end{itemize}
In contrast, our $\obcqsolver$ algorithm also has two phases:
\begin{itemize}
    \item In the first phase (the loop in line~\ref{alg:omega-panda-express:outer:loop1} of Algorithm~\ref{alg:omega-panda-express:outer}),
    we construct a collection of {\em $\omega$-Disjunctive Datalog Rules}, or {\em $\omega$-DDRs}. These are generalizations of DDRs that are specifically tailored to deal
    with fast matrix multiplication. 
    The formal definition will be given in Section~\ref{subsec:omega-ddrs}.
    In particular, mirroring the formulation of $\osubw(Q)$ from Eq.~\eqref{eq:osubw:distributed:swapped:raw}, where there is an LP~\eqref{eq:osubw:inner-lp} for every $f\in\calF$ and $g\in\calG_f$, we construct an $\omega$-DDR for every such pair $(f,g)$.
    We will see shortly how to evaluate these $\omega$-DDRs.
    \item In the second phase (the loop in line~\ref{alg:omega-panda-express:outer:loop2} of Algorithm~\ref{alg:omega-panda-express:outer}),
    assuming we already evaluated the $\omega$-DDRs,
    we take their outputs and use them to construct the inputs to
    a collection of {\em $\omega$-query plans}, which generalize tree decompositions.
    In particular, an {\em $\omega$-query plan} is a generalized variable elimination order
    (Definition~\ref{defn:gve}) that is also amended with a specification of how each variable elimination step is executed, i.e.,~whether it is executed using a join or a fast matrix multiplication, and if so, specifically which matrix multiplication to use (Definition~\ref{defn:omega-query-plan}).
    We execute these $\omega$-query plans one by one, and return the union of the answers.
    This process is explained in detail in Section~\ref{subsec:algo:panda:osubw}.
\end{itemize}

We now explain at high level how we evaluate an $\omega$-DDR in the first phase of our $\obcqsolver$ algorithm.
To that end, we first go back to the original $\panda$ algorithm~\cite{DBLP:conf/pods/Khamis0S17,theoretics:13722,panda-express} and recap how it evaluates a DDR within the desired runtime bound.
The original $\panda$ algorithm starts from a {\em Shannon-flow inequality}~\cite{DBLP:conf/pods/Khamis0S17,theoretics:13722,panda-express}, which is a Shannon inequality of the following form:
\begin{align}
    \sum_{\ell \in [L]} \lambda_\ell h(\bm U_{\ell})
        \quad\leq\quad \sum_{i \in [I]} w_i h(\bm Y_i |\bm X_i),
    \label{eq:shannon-flow}
\end{align}
where the coefficients $\lambda_\ell$ and $w_i$ are non-negative.
$\panda$ then constructs a {\em proof sequence} for inequality~\eqref{eq:shannon-flow},
which is a sequence of steps that proves the inequality by transforming the RHS into the LHS.
Finally, $\panda$ translates each proof step
into an algorithmic operation.
Mirroring this process, in order to evaluate an $\omega$-DDR in our $\obcqsolver$ algorithm, we use the following steps:
\begin{itemize}
    \item We construct an {\em $\omega$-Shannon-flow inequality}, which is a generalization
of inequality~\eqref{eq:shannon-flow} that we will formally define in Section~\ref{subsec:omega-shannon}.
This $\omega$-Shannon-flow inequality is constructed from an optimal dual solution to the LP~\eqref{eq:osubw:inner-lp}, as we will see in Section~\ref{subsec:algo:panda:osubw}.
\item We construct a {\em proof sequence} for this $\omega$-Shannon-flow inequality, as we will describe in Section~\ref{subsec:proof-sequence}.
\item We evaluate the $\omega$-DDR by {\em translating} each step of the proof sequence into an algorithmic operation. This translation is described in Section~\ref{subsec:algo:panda:ddr}.
Some operations might exceed our time budget, in which case, we will have to skip them,
and modify the inequality and the proof sequence accordingly,
in order to obtain a new inequality/sequence {\em without} the expensive operation.
This is achieved by what we call the {\em reset lemma}, which will be described in Section~\ref{subsec:reset}.
\end{itemize}
\paragraph*{Section Overview}
This section is organized as follows.
Section~\ref{subsec:mm-upper-bounds} introduces some upper bounds on the matrix multiplication
expression, which are needed to define the class of $\omega$-Shannon-flow inequalities in Section~\ref{subsec:omega-shannon}.
Sections~\ref{subsec:reset} and~\ref{subsec:proof-sequence} present, respectively, the reset lemma
and the proof sequence construction for $\omega$-Shannon-flow inequalities.
In Section~\ref{subsec:omega-ddrs}, we introduce $\omega$-DDRs,
and we explain how to evaluate them in Section~\ref{subsec:algo:panda:ddr}.
Finally, in Section~\ref{subsec:algo:panda:osubw},
we explain how to evaluate a BCQ $Q$ in $\osubw(Q)$ time, thus proving Theorem~\ref{thm:panda:osubw}.
While most of the section uses the triangle query $Q_\triangle$ from Eq.~\eqref{eq:intro:triangle} as a running example,
Section~\ref{subsec:algo:4cycle} presents a more advanced example of the 4-cycle query $Q_\square$ from Eq.~\eqref{eq:intro:4cycle}.

\subsection{Upper Bounds on the Matrix Multiplication Expression}
\label{subsec:mm-upper-bounds}
We start by introducing some upper bounds on the matrix multiplication expression from Definition~\ref{defn:mm}.
We will use these upper bounds in Section~\ref{subsec:omega-shannon}
to define our $\omega$-Shannon-flow inequalities.

As a warmup example, note that the following upper bound on $\mm(X; Y; Z)$ always holds:
\begin{align}
    \mm(X; Y; Z) \quad\leq\quad \omega \cdot\max(h(X), h(Y), h(Z))
    \label{eq:mm:upper-bound:example}
\end{align}
In particular, to prove the above inequality, assume WLOG that $h(X) \geq h(Y) \geq h(Z)$.
Then, $\mm(X; Y; Z) = h(X) + h(Y) + \gamma h(Z)$, but the latter is upper bounded by $\omega h(X)$.
Inequality~\eqref{eq:mm:upper-bound:example} is just one example of an upper bound on $\mm(X; Y; Z)$.
Here is another example that can be proved analogously:
\begin{align}
    \mm(X; Y; Z) \quad\leq\quad \frac{\omega}{2}\cdot\max( h(X) + h(Y), h(Y) + h(Z), h(X) + h(Z))
\end{align}
We introduce below a general way to derive such upper bounds.
\begin{definition}[$\omega$-dominant triple $(\alpha, \beta, \zeta)$]
    A triple of numbers $(\alpha, \beta, \zeta)$ is called {\em $\omega$-dominant} if it satisfies the following conditions:
    \begin{align}
        \alpha, \beta &\geq 1, \label{eq:omega-dominant:1}\\
        \zeta &\geq 0, \label{eq:omega-dominant:2}\\
        \alpha+\beta +\zeta &\geq \omega \label{eq:omega-dominant:3}
    \end{align}
    \label{defn:omega-dominant}
\end{definition}

\begin{proposition}[Upper bound on $\mm(\bm X;\bm Y;\bm Z |\bm G)$]
    Let the triples $(\alpha_1, \beta_1, \zeta_1)$,
    $(\alpha_2, \beta_2, \zeta_2)$, and $(\alpha_3, \beta_3, \zeta_3)$ be $\omega$-dominant.
    For any polymatroid $\bm h:2^\calV \to \R_+$ and pairwise-disjoint subsets $\bm X, \bm Y, \bm Z, \bm G \subseteq \calV$, the following inequality holds:
    \begin{align}
        \mm(\bm X;\bm Y;\bm Z |\bm G) \leq \max\bigl(
        &\alpha_1 h(\bm X | \bm G) + \beta_1 h(\bm Y | \bm G) + \zeta_1 h(\bm Z | \bm G) + h(\bm G),\nonumber\\
        &\alpha_2 h(\bm X | \bm G) + \zeta_2 h(\bm Y | \bm G) + \beta_2 h(\bm Z | \bm G) + h(\bm G),\nonumber\\
        &\zeta_3 h(\bm X | \bm G) + \alpha_3 h(\bm Y | \bm G) + \beta_3 h(\bm Z | \bm G) + h(\bm G)\bigr)
        \label{eq:mm:upper-bound}
    \end{align}
    \label{prop:mm:upper-bound}
\end{proposition}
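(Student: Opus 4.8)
The plan is to reduce the inequality to a purely arithmetic statement about three nonnegative reals and then dispatch it by a case analysis on which of the three is smallest. First I would set $x \defeq h(\bm X | \bm G)$, $y \defeq h(\bm Y | \bm G)$, $z \defeq h(\bm Z | \bm G)$, and $g \defeq h(\bm G)$, observing that monotonicity (Eq.~\eqref{eq:monotone}) forces $x, y, z, g \geq 0$. Since the additive term $+h(\bm G) = g$ occurs in every one of the three arguments of both the $\max$ on the left (Eq.~\eqref{eq:mm}) and the $\max$ on the right (Eq.~\eqref{eq:mm:upper-bound}), it factors out of each maximum, so it suffices to prove
\[
\max(x + y + \gamma z,\; x + \gamma y + z,\; \gamma x + y + z) \;\leq\; \max(\alpha_1 x + \beta_1 y + \zeta_1 z,\; \alpha_2 x + \zeta_2 y + \beta_2 z,\; \zeta_3 x + \alpha_3 y + \beta_3 z),
\]
where $\gamma = \omega - 2 \in [0,1]$.

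The natural temptation is to compare the two maxima term by term, pairing the left argument that carries $\gamma$ on $z$ with the right argument that carries $\zeta_1$ on $z$, and so on. This fails, because an $\omega$-dominant triple only guarantees $\zeta \geq 0$, which may lie strictly below $\gamma$. The fix, and the crux of the argument, is to notice that the left-hand maximum is achieved by placing the small coefficient $\gamma$ on whichever of $x, y, z$ is smallest: each left argument equals $(x + y + z) - (1-\gamma)\cdot(\text{one variable})$, and since $1 - \gamma \geq 0$ the maximum subtracts $(1-\gamma)\min(x,y,z)$. I would therefore split into three cases according to which variable is minimal and, in each case, exhibit the single right-hand argument that dominates the corresponding left argument.

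The individual domination bound is the key step, and stating it cleanly (rather than deriving it) is the main, though routine, obstacle. Consider the case $z = \min(x,y,z)$, where the left side equals $x + y + \gamma z$; I would compare against the first right argument $\alpha_1 x + \beta_1 y + \zeta_1 z$. Rewriting it as $x + y + \bigl[(\alpha_1 - 1)x + (\beta_1 - 1)y + \zeta_1 z\bigr]$ and using $\alpha_1, \beta_1 \geq 1$, $\zeta_1 \geq 0$ (Eqs.~\eqref{eq:omega-dominant:1}--\eqref{eq:omega-dominant:2}) together with $x, y \geq z$, the bracket is at least $(\alpha_1 + \beta_1 + \zeta_1 - 2)\, z \geq (\omega - 2)\, z = \gamma z$, where the final inequality invokes $\alpha_1 + \beta_1 + \zeta_1 \geq \omega$ (Eq.~\eqref{eq:omega-dominant:3}) and $z \geq 0$. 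Hence the first right argument is $\geq x + y + \gamma z$. The cases $y = \min$ and $x = \min$ are handled identically, matched respectively to the second right argument (carrying $\zeta_2$ on $y$) and the third (carrying $\zeta_3$ on $x$), each time using that the relevant $\omega$-dominant triple sums to at least $\omega$. Combining the three cases yields the reduced inequality, and reinstating the common term $g = h(\bm G)$ completes the proof.
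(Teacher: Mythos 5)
Your proof is correct and follows essentially the same route as the paper's: identify the minimizing term among $h(\bm X|\bm G), h(\bm Y|\bm G), h(\bm Z|\bm G)$, observe that $\mm$ then equals the argument placing $\gamma$ on that minimum, and dominate it by the one right-hand argument placing $\zeta_i$ on that same variable. The only difference is that the paper asserts the final single-term comparison without justification, whereas you spell out the bracket estimate $(\alpha_1-1)x+(\beta_1-1)y+\zeta_1 z \geq (\alpha_1+\beta_1+\zeta_1-2)z \geq \gamma z$ using $\omega$-dominance, which is exactly the implicit calculation.
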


\begin{proof}
    WLOG suppose that $h(\bm Z|\bm G)$ is the minimum among the three terms
    $h(\bm X|\bm G)$, $h(\bm Y|\bm G)$ and $h(\bm Z|\bm G)$.
    Then, $\mm(\bm X;\bm Y;\bm Z |\bm G)$ becomes identical
    to $h(\bm X|\bm G) + h(\bm Y|\bm G) + \gamma h(\bm Z|\bm G) + h(\bm G)$,
    and the following inequality holds:
    \begin{align*}
        h(\bm X|\bm G) + h(\bm Y|\bm G) + \gamma h(\bm Z|\bm G) + h(\bm G) \leq
        \alpha_1 h(\bm X | \bm G) + \beta_1 h(\bm Y | \bm G) + \zeta_1 h(\bm Z | \bm G) + h(\bm G)
    \end{align*}
\end{proof}


\subsection{$\omega$-Shannon-flow Inequalities}
\label{subsec:omega-shannon}
As explained earlier, our query evaluation algorithm relies on the idea of deriving a Shannon inequality of a certain form, called {\em $\omega$-Shannon-flow inequality}, constructing a proof sequence of this inequality, and then translating each proof step
into an algorithmic operation.
In this section, we formally define the class of $\omega$-Shannon-flow inequalities.
\begin{definition}[$\omega$-Shannon-flow inequality]
    A Shannon inequality is called an {\em $\omega$-Shannon-flow inequality} if it has the following form:
    \begin{align}
        \sum_{\ell \in [L]} \lambda_\ell h(\bm U_{\ell}) +
        \sum_{j \in [J]} \left(\alpha_j h(\bm X_j|\bm G_j) +
            \beta_j h(\bm Y_j|\bm G_j) +
            \zeta_j h(\bm Z_j|\bm G_j) +
            \kappa_j h(\bm G_j)\right)
            \quad\leq\quad \sum_{i \in [I]} w_i h(\bm Y_i |\bm X_i)
        \label{eq:omega-shannon}
    \end{align}
    where
    \begin{itemize}[leftmargin=*]
        \item all sets $\bm U_\ell, \bm X_j, \bm Y_j, \bm Z_j$ are non-empty (whereas $\bm G_j$ can be empty),
        \item all coefficients $\lambda_\ell, \alpha_j, \beta_j, \zeta_j$ and $w_i$
    are non-negative,
        \item all coefficients $\kappa_j$ are positive,
        \item and all triples $\left(\frac{\alpha_j}{\kappa_j}, \frac{\beta_j}{\kappa_j}, \frac{\zeta_j}{\kappa_j}\right)$ are $\omega$-dominant.
    \end{itemize}
    \label{defn:omega-shannon}
\end{definition}
Note that $\omega$-Shannon-flow inequalities generalize Shannon-flow inequalities~\cite{DBLP:conf/pods/Khamis0S17,theoretics:13722,panda-express}, which were given in Eq.~\eqref{eq:shannon-flow}.
\begin{example}
Eq.~\eqref{eq:intro:shannon:triangle}
is an $\omega$-Shannon-flow inequality.
To be fully consistent with Eq.~\eqref{eq:omega-shannon}, we can rewrite Eq.~\eqref{eq:intro:shannon:triangle} as follows:
\begin{align}
\omega h(XYZ) + (h(X|\emptyset) + h(Y|\emptyset) + (\omega-2)h(Z|\emptyset) + h(\emptyset)) \leq 2h(XY) + (\omega-1) h(YZ) + (\omega-1)h(XZ)
\label{eq:intro:shannon:triangle:explicit}
\end{align}
\end{example}

Following~\cite{theoretics:13722}, the following proposition follows immediately from Farkas' Lemma. Recall the notation for $h(\bm Y; \bm Z|\bm X)$ from Eq.~\eqref{eq:sub-measure}.

\begin{proposition}
    Given an $\omega$-Shannon-flow inequality of the form~\eqref{eq:omega-shannon}, there must exist
    non-negative vectors $\bm m \defeq (m_p)_{p \in [P]}$ and $\bm s \defeq (s_q)_{q \in [Q]}$ such that
    the following equality is an identity over symbolic variables $h(\bm X)$ for $\bm X \subseteq \calV$ where $h(\emptyset) = 0$:
    \begin{align}
        \sum_{\ell \in [L]} \lambda_\ell h(\bm U_{\ell}) +
        \sum_{j \in [J]} \left(\alpha_j h(\bm X_j|\bm G_j) +
            \beta_j h(\bm Y_j|\bm G_j) +
            \zeta_j h(\bm Z_j|\bm G_j) +
            \kappa_j h(\bm G_j)\right)
            \quad=\quad \sum_{i \in [I]} w_i h(\bm Y_i |\bm X_i)\nonumber\\
            - \sum_{p \in [P]} m_p h(\bm Y_p |\bm X_p)
            - \sum_{q \in [Q]} s_q h(\bm Y_q;\bm Z_q |\bm X_q)
        \label{eq:omega-shannon:identity}
    \end{align}
    \label{prop:omega-shannon:identity}
\end{proposition}

\begin{example}
As an example of the above proposition, the following is the identity form~\eqref{eq:omega-shannon:identity} of the $\omega$-Shannon-flow inequality in Eq.~\eqref{eq:intro:shannon:triangle}:
\begin{align}
    \omega h(XYZ) + h(X) + h(Y) + (\omega-2)h(Z) \quad=\quad& 2h(XY) + (\omega-1) h(YZ) + (\omega-1)h(XZ)\nonumber\\
    &-h(Y;Z|X) - h(X;Z|Y) - (\omega-2)h(X;Y|Z)
    \label{eq:intro:shannon:triangle:identity}
\end{align}
By substituting $h(\bm Y;\bm Z|\bm X)$ from Eq.~\eqref{eq:sub-measure}, the above identity takes
the following form:
\begin{align}
    \omega h(XYZ) + h(X) + h(Y) + (\omega-2)h(Z) \quad=\quad& 2h(XY) + (\omega-1) h(YZ) + (\omega-1)h(XZ)\nonumber\\
    &-[h(XY) + h(XZ) - h(X) - h(XYZ)]\nonumber\\
    &-[h(XY) + h(YZ) - h(Y) - h(XYZ)]\nonumber\\
    &-(\omega-2)[h(XZ) + h(YZ) - h(Z) - h(XYZ)]
    \label{eq:intro:shannon:triangle:identity:explicit}
\end{align}
Note that Eq.~\eqref{eq:intro:shannon:triangle:identity:explicit} is indeed an identity
in the sense that every term $h(\bm W)$ for $\bm W \subseteq \{X, Y, Z\}$ cancels out on both sides of the equation.
Hence, the above equation holds {\em no matter what} values we assign to the symbolic variables $h(X), h(Y), h(Z), h(XY), h(XZ), h(YZ)$ and $h(XYZ)$ (even if these values don't represent a valid polymatroid).
\end{example}

We use the standard notation $\norm{\bm \lambda}_1$ to denote $\sum_{\ell \in [L]} |\lambda_\ell|$.
\begin{proposition}
    Given any $\omega$-Shannon-flow inequality of the form~\eqref{eq:omega-shannon}, the following inequality holds:
    \begin{align}
        \norm{\bm \lambda}_1 + \norm{\bm\kappa}_1 \leq \sum_{i \in [I] \mid \bm X_i = \emptyset} w_i
        \label{eq:omega-shannon:unconditional}
    \end{align}
    \label{prop:omega-shannon:unconditional}
\end{proposition}
\begin{example}
The $\omega$-Shannon-flow inequality in Eq.~\eqref{eq:intro:shannon:triangle:explicit} has $\norm{\bm \lambda}_1 + \norm{\bm\kappa}_1 = \omega + 1$,
while the RHS of Eq.~\eqref{eq:omega-shannon:unconditional} is $2\omega$.
Hence, inequality~\eqref{eq:omega-shannon:unconditional} becomes $\omega+1 \leq 2\omega$,
which holds since $\omega \in [2, 3]$.
\end{example}
\begin{proof}[Proof of Proposition~\ref{prop:omega-shannon:unconditional}]
Every $\omega$-Shannon-flow inequality~\eqref{eq:omega-shannon} must hold for the following polymatroid:
\begin{align*}
    h(\bm W) = \begin{cases}
        0 &\text{if $\bm W = \emptyset$}\\
        1 &\text{otherwise}
    \end{cases}
\end{align*}
When evaluated over the above polymatroid, the RHS of identity~\eqref{eq:omega-shannon:identity}
is at most $\sum_{i \in [I] \mid \bm X_i = \emptyset} w_i$,
while the LHS is at least $\norm{\bm \lambda}_1 + \norm{\bm\kappa}_1$.
Note that for every $j \in [J]$, we must have $\alpha_j \geq \kappa_j$
because the triple $\left(\frac{\alpha_j}{\kappa_j}, \frac{\beta_j}{\kappa_j}, \frac{\zeta_j}{\kappa_j}\right)$ is $\omega$-dominant.
Hence, when $\bm G_j = \emptyset$,
the term $\alpha_j h(\bm X_j|\bm G_j)$ contributes at least $\kappa_j$ to the LHS of~\eqref{eq:omega-shannon:identity}.
\end{proof}

\begin{definition}[An {\em integral} $\omega$-Shannon-flow inequality]
    An $\omega$-Shannon-flow inequality of the form~\eqref{eq:omega-shannon} is called {\em integral} if
    \begin{itemize}[leftmargin=*]
        \item all coefficients $\lambda_\ell, \alpha_j, \beta_j, \zeta_j, \kappa_j$ and $w_i$
        are integers,
        \item and there exist non-negative
    {\em integers} $(m_p)_{p \in [P]}$ and $(s_q)_{q \in [Q]}$ such that the identity in Eq.~\eqref{eq:omega-shannon:identity} holds.
    \end{itemize}
    \label{defn:integral-omega-shannon}
\end{definition}
The $\omega$-Shannon-flow inequality in Eq.~\eqref{eq:intro:shannon:triangle} is an example of an integral $\omega$-Shannon-flow inequality.


\subsection{The Reset Lemma}
\label{subsec:reset}

Our query evaluation algorithm relies on translating a proof sequence of an $\omega$-Shannon-flow inequality
into a sequence of algorithmic operations, as exemplified in Figure~\ref{fig:ps-algo:triangle}.
However, some operations might be too expensive to fit within our time budget, e.g.,~performing the join $T(X, Z)\Join R(X, Y)$ over {\em heavy} values $X$, which takes time $O(N^2)$ exceeding our budget of $O(N^{3/2})$ in this example.
Whenever we encounter such an expensive operation, we have to {\em give up} on performing it {\em while} guaranteeing that the skipped tuples will still be produced later by some future operation, e.g., by the matrix multiplication $M_1(X, Y)\times M_2(Y, Z)$ in Figure~\ref{fig:ps-algo:triangle}.
In the language of $\omega$-Shannon-flow inequalities, this ``giving up'' corresponds to dropping some terms $h(XZ) + h(Y|X)$ on the RHS of the inequality corresponding to the expensive operation
$T(X, Z)\Join R(X, Y)$.
However, when we drop those terms from the RHS, we have to fix the inequality {\em without} losing too many terms on the LHS, in order to {\em guarantee}
that the skipped tuples will still be produced later by some future operation.
This idea of dropping terms and repairing the resulting Shannon-flow inequality goes back to what was known as the {\em reset lemma} in the original $\panda$ algorithm~\cite{DBLP:conf/pods/Khamis0S17,theoretics:13722}.

We present here a highly non-trivial generalization of the reset lemma 
from Shannon-flow inequalities~\eqref{eq:shannon-flow} to $\omega$-Shannon-flow inequalities~\eqref{eq:omega-shannon}.
In particular, given an integral $\omega$-Shannon-flow inequality~\eqref{eq:omega-shannon},
this lemma allows us to sacrifice any unconditional term $h(\bm Y_i|\emptyset)$ on the RHS
while losing at most one unit from the quantity $\norm{\bm \lambda}_1 + \norm{\bm\kappa}_1$ on the LHS.
Unlike the original Reset Lemma~\cite{DBLP:conf/pods/Khamis0S17,theoretics:13722} which was restricted to Shannon-flow inequalities~\eqref{eq:shannon-flow}, 
our new lemma has to support proper conditionals $h(\bm X_j|\bm G_j)$ on the LHS of the $\omega$-Shannon-flow inequality.
Moreover, this lemma has to maintain the $\omega$-dominance property of the triples
$\left(\frac{\alpha_j}{\kappa_j}, \frac{\beta_j}{\kappa_j}, \frac{\zeta_j}{\kappa_j}\right)$,
which will be crucial for our query evaluation algorithm.

\begin{lemma}[Generalized Reset Lemma]
Given an integral $\omega$-Shannon-flow inequality of the form~\eqref{eq:omega-shannon}, suppose that
for some $i_0 \in [I]$, we have $\bm X_{i_0} = \emptyset$ and $w_{i_0}>0$. Then, there exist
non-negative integers $\lambda_\ell', \alpha_j', \beta_j', \zeta_j', \kappa_j', w_i'$
for which the following is also an integral $\omega$-Shannon-flow inequality:
\begin{align*}
    \sum_{\ell \in [L]} \lambda_\ell' h(\bm U_{\ell}) +
    \sum_{j \in [J]} \left(\alpha_j' h(\bm X_j|\bm G_j) +
        \beta_j' h(\bm Y_j|\bm G_j) +
        \zeta_j' h(\bm Z_j|\bm G_j) +
        \kappa_j' h(\bm G_j)\right)
        \quad\leq\quad \sum_{i \in [I]} w_i' h(\bm Y_i |\bm X_i)
\end{align*}
and the coefficients satisfy the following conditions:
\begin{align}
    w_{i_0}' &\leq w_{i_0}-1,\nonumber\\
    w_{i}' &\leq w_{i} &\forall i \in [I]\setminus\{i_0\},\nonumber\\
    \norm{\bm \lambda'}_1 + \norm{\bm\kappa'}_1 &\geq 
    \norm{\bm \lambda}_1 + \norm{\bm\kappa}_1-1\label{eq:reset:mass}
\end{align}
\label{lmm:reset}
\end{lemma}
Before proving the lemma, we give some examples.
\begin{example}
    Consider the $\omega$-Shannon-flow inequality in Eq.~\eqref{eq:intro:shannon:triangle} (or equivalently Eq.~\eqref{eq:intro:shannon:triangle:explicit}).
    Suppose we want to drop one unit of $h(XY)$ on the RHS of the inequality.
    The lemma says that it suffices to sacrifice at most one unit from the quantity $\norm{\bm \lambda}_1 + \norm{\bm\kappa}_1$ on the LHS.
    In this case, we can sacrifice one unit of $h(XYZ)$ on the LHS, to obtain a valid $\omega$-Shannon-flow inequality:
    \begin{align}
        (\omega-1) h(XYZ) + h(X) + h(Y) + (\omega-2)h(Z) \quad\leq\quad& h(XY) + (\omega-1) h(YZ) + (\omega-1)h(XZ)
    \end{align}
    The above is indeed a valid $\omega$-Shannon-flow inequality since it can be rewritten as the following identity of the form~\eqref{eq:omega-shannon:identity}:
    \begin{align}
        (\omega-1) h(XYZ) + h(X) + h(Y) + (\omega-2)h(Z) \quad=\quad& h(XY) + (\omega-1) h(YZ) + (\omega-1)h(XZ)\nonumber\\
        &-h(Z|X)-h(X;Z|Y) - (\omega-2)h(X;Y|Z)
    \end{align}
\end{example}
\begin{example}
    Consider the following $\omega$-Shannon-flow inequality:
    \begin{align}
        h(X|\emptyset) + h(Y|\emptyset) + \gamma h(Z|\emptyset) + h(\emptyset) \quad\leq\quad h(X) + h(Y) + \gamma h(Z)
    \end{align}
    Recall that $\gamma\defeq \omega - 2$, hence $\gamma \in [0, 1]$.
    Suppose we want to drop $\gamma h(Z)$ from the RHS.
    We can only afford to reduce $\norm{\bm \lambda}_1 + \norm{\bm\kappa}_1$ by at most $\gamma$.
    If $\gamma = 1$, then we can simply replace the original inequality with the trivial inequality $0\leq 0$.
    Otherwise, we replace the original inequality with the following one:
    \begin{align}
        h(X|\emptyset) + h(Y|\emptyset) + 0\cdot h(Z|\emptyset) + (1-\gamma) h(\emptyset)
        \quad\leq\quad h(X) + h(Y)
    \end{align}
    The above is indeed a valid $\omega$-Shannon-flow inequality.
    In particular, note that the triple $(\frac{1}{1-\gamma}, \frac{1}{1-\gamma}, 0)$ is $\omega$-dominant, namely satisfies Eq.~\eqref{eq:omega-dominant:1}-\eqref{eq:omega-dominant:3}.
    Eq.~\eqref{eq:omega-dominant:3} holds because \[\frac{2}{1-\gamma} = 2 + \frac{2\gamma}{1-\gamma}\geq 2 + 2\gamma \geq 2 + \gamma = \omega.\]
\end{example}
\begin{proof}[Proof of Lemma~\ref{lmm:reset}]
    Let $\bm W \defeq \bm Y_{i_0}$.
    We recognize the following cases:
    \begin{enumerate}[leftmargin=*]
        \item If $\bm W = \bm U_\ell$ for some $\ell \in [L]$ where $\lambda_\ell > 0$, we set $\lambda_\ell' \defeq \lambda_\ell - 1$ and $w_{i_0}'\defeq w_{i_0}-1$. All other coefficients $\lambda_\ell', \alpha_j', \beta_j', \zeta_j', \kappa_j', w_i'$ remain the same as the original $\lambda_\ell, \alpha_j, \beta_j, \zeta_j, \kappa_j, w_i$.
        The new inequality remains a valid integral $\omega$-Shannon-flow inequality since we dropped
        $h(\bm W|\emptyset)$ from both sides.
        
        \item If $\bm W = \bm G_j\bm X_j$ for some $j \in [J]$ where $\alpha_j, \kappa_j>0$, then we set
        $\alpha_j' \defeq \alpha_j-1$, $\kappa_j' \defeq \kappa_j - 1$,
        and $w_{i_0}'\defeq w_{i_0}-1$.
        The inequality still holds since we dropped $h(\bm W|\emptyset)$ from both sides.
        If $\kappa_j' = 0$, then we drop $j$ from $[J]$.
        Otherwise, the triple $\left(\frac{\alpha_j'}{\kappa_j'}, \frac{\beta_j}{\kappa_j'}, \frac{\zeta_j}{\kappa_j'}\right)$ is still $\omega$-dominant (Definition~\ref{defn:omega-dominant}).
        In particular, inequality~\eqref{eq:omega-dominant:1} continues to hold because:
        \begin{align*}
            \frac{\alpha_j'}{\kappa_j'} &= \frac{\alpha_j-1}{\kappa_j-1} \geq
            \frac{\kappa_j-1}{\kappa_j-1}=1.
        \end{align*}
        Moreover, inequality~\eqref{eq:omega-dominant:3} continues to hold
        because of the following:
        \begin{align}
            \frac{\alpha_j'+\beta_j+\zeta_j}{\kappa_j'} &=
            \frac{\alpha_j+\beta_j+\zeta_j-1}{\kappa_j-1}\nonumber\\
            &\geq \frac{\kappa_j\omega-1}{\kappa_j-1} &\text{(Inductively by Eq.~\eqref{eq:omega-dominant:3})}\nonumber\\
            &= \omega + \underbrace{\frac{\omega-1}{\kappa_j-1}}_{\geq 0}\label{eq:omega-dominant:preserve}
        \end{align}
    
        \item If $\bm W = \bm G_j \bm Y_j$ for some $j \in [J]$ where $\beta_j, \kappa_j>0$, then this is symmetric to the previous case. Namely, we set $\beta_j' \defeq \beta_j-1$, $\kappa_j' \defeq \kappa_j - 1$,
        and $w_{i_0}'\defeq w_{i_0}-1$.
        
        \item If $\bm W = \bm G_j \bm Z_j$ for some $j \in [J]$ where $\zeta_j, \kappa_j>0$,
        then we set $\zeta_j' \defeq \zeta_j-1$,
         $\kappa_j' \defeq \kappa_j - 1$, and $w_{i_0}'\defeq w_{i_0}-1$.
         If $\kappa_j' = 0$, then we drop $j$ from $[J]$.
        Otherwise, the triple $\left(\frac{\alpha_j}{\kappa_j'}, \frac{\beta_j}{\kappa_j'}, \frac{\zeta_j'}{\kappa_j'}\right)$ is still $\omega$-dominant.
        In particular, similar to above, inequality~\eqref{eq:omega-dominant:3} continues to hold as follows:
        \begin{align*}
            \frac{\alpha_j+\beta_j+\zeta_j'}{\kappa_j'} &=
            \frac{\alpha_j+\beta_j+\zeta_j-1}{\kappa_j-1}\geq \omega
        \end{align*}
        The last inequality follows from Eq.~\eqref{eq:omega-dominant:preserve}.

        \item If $\bm W = \bm G_j$ for some $j \in [J]$ where $\kappa_j > 0$, note that by Eq.~\eqref{eq:omega-dominant:1}, $\alpha_j\geq\kappa_j$, hence the term $\kappa_j h(\bm G_j)$ on the LHS is already getting canceled with the term $\alpha_j h(\bm X_j|\bm G_j)$ on the same side. Hence, the term $h(\bm W)$ on the RHS must be getting canceled with something else. Therefore, this case falls under the next one.

        \item In all other cases, $h(\bm W|\emptyset)$ must cancel out with some term
        on the RHS of the identity~\eqref{eq:omega-shannon:identity}, just like
        in the original Reset Lemma~\cite{DBLP:conf/pods/Khamis0S17,theoretics:13722}.
        Here we recognize the same three cases from the original Reset Lemma, which we
        repeat below for self-containment:
        \begin{enumerate}[leftmargin=*]
            \item
            $h(\bm W)$ cancels out with some other term $h(\bm Y_i|\bm X_i)$ for some $i \in [I]\setminus \{i_0\}$ where $\bm X_i = \bm W$. In this case, we apply the following
            rewrite to the RHS of the identity~\eqref{eq:omega-shannon:identity}:
            \begin{align}
                h(\bm W) + h(\bm Y_i|\bm W) = h(\bm Y_i\bm W)
                \label{eq:reset:composition}
            \end{align}
            And now inductively, our target is to cancel out the newly added term
            $h(\bm Y_i\bm W)$.
            \label{case:reset:composition}
            \item
            $h(\bm W)$ cancels out with some term $-h(\bm Y_p|\bm X_p)$ from some $p \in [P]$
            where $\bm W = \bm X_p\bm Y_p$. In this case, we apply the following rewrite to the RHS of the identity~\eqref{eq:omega-shannon:identity}:
            \begin{align}
                h(\bm W)-h(\bm Y_p|\bm X_p) = h(\bm W)-h(\bm X_p\bm Y_p) + h(\bm X_p)
                = h(\bm X_p)
                \label{eq:reset:monotonicity}
            \end{align}
            And now, we proceed inductively to eliminate the new term $h(\bm X_p)$.
            \label{case:reset:monotonicity}
            \item
            $h(\bm W)$ cancels out with some term $-h(\bm Y_q;\bm Z_q|\bm X_q)$ for some
            $q \in [Q]$ where $\bm W = \bm X_q\bm Y_q$. In this case, we apply the rewrite:
            \begin{align}
                h(\bm W) -h(\bm Y_q;\bm Z_q|\bm X_q) &=
                h(\bm W) -h(\bm X_q\bm Y_q) -h(\bm X_q\bm Z_q)+h(\bm X_q) + h(\bm X_q\bm Y_q\bm Z_q)\nonumber\\
                &= h(\bm X_q\bm Y_q\bm Z_q) - h(\bm Z_q|\bm X_q)
                \label{eq:reset:submodularity}
            \end{align}
            And now we inductively eliminate $h(\bm X_q\bm Y_q\bm Z_q)$.
            \label{case:reset:submodularity}
        \end{enumerate}
        By applying each one of the above three cases, the following integral quantity always
        decreases by at least one, thus proving that this process must terminate:
        \begin{align}
            \norm{\bm w}_1 + \norm{\bm m}_1 + 2\norm{\bm s}_1
            \label{eq:reset:potential}
        \end{align}
        In particular, Case~\ref{case:reset:composition} decreases $\norm{\bm w}_1$ by 1,
        Case~\ref{case:reset:monotonicity} decreases $\norm{\bm m}_1$ by 1, and Case~\ref{case:reset:submodularity} {\em increases} $\norm{\bm m}_1$ by 1 and decreases $\norm{\bm s}_1$ by 1.
    \end{enumerate}
\end{proof}

\subsection{The Proof Sequence Construction}
\label{subsec:proof-sequence}
The original $\panda$ algorithm is based on constructing a {\em proof sequence} for a given {\em Shannon-flow inequality} (Eq.~\eqref{eq:shannon-flow}).
We present here a highly non-trivial generalization of the proof sequence construction that works for $\omega$-Shannon-flow inequalities (Eq.~\eqref{eq:omega-shannon}).
Out of the box, the original proof sequence construction does {\em not} work for $\omega$-Shannon-flow inequalities
because, unlike Shannon-flow inequalities, the $\omega$-Shannon-flow ones can have {\em proper
conditionals}, i.e.,~terms $h(\bm X_j|\bm G_j)$ where $\bm G_j\neq\emptyset$, on the LHS.
The other challenge here is maintaining the $\omega$-dominance property of the triples
$\left(\frac{\alpha_j}{\kappa_j}, \frac{\beta_j}{\kappa_j}, \frac{\zeta_j}{\kappa_j}\right)$,
which is needed later to get a corresponding evaluation algorithm that uses matrix multiplication.

\begin{theorem}[Generalized Proof Sequence Construction]
    Given an integral $\omega$-Shannon-flow inequality of the form~\eqref{eq:omega-shannon},
    there exists a finite sequence of steps that transforms the terms on the RHS of the inequality
    into (a superset of) the terms on the LHS. Each step in the sequence replaces some terms on the RHS with
    smaller terms, thus the full sequence proves that the LHS is smaller than the RHS.
    Moreover, each step has either one of the following forms:
    \begin{itemize}[leftmargin=*]
        \item {\bf Decomposition Step:} $h(\bm X \bm Y) \to h(\bm X) + h(\bm Y|\bm X)$.
        \item {\bf Composition Step:} $h(\bm X) + h(\bm Y|\bm X) \to h(\bm X \bm Y)$.
        \item {\bf Monotonicity Step:} $h(\bm X\bm Y) \to h(\bm X)$.
        \item {\bf Submodularity Step:} $h(\bm Y|\bm X) \to h(\bm Y|\bm X\bm Z)$.
    \end{itemize}
    \label{thm:proof-sequence}
\end{theorem}
Note that by Shannon inequalities, each one of the above steps replaces one or two terms
with {\em smaller} terms. In particular,
\begin{align*}
    h(\bm X \bm Y) &= h(\bm X) + h(\bm Y|\bm X) &\text{(by Eq.~\eqref{eq:conditional})}\\
    h(\bm X\bm Y) &\geq h(\bm X) &\text{(by Eq.~\eqref{eq:monotone})}\\
    h(\bm Y|\bm X) &\geq h(\bm Y|\bm X\bm Z) &\text{(by Eq.~\eqref{eq:submod})}\\
\end{align*}

\subsubsection{Example of Theorem~\ref{thm:proof-sequence}}

\begin{table*}
\[
\small
\begin{array}{|l|l|}
    \hline
    \rowcolor{lightgray}\text{\bf Identity~\eqref{eq:omega-shannon:identity}} & \text{\bf Proof Steps}\\\hline
    \begin{aligned}[c]
        &\omega h(XYZ) + h(X) + h(Y) + (\omega-2)h(Z) \\
        &= 2\positiveterm{h(XY)} + (\omega-1) h(YZ) + (\omega-1)h(XZ)\nonumber\\
        &-[\negativeterm{h(XY)} + h(XZ) - h(X) - h(XYZ)]\nonumber\\
        &-[h(XY) + h(YZ) - h(Y) - h(XYZ)]\nonumber\\
        &-(\omega-2)[h(XZ) + h(YZ) - h(Z) - h(XYZ)]
    \end{aligned}&
    \begin{aligned}[c]
        \positiveterm{h(XY)} &\to h(X) + h(Y|X)\\
        h(Y|X) &\to h(Y|XZ)
    \end{aligned}\\\hline
    \begin{aligned}[c]
        &\omega h(XYZ) + h(X) + h(Y) + (\omega-2)h(Z) \\
        &= h(XY) + h(X) + \negativeterm{h(Y|XZ)} + (\omega-1) h(YZ) + (\omega-1)\positiveterm{h(XZ)}\nonumber\\
        &-[h(XY) + h(YZ) - h(Y) - h(XYZ)]\nonumber\\
        &-(\omega-2)[h(XZ) + h(YZ) - h(Z) - h(XYZ)]
    \end{aligned}&
    \begin{aligned}[c]
        \positiveterm{h(XZ)} + \negativeterm{h(Y|XZ)} &\to h(XYZ)
    \end{aligned}\\\hline
    \begin{aligned}[c]
        &\omega \negativeterm{h(XYZ)} + h(X) + h(Y) + (\omega-2)h(Z) \\
        &= h(XY) + h(X) + \positiveterm{h(XYZ)} + (\omega-1) h(YZ) + (\omega-2)h(XZ)\nonumber\\
        &-[h(XY) + h(YZ) - h(Y) - h(XYZ)]\nonumber\\
        &-(\omega-2)[h(XZ) + h(YZ) - h(Z) - h(XYZ)]
    \end{aligned}&
    \begin{aligned}[c]
        \text{Cancel $h(XYZ)$ on both sides}
    \end{aligned}\\\hline
    \begin{aligned}[c]
        &(\omega-1) h(XYZ) + h(X) + h(Y) + (\omega-2)h(Z) \\
        &= h(XY) + h(X) + (\omega-1) \positiveterm{h(YZ)} + (\omega-2)h(XZ)\nonumber\\
        &-[h(XY) + \negativeterm{h(YZ)} - h(Y) - h(XYZ)]\nonumber\\
        &-(\omega-2)[h(XZ) + h(YZ) - h(Z) - h(XYZ)]
    \end{aligned}&
    \begin{aligned}[c]
        \positiveterm{h(YZ)} &\to h(Y) + h(Z|Y)\\
        h(Z|Y) &\to h(Z|XY)
    \end{aligned}\\\hline
    \begin{aligned}[c]
        &(\omega-1) h(XYZ) + h(X) + h(Y) + (\omega-2)h(Z) \\
        &= \positiveterm{h(XY)} + h(X) + (\omega-2) h(YZ) + h(Y) + \negativeterm{h(Z|XY)}+ (\omega-2)h(XZ)\nonumber\\
        &-(\omega-2)[h(XZ) + h(YZ) - h(Z) - h(XYZ)]
    \end{aligned}&
    \begin{aligned}[c]
        \positiveterm{h(XY)} + \negativeterm{h(Z|XY)} &\to h(XYZ)
    \end{aligned}\\\hline
    \begin{aligned}[c]
        &(\omega-1) \negativeterm{h(XYZ)} + h(X) + h(Y) + (\omega-2)h(Z) \\
        &= \positiveterm{h(XYZ)} + h(X) + (\omega-2) h(YZ) + h(Y)+ (\omega-2)h(XZ)\nonumber\\
        &-(\omega-2)[h(XZ) + h(YZ) - h(Z) - h(XYZ)]
    \end{aligned}&
    \begin{aligned}[c]
       \text{Cancel $h(XYZ)$ on both sides}
    \end{aligned}\\\hline
    \begin{aligned}[c]
        &(\omega-2) h(XYZ) + h(X) + h(Y) + (\omega-2)h(Z) \\
        &= h(X) + (\omega-2) h(YZ) + h(Y)+ (\omega-2)\positiveterm{h(XZ)}\nonumber\\
        &-(\omega-2)[\negativeterm{h(XZ)} + h(YZ) - h(Z) - h(XYZ)]
    \end{aligned}&
    \begin{aligned}[c]
        \gamma\positiveterm{h(XZ)} &\to \gamma h(Z) + \gamma h(X|Z)\\
        \gamma h(X|Z) &\to \gamma h(X|YZ)
    \end{aligned}\\\hline
    \begin{aligned}[c]
        &(\omega-2) h(XYZ) + h(X) + h(Y) + (\omega-2)h(Z) \\
        &= h(X) + \positiveterm{(\omega-2) h(YZ)} + h(Y)+ (\omega-2)h(Z) + \negativeterm{(\omega-2)h(X|YZ)}
    \end{aligned}&
    \begin{aligned}[c]
       \positiveterm{\gamma h(YZ)} + \negativeterm{\gamma h(X|YZ)} &\to \gamma h(XYZ)
    \end{aligned}\\\hline
    \begin{aligned}[c]
        &\negativeterm{(\omega-2) h(XYZ)} + h(X) + h(Y) + (\omega-2)h(Z) \\
        &= h(X) + \positiveterm{(\omega-2) h(XYZ)} + h(Y)+ (\omega-2)h(Z)
    \end{aligned}&
    \begin{aligned}[c]
       \text{Cancel $(\omega-2)h(XYZ)$ on both sides}
    \end{aligned}\\\hline
    \begin{aligned}[c]
        &\negativeterm{h(X) + h(Y) + (\omega-2)h(Z)}\\
        &= \positiveterm{h(X) + h(Y)+ (\omega-2)h(Z)}
    \end{aligned}&
    \begin{aligned}[c]
       &\text{Cancel $h(X) + h(Y)+ (\omega-2)h(Z)$}\\
       &\text{on both sides}
    \end{aligned}\\\hline
    \begin{aligned}[c]
        0=0
    \end{aligned}&
    \begin{aligned}[c]
    \end{aligned}\\\hline
\end{array}
\]
\caption{Proof sequence construction for the Shannon inequality~\eqref{eq:intro:shannon:triangle}. We start from the identity form~\eqref{eq:intro:shannon:triangle:identity:explicit}.
We repeatedly look for an unconditional \positiveterm{\text{positive term}} on the RHS:
The existence of such term is guaranteed by Proposition~\ref{prop:omega-shannon:unconditional}.
If the positive term found also appears on the LHS, we cancel it from both sides.
Otherwise, it must be getting canceled with some \negativeterm{\text{negative term}}
on the RHS.
Based on the negative term found, we apply some proof steps to the RHS
of the identity to obtain a new identity, and we repeat the process until we end up with the trivial identity $0=0$.
The left column shows the sequence of identities obtained, and the right column shows the proof steps applied to get from one identity to the next.}
\label{table:ps}
\end{table*}

Before proving the theorem, we give an example.
Figure~\ref{fig:ps-algo:triangle} shows a compact form of the proof sequence for the $\omega$-Shannon-flow inequality in Eq.~\eqref{eq:intro:shannon:triangle}.
In particular, the proof sequence in Figure~\ref{fig:ps-algo:triangle} combines
two steps $h(Y|X) \to h(Y|XZ)$ and $h(XZ) + h(Y|XZ)\to h(XYZ)$
into a single step $h(XZ) + h(Y|X) \to h(XYZ)$.
Table~\ref{table:ps} shows the detailed proof steps for the same $\omega$-Shannon-flow inequality,
as well as how we come up with these steps in the first place.
In particular, Proposition~\ref{prop:omega-shannon:unconditional}
tells us that as long as $\norm{\bm \lambda}_1 + \norm{\bm\kappa}_1 > 0$,
there must exist some {\em unconditional} term $h(\bm Y_i|\emptyset)$ on the RHS of the $\omega$-Shannon-flow inequality.
We repeatedly rely on the existence of such unconditional term on the RHS.
In particular, we start from the identity form~\eqref{eq:intro:shannon:triangle:identity:explicit} of the $\omega$-Shannon-flow inequality in Eq.~\eqref{eq:intro:shannon:triangle}.
We pick such unconditional term on the RHS, say $h(XY)$.
Since~\eqref{eq:intro:shannon:triangle:identity:explicit} is an identity
and $h(XY)$ does not appear on the LHS, it must be getting canceled with some other term
on the RHS.
In this case, it is being canceled by $-h(XY)$ coming from $-h(Y;Z|X)$.
We apply the two proof steps $h(XY) \to h(X) + h(Y|X)$ and $h(Y|X) \to h(Y|XZ)$ to the RHS of the identity~\eqref{eq:intro:shannon:triangle:identity:explicit} and we also drop $-h(Y;Z|X)$ from the RHS, thus resulting in a new identity.
This new identity is depicted in the second row of Table~\ref{table:ps}.
We repeat the same process until we end up with the trivial identity $0=0$.

\subsubsection{Proof of Theorem~\ref{thm:proof-sequence}}
For the purpose of proving the Theorem~\ref{thm:proof-sequence} in general,
we will represent an $\omega$-Shannon-flow inequality slightly differently
from Eq.~\eqref{eq:omega-shannon}.
In particular, some terms on the LHS of Eq.~\eqref{eq:omega-shannon} might already occur on the RHS, therefore we could
decompose the RHS into two parts: One part which is a subset of terms on the LHS,
and another part containing the remaining RHS terms.
Formally, we can represent an $\omega$-Shannon-flow inequality as follows:
\begin{align}
    \sum_{\ell \in [L]} \lambda_\ell h(\bm U_{\ell})
    +& \sum_{j \in [J]} \left(\alpha_j h(\bm X_j|\bm G_j) +
        \beta_j h(\bm Y_j|\bm G_j) +
        \zeta_j h(\bm Z_j|\bm G_j) +
        \kappa_j h(\bm G_j)\right)
        \quad\leq\quad\nonumber\\
        &\sum_{j \in [J]} \left(\hat\alpha_j h(\bm X_j|\bm G_j) +
        \hat\beta_j h(\bm Y_j|\bm G_j) +
        \hat\zeta_j h(\bm Z_j|\bm G_j) +
        \hat\kappa_j h(\bm G_j)\right)
    +\sum_{i \in [I]} \hat w_i h(\bm Y_i |\bm X_i)
    \label{eq:omega-shannon-ps}
\end{align}
where in addition to the conditions from Definition~\ref{defn:omega-shannon},
we also require that
\begin{itemize}[leftmargin=*]
    \item For all $j \in [J]$, we have
    $0\leq\hat\alpha_j \leq \alpha_j$,
    $0\leq\hat\beta_j \leq \beta_j$,
    $0\leq\hat\zeta_j \leq \zeta_j$, and
    $0\leq\hat\kappa_j \leq \kappa_j$.
    \item For all $i \in [I]$, we have $0\leq \hat w_i \leq w_i$.
    \item Finally, we assume, for all $j \in [J]$,
    \begin{align}
        \kappa_j - \hat\kappa_j = 0 \quad\quad\Rightarrow\quad\quad
            \alpha_j - \hat\alpha_j= \beta_j - \hat\beta_j= \zeta_j - \hat\zeta_j = 0
        \label{eq:cond:well-behaved}
    \end{align}
\end{itemize}
By Proposition~\ref{prop:omega-shannon:identity},
for every inequality of the form~\eqref{eq:omega-shannon-ps},
there exist non-negative vectors $\bm m \defeq (m_p)_{p \in [P]}$ and $\bm s \defeq (s_q)_{q \in [Q]}$ such that
the following identity holds over symbolic variables $h(\bm X)$ for $\bm X \subseteq \calV$ where $h(\emptyset) = 0$:

\begin{align}
    \sum_{\ell \in [L]} \lambda_\ell h(\bm U_{\ell})
    +& \sum_{j \in [J]} \left(\alpha_j h(\bm X_j|\bm G_j) +
        \beta_j h(\bm Y_j|\bm G_j) +
        \zeta_j h(\bm Z_j|\bm G_j) +
        \kappa_j h(\bm G_j)\right)
        \quad=\quad\nonumber\\
        &\sum_{j \in [J]} \left(\hat\alpha_j h(\bm X_j|\bm G_j) +
        \hat\beta_j h(\bm Y_j|\bm G_j) +
        \hat\zeta_j h(\bm Z_j|\bm G_j) +
        \hat\kappa_j h(\bm G_j)\right)
    +\sum_{i \in [I]} \hat w_i h(\bm Y_i |\bm X_i)\nonumber\\
    - &\sum_{p \in [P]} m_p h(\bm Y_p |\bm X_p)
    - \sum_{q \in [Q]} s_q h(\bm Y_q;\bm Z_q |\bm X_q)
    \label{eq:omega-shannon-ps:identity}
\end{align}

\begin{proof}[Proof of Theorem~\ref{thm:proof-sequence}]
    An inequality of the form~\eqref{eq:omega-shannon} can be converted to the from~\eqref{eq:omega-shannon-ps} by initializing $\hat w_i = w_i$ for all $i \in [I]$ and
    $\hat\alpha_j = \hat\beta_j = \hat\zeta_j = \hat\kappa_j = 0$ for all $j \in [J]$.
    Consider an inequality of the form~\eqref{eq:omega-shannon-ps}
    and the corresponding identity~\eqref{eq:omega-shannon-ps:identity}.
    Define the quantity:
    \begin{align*}
        \Lambda \defeq \norm{\bm\lambda}_1 +
        \sum_{\substack{j \in [J]\\\bm G_j \neq \emptyset}} (\kappa_j - \hat\kappa_j)+
        \sum_{\substack{j \in [J]\\\bm G_j = \emptyset}} (\alpha_j - \hat\alpha_j) + (\beta_j - \hat\beta_j) + (\zeta_j - \hat\zeta_j)
    \end{align*}
    If $\Lambda = 0$, then
    by Condition~\eqref{eq:cond:well-behaved}, we have
    $\hat\alpha_j=\alpha_j$, $\hat\beta_j=\beta_j$, $\hat\zeta_j=\zeta_j$
    and $\hat\kappa_j=\kappa_j$
    for all $j$
    and we are done.
    Now assume that $\Lambda > 0$.
    Similar to the proof of Proposition~\ref{prop:omega-shannon:unconditional}, we have
    $$\sum_{i \in [I]\mid\bm X_i = \emptyset} \hat w_i \geq \Lambda > 0.$$
    Hence, there exists some $i_0 \in [I]$ such that $\bm X_{i_0} = \emptyset$ and $\hat w_{i_0} > 0$.
    Define $\bm W\defeq \bm Y_{i_0}$. We recognize the following cases:
    \begin{enumerate}[leftmargin=*]
        \item If there exists $\ell\in[L]$ where $\bm U_\ell = \bm W$ and $\lambda_\ell > 0$,
        then we reduce both $\lambda_\ell$ and $\hat w_{i_0}$ by one thus canceling out the term $h(\bm W)$ from both sides of the identity~\eqref{eq:omega-shannon-ps:identity}.
        \label{case:ps:cancel-lambda}
        \item If there  exists $j \in [J]$ where $\bm G_j\bm X_j = \bm W$ and $\alpha_j > \hat\alpha_j$,
        then we apply the following decomposition step to the term
        $h(\bm Y_{i_0}|\bm X_{i_0}) = h(\bm W |\emptyset)$ on the RHS of~\eqref{eq:omega-shannon-ps:identity}:
        \begin{align*}
            h(\bm W) \to h(\bm G_j) + h(\bm X_j|\bm G_j)
        \end{align*}
        and now we have $h(\bm X_j|\bm G_j)$ on both sides of the identity~\eqref{eq:omega-shannon-ps:identity}. We increase $\hat\alpha_j$ by one thus pairing
        the two terms $h(\bm X_j|\bm G_j)$ with one another.
        Condition~\eqref{eq:cond:well-behaved} continues to hold.
        \label{case:ps:cancel-alpha}
        \item If there exists $j \in [J]$ where either $\bm G_j\bm Y_j = \bm W$ and $\beta_j > \hat\beta_j$,
        or $\bm G_j\bm Z_j = \bm W$ and $\zeta_j > \hat\zeta_j$,
        then this is similar to the previous case.
        \label{case:ps:cancel-beta-or-zeta}
        \item If there exists $j \in [J]$ where $\bm G_j = \bm W$
        and $\kappa_j - \hat\kappa_j >
            (\alpha_j - \hat\alpha_j) + (\beta_j - \hat\beta_j) + (\zeta_j - \hat\zeta_j)$,
            then we have $h(\bm W)$ on both sides of identity~\eqref{eq:omega-shannon-ps:identity}.
            We increase $\hat\kappa_j$ by one thus pairing the two terms $h(\bm G_j)$ with one another. Condition~\eqref{eq:cond:well-behaved} continues to hold.
        \label{case:ps:cancel-kappa}
        \item If for every $j \in [J]$ where $\bm G_j = \bm W$ we have 
            $\kappa_j - \hat\kappa_j \leq
             (\alpha_j - \hat\alpha_j) + (\beta_j - \hat\beta_j) + (\zeta_j - \hat\zeta_j)$,
            then $h(\bm G_j)$ on the LHS already cancels out with one of the three terms
            $h(\bm X_j|\bm G_j)$, $h(\bm Y_j|\bm G_j)$, and $h(\bm Z_j|\bm G_j)$ on the LHS of~\eqref{eq:omega-shannon-ps:identity}.
            Therefore, $h(\bm W)$ on the RHS must be getting canceled with some other term, and this case falls under the next one.
        \label{case:ps:kappa-unchanged}
        \item In all other cases, the term $h(\bm W)$ must cancel out with some other term on
        the RHS of the identity~\eqref{eq:omega-shannon-ps:identity}.
        We recognize three cases similar to the original proof sequence construction
        from~\cite{DBLP:conf/pods/Khamis0S17,theoretics:13722}:
        \begin{enumerate}[leftmargin=*]
            \item $h(\bm W)$ cancels out with some other term $h(\bm Y_i|\bm X_i)$ for some $i \in [I]$ on the RHS where $\bm X_i = \bm W$.
            In this case, we can compose the two terms using the rewrite from Eq.~\eqref{eq:reset:composition}, which corresponds to a composition step:
            \begin{align*}
                h(\bm W) + h(\bm Y_i | \bm W) \to h(\bm Y_i \bm W) & &\text{(Composition Step)}
            \end{align*}
            \label{case:ps:composition}
            \item $h(\bm W)$ cancels out with some term $-h(\bm Y_p|\bm X_p)$ for some $p \in [P]$ where $\bm W = \bm X_p\bm Y_p$.
            In this case, we can apply the rewrite from Eq.~\eqref{eq:reset:monotonicity},
            which corresponds to a monotonicity step:
            \begin{align*}
                h(\bm W) \to h(\bm X_p) & &\text{(Monotonicity Step)}
            \end{align*}
            \label{case:ps:monotonicity}
            \item $h(\bm W)$ cancels out with some term $-h(\bm Y_q;\bm Z_q|\bm X_q)$ for some
            $q \in [Q]$ where $\bm W = \bm X_q\bm Y_q$.
            In this case, instead of applying the rewrite from Eq.~\eqref{eq:reset:submodularity}, we apply the following rewrite:
            \begin{align}
                h(\bm W) -h(\bm Y_q;\bm Z_q|\bm X_q) &=
                h(\bm W) -h(\bm X_q\bm Y_q) -h(\bm X_q\bm Z_q)+h(\bm X_q) + h(\bm X_q\bm Y_q\bm Z_q)\nonumber\\
                &= h(\bm X_q) + h(\bm Y_q|\bm X_q\bm Z_q)
                \label{eq:ps:submodularity}
            \end{align}
            The above rewrite corresponds to applying a decomposition step followed by a submodularity step:
            \begin{align*}
                h(\bm W) \to h(\bm X_q) + h(\bm Y_q|\bm X_q) & & \text{(Decomposition Step)}\\
                h(\bm Y_q|\bm X_q) \to h(\bm Y_q|\bm X_q\bm Z_q) & & \text{(Submodularity Step)}
            \end{align*}
            \label{case:ps:submodularity}
        \end{enumerate}
    \end{enumerate}
    In each one of the above cases, the following integral quantity always
    decreases by at least one, thus proving that this process must terminate:
    \begin{align}
        \norm{\bm\lambda}_1
        + \norm{\bm \alpha-\hat{\bm\alpha}}_1
        + \norm{\bm \beta-\hat{\bm\beta}}_1
        + \norm{\bm \zeta-\hat{\bm\zeta}}_1
        + \norm{\bm \kappa-\hat{\bm\kappa}}_1
        + \norm{\hat{\bm w}}_1 + \norm{\bm m}_1 + 2\norm{\bm s}_1
        \label{eq:ps:potential}
    \end{align}
    In particular,
    \begin{itemize}
        \item Case~\ref{case:ps:cancel-lambda} decreases $\norm{\bm \lambda}_1$ by 1.
        \item Cases~\ref{case:ps:cancel-alpha} and \ref{case:ps:cancel-beta-or-zeta}
        decrease $\norm{\bm \alpha-\hat{\bm\alpha}}_1
        + \norm{\bm \beta-\hat{\bm\beta}}_1
        + \norm{\bm \zeta-\hat{\bm\zeta}}_1$ by 1. Note that $\norm{\hat{\bm w}}_1$ remains unchanged here despite the decomposition step.
        This is because one of the two terms resulting from the decomposition step
        was paired with a term on the LHS of the identity~\eqref{eq:omega-shannon-ps:identity},
        hence it no longer contributes to $\norm{\hat{\bm w}}_1$.
        \item Case~\ref{case:ps:cancel-kappa} decreases both $\norm{\bm \kappa-\hat{\bm\kappa}}_1$ and $\norm{\hat{\bm w}}_1$ by 1.
        \item Case~\ref{case:ps:kappa-unchanged} falls under the remaining cases.
        \item Case~\ref{case:ps:composition} decreases $\norm{\hat{\bm w}}_1$ by 1.
        \item Case~\ref{case:ps:monotonicity} decreases $\norm{\bm m}_1$ by 1.
        \item Case~\ref{case:ps:submodularity} {\em increases} $\norm{\hat{\bm w}}_1$ by 1 and decreases $\norm{\bm s}_1$ by 1.
    \end{itemize}
\end{proof}

\subsection{$\omega$-Disjunctive Datalog Rules}
\label{subsec:omega-ddrs}

The original $\panda$ algorithm from~\cite{DBLP:conf/pods/Khamis0S17,theoretics:13722,panda-express} evaluates a BCQ in submodular width time by reducing the problem to evaluating a collection of Disjunctive Datalog rules (DDRs), which were reviewed in Section~\ref{subsec:prelims:ddr}.
For the purpose of designing a general algorithm for evaluating a BCQ in $\omega$-submodular width time, we will need to generalize the concept of DDRs.
In this section, we present our generalization of DDRs, called {\em $\omega$-DDRs}.
In Section~\ref{subsec:algo:panda:ddr}, we present an algorithm for evaluating $\omega$-DDRs.
Finally, in Section~\ref{subsec:algo:panda:osubw}, we show how to evaluate a BCQ $Q$
in $\omega$-submodular width time by reducing the problem to evaluating a collection of
$\omega$-DDRs, each of which is evaluated using the algorithm from Section~\ref{subsec:algo:panda:ddr}.

\begin{definition}[$\omega$-DDR and its model]
Let $\calV$ be a set of variables, and $\Sigma_{\inn}$ be an input schema.
Let $L$ and $J$ be two natural numbers, 
and $\Sigma_{\outt}$ be an output schema consisting of two types of atoms:
\begin{itemize}
    \item An atom $P_\ell(\bm U_\ell)$ for every $\ell \in [L]$, where $\bm U_\ell$ is a subset of $\calV$.
    \item Three atoms $S_j(\bm X_j\bm G_j), T_j(\bm Y_j\bm G_j), W_j(\bm Z_j\bm G_j)$
    for every $j \in [J]$, where $\bm G_j, \bm X_j, \bm Y_j, \bm Z_j$ are disjoint subsets of $\calV$.
\end{itemize}
    An {\em $\omega$-Disjunctive Datalog Rule ($\omega$-DDR)} with input and output schemas $\Sigma_{\inn}$ and $\Sigma_{\outt}$ respectively is an expression of the form:
    \begin{align}
        \bigvee_{\ell\in[L]} P_\ell(\bm U_\ell) \vee
        \bigvee_{j \in [J]}\left(S_j(\bm X_j\bm G_j)\wedge T_j(\bm Y_j\bm G_j)\wedge W_j(\bm Z_j\bm G_j)
        \right)
        \quad\cd\quad \bigwedge_{R(\bm Z) \in \Sigma_{\inn}} R(\bm Z)
        \label{eq:disjunctive-omega}
    \end{align}
Given a database instance $D$ over schema $\Sigma_{\inn}$,
{\em a model (or output instance)} for the $\omega$-DDR~\eqref{eq:disjunctive-omega}
is a collection of output tables $P_\ell(\bm U_\ell)$ for $\ell \in [L]$,
and tables $(S_j(\bm X_j \bm G_j), T_j(\bm Y_j \bm G_j), W_j(\bm Z_j \bm G_j))$
for $j \in [J]$ such that for every tuple $\bm t \in \bigjoin \Sigma_{\inn}$:
\begin{itemize}
    \item either there exists $\ell \in [L]$ where $\bm t_{\bm U_\ell}\in P_\ell$,
    \item or there exists $j \in [J]$ where $\bm t_{\bm X_j\bm G_j}\in S_j$,
    $\bm t_{\bm Y_j\bm G_j}\in T_j$,
    and $\bm t_{\bm Z_j\bm G_j}\in W_j$.
\end{itemize}
\label{defn:omega-ddr}
\end{definition}

\subsubsection{A new notion of model for an $\omega$-DDR}

For the purpose of evaluating an $\omega$-DDR in our target runtime bounds,
we need a different notion of a model for an $\omega$-DDR, where the model
is described through {\em sub-probability measures} rather than relations.
This follows the $\pandaexpress$ approach~\cite{panda-express}
in order to avoid unnecessary $\polylog$-factors in the runtime.
See Section~\ref{subsec:prelims:sub-probs} for some background on sub-probability measures.

\begin{definition}[Targets of an $\omega$-DDR]
    Consider an $\omega$-DDR of the form~\eqref{eq:disjunctive-omega}.
For each $\ell \in [L]$, we refer to the atom $P_\ell(\bm U_\ell)$ as a {\em join target} of the $\omega$-DDR, and its variable are $\vars(P_\ell(\bm U_\ell)) \defeq \bm U_\ell$.
For each $j \in [J]$, we think of the three conjuncts $S_j(\bm X_j\bm G_j)$, $T_j(\bm Y_j\bm G_j)$, and $W_j(\bm Z_j\bm G_j)$ as being {\em partially ordered},
where $S_j$ and $T_j$ are interchangeable with one another, whereas $W_j$ is not interchangeable with either one of them.
This partial order relationship is captured by the tuple $(\{S_j(\bm X_j \bm G_j), T_j(\bm Y_j \bm G_j)\}, W_j(\bm Z_j\bm G_j))$,
which we refer to as an {\em MM target} of the $\omega$-DDR, and its variables are
\[\vars(\{S_j(\bm X_j \bm G_j), T_j(\bm Y_j \bm G_j)\}, W_j(\bm Z_j\bm G_j)) \defeq \bm X_j \cup \bm Y_j \cup \bm Z_j \cup \bm G_j.\]
We refer to either kinds of targets as a {\em target} of the $\omega$-DDR.
\label{defn:omega-ddr:targets}
\end{definition}
Recall from Section~\ref{subsec:prelims:sub-probs} that we use $p_{\bm Y|\bm X}(\bm t)$ as a shorthand for
$p_{\bm Y|\bm X}(\bm t_{\bm Y}|\bm t_{\bm X})$.
\begin{definition}[A $B$-probabilistic model for an $\omega$-DDR]
    Consider an $\omega$-DDR of the form~\eqref{eq:disjunctive-omega}, and let $D$ be an input database instance over schema $\Sigma_{\inn}$, and $B \in \R_+$ be a constant.
    Given a join target $P_\ell(\bm U_\ell)$, a {\em probabilistic interpretation} of this target is a sub-probability measure $p_{\bm U_\ell}$ over $\dom^{\bm U_\ell}$.
    The {\em $B$-support} of $p_{\bm U_\ell}$ is the set of tuples $\bm t \in \dom^{\bm U_\ell}$ that satisfy:
    \begin{align}
        p_{\bm U_\ell}(\bm t) \geq \frac{1}{B}
    \end{align}
    Given an MM target $(\{S_j(\bm X_j \bm G_j), T_j(\bm Y_j \bm G_j)\}, W_j(\bm Z_j\bm G_j))$, a {\em probabilistic interpretation} of this target is a pair $((p_{\bm G_j}, p_{\bm X_j|\bm G_j}, p_{\bm Y_j|\bm G_j}, p_{\bm Z_j|\bm G_j}), (\alpha_j, \beta_j, \zeta_j))$ where $p_{\bm G_j}, p_{\bm X_j|\bm G_j}, p_{\bm Y_j|\bm G_j}, p_{\bm Z_j|\bm G_j}$ are sub-probability measures, and $(\alpha_j, \beta_j, \zeta_j)$ is an $\omega$-dominant triple.
    The {\em $B$-support} of this probabilistic interpretation is the set of tuples $\bm t \in \dom^{\bm X_j\bm Y_j\bm Z_j\bm G_j}$ that satisfy:
    \begin{align}
        p_{\bm G_j}(\bm t) \cdot
        \left(p_{\bm X_j|\bm G_j}(\bm t)\right)^{\alpha_j} \cdot
        \left(p_{\bm Y_j|\bm G_j}(\bm t)\right)^{\beta_j} \cdot
        \left(p_{\bm Z_j|\bm G_j}(\bm t)\right)^{\zeta_j} \geq \frac{1}{B}
        \label{eq:omega-ddr:mm-target-support}
    \end{align}
    A {\em $B$-probabilistic model} for an $\omega$-DDR over the input database $D$ is a collection of probabilistic interpretations, one for each target of the $\omega$-DDR, that satisfy
    the following:
    \begin{itemize}
        \item For every tuple $\bm t \in \bigjoin \Sigma_{\inn}$,
        there exists a target $A$ of the $\omega$-DDR (which could be either a join target or an MM target) such that $\bm t_{\vars(A)}$ is in the $B$-support of the probabilistic interpretation of $A$.
    \end{itemize}
    In other words, a $B$-probabilistic model for an $\omega$-DDR is valid
    iff every tuple $\bm t\in\bigjoin\Sigma$ is covered\footnote{Recall the notion of ``covered'' from Section~\ref{subsec:prelims:ddr}.} by the $B$-support of the probabilistic interpretation of some target of the $\omega$-DDR.
    \label{defn:prob:model:omega-ddr}
\end{definition}
\begin{proposition}
Every $B$-probabilistic model for an $\omega$-DDR is also a $B'$-probabilistic model for the same $\omega$-DDR for every $B' > B$.
\label{prop:model:omega-ddr:monotonicity}
\end{proposition}

\subsubsection{Performing matrix multiplication over models of $\omega$-DDRs}
In order to evaluate a BCQ $Q$ in $\omega$-submodular width time, we will describe in Section~\ref{subsec:algo:panda:osubw} a reduction from evaluating $Q$ to evaluating a collection of $\omega$-DDRs.
After we evaluate these $\omega$-DDRs, we will need to perform some fast matrix multiplication operations on the outputs (models) of these $\omega$-DDRs.
We describe below how to do that efficiently.
The extra $\log^2$ factor in our main theorem, Theorem~\ref{thm:panda:osubw}, is inherited from the following lemma.

\begin{lemma}[Performing fast matrix multiplication over $B$-probabilistic models of $\omega$-DDRs in time $O(B\cdot \log^2 N)$]
    Let $\bm X, \bm Y, \bm Z, \bm G$ be disjoint sets of variables.
    Suppose we have three MM targets (which could belong to different $\omega$-DDRs):
    \begin{align}
        &(\{S_1(\bm X\bm G), T_1(\bm Y\bm G)\}, W_1(\bm Z\bm G)),\nonumber\\
        &(\{S_2(\bm X\bm G), T_2(\bm Z\bm G)\}, W_2(\bm Y\bm G)),\nonumber\\
        &(\{S_3(\bm Y\bm G), T_3(\bm Z\bm G)\}, W_3(\bm X\bm G)).
    \end{align}
    Suppose that each of the above targets has a probabilistic interpretation.
    Let $B \in \R_+$ be a constant, and $E \subseteq \dom^{\bm X\bm Y\bm Z\bm G}$ be the intersection of the $B$-supports of the probabilistic interpretations of the three targets.
    Then, there exists a superset $\ov E \supseteq E$ such that for any two relations
    $M_1(\bm G\bm Y\bm X)$ and $M_2(\bm G\bm X\bm Z)$, the following query can be evaluated in time $O(B\cdot \log^2 N)$:
    \begin{align}
        Q(\bm G\bm Y\bm Z) \quad\cd\quad
            \ov{E}(\bm G\bm X\bm Y\bm Z) \wedge M_1(\bm G\bm Y\bm X) \wedge M_2(\bm G\bm X\bm Z)
            \label{eq:omega-join-matrix-mult:q}
    \end{align}
    Moreover, the relations $M_1$ and $M_2$ don't need to be materialized explicitly.
    Instead, it suffices to have lookup oracles for $M_1$ and $M_2$ that can answer membership queries in time $O(1)$.
    \label{lem:omega-join-matrix-mult}
\end{lemma}
\begin{proof}
    The three targets have three probabilistic interpretations, respectively:
    \begin{align*}
        &((p_{\bm G}^{(1)}, p_{\bm X|\bm G}^{(1)}, p_{\bm Y|\bm G}^{(1)}, p_{\bm Z|\bm G}^{(1)}), (\alpha_1, \beta_1, \zeta_1)),\\
        &((p_{\bm G}^{(2)}, p_{\bm X|\bm G}^{(2)}, p_{\bm Z|\bm G}^{(2)}, p_{\bm Y|\bm G}^{(2)}), (\alpha_2, \beta_2, \zeta_2)),\\
        &((p_{\bm G}^{(3)}, p_{\bm Y|\bm G}^{(3)}, p_{\bm Z|\bm G}^{(3)}, p_{\bm X|\bm G}^{(3)}), (\alpha_3, \beta_3, \zeta_3)).
    \end{align*}
    Let $\bm W \defeq \bm X \bm Y \bm Z \bm G$.
    The set $E$ is the set of tuples $\bm w \in \dom^{\bm W}$ that satisfy the following:
    \begin{align*}
        p_{\bm G}^{(1)}(\bm w) \cdot
        \left(p_{\bm X|\bm G}^{(1)}(\bm w)\right)^{\alpha_1} \cdot
        \left(p_{\bm Y|\bm G}^{(1)}(\bm w)\right)^{\beta_1} \cdot
        \left(p_{\bm Z|\bm G}^{(1)}(\bm w)\right)^{\zeta_1} &\geq \frac{1}{B}\\
        p_{\bm G}^{(2)}(\bm w) \cdot
        \left(p_{\bm X|\bm G}^{(2)}(\bm w)\right)^{\alpha_2} \cdot
        \left(p_{\bm Z|\bm G}^{(2)}(\bm w)\right)^{\beta_2} \cdot
        \left(p_{\bm Y|\bm G}^{(2)}(\bm w)\right)^{\zeta_2} &\geq \frac{1}{B}\\
        p_{\bm G}^{(3)}(\bm w) \cdot
        \left(p_{\bm Y|\bm G}^{(3)}(\bm w)\right)^{\alpha_3} \cdot
        \left(p_{\bm Z|\bm G}^{(3)}(\bm w)\right)^{\beta_3} \cdot
        \left(p_{\bm X|\bm G}^{(3)}(\bm w)\right)^{\zeta_3} &\geq \frac{1}{B}
    \end{align*}
    We define a sub-probability measure $p_{\bm G}$ as the mean of the three sub-probability measures $p_{\bm G}^{(1)}$, $p_{\bm G}^{(2)}$, and $p_{\bm G}^{(3)}$.
    Note that $p_{\bm G} \geq \frac{1}{3} p_{\bm G}^{(i)}$ for every $i \in [3]$.
    Similarly, we define $p_{\bm X|\bm G}$, $p_{\bm Y|\bm G}$, and $p_{\bm Z|\bm G}$ as the mean of the corresponding sub-probability measures from the three probabilistic interpretations.
    Let $E'$ be the set of tuples $\bm w \in \dom^{\bm W}$ that satisfy the following:
    \begin{align}
        p_{\bm G}(\bm w) \cdot
        \left(p_{\bm X|\bm G}(\bm w)\right)^{\alpha_1} \cdot
        \left(p_{\bm Y|\bm G}(\bm w)\right)^{\beta_1} \cdot
        \left(p_{\bm Z|\bm G}(\bm w)\right)^{\zeta_1} &\geq \frac{1}{81B}
        \label{eq:omega-join-matrix-mult-1}\\
        p_{\bm G}(\bm w) \cdot
        \left(p_{\bm X|\bm G}(\bm w)\right)^{\alpha_2} \cdot
        \left(p_{\bm Z|\bm G}(\bm w)\right)^{\beta_2} \cdot
        \left(p_{\bm Y|\bm G}(\bm w)\right)^{\zeta_2} &\geq \frac{1}{81B}
        \label{eq:omega-join-matrix-mult-2}\\
        p_{\bm G}(\bm w) \cdot
        \left(p_{\bm Y|\bm G}(\bm w)\right)^{\alpha_3} \cdot
        \left(p_{\bm Z|\bm G}(\bm w)\right)^{\beta_3} \cdot
        \left(p_{\bm X|\bm G}(\bm w)\right)^{\zeta_3} &\geq \frac{1}{81B}
        \label{eq:omega-join-matrix-mult-3}
    \end{align}
    Note that $E'$ is a super set of $E$.
    Later on, we will construct $\ov E$ to be a super set of $E'$.

    Fix a tuple $\bm g \in \dom^{\bm G}$.
    We will show that $\sigma_{\bm G = \bm g}Q$ can be computed in time $O(p_{\bm G}(\bm g)\cdot B\cdot \log^2 N)$. This in turn implies that the total time for computing $Q$ is $O(B\cdot \log^2 N)$, since $p_{\bm G}$ is a sub-probability measure.

    To that end, we partition $p_{\bm X|\bm G=\bm g}$ into $O(\log B) = O(\log N)$ buckets
    where bucket $i$ contains all values $\bm x \in \dom^{\bm X}$ that satisfy:
    \[\frac{1}{2^{i+1}} \quad<\quad p_{\bm X|\bm G=\bm g}(\bm x) \quad\leq\quad \frac{1}{2^i}\]
    Similarly, we partition $p_{\bm Y|\bm G=\bm g}$ into $O(\log N)$ buckets
    where bucket $j$ contains all values $\bm y \in \dom^{\bm Y}$ that satisfy:
    \[\frac{1}{2^{j+1}} \quad<\quad p_{\bm Y|\bm G=\bm g}(\bm y) \quad\leq\quad \frac{1}{2^j}\]
    There are $O(\log^2 N)$ pairs of buckets $(i, j)$.
    Fix one pair $(i, j)$ (in addition to the fixed $\bm g$ at the beginning).
    Let $E_{\bm g}^{ij}$ be the subset of $E'$ containing all tuples $\bm w\in\dom^{\bm W}$ that satisfy Eq~\eqref{eq:omega-join-matrix-mult-1}--\eqref{eq:omega-join-matrix-mult-3}
    and
    \begin{align}
        \bm w_{\bm G} = \bm g, \quad\quad
        \frac{1}{2^{i+1}} < p_{\bm X|\bm G=\bm g}(\bm w) \leq \frac{1}{2^i}, \quad\quad
        \frac{1}{2^{j+1}} < p_{\bm Y|\bm G=\bm g}(\bm w) \leq \frac{1}{2^j}
    \end{align}
    Let $m$ and $n$ be the total number of $\bm x$ and $\bm y$ values in the selected buckets $i$ and $j$ respectively.
    Note that $m \leq 2^{i+1}$ and $n \leq 2^{j+1}$.
    Let $\ov{p}$ be the smallest value of $p_{\bm Z|\bm G=\bm g}(\bm z)$
    over all $\bm z \in \dom^{\bm Z}$ that satisfies:
    \begin{align}
        p_{\bm G}(\bm g) \cdot
        \left(\frac{1}{2^{i}}\right)^{\alpha_1} \cdot
        \left(\frac{1}{2^{j}}\right)^{\beta_1} \cdot
        \left(\ov{p}\right)^{\zeta_1} &\quad\geq\quad \frac{1}{81B}\label{eq:omega-join-matrix-mult-4}\\
        p_{\bm G}(\bm g) \cdot
        \left(\frac{1}{2^{i}}\right)^{\alpha_2} \cdot
        \left(\frac{1}{2^{j}}\right)^{\zeta_2} \cdot
        \left(\ov{p}\right)^{\beta_2} &\quad\geq\quad \frac{1}{81B}\label{eq:omega-join-matrix-mult-5}\\
        p_{\bm G}(\bm g) \cdot
        \left(\frac{1}{2^{i}}\right)^{\zeta_3} \cdot
        \left(\frac{1}{2^{j}}\right)^{\alpha_3} \cdot
        \left(\ov{p}\right)^{\beta_3} &\quad\geq\quad \frac{1}{81B}\label{eq:omega-join-matrix-mult-6}
    \end{align}
    Let $\ov E_{\bm g}^{ij}$ be the set of all tuples $\bm w \in \dom^{\bm W}$ that satisfy
    the following:
    \begin{align}
        \bm w_{\bm G} = \bm g, \quad\quad
        \frac{1}{2^{i+1}} < p_{\bm X|\bm G=\bm g}(\bm w) \leq \frac{1}{2^i}, \quad\quad
        \frac{1}{2^{j+1}} < p_{\bm Y|\bm G=\bm g}(\bm w) \leq \frac{1}{2^j}, \quad\quad
        p_{\bm Z|\bm G=\bm g}(\bm w) \geq \ov{p}
    \end{align}
    Note that $\ov E_{\bm g}^{ij} \supseteq E_{\bm g}^{ij}$.
    Let $o$ be the number of different $\bm z$ values where $p_{\bm Z|\bm G=\bm g}(\bm z) \geq \ov{p}$. Note that $o \leq 1/\ov{p}$.
    From inequalities~\eqref{eq:omega-join-matrix-mult-4}--\eqref{eq:omega-join-matrix-mult-6} along with $m \leq 2^{i+1}$ and $n \leq 2^{j+1}$, there exists a constant $C$ that satisfies:\footnote{The constant $C$ can be assumed to be independent of the specific values of the $\omega$-dominant triples $(\alpha_i, \beta_i, \zeta_i)$.
    This is because every $\omega$-dominant triple $(\alpha_i, \beta_i, \zeta_i)$
    can be replaced with another $(\alpha_i', \beta_i', \zeta_i') \leq (\alpha_i, \beta_i, \zeta_i)$ where inequality~\eqref{eq:omega-dominant:3} is tight, i.e., where $\alpha_i' + \beta_i' + \zeta_i' = \omega$.
    Such replacement can only expand the $B$-support of the probabilistic interpretation of the MM target, and thus it can only expand the set $E$.}
    \begin{align}
        m^{\alpha_1} \cdot n^{\beta_1} \cdot o^{\zeta_1} &\quad\leq\quad C\cdot B\cdot p_{\bm G}(\bm g)\\
        m^{\alpha_2} \cdot n^{\zeta_2} \cdot o^{\beta_2} &\quad\leq\quad C\cdot B\cdot p_{\bm G}(\bm g)\\
        m^{\zeta_3} \cdot n^{\alpha_3} \cdot o^{\beta_3} &\quad\leq\quad C\cdot B\cdot p_{\bm G}(\bm g)
    \end{align}
    Let $Q_{\bm g}^{ij}$ be the same as $Q$ in Eq.~\eqref{eq:omega-join-matrix-mult:q} but where $\ov E$ is replaced by $\ov E_{\bm g}^{ij}$.
    In order to answer $Q_{\bm g}^{ij}$, we can perform a matrix multiplication between $M_1$ and $M_2$ after filtering them by $\ov E_{\bm g}^{ij}$.
    The filtered matrices have dimensions $(n\times m)$ and $(m\times o)$ respectively.
    The first matrix can be populated in time $O(n\cdot m)$ by iterating over all pairs
    of $(\bm y, \bm x)$ values, and then using the membership oracle for $M_1$.
    The same holds for the second matrix, which can be populated in time $O(m\cdot o)$.
    The multiplication of matrices of dimensions $(n\times m)$ and $(m\times o)$ takes the following time, which also upper bounds the sizes of both matrices, i.e., it also
    upper bounds $n\cdot m$ and $m\cdot o$:
    \begin{multline}
        \max(m\cdot n \cdot o^{\gamma},\quad
            m\cdot n^{\gamma} \cdot o,\quad
            m^{\gamma}\cdot n \cdot o) \leq \\
        \max(m^{\alpha_1} \cdot n^{\beta_1} \cdot o^{\zeta_1},\quad
            m^{\alpha_2} \cdot n^{\zeta_2} \cdot o^{\beta_2},\quad
            m^{\zeta_3} \cdot n^{\alpha_3} \cdot o^{\beta_3})
            \leq C\cdot B\cdot p_{\bm G}(\bm g)
    \end{multline}
    The first inequality above is proved similarly to the proof of Proposition~\ref{prop:mm:upper-bound}.
    This proves that $Q_{\bm g}^{ij}$ can be evaluated in time
    $O(B\cdot p_{\bm G}(\bm g))$.
    Since the total number of pairs of buckets $(i, j)$ is $O(\log^2 N)$, we conclude that $\sigma_{\bm G = \bm g}Q$ can be computed in time $O(B\cdot p_{\bm G}(\bm g)\cdot \log^2 N)$ as claimed.
\end{proof}

\subsection{Evaluating $\omega$-Disjunctive Datalog Rules}
\label{subsec:algo:panda:ddr}

In the previous section, we introduced a generalization of DDRs, called $\omega$-DDRs.
We now present an algorithm for evaluating these $\omega$-DDRs.
This algorithm will be used as a building block for our final algorithm for evaluating a BCQ $Q$
in $\omega$-submodular width time in Section~\ref{subsec:algo:panda:osubw}.

\begin{theorem}[Evaluating an $\omega$-DDR]
    Suppose we are given the following:
    \begin{itemize}[leftmargin=*]
        \item An $\omega$-DDR of the form~\eqref{eq:disjunctive-omega} with input schema
        $\Sigma_{\inn}$ and output schema $\Sigma_{\outt}$.
        \item An input database instance $D$ over schema $\Sigma_{\inn}$ of size $N$.
        \item An integral $\omega$-Shannon-flow inequality of the form~\eqref{eq:omega-shannon}
        that satisfies $\norm{\bm\lambda}_1 + \norm{\bm\kappa}_1 > 0$.
    \end{itemize}
    Define the quantity:
    \begin{align}
        B \defeq \left(\prod_{i \in [I]} \deg_{\Sigma_\inn}(\bm Y_i|\bm X_i)^{w_i}\right)^{\frac{1}{\norm{\bm\lambda}_1 + \norm{\bm\kappa}_1}}
        \label{eq:omega-DDR:B}
    \end{align}
    Then, \opandaexpress (Algorithm~\ref{alg:omega-panda-express}) computes a $B$-probabilistic model to the $\omega$-DDR (Definition~\ref{defn:prob:model:omega-ddr})
    in time $O(B\cdot \log N)$.
    \label{thm:panda:ddr}
\end{theorem}

\subsubsection{Example of the $\opandaexpress$ algorithm}
\label{subsubsec:algo:panda:ddr:example}
\begin{table*}
\[
\small
\begin{array}{|l|l|}
    \hline
    \rowcolor{lightgray}\text{\bf $\omega$-Shannon-flow Inequality and Steps} & \text{\bf Probabilistic Inequality and Steps}\\\hline
    \rowcolor{lightergray}
    \begin{aligned}[t]
        &\omega h(XYZ) + h(X) + h(Y) + (\omega-2)h(Z) \\
        &\leq {\color{blue}2h(XY) + (\omega-1) h(YZ) + (\omega-1)h(XZ)}\\
        & {\color{blue}\leq 2\omega}
    \end{aligned}&
    \begin{aligned}[t]
        {\color{red}p_{XY}(xy)^2\cdot p_{YZ}(yz)^{\omega-1} \cdot p_{XZ}(xz)^{\omega-1}
        \geq \frac{1}{N^{2\omega}}}
        = \frac{1}{B^{\omega+1}}\\
        p_{XY}(xy) \defeq \frac{1}{N},\quad p_{YZ}(yz) \defeq \frac{1}{N},\quad p_{XZ}(xz) \defeq \frac{1}{N}
    \end{aligned}\\\hline
    \begin{aligned}[t]
        &\color{blue}h(XY) \to h(X) + h(Y|X)
    \end{aligned}&
    \begin{aligned}[t]
        &{\color{red} p_{XY}(xy) \to p_X(x) \cdot p_{Y|X}(y|x)}\\
        &p_X(x) \defeq \sum_{y} p_{XY}(xy) = \frac{\deg_R(Y|X=x)}{N}\\
        &p_{Y|X}(y|x) \defeq \frac{p_{XY}(xy)}{p_X(x)} = \frac{1}{\deg_R(Y|X=x)}
    \end{aligned}\\\hline
    \begin{aligned}[t]
        &\color{blue}h(Y|X) \to h(Y|XZ)
    \end{aligned}&
    \begin{aligned}[t]
        &{\color{red} p_{Y|X}(y|x) \to p_{Y|XZ}(y|xz)}\\
        &p_{Y|XZ}(y|xz) \defeq p_{Y|X}(y|x) = \frac{1}{\deg_R(Y|X=x)}
    \end{aligned}\\\hline
    \begin{aligned}[t]
        &\color{blue}h(XZ) + h(Y|XZ) \to h(XYZ)
    \end{aligned}&
    \begin{aligned}[t]
        &{\color{red} p_{XZ}(xz) \cdot p_{Y|XZ}(y|xz) \to p'_{XYZ}(xyz)}\\
        &p'_{XYZ}(xyz) \defeq p_{XZ}(xz) \cdot p_{Y|XZ}(y|xz) = \frac{1}{N\cdot\deg_R(Y|X=x)}
    \end{aligned}\\\hline
    \begin{aligned}[t]
        &\color{blue}h(YZ) \to h(Y) + h(Z|Y)
    \end{aligned}&
    \begin{aligned}[t]
        &{\color{red} p_{YZ}(yz) \to p_Y(y) \cdot p_{Z|Y}(z|y)}\\
        &p_Y(y) \defeq \sum_{z} p_{YZ}(yz) = \frac{\deg_S(Z|Y=y)}{N}\\
        &p_{Z|Y}(z|y) \defeq \frac{p_{YZ}(yz)}{p_Y(y)} = \frac{1}{\deg_S(Z|Y=y)}
    \end{aligned}\\\hline
    \begin{aligned}[t]
        &\color{blue}h(Z|Y) \to h(Z|XY)
    \end{aligned}&
    \begin{aligned}[t]
        &{\color{red} p_{Z|Y}(z|y) \to p_{Z|XY}(z|xy)}\\
        &p_{Z|XY}(z|xy) \defeq p_{Z|Y}(z|y) = \frac{1}{\deg_S(Z|Y=y)}
    \end{aligned}\\\hline
    \begin{aligned}[t]
        &\color{blue}h(XY) + h(Z|XY) \to h(XYZ)
    \end{aligned}&
    \begin{aligned}[t]
        &{\color{red} p_{XY}(xy) \cdot p_{Z|XY}(z|xy) \to p''_{XYZ}(xyz)}\\
        &p''_{XYZ}(xyz) \defeq p_{XY}(xy) \cdot p_{Z|XY}(z|xy) = \frac{1}{N\cdot\deg_S(Z|Y=y)}
    \end{aligned}\\\hline
    \begin{aligned}[t]
        &\color{blue}\gamma h(XZ) \to \gamma h(Z) + \gamma h(X|Z)
    \end{aligned}&
    \begin{aligned}[t]
        &{\color{red}p_{XZ}(xz)^\gamma \to p_Z(z)^\gamma \cdot p_{X|Z}(x|z)^\gamma}\\
        &p_Z(z) \defeq \sum_{x} p_{XZ}(xz) = \frac{\deg_T(X|Z=z)}{N}\\
        &p_{X|Z}(x|z) \defeq \frac{p_{XZ}(xz)}{p_Z(z)} = \frac{1}{\deg_T(X|Z=z)}
    \end{aligned}\\\hline
    \begin{aligned}[t]
        &\color{blue}\gamma h(X|Z) \to \gamma h(X|YZ)
    \end{aligned}&
    \begin{aligned}[t]
        &{\color{red}p_{X|Z}(x|z)^\gamma \to p_{X|YZ}(x|yz)^\gamma}\\
        &p_{X|YZ}(x|yz) \defeq p_{X|Z}(x|z) = \frac{1}{\deg_T(X|Z=z)}
    \end{aligned}\\\hline
    \begin{aligned}[t]
        &\color{blue}\gamma h(YZ) + \gamma h(X|YZ) \to \gamma h(XYZ)
    \end{aligned}&
    \begin{aligned}[t]
        &{\color{red}p_{YZ}(yz)^\gamma \cdot p_{X|YZ}(x|yz)^\gamma \to p'''_{XYZ}(xyz)^\gamma}\\
        &p'''_{XYZ}(xyz) \defeq p_{YZ}(yz) \cdot p_{X|YZ}(x|yz) = \frac{1}{N\cdot\deg_T(X|Z=z)}
    \end{aligned}\\\hline
    \rowcolor{lightergray}
    \begin{aligned}[t]
        &\omega h(XYZ) + h(X) + h(Y) + (\omega-2)h(Z) \\
        &\leq {\color{blue}h(XYZ) + h(X) + h(Y) + (\omega-2)h(Z)} \\
        & {\color{blue}\leq  2\omega}
    \end{aligned}&
    \begin{aligned}[t]
        &\color{red}p'_{XYZ}(xyz) \cdot p''_{XYZ}(xyz) \cdot p'''_{XYZ}(xyz)^{\omega-2}\cdot p_X(x) \cdot p_Y(y) \cdot p_Z(z)^{\omega-2} \\
        & {\color{red}\geq \frac{1}{N^{2\omega}}} = \frac{1}{B^{\omega+1}}
    \end{aligned}\\\hline
\end{array}
\]
\caption{Illustration of the $\opandaexpress$ algorithm for evaluating the $\omega$-DDR from Eq.~\eqref{eq:pandaexpress:ddr}, guided by the $\omega$-Shannon-flow inequality from Eq.~\eqref{eq:intro:shannon:triangle}.
In particular, Table~\ref{table:ps} gives a proof sequence that transforms the RHS of inequality~\eqref{eq:intro:shannon:triangle} into the LHS.
Our aim is to mirror this process in the probability world, by transforming inequality~\eqref{eq:opnadaexpress:prob:ineq} into~\eqref{eq:opnadaexpress:prob:ineq:final}.
See Section~\ref{subsubsec:algo:panda:ddr:example} for more details.}
\label{table:opandaexpress}
\end{table*}
Before describing the general $\opandaexpress$ algorithm and proving the associated Theorem~\ref{thm:panda:ddr}, we start in this section with an example that illustrates the main ideas behind the algorithm and its analysis.
Consider the following $\omega$-DDR, whose body is the same as $Q_\triangle$ from Eq.~\eqref{eq:intro:triangle}, and whose head consists of one join target and one MM target:
\begin{align}
    P_1(X, Y, Z) \vee (S_1(X) \wedge T_1(Y) \wedge W_1(Z)) \quad\cd\quad
    R(X, Y) \wedge S(Y, Z) \wedge T(X, Z)
    \label{eq:pandaexpress:ddr}
\end{align}
Suppose that $|R| = |S| = |T| = N$.
Consider the $\omega$-Shannon-flow inequality~\eqref{eq:intro:shannon:triangle}
(or equivalently~\eqref{eq:intro:shannon:triangle:explicit}).
The quantity $B$ from Eq.~\eqref{eq:omega-DDR:B} becomes:
\begin{align}
    B \defeq N^{\frac{2\omega}{\omega+1}}
\end{align}
Note that $\frac{2\omega}{\omega+1}$ is the optimal objective value of the LP~\eqref{eq:intro:inner-lp:triangle}, which gave us the $\omega$-Shannon-flow inequality~\eqref{eq:intro:shannon:triangle} in the first place. This is not a coincidence,
as we will see later (Lemma~\ref{lmm:lp-to-omega-shannon}).
We explain here how we can utilize the $\omega$-Shannon-flow inequality~\eqref{eq:intro:shannon:triangle} in order to compute a $B$-probabilistic model to the $\omega$-DDR~\eqref{eq:pandaexpress:ddr} in time $O(B)$. (The extra $\log N$ is not needed in this example.)

The RHS of inequality~\eqref{eq:intro:shannon:triangle} is as follows:
\begin{align}
    2h(XY) + (\omega-1) h(YZ) + (\omega-1)h(XZ) \quad\leq\quad 2\omega
    \label{eq:opnadaexpress:shannon:ineq}
\end{align}
We initialize three probability distribution $p_{XY}(xy)$, $p_{YZ}(yz)$, and $p_{XZ}(xz)$
to be uniform, i.e., equal to $1/N$ over the relations $R$, $S$, and $T$, respectively.
Now, we have the following probabilistic inequality, which we think of as the {\em probabilistic interpretation} of inequality~\eqref{eq:opnadaexpress:shannon:ineq}:
\begin{align}
    p_{XY}(xy)^2\cdot p_{YZ}(yz)^{\omega-1} \cdot p_{XZ}(xz)^{\omega-1}
        \quad\geq\quad \frac{1}{N^{2\omega}}
        = \frac{1}{B^{\omega+1}}
    \label{eq:opnadaexpress:prob:ineq}
\end{align}
The proof sequence given earlier in Table~\ref{table:ps} transforms the RHS of the $\omega$-Shannon-flow inequality~\eqref{eq:intro:shannon:triangle} into the LHS.
Every time we apply a proof step, we mirror it in the probabilistic world by applying a corresponding transformation on inequality~\eqref{eq:opnadaexpress:prob:ineq}.
The process is illustrated in Table~\ref{table:opandaexpress}.
For example, to mirror the step $h(XY) \to h(X) + h(Y|X)$, we replace the term $p_{XY}(xy)$
in inequality~\eqref{eq:opnadaexpress:prob:ineq} with the product $p_X(x) \cdot p_{Y|X}(y|x)$, where $p_X(x)$ and $p_{Y|X}(y|x)$ are the marginal and conditional measures from Eq.~\eqref{eqn:marginal} and Eq.~\eqref{eqn:conditional}, respectively.
By Eq.~\eqref{eqn:marginal} and~\eqref{eqn:conditional}, this replacement preserves inequality~\eqref{eq:opnadaexpress:prob:ineq}.

At the end of the above process, we would have constructed distributions $p'_{XYZ}$, $p''_{XYZ}$, $p'''_{XYZ}$, $p_X$, $p_Y$, and $p_Z$ that satisfy inequality~\eqref{eq:opnadaexpress:prob:ineq} which now takes the following form:
\begin{align}
    p'_{XYZ}(xyz) \cdot p''_{XYZ}(xyz) \cdot p'''_{XYZ}(xyz)^{\omega-2}\cdot p_X(x) \cdot p_Y(y) \cdot p_Z(z)^{\omega-2}
    \geq \frac{1}{N^{2\omega}} = \frac{1}{B^{\omega+1}}
    \label{eq:opnadaexpress:prob:ineq:final}
\end{align}
The $\opandaexpress$ algorithm works by going through the sequence of probability distributions
defined in Table~\ref{table:opandaexpress} and actually computing them one by one (by materializing their supports along with the associated probabilities).
But there is a catch:
For the distributions $p'_{XYZ}$, $p''_{XYZ}$, and $p'''_{XYZ}$,
we cannot afford to materialize their full supports, since their sizes could exceed $B$.
Instead, we truncate them by keeping only the tuples that have probability at least $1/B$.
This guarantees that, after truncation, each one of these distributions will have support size of at most $B$.
But what about tuples that were truncated away in the above process?
This is were inequality~\eqref{eq:opnadaexpress:prob:ineq:final} comes to the rescue.
These tuples $xyz$ that were truncated away must have
$p'_{XYZ}(xyz)$, $p''_{XYZ}(xyz)$, $p'''_{XYZ}(xyz) < 1/B$.
Substituting this into inequality~\eqref{eq:opnadaexpress:prob:ineq:final}, we get
\begin{align}
    p_X(x)\cdot p_Y(y) \cdot p_Z(z)^{\omega-2} \quad\geq\quad
    \frac{B^\omega}{B^{\omega+1}} = \frac{1}{B}
\end{align}
Therefore, we can report the distributions $(p_X, p_Y, p_Z)$ along with the $\omega$-dominant
triple $(1, 1, \gamma)$ as a probabilistic interpretation of the MM target
$(\{S_1(X), T_1(Y)\}, W_1(Z))$ of the $\omega$-DDR~\eqref{eq:pandaexpress:ddr}.
By Eq.~\eqref{eq:omega-ddr:mm-target-support}, the $B$-support covers all tuples $xyz$ that were truncated away.
Therefore, if we return a $B$-probabilistic model that contains this probabilistic
interpretation of the MM target, along with the (geometric mean of the) truncated distributions $p'_{XYZ}$, $p''_{XYZ}$, and $p'''_{XYZ}$ as the probabilistic interpretation of the join target $P_1(X, Y, Z)$, then this is indeed a valid $B$-probabilistic model
of the $\omega$-DDR~\eqref{eq:pandaexpress:ddr} (Definition~\ref{defn:prob:model:omega-ddr}).

Earlier in Section~\ref{subsec:intro:algo}, we have seen a simpler algorithm, depicted in Figure~\ref{fig:ps-algo:triangle},
where there was no mentioning of probability distributions.
Instead, we just performed a  join, say $T(X,Z) \Join R_\ell(X, Y)$, over all $X$-values that
are {\em light},~i.e., have degree $\leq \Delta \defeq N^{\frac{\omega-1}{\omega+1}}$.
But how are the two algorithms related?
The answer is that they are one and the same!
In particular, in the above $\opandaexpress$ algorithm, the truncation condition $p'_{XYZ}(xyz) \geq 1/B$ corresponds to:
\begin{align}
    \frac{1}{N\cdot\deg_R(Y|X=x)} \geq \frac{1}{B} = \frac{1}{N^{\frac{2\omega}{\omega+1}}}
\end{align}
which is equivalent to $\deg_R(Y|X=x) \leq N^{\frac{\omega-1}{\omega+1}} = \Delta$.
In general, the $\opandaexpress$ algorithm is a generic way to come up with the right
partitioning strategies based on degrees, in order to achieve the desired runtime bounds.

\subsubsection{Description of the $\opandaexpress$ algorithm}

\begin{algorithm}[ht!]
    \caption{\opandaexpress($\calI, \calP$)}
    \begin{algorithmic}[1]
        \Statex {\bf Purpose:} Evaluating an $\omega$-DDR within the time bound from Theorem~\ref{thm:panda:ddr}.
        \vspace{-.25cm}\Statex\hrulefill
        \Statex {\bf Inputs:}
        \begin{itemize}
            \item $\calI$ is an integral $\omega$-Shannon-flow inequality of the form~\eqref{eq:omega-shannon}:
            \begin{align*}
                \sum_{\ell \in [L]} \lambda_\ell h(\bm U_{\ell}) +
                \sum_{j \in [J]} \left(\alpha_j h(\bm X_j|\bm G_j) +
                    \beta_j h(\bm Y_j|\bm G_j) +
                    \zeta_j h(\bm Z_j|\bm G_j) +
                    \kappa_j h(\bm G_j)\right)
                    \quad\leq\quad \sum_{i \in [I]} w_i h(\bm Y_i |\bm X_i)
            \end{align*}
            \item $\calP$ is a collection of sub-probability measures: $\{p_{\bm Y_i|\bm X_i}\}_{i \in [I]}$. See Remark~\ref{rmk:omega-panda-express}.
            \begin{itemize}
                \item If $\calP$ was not given, then it is initialized using Eq.~\eqref{eq:initial:measure}.
            \end{itemize}
        \end{itemize}
        \Statex {\bf Output:}
        \begin{itemize}
            \item A $B$-probabilistic model, where $B$ is given by Eq.~\eqref{eq:omega-DDR:B}.
        \end{itemize}
        \vspace{-.25cm}\Statex\hrulefill
        \If{$\norm{\bm \lambda}_1 + \norm{\bm \kappa}_1 = 0$} \algocomment{If there are no terms left on the LHS}\label{line:opanda-express:no-targets}
            \State \Return $\emptyset$\label{line:opanda-express:no-targets:return}
        \Statex \algocomment{If one term $\lambda_\ell h(\bm U_\ell)$ on the LHS appears on the RHS}
        \ElsIf {there are $\ell \in [L], i \in [I]$ where
        $(\bm U_\ell|\emptyset)=(\bm Y_i|\bm X_i)$ and $\lambda_\ell, w_i > 0$}\label{line:opanda-express:join-target}
            \State \Return $\{p_{\bm U_\ell}\}$\algocomment{Return a probabilistic interpretation of the join target $P_\ell(\bm U_\ell)$}\label{line:opanda-express:join-target:return}
        \Statex \algocomment{If one sum $\alpha_j h(\bm X_j|\bm G_j) +
                    \beta_j h(\bm Y_j|\bm G_j) +
                    \zeta_j h(\bm Z_j|\bm G_j) +
                    \kappa_j h(\bm G_j)$ on the LHS appears fully on the RHS}
        \ElsIf {there are $j \in [J]$ and pairwise-distinct $(i_1, i_2, i_3, i_4) \in [I]^4$ where\\
        \qquad\qquad$(\bm X_j|\bm G_j, \bm Y_j|\bm G_j, \bm Z_j|\bm G_j,\bm G_j|\emptyset) =
        (\bm Y_{i_1}|\bm X_{i_1}, \bm Y_{i_2}|\bm X_{i_2}, \bm Y_{i_3}|\bm X_{i_3}, \bm Y_{i_4}|\bm X_{i_4})$ and\\
        \qquad\qquad$(\alpha_j,\beta_j,\zeta_j,\kappa_j) \leq (w_{i_1}, w_{i_2}, w_{i_3}, w_{i_4})$}\label{line:opanda-express:mm-target}
            \State$\Sigma_\outt^l\gets\{((p_{\bm G_j}, p_{\bm X_j|\bm G_j}, p_{\bm Y_j|\bm G_j}, p_{\bm Z_j|\bm G_j}), (\alpha_j/\kappa_j, \beta_j/\kappa_j, \zeta_j/\kappa_j))\}$\label{line:opanda-express:mm-target:output}
            \Statex\algocomment{$\Sigma_\outt^l$ contains a probabilistic interpretation of the MM target $(\{S_j(\bm X_j \bm G_j), T_j(\bm Y_j \bm G_j)\}, W_j(\bm Z_j\bm G_j))$}
            \Statex\algocomment{We do NOT return immediately in this case!}
            \State Drop $\alpha_j h(\bm X_j|\bm G_j) +
                    \beta_j h(\bm Y_j|\bm G_j) +
                    \zeta_j h(\bm Z_j|\bm G_j) +
                    \kappa_j h(\bm G_j)$ from both sides of $\calI$ to obtain $\calI^h$
            \State Drop the corresponding measures from $\calP$ to obtain $\calP^h$
            \State $\Sigma_\outt^h \gets \opandaexpress(\calI^h, \calP^h)$\label{line:opanda-express:recursive:h:1}
            \State \Return $\Sigma_\outt^l \cup \Sigma_\outt^h$
        \EndIf
        \Statex\algocomment{Continue similar to the original \pandaexpress algorithm~\cite{panda-express}}
        \State Let $s$ be the first proof step in the proof sequence of $\calI$ (Theorem~\ref{thm:proof-sequence})
        \State $(\calI^l, \calP^l) \gets \applystep(s, \calI, \calP)$
        \State $\Sigma_{\outt}^l \gets \opandaexpress(\calI^l, \calP^l)$
        \algocomment{Light branch; continue on with the proof sequence}\label{line:opanda-express:recursive:l:2}
        \If {$s$ is a composition step $h(\bm X) + h(\bm Y | \bm X) \to h(\bm X \bm Y)$}
        \label{alg:tight:panda:reset}
            \State $(\calI^h, \calP^h) \gets \resetineq(\calI^l, \calP^l,\bm X\bm Y)$
            \algocomment{Apply Reset Lemma~\ref{lmm:reset} with $\bm Y_{i_0} = \bm X\bm Y$}
            \State $\Sigma_{\outt}^{h} \gets \opandaexpress(\calI^h, \calP^h)$
            \algocomment{Heavy branch}\label{line:opanda-express:recursive:h:2}
            \State \Return $\Sigma_{\outt}^l \cup \Sigma_{\outt}^{h}$
        \Else
            \State \Return $\Sigma_{\outt}^l$
        \EndIf
    \end{algorithmic}
    \label{alg:omega-panda-express}
\end{algorithm}

Now that we have seen an example of the $\opandaexpress$ algorithm for a specific $\omega$-DDR, we are ready to describe the general algorithm for any $\omega$-DDR.
    The \opandaexpress algorithm generalizes the $\pandaexpress$ algorithm from~\cite{panda-express}, and it is given in Algorithm~\ref{alg:omega-panda-express}.
    We describe it here in detail.
    It is a recursive algorithm that takes two input arguments:
    \begin{itemize}
        \item An integral $\omega$-Shannon-flow inequality $\calI$ of the form~\eqref{eq:omega-shannon}.
        \item A collection, $\calP$, of sub-probability measures: one measure $p_{\bm Y_i|\bm X_i}$ for each term $w_i h(\bm Y_i|\bm X_i)$ on the RHS
        of inequality $\calI$.
    \end{itemize}
    For the outer-most call to $\opandaexpress$, the inequality $\calI$ is set to be
    the integral $\omega$-Shannon-flow inequality from the statement of Theorem~\ref{thm:panda:ddr}.
    Also, for this outer-most call, the collection $\calP$ is initialized from the input database instance $D$ with schema $\Sigma_\inn$ from Theorem~\ref{thm:panda:ddr} as follows:
    For each term $w_i h(\bm Y_i|\bm X_i)$ on the RHS of inequality~\eqref{eq:omega-shannon},
    let $R_i\in\Sigma_\inn$ be a relation that minimizes the degree $\deg_{R_i}(\bm Y_i|\bm X_i)$, i.e., where
    $\deg_{R_i}(\bm Y_i|\bm X_i) = \deg_{\Sigma_\inn}(\bm Y_i|\bm X_i)$.
    Then, the associated sub-probability measure $p_{\bm Y_i|\bm X_i} \in \calP$ is initialized as follows: (Recall the notation ${R}_{|\bm X\cup\bm Y}$ from Section~\ref{subsec:prelims:degrees}.)
    \begin{align}
        p_{\bm Y_i|\bm X_i}(\bm y_i | \bm x_i) \defeq
        \begin{cases}
            \frac{1}{\deg_{R_i}(\bm Y_i|\bm X_i=\bm x_i)} &
                \text{if }(\bm x_i, \bm y_i) \in {R_i}_{|\bm X_i\cup\bm Y_i} \\
            0 & \text{otherwise}
        \end{cases}
        \label{eq:initial:measure}
    \end{align}

    \begin{remark}[Distinction from~\cite{panda-express} regarding how $\calP$ is defined]
        The original $\pandaexpress$ algorithm from~\cite{panda-express}
        assumes that each term $w_i h(\bm Y_i|\bm X_i)$
        on the RHS of the Shannon inequality~\eqref{eq:omega-shannon} (where $w_i$ is a natural number)
        has $w_i$ distinct sub-probability measures
        $p_{\bm Y_i|\bm X_i}^{(1)}, \ldots, p_{\bm Y_i|\bm X_i}^{(w_i)}$, one for each unit of $w_i$.
        In contrast, in our $\omega$-version of the algorithm, $\opandaexpress$,
        we assume that the term $w_i h(\bm Y_i|\bm X_i)$ has
        {\em only one} sub-probability measure
        $p_{\bm Y_i|\bm X_i}$. This is because whenever we have more than one, we can take their {\em geometric mean}; see Proposition~\ref{prop:geometric:mean}.
        We will show later that this is sufficient to maintain all invariants of the algorithm,
        most importantly inequality~\eqref{eq:B:invariant} below.
        \label{rmk:omega-panda-express}
    \end{remark}
    The algorithm proceed as follows:
    \begin{itemize}
        \item If there is no term left on the LHS of the inequality $\calI$, i.e., if we end up with $\norm{\bm \lambda}_1 + \norm{\bm \kappa}_1 = 0$ (line~\ref{line:opanda-express:no-targets} of Algorithm~\ref{alg:omega-panda-express}),
        then we return immediately with an empty $B$-probabilistic model to the $\omega$-DDR.
        \item If there is one term $h(\bm U_\ell)$ on the LHS of $\calI$ that also appears on the RHS as a term $h(\bm Y_i|\bm X_i)$ (which means $\bm X_i = \emptyset$ and $\bm Y_i=\bm U_\ell$),
        then we return a $B$-probabilistic model containing only the corresponding
        sub-probability measure $p_{\bm U_\ell}$.
        This measure $p_{\bm U_\ell}$ serves as a probabilistic interpretation of the join target $P_\ell(\bm U_\ell)$ of the $\omega$-DDR~\eqref{eq:disjunctive-omega}.
        \item If there is a sum $\alpha_j h(\bm X_j|\bm G_j) +
                    \beta_j h(\bm Y_j|\bm G_j) +
                    \zeta_j h(\bm Z_j|\bm G_j) +
                    \kappa_j h(\bm G_j)$ on the LHS of $\calI$ that appears in full on the RHS,
            then we construct a $B$-probabilistic model $\Sigma_\outt^l$
            containing only the tuple
            $((p_{\bm G_j}, p_{\bm X_j|\bm G_j}, p_{\bm Y_j|\bm G_j}, p_{\bm Z_j|\bm G_j}), (\alpha_j/\kappa_j, \beta_j/\kappa_j, \zeta_j/\kappa_j))$.
            This tuple serves as a probabilistic interpretation of the MM target
            $(\{S_j(\bm X_j \bm G_j), T_j(\bm Y_j \bm G_j)\}, W_j(\bm Z_j\bm G_j))$
            of the $\omega$-DDR~\eqref{eq:disjunctive-omega}.
            We do {\em not} immediately return $\Sigma_\outt^l$ in this case.
            Instead, we also drop $\alpha_j h(\bm X_j|\bm G_j) +
                    \beta_j h(\bm Y_j|\bm G_j) +
                    \zeta_j h(\bm Z_j|\bm G_j) +
                    \kappa_j h(\bm G_j)$
            from both sides of $\calI$,
            and recursively call the algorithm on the resulting $\calI$
            to compute another $B$-probabilistic model $\Sigma_\outt^h$.
            We return the union of the two models.
        \item Otherwise, we proceed very similarly to the original $\pandaexpress$
        algorithm from~\cite{panda-express}.
        In particular, by Theorem~\ref{thm:proof-sequence}, $\calI$
        must have a proof sequence.
        Let $s$ be the first step in the sequence.
        We use a sub-routine called $\applystep$ from~\cite{panda-express} (reviewed below) to apply the proof step $s$ to update the RHS of the inequality $\calI$
        and also mirror these changes into $\calP$.
        (Recall that each term on the RHS of $\calI$ is associated with a sub-probability measure in $\calP$, hence when we update the first, we have to update the second.)
        Let $\calI^l$ and $\calP^l$ be the new $\calI$ and $\calP$ after $\applystep$,
        respectively.
        We call the algorithm recursively on $\calI^l$ and $\calP^l$ to compute
        a $B$-probabilistic model $\Sigma_\outt^l$.
        Now we recognize two cases:
        \begin{itemize}
            \item If the proof step $s$ is {\em not} a composition step, then we return $\Sigma_\outt^l$.
            \item Otherwise, i.e., if $s$ has the form $h(\bm X) + h(\bm Y|\bm X)\to h(\bm X\bm Y)$, then applying this step to the RHS of $\calI$
            must have resulted in an inequality $\calI^l$ whose RHS contains $h(\bm X\bm Y)$.
            We use the reset lemma (Lemma~\ref{lmm:reset}) to drop
            this term $h(\bm X\bm Y)$ from the RHS of $\calI^l$
            and also drop the corresponding sub-probability measure from $\calP^l$,
            resulting in $\calI^h$ and $\calP^h$ respectively.
            Now, we recursively call the algorithm on $\calI^h$ and $\calP^h$
            to compute another $B$-probabilistic model $\Sigma_\outt^h$, and we return
            the union of $\Sigma_\outt^h$ and the previously computed $\Sigma_\outt^l$.
        \end{itemize}
    \end{itemize}
For sake of completeness, we describe here the $\applystep$ sub-routine,
which goes back to the original $\pandaexpress$ algorithm~\cite{panda-express}:
\begin{itemize}[leftmargin=*]
    \item If $s$ is a \emph{decomposition} step $h(\bm X \bm Y) \to h(\bm X) + h(\bm Y |
    \bm X)$, then let $p_{\bm X}$ be the marginal measure~\eqref{eqn:marginal} of $p_{\bm X
    \bm Y}$ on $\bm X$ and $p_{\bm Y | \bm X}$ be the conditional
    measure~\eqref{eqn:conditional} of $p_{\bm X \bm Y}$ on $\bm Y$ given $\bm X$.
    The new inequality $\calI^\ell$ results from $\calI$ by replacing the term $h(\bm X \bm Y)$ on the RHS of $\calI$ with the two terms $h(\bm X)$ and $h(\bm Y | \bm X)$.
    The new set of sub-probability measures $\calP^\ell$ results from $\calP$
    by replacing $p_{\bm X\bm Y}$ with the two measures $p_{\bm X}$ and $p_{\bm Y | \bm X}$.
    \item If $s$ is a {\em submodularity} step $h(\bm Y | \bm X) \to h(\bm Y | \bm X \bm Z)$,
    then define $p_{\bm Y | \bm X \bm Z} \defeq p_{\bm Y | \bm X}$.
    The new $\calI^\ell$ results from $\calI$ by replacing $h(\bm Y | \bm X)$ with $h(\bm Y | \bm X \bm Z)$ on the RHS, whereas the new $\calP^\ell$ results from $\calP$ by replacing $p_{\bm Y | \bm X}$ with $p_{\bm Y | \bm X \bm Z}$.
    \item If $s$ is a {\em monotonicity} step $h(\bm X \bm Y) \to h(\bm X)$, then define
    $p_{\bm X}$ as the marginal measure~\eqref{eqn:marginal} of $p_{\bm X \bm Y}$ on $\bm X$.
    The new $\calI^\ell$ results from $\calI$ by replacing $h(\bm X \bm Y)$ with $h(\bm X)$ on the RHS, whereas the new $\calP^\ell$ results from $\calP$ by replacing $p_{\bm X\bm Y}$ with $p_{\bm X}$.
    \item If $s$ is a {\em composition} step $h(\bm X) + h(\bm Y | \bm X) \to h(\bm X \bm Y)$,
    then let $p_{\bm X\bm Y}$ be the {\em truncated} product measure:
    \begin{align}
        p_{\bm X \bm Y}(\bm x, \bm y) \defeq
        \begin{cases}
            p_{\bm X}(\bm x) \cdot p_{\bm Y | \bm X}(\bm y | \bm x),
            & \text{if }
            p_{\bm X}(\bm x) \cdot p_{\bm Y | \bm X}(\bm y | \bm x) \geq 1/B \\
            0, & \text{otherwise}
        \end{cases}
        \label{eqn:truncated:product}
    \end{align}
    The new $\calI^\ell$ results from $\calI$ by replacing the two terms $h(\bm X)$ and $h(\bm Y | \bm X)$ on the RHS of $\calI$ with the single term $h(\bm X \bm Y)$, whereas the new $\calP^\ell$ results from $\calP$ by replacing the two measures $p_{\bm X}$ and $p_{\bm Y | \bm X}$ with the truncated product measure $p_{\bm X\bm Y}$.
\end{itemize}

\subsubsection{Correctness and runtime analysis of $\opandaexpress$}
We now prove Theorem~\ref{thm:panda:ddr} by proving both correctness and the runtime bound of $\opandaexpress$.
\begin{proof}[Proof of Theorem~\ref{thm:panda:ddr}]
    The algorithm is recursive where every recursive call is characterized by a pair $(\calI, \calP)$.
    Every recursive call makes at most two recursive calls.
    Hence, at any point in time, the execution of the algorithm can be modeled by a binary
    tree, where each node is a recursive call.
    The algorithm satisfies the following invariants:
    \begin{itemize}
        \item[(a)] Every inequality $\calI$ is a valid integral $\omega$-Shannon-flow inequality of the form~\eqref{eq:omega-shannon}.
        \item[(b)] Every $p_{\bm Y|\bm X}$ in $\calP$ is a valid sub-probability measure.
        \item[(c)] Every {\em unconditional} measure $p_{\bm Y}$ in $\calP$ satisfies:
        $p_{\bm Y}(\bm y) > 0 \Rightarrow p_{\bm Y}(\bm y) \geq 1/B$ for all $\bm y \in \dom^{\bm Y}$.
        \item[(d)] For every tuple $\bm t \in\Sigma_\inn$, there must exist:
        \begin{itemize}
            \item either an internal (i.e., non-leaf) node where the $B$-support of
            $((p_{\bm G_j}, p_{\bm X_j|\bm G_j}, p_{\bm Y_j|\bm G_j}, p_{\bm Z_j|\bm G_j}),$ $(\alpha_j/\kappa_j, \beta_j/\kappa_j, \zeta_j/\kappa_j))$ from line~\ref{line:opanda-express:mm-target:output}
            contains $\bm t_{\bm X_j\bm Y_j\bm Z_j\bm G_j}$,
            \item or a leaf node $(\calI, \calP)$ that satisfies the following two conditions:
            \begin{align}
                \norm{\bm \lambda}_1 + \norm{\bm \kappa}_1 &\quad>\quad 0\label{eq:B:invariant:norm}\\
                \prod_{i \in [I]} \left(p_{\bm Y_i|\bm X_i}(\bm t)\right)^{w_i}
                &\quad\geq\quad
                \frac{1}{B^{\norm{\bm \lambda}_1 + \norm{\bm \kappa}_1}}
                \label{eq:B:invariant}
            \end{align}
        \end{itemize}
    \end{itemize}
    Invariants (a) and (b) initially hold trivially.
    Similar to~\cite{panda-express}, we can assume that invariant (c) initially holds as well.
    It is straightforward to verify that the three invariants (a), (b), and (c)
    continue to be maintained throughout the execution of Algorithm~\ref{alg:omega-panda-express}.
    Hence, we focus on invariant (d).
    Initially, inequality~\eqref{eq:B:invariant:norm} holds by the statement of Theorem~\ref{thm:panda:ddr}.
    Moreover by definition of $B$
    from Eq.~\eqref{eq:omega-DDR:B} and by Eq.~\eqref{eq:initial:measure}, inequality~\eqref{eq:B:invariant} also holds initially as an equality.

    We now prove that invariant (d) holds inductively.
    Every time a recursive call makes a recursive call, a leaf node in the recursion tree
    becomes an internal node with one or two new children as leaves.
    Note that every internal node $(\calI, \calP)$ must necessarily satisfy 
    Inequality~\eqref{eq:B:invariant:norm}; otherwise, the algorithm would have already returned at line~\ref{line:opanda-express:no-targets:return}.
    For invariant (d) to continue to hold inductively, we need to prove the following two claims:
    \begin{claim}[Invariant (d) holds indutively for the recursive call in line~\ref{line:opanda-express:recursive:h:1} of Algorithm~\ref{alg:omega-panda-express}]
        Consider an arbitrary node $(\calI, \calP)$ that makes a recursive call to node
        $(\calI^h, \calP^h)$ in line~\ref{line:opanda-express:recursive:h:1} of Algorithm~\ref{alg:omega-panda-express}.
        Then, every tuple $\bm t \in \Sigma_\inn$
    that satisfies Inequalities~\eqref{eq:B:invariant:norm} and~\eqref{eq:B:invariant} for the node $(\calI, \calP)$ must satisfy
    either one of the following:
    \begin{itemize}
        \item either $\bm t$ must be covered by the $B$-support of $((p_{\bm G_j}, p_{\bm X_j|\bm G_j}, p_{\bm Y_j|\bm G_j}, p_{\bm Z_j|\bm G_j}),$ $(\alpha_j/\kappa_j, \beta_j/\kappa_j, \zeta_j/\kappa_j))$ from line~\ref{line:opanda-express:mm-target:output},
        \item or $\bm t$ must satisfy Inequalities~\eqref{eq:B:invariant:norm} and~\eqref{eq:B:invariant} for the new leaf $(\calI^h, \calP^h)$.
    \end{itemize}
    \label{clm:opanda-express:recrusive:h:1}
    \end{claim}
    \begin{proof}[Proof of Claim~\ref{clm:opanda-express:recrusive:h:1}]
        Consider a tuple $\bm t \in \Sigma_\inn$ that satisfies Inequalities~\eqref{eq:B:invariant:norm} and~\eqref{eq:B:invariant} for the node $(\calI, \calP)$, but is {\em not} covered by the $B$-support of
    $((p_{\bm G_j}, p_{\bm X_j|\bm G_j}, p_{\bm Y_j|\bm G_j}, p_{\bm Z_j|\bm G_j}),$ $(\alpha_j/\kappa_j, \beta_j/\kappa_j, \zeta_j/\kappa_j))$.
    By definition of $B$-support, this means that $\bm t$ must satisfy:
    \begin{align}
        \left(p_{\bm G_j}(\bm t)\right)^{\kappa_j} \cdot
        \left(p_{\bm X_j|\bm G_j}(\bm t)\right)^{\alpha_j} \cdot
        \left(p_{\bm Y_j|\bm G_j}(\bm t)\right)^{\beta_j} \cdot
        \left(p_{\bm Z_j|\bm G_j}(\bm t)\right)^{\zeta_j} \quad<\quad \frac{1}{B^{\kappa_j}}
        \label{eq:not-in-B-support}
    \end{align}
    We show that $\bm t$ must satisfy Inequalities~\eqref{eq:B:invariant:norm} and~\eqref{eq:B:invariant} for the new leaf $(\calI^h, \calP^h)$, which are as follows:
    \begin{align}
        \norm{\bm \lambda}_1 + \norm{\bm \kappa}_1-\kappa_j &\quad>\quad 0\label{eq:B:invariant:inductive:1:norm}\\
        \frac{\prod_{i \in [I]} \left(p_{\bm Y_i|\bm X_i}(\bm t)\right)^{w_i}}{\left(p_{\bm G_j}(\bm t)\right)^{\kappa_j} \cdot
        \left(p_{\bm X_j|\bm G_j}(\bm t)\right)^{\alpha_j} \cdot
        \left(p_{\bm Y_j|\bm G_j}(\bm t)\right)^{\beta_j} \cdot
        \left(p_{\bm Z_j|\bm G_j}(\bm t)\right)^{\zeta_j}}
        &\quad\geq\quad
        \frac{1}{B^{\norm{\bm \lambda}_1 + \norm{\bm \kappa}_1-\kappa_j}}
        \label{eq:B:invariant:inductive:1}
    \end{align}
    We prove Inequality~\eqref{eq:B:invariant:inductive:1} first as follows:
    \begin{align}
        1&\geq\frac{\prod_{i \in [I]} \left(p_{\bm Y_i|\bm X_i}(\bm t)\right)^{w_i}}{\left(p_{\bm G_j}(\bm t)\right)^{\kappa_j} \cdot
        \left(p_{\bm X_j|\bm G_j}(\bm t)\right)^{\alpha_j} \cdot
        \left(p_{\bm Y_j|\bm G_j}(\bm t)\right)^{\beta_j} \cdot
        \left(p_{\bm Z_j|\bm G_j}(\bm t)\right)^{\zeta_j}}\label{eq:B:invariant:step2}\\
        &\geq \frac{1}{B^{\norm{\bm \lambda}_1 + \norm{\bm \kappa}_1} \cdot \left(p_{\bm G_j}(\bm t)\right)^{\kappa_j} \cdot
        \left(p_{\bm X_j|\bm G_j}(\bm t)\right)^{\alpha_j} \cdot
        \left(p_{\bm Y_j|\bm G_j}(\bm t)\right)^{\beta_j} \cdot
        \left(p_{\bm Z_j|\bm G_j}(\bm t)\right)^{\zeta_j}}\label{eq:B:invariant:step3}\\
        &> \frac{1}{B^{\norm{\bm \lambda}_1 + \norm{\bm \kappa}_1-\kappa_j}}
        \label{eq:B:invariant:step4}
    \end{align}
    Inequality~\eqref{eq:B:invariant:step2} holds by definition of sub-probability measures.
    Inequality~\eqref{eq:B:invariant:step3} holds by the inductive hypothesis that
    node $(\calI, \calP)$ satisfies~\eqref{eq:B:invariant}.
    Finally, inequality~\eqref{eq:B:invariant:step4} holds by Inequality~\eqref{eq:not-in-B-support}.
    This proves Inequality~\eqref{eq:B:invariant:inductive:1}.
    Moreover, the above (strict) chain of inequalities also implies that Inequality~\eqref{eq:B:invariant:inductive:1:norm} holds.
    \end{proof}

    \begin{claim}[Invariant (d) holds inductively for the recursive calls in lines~\ref{line:opanda-express:recursive:l:2} and~\ref{line:opanda-express:recursive:h:2}]
        Consider an arbitrary node $(\calI, \calP)$ that makes two recursive calls to nodes
        $(\calI^l, \calP^l)$ and 
        $(\calI^h, \calP^h)$ in lines~\ref{line:opanda-express:recursive:l:2}
        and~\ref{line:opanda-express:recursive:h:2} of Algorithm~\ref{alg:omega-panda-express}.
        Then, every tuple $\bm t \in \Sigma_\inn$
    that satisfies Inequalities~\eqref{eq:B:invariant:norm} and~\eqref{eq:B:invariant} for the node $(\calI, \calP)$ must satisfy
    either one of the following:
    \begin{itemize}
        \item either $\bm t$ must satisfy Inequalities~\eqref{eq:B:invariant:norm} and~\eqref{eq:B:invariant} for the new leaf $(\calI^l, \calP^l)$,
        \item or $\bm t$ must satisfy Inequalities~\eqref{eq:B:invariant:norm} and~\eqref{eq:B:invariant} for the new leaf $(\calI^h, \calP^h)$.
    \end{itemize}
    \label{clm:opanda-express:recrusive:2}
    \end{claim}
    \begin{proof}[Proof of Claim~\ref{clm:opanda-express:recrusive:2}]
        Consider a tuple $\bm t \in \Sigma_\inn$ that satisfies Inequalities~\eqref{eq:B:invariant:norm} and~\eqref{eq:B:invariant} for the node $(\calI, \calP)$
        but {\em not} for the new leaf $(\calI^l, \calP^l)$.
        We show that $\bm t$ must satisfy Inequalities~\eqref{eq:B:invariant:norm} and~\eqref{eq:B:invariant} for the new leaf $(\calI^h, \calP^h)$.

        If $\bm t$ does not satisfy Inequality~\eqref{eq:B:invariant} for the new leaf $(\calI^l, \calP^l)$, then by Eq.~\eqref{eqn:truncated:product}, this means:
        \begin{align}
            p_{\bm X}(\bm t) \cdot p_{\bm Y|\bm X}(\bm t) < \frac{1}{B}
             \label{eq:not-in-B-support:composition}
        \end{align}
        Inequalities~\eqref{eq:B:invariant:norm} and~\eqref{eq:B:invariant} for the node $(\calI^h, \calP^h)$ are implied by the following two inequalities, which we prove next:
        \begin{align}
            \norm{\bm \lambda}_1 + \norm{\bm \kappa}_1-1 &\quad>\quad 0\label{eq:B:invariant:norm:recursive2}\\
            \frac{\prod_{i \in [I]} \left(p_{\bm Y_i|\bm X_i}(\bm t)\right)^{w_i}}{p_{\bm X}(\bm t) \cdot p_{\bm Y|\bm X}(\bm t)}
            &\quad\geq\quad
            \frac{1}{B^{\norm{\bm \lambda}_1 + \norm{\bm \kappa}_1-1}}
            \label{eq:B:invariant:recursive2}
        \end{align}
        First, we prove Inequality~\eqref{eq:B:invariant:recursive2} as follows:
        \begin{align}
            1 &\geq
            \frac{\prod_{i \in [I]} \left(p_{\bm Y_i|\bm X_i}(\bm t)\right)^{w_i}}{p_{\bm X}(\bm t) \cdot p_{\bm Y|\bm X}(\bm t)}
            \label{eq:B:invariant:step11}\\
            &\geq\frac{1}{B^{\norm{\bm \lambda}_1 + \norm{\bm \kappa}_1}\cdot p_{\bm X}(\bm t) \cdot p_{\bm Y|\bm X}(\bm t)}\label{eq:B:invariant:step22}\\
            &>\frac{1}{B^{\norm{\bm \lambda}_1 + \norm{\bm \kappa}_1-1}}
            \label{eq:B:invariant:step33}
        \end{align}
        Inequality~\eqref{eq:B:invariant:step11} holds by definition of sub-probability measures.
        Inequality~\eqref{eq:B:invariant:step22} holds by the inductive hypothesis that node $(\calI, \calP)$ satisfies~\eqref{eq:B:invariant}.
        Finally, inequality~\eqref{eq:B:invariant:step33} holds by Inequality~\eqref{eq:not-in-B-support:composition}.
        This proves Inequality~\eqref{eq:B:invariant:recursive2}.
        Moreover, the above (strict) chain of inequalities also implies that Inequality~\eqref{eq:B:invariant:norm:recursive2} holds.
    \end{proof}
    We now prove correctness and the runtime bound of the algorithm.
    \begin{claim}[Algorithm~\ref{alg:omega-panda-express} is correct]
        Algorithm~\ref{alg:omega-panda-express} returns a valid $B$-probabilistic model of the 
        input $\omega$-DDR over the input database instance $D$.
        \label{clm:opanda-express:correctness}
    \end{claim}
    \begin{proof}[Proof of Claim~\ref{clm:opanda-express:correctness}]
        This follows from invariant (d).
        In particular, consider the {\em completed} recursion tree of Algorithm~\ref{alg:omega-panda-express}, i.e., after all recursive calls have been made
        and the algorithm has terminated.
        The invariant says that for every tuple $\bm t\in\Sigma_\inn$:
        \begin{itemize}
            \item Either $\bm t$ is already covered by the $B$-support of some probabilistic interpretation
            of an MM target in line~\ref{line:opanda-express:mm-target:output}. This probabilistic interpretation is eventually returned
            by the outer most recursive call as part of the final $B$-probabilistic model.
            \item Or there exists some leaf that satisfies Inequalities~\eqref{eq:B:invariant:norm} and~\eqref{eq:B:invariant} for the leaf. By~\eqref{eq:B:invariant:norm},
            this leaf cannot return in line~\ref{line:opanda-express:no-targets:return}.
            Moreover, because the recursion tree is completed, this leaf cannot make any recursive calls. Hence, it must return in line~\ref{line:opanda-express:join-target:return}.
            By Eq.~\eqref{eq:B:invariant}, $\bm t$ must be covered by the support
            of $p_{\bm U_\ell}$ in line~\ref{line:opanda-express:join-target:return}.
            Moreover, by invariant (c), this implies that $\bm t$ is covered by the $B$-support of $p_{\bm U_\ell}$.
        \end{itemize}
    \end{proof}
    \begin{claim}[Runtime]
        Algorithm~\ref{alg:omega-panda-express} runs in time $O(B\cdot \log N)$
        in data complexity.
        \label{clm:opanda-express:runtime}
    \end{claim}
    \begin{proof}[Proof of Claim~\ref{clm:opanda-express:runtime}]
        Just like in the original \pandaexpress algorithm~\cite{panda-express}, the depth of the recursion tree does not depend on the data, hence it is a constant in data complexity.
        Since the branching factor is at most 2, the entire size of the recursion tree is also a constant.

        The runtime of the algorithm is dominated by the runtime of data dependent operations
        at every node.
        These data dependent operations are all inside the $\applystep$ sub-routine, and they are of the following types
        \begin{itemize}
            \item Given a sub-probability measure $p_{\bm X\bm Y}$, compute its marginal
            $p_{\bm X}$ and/or conditional $p_{\bm X|\bm Y}$ from Eq.~\eqref{eqn:marginal}
            and~\eqref{eqn:conditional} respectively.
            \item Given two sub-probability measures $p_{\bm X}$ and $p_{\bm Y|\bm X}$, compute their {\em truncated} product measure $p_{\bm X\bm Y}$ from Eq.~\eqref{eqn:truncated:product}.
        \end{itemize}
        Invariant (c) implies that every unconditional measure $p_{\bm X}$ and $p_{\bm X\bm Y}$
        above has a support whose size is at most $B$.
        Hence, marginals and conditionals can be computed in time $O(B)$.
        Moreover, by Eq.~\eqref{eqn:truncated:product}, the output of the truncated product
        also has support of size at most $B$, and can be computed in time $O(B \cdot \log N)$,
        where the extra $\log N$ factor is needed to sort the input measure $p_{\bm Y|\bm X}$
        for every $\bm x$.
    \end{proof}
\end{proof}

\subsection{Evaluating a BCQ $Q$ in $\omega$-submodular width time}
\label{subsec:algo:panda:osubw}
We are now ready to prove Theorem~\ref{thm:panda:osubw} about answering Boolean conjunctive queries in $\omega$-submodular width time.

\ThmMainResult*

In order to prove Theorem~\ref{thm:panda:osubw}, we will need some further preliminaries.
The following lemma is very similar to a lemma in \cite{DBLP:conf/pods/Khamis0S17,theoretics:13722}. Its proof relies on the observation that every feasible dual solution to the linear program~\eqref{eq:osubw:inner-lp} corresponds to an $\omega$-Shannon-flow inequality~\eqref{eq:omega-shannon} of a certain
structure. By strong duality, if we pick an optimal dual solution,
we can match the optimal objective value of the (primal) LP~\eqref{eq:osubw:inner-lp}.
\begin{lemma}[From LP~\eqref{eq:osubw:inner-lp} to an $\omega$-Shannon-flow Inequality~\eqref{eq:omega-shannon}]
    Given a hypergraph $\calH=(\calV,\calE)$, let $\ed\defeq \ed_{\calH}$, and consider a linear program of the following form
    (which is the same as LP~\eqref{eq:osubw:inner-lp} but where we drop the index $i\in [\calI]$ to reduce clutter):
    \begin{align}
        \max_{t, \bm h \in \Gamma \cap \ed}
        \bigl\{
            t \quad\mid\quad &\forall \ell \in [L],\quad
            t\leq h(\bm U_{\ell}),\nonumber\\
            &\forall j \in [J],\quad
            t \leq h(\bm X_{j}|\bm G_{j})+
        h(\bm Y_{j}|\bm G_{j}) +
        \gamma h(\bm Z_{j}|\bm G_{j})
        +h(\bm G_{j})
        \bigr\}
        \label{eq:osubw:inner-lp:simplified}
    \end{align}
    Let $\opt$ be the optimal objective value of the above LP.
    Then, there must exist
    an $\omega$-Shannon-flow inequality of the form~\eqref{eq:omega-shannon} that satisfies the following:
    \begin{itemize}[leftmargin=*]
        \item For each $i \in [I]$, we have $\bm X_i = \emptyset$ and $\bm Y_i \in \calE$.
        \item For each $j \in [J]$, we have $\alpha_j = \beta_j = \kappa_j$ and
        $\zeta_j = \kappa_j\cdot \gamma$.
        \item The coefficients of the inequality satisfy $\norm{\bm\lambda}_1+\norm{\bm\kappa}_1 > 0$ and:
        \begin{align}
            \frac{\norm{\bm w}_1}{\norm{\bm\lambda}_1+\norm{\bm\kappa}_1} = \opt.
        \end{align}
    \end{itemize}
    Moreover, if $\omega$ is rational, then the above $\omega$-Shannon-flow inequality can be chosen to be integral (Definition~\ref{defn:integral-omega-shannon}).
    \label{lmm:lp-to-omega-shannon}
\end{lemma}

\begin{example}
    As an example of Lemma~\ref{lmm:lp-to-omega-shannon},
    the LP~\eqref{eq:intro:inner-lp:triangle}
    has an optimal objective value of $\frac{2\omega}{\omega+1}$,
    and it results from the $\omega$-Shannon-flow inequality~\eqref{eq:intro:shannon:triangle}.
\end{example}

Given a hypergraph $\calH=(\calV,\calE)$ and a generalized variable elimination order $\ov{\bm\sigma} =$ $(\bm X_1, \bm X_2, \ldots, \bm X_{|\ov{\bm\sigma}|})$ $\in\ov{\pi}(\calV)$,
each set of vertices $\bm X_i$ can be eliminated by either a join algorithm or some matrix
multiplication. The concept of an {\em $\omega$-query plan}, that we define below,
amends a generalized variable elimination order with a mapping that specifies how to eliminate each variable.
\begin{definition}[$\omega$-Query Plan]
    Given a hypergraph $\calH=(\calV,\calE)$, an {\em $\omega$-query plan} is a pair
    $(\ov{\bm\sigma}, e)$ where:
    \begin{itemize}[leftmargin=*]
        \item $\ov{\bm\sigma}\in\ov{\pi}(\calV)$ is a generalized variable elimination order.
        \item $e$ is a function that maps every index $i \in [|\ov{\bm\sigma}|]$
        (that satisfies $U_i^{\ov{\bm\sigma}}\not\subseteq U_j^{\ov{\bm\sigma}}, \forall j \in [i-1]$)
        to a term $e(i)\in\{h(U^{\ov{\bm\sigma}}_i)\} \cup \args(\emm^{\ov{\bm\sigma}}_i)$.
    \end{itemize}
    \label{defn:omega-query-plan}
\end{definition}
Note that, for a given hypergraph $\calH$, the number of $\omega$-query plans is finite.
\begin{proof}[Proof of Theorem~\ref{thm:panda:osubw}]
    The algorithm is given in Algorithm~\ref{alg:omega-panda-express:outer}.
    Let $\calH=(\calV,\calE)$ be the query hypergraph.
    Consider the $\omega$-submodular width of $\calH$ written in the form of Eq.~\eqref{eq:osubw:distributed:swapped:raw}.
    Let $k$ be the number of vertices of $\calH$.
    Recall from Section~\ref{sec:computing-osubw} that $\calF$
    is the set of functions $f:\ov{\pi}(\calV)\to [k]$ that map every generalized
variable elimination order $\ov{\bm\sigma}\in\ov{\pi}(\calV)$ to a number $i$ between 1 and $|\ov{\bm\sigma}|$
that satisfies $U^{\ov{\bm\sigma}}_i \not\subseteq U^{\ov{\bm\sigma}}_j$ for all $j \in [i-1]$.
For a fixed function $f \in\calF$, let
$\calG_f$ be the set of functions $g_f$ that map every pair $(\ov{\bm\sigma}, \mm(\bm X;\bm Y;\bm Z|\bm G))$
where $\ov{\bm\sigma}\in\ov{\pi}(\calV)$ and
$\mm(\bm X;\bm Y;\bm Z|\bm G)\in\args(\emm^{\ov{\bm\sigma}}_{f(\ov{\bm\sigma})})$ to one
of the three terms in $\args(\mm(\bm X;\bm Y;\bm Z|\bm G))$ from Eq.~\eqref{eq:mm}.
In particular, $g_f(\ov{\bm\sigma}, \mm(\bm X;\bm Y;\bm Z|\bm G))$ is equal to either one of the following:
    \begin{align*}
        &h(\bm X|\bm G)+h(\bm Y|\bm G)+\gamma\cdot h(\bm Z|\bm G)+h(\bm G),\quad\\
        &h(\bm X|\bm G)+\gamma\cdot h(\bm Y|\bm G)+h(\bm Z|\bm G)+h(\bm G),\quad\\
        &\gamma\cdot h(\bm X|\bm G)+h(\bm Y|\bm G)+h(\bm Z|\bm G)+h(\bm G)
    \end{align*}
    For each $f \in \calF, g_f \in \calG_f$, we construct an $\omega$-DDR of the form~\eqref{eq:disjunctive-omega}:
    \begin{align}
        \bigvee_{\ov{\bm \sigma} \in \ov{\pi}(\calV)} P(U_{f(\ov{\bm\sigma})}^{\bm{\ov\sigma}}) \vee
        \bigvee_{\substack{\ov{\bm \sigma} \in \ov{\pi}(\calV)\\ \mm(\bm X;\bm Y;\bm Z|\bm G)\in\args(\emm^{\ov{\bm \sigma}}_{f(\bm {\ov \sigma})})\\
        (\{\bm X', \bm Y'\}, \bm Z'):\;\;g_f(\ov {\bm \sigma}, \mm(\bm X;\bm Y;\bm Z|\bm G))=\\ h(\bm X'|\bm G)+h(\bm Y'|\bm G)+\gamma\cdot h(\bm Z'|\bm G)+h(\bm G)}}\left(S(\bm X'\bm G)\wedge T(\bm Y'\bm G)\wedge W(\bm Z'\bm G)
        \right)
        \cd \bigwedge_{R(\bm Z) \in \Sigma_{\inn}} R(\bm Z)
        \label{eq:omega-ddr:subw:raw}
    \end{align}
    In particular, $(\{\bm X', \bm Y'\}, \bm Z')$ in Eq.~\eqref{eq:omega-ddr:subw:raw} is a partial order of $(\bm X, \bm Y, \bm Z)$ that is picked according to $g_f(\ov {\bm \sigma}, \mm(\bm X;\bm Y;\bm Z|\bm G))$.
    In this partial order, $\bm X'$ and $\bm Y'$ are interchangeable, but $\bm Z'$ is not interchangeable with either one of them.
    This mirrors the way MM targets of an $\omega$-DDR are defined in Definition~\ref{defn:omega-ddr:targets}, where we think of the conjunction $S(\bm X'\bm G) \wedge T(\bm Y'\bm G) \wedge W(\bm Z'\bm G)$ as being partially ordered $(\{S(\bm X'\bm G) \wedge T(\bm Y'\bm G)\}, W(\bm Z'\bm G))$ such that $S(\bm X'\bm G)$ and $T(\bm Y'\bm G)$ are interchangeable but $W(\bm Z'\bm G)$ is not interchangeable with either one of them.

    Just like going from Eq.~\eqref{eq:osubw:distributed:swapped:raw} to Eq.~\eqref{eq:osubw:distributed:swapped}, we can re-index the above $\omega$-DDRs so that
    we have one $\omega$-DDR for each $i \in [\calI]$:
    \begin{align}
        \bigvee_{\ell \in [L_i]} P(\bm U_{i\ell}) \vee
        \bigvee_{j \in [J_i]}\left(S(\bm X_{ij}\bm G_{ij})\wedge T(\bm Y_{ij}\bm G_{ij})\wedge W(\bm Z_{ij}\bm G_{ij})
        \right)
        \cd \bigwedge_{R(\bm Z) \in \Sigma_{\inn}} R(\bm Z)
        \label{eq:omega-ddr:subw}
    \end{align}
    By Eq.~\eqref{eq:osubw:distributed:swapped}, for each $i \in [\calI]$, the corresponding LP~\eqref{eq:osubw:inner-lp} has an optimal objective value that is upper bounded
    by $\osubw(\calH)$.
    Consider a fixed LP~\eqref{eq:osubw:inner-lp}, written in the form of Eq.~\eqref{eq:osubw:inner-lp:simplified} (where we drop the index $i$ to reduce clutter), and let $\opt$ be its optimal objective value.
    By Lemma~\ref{lmm:lp-to-omega-shannon},
    there exists an $\omega$-Shannon-flow inequality of the form~\eqref{eq:omega-shannon} that satisfies:
    \begin{align*}
        \frac{\norm{\bm w}_1}{\norm{\bm\lambda}_1+\norm{\bm\kappa}_1} = \opt\leq \osubw(\calH).
    \end{align*}
    Moreover, for each $i \in [I]$, the above inequality satisfies $\bm X_i = \emptyset$ and $\bm Y_i \in \calE$,
    hence $\deg_{\Sigma_\inn}(\bm Y_i|\bm X_i)\leq N$.
    Define $B \defeq N^{\opt}$.
    Then by Theorem~\ref{thm:panda:ddr},
    we can, in time $O(B\cdot \log N) = O(N^{\osubw(\calH)}\cdot \log N)$,
    compute a $B$-probabilistic model for the corresponding $\omega$-DDR~\eqref{eq:omega-ddr:subw}, or equivalently the $\omega$-DDR~\eqref{eq:omega-ddr:subw:raw}.
    We repeat the above for every $i \in [\calI]$, or equivalently
    for every $f \in \calF, g_f \in \calG_f$.
    The value of $B$ depends on the specific $\omega$-DDR, but it is always upper bounded by $\ov B$ defined as:
    \begin{align}
        \ov B \defeq N^{\osubw(\calH)}. 
    \end{align}
    This is because $\opt\leq \osubw(\calH)$.
    Therefore, we by Proposition~\ref{prop:model:omega-ddr:monotonicity}, we can assume
    that all the computed probabilistic models are $\ov B$-probabilistic models.

    \begin{example}
        For the triangle query $Q_\triangle$ from Eq.~\eqref{eq:intro:triangle},
        the submodular width formulation from Eq.~\eqref{eq:osubw:distributed:swapped:raw} takes the form~\eqref{eq:intro:osubw:triangle:distributed}, which is a maximum of three LPs.
        These three LPs correspond to the following three $\omega$-DDRs of the form~\eqref{eq:omega-ddr:subw:raw}, respectively:
        \begin{align}
            P_1(X, Y, Z) \vee (S_1(X) \wedge T_1(Y) \wedge W_1(Z)) \quad\cd\quad
                R(X, Y) \wedge S(Y, Z) \wedge T(Z, X)\label{eq:triangle:ddr1}\\
            P_2(X, Y, Z) \vee (S_2(X) \wedge T_2(Z) \wedge W_2(Y)) \quad\cd\quad
                R(X, Y) \wedge S(Y, Z) \wedge T(Z, X)\label{eq:triangle:ddr2}\\
            P_3(X, Y, Z) \vee (S_3(Y) \wedge T_3(Z) \wedge W_3(X)) \quad\cd\quad
                R(X, Y) \wedge S(Y, Z) \wedge T(Z, X)\label{eq:triangle:ddr3}
        \end{align}
        Consider the first $\omega$-DDR in Eq.~\eqref{eq:triangle:ddr1} (which is the same as Eq.~\eqref{eq:pandaexpress:ddr} from Section~\ref{subsubsec:algo:panda:ddr:example}).
        The corresponding LP is given more explicitly in Eq.~\eqref{eq:intro:inner-lp:triangle},
        and it has an optimal objective value of $\frac{2\omega}{\omega+1}$,
        which also happens to be the value of $\osubw(Q_\triangle)$.
        By Lemma~\ref{lmm:lp-to-omega-shannon}, we can derive the $\omega$-Shannon-flow inequality in Eq.~\eqref{eq:intro:shannon:triangle} from the LP in Eq.~\eqref{eq:intro:inner-lp:triangle}.
        Guided by this $\omega$-Shannon-flow inequality, we can use Theorem~\ref{thm:panda:ddr} to compute a $B$-probabilistic model for the $\omega$-DDR in Eq.~\eqref{eq:triangle:ddr1} in time $O(B\cdot \log N)$, where $B = N^{\frac{2\omega}{\omega+1}}$.
        This process is explained in detail in Section~\ref{subsubsec:algo:panda:ddr:example}.
        We can do the same for the other two $\omega$-DDRs.
    \end{example}

    Recall from Definition~\ref{defn:omega-ddr:targets} that every $\omega$-DDR of the form~\eqref{eq:omega-ddr:subw:raw} has two types of {\em targets}:
    \begin{itemize}
        \item {\em Join targets} of the form $P(\bm U_{f(\ov{\bm\sigma})}^{\bm{\ov\sigma}})$.
        \item {\em MM targets} of the form $(\{S(\bm X'\bm G), T(\bm Y'\bm G)\}, W(\bm Z'\bm G))$. In these targets, $S$ and $T$ are interchangeable, but $W$ is not interchangeable with either of them.
    \end{itemize}
    \begin{definition}[Target Selector]
        A {\em target selector} $\calT$ is a set of targets consisting of one target from each $\omega$-DDR of the form~\eqref{eq:omega-ddr:subw:raw}.
        \label{defn:target-selector}
    \end{definition}
    \begin{definition}[$\omega$-query plan supported by a target selector]
        Given an $\omega$-query plan $(\ov{\bm\sigma}, e)$,
        a target selector $\calT$ is said to {\em support} $(\ov{\bm\sigma}, e)$
        iff for every $i \in [|\ov{\bm\sigma}|]$ that satisfies $U_i^{\ov{\bm\sigma}}\not\subseteq U_j^{\ov{\bm\sigma}}, \forall j \in [i-1]$,
        we have:
        \begin{itemize}
            \item If $e(i) = h(U_i^{\ov{\bm \sigma}})$, then $\calT$ contains a join target $P(U_{i}^{\ov{\bm \sigma}})$.
            \item If $e(i) = \mm(\bm X;\bm Y;\bm Z|\bm G) \in \args(\emm_i^{\ov {\bm \sigma}})$, then $\calT$ contains three MM targets of the forms:
            \begin{align}
                &(\{S_1(\bm X\bm G), T_1(\bm Y\bm G)\}, W_1(\bm Z\bm G)),\nonumber\\
                &(\{S_2(\bm X\bm G), T_2(\bm Z\bm G)\}, W_2(\bm Y\bm G)),\nonumber\\
                &(\{S_3(\bm Y\bm G), T_3(\bm Z\bm G)\}, W_3(\bm X\bm G)).
                \label{eq:mm:targets}
            \end{align}
        \end{itemize}
        \label{defn:query-plan-supported}
    \end{definition}

    The following proposition is the {\em core} of this section, and it is the key to proving Theorem~\ref{thm:panda:osubw}.
    \begin{proposition}
        Every target selector $\calT$ must support at least one $\omega$-query plan $(\ov{\bm\sigma}, e)$.
        \label{prop:target}
    \end{proposition}
    \begin{example}
        Before proving Proposition~\ref{prop:target}, we give a quick example.
        Consider the $\omega$-DDRs in Eq.~\eqref{eq:triangle:ddr1},~\eqref{eq:triangle:ddr2}, and~\eqref{eq:triangle:ddr3} for the triangle query $Q_\triangle$.
        Each one of these $\omega$-DDRs has two targets: one join target and one MM target.
        Hence, there are $2^3=8$ possible target selectors:
        \begin{itemize}
            \item There are 7 target selectors that contain at least one join target $P_i(X, Y, Z)$.
            Consider any one of them.
            Such a target selector does indeed support an $\omega$-query plan $(\ov {\bm \sigma}, e)$.
            For example, we could pick $\ov{\bm\sigma} = (X, Y, Z)$, and define $e(X)$
            to be $h(X Y Z)$. We don't need to define $e(Y)$ and $e(Z)$ since $U_2^{\ov{\bm\sigma}}$ and $U_3^{\ov{\bm\sigma}}$ are subsets of $U_1^{\ov{\bm\sigma}}$.
            \item Finally, consider the target selector $\calT$ that contains no join target,
            hence contains all three MM targets. This target selector also supports an $\omega$-query plan $(\ov {\bm \sigma}, e)$.
            For example, we could pick $\ov{\bm\sigma} = (X, Y, Z)$, and define $e(X)$ to be $\mm(X; Y; Z)$.
            Just like above, we don't need to define $e(Y)$ and $e(Z)$.
        \end{itemize}
        Section~\ref{subsec:algo:4cycle} gives a more complicated example of Proposition~\ref{prop:target} for the 4-cycle query $Q_\square$ from Eq.~\eqref{eq:intro:4cycle}.
    \end{example}
    \begin{proof}[Proof of Proposition~\ref{prop:target}]
        Let $\calT$ be a target selector, and assume for the sake of contradiction that $\calT$ does not support any $\omega$-query plan $(\ov{\bm\sigma}, e)$.
        What this means is that for every generalized variable elimination order $\ov{\bm\sigma}\in\ov{\pi}(\calV)$, there is no function $e$ that makes
        this $\omega$-query plan $(\ov{\bm\sigma}, e)$ supported by $\calT$.
        In other words:
        \begin{claim}
            Assuming Proposition~\ref{prop:target} is false,
                for every generalized variable elimination order $\ov{\bm\sigma}\in\ov{\pi}(\calV)$, there must exist some $i \in [|\ov{\bm\sigma}|]$ that satisfies $U_i^{\ov{\bm\sigma}}\not\subseteq U_j^{\ov{\bm\sigma}}, \forall j \in [i-1]$ and:
            \begin{itemize}
                \item $\calT$ does not contain a join target $P(U_{i}^{\ov{\bm \sigma}})$, and
                \item for every $\mm(\bm X;\bm Y; \bm Z|\bm G)\in\args(\emm_i^{\ov{\bm\sigma}})$,
                $\calT$ is missing at least one of the three MM targets from Eq.~\eqref{eq:mm:targets}.
            \end{itemize}
            \label{clm:target:selector}
        \end{claim}
        We use the above claim to construct functions $f^*\in \calF$ and $g^*\in\calG_{f^*}$ as follows:
        (Recall the definitions of $\calF$ and $\calG_f$ for $f \in \calF$ from Section~\ref{sec:computing-osubw}.)
        For every $\ov{\bm\sigma}\in\ov{\pi}(\calV)$, let $f^*(\ov{\bm\sigma})$ be equal
        to $i \in [|\ov{\bm\sigma}|]$ that is assumed in Claim~\ref{clm:target:selector}, breaking ties arbitrarily if more than one choice for $i$ is possible.
        Moreover, for every $\mm(\bm X;\bm Y;\bm Z|\bm G)\in \args(\emm_{f^*(\ov{\bm \sigma})}^{\ov{\bm\sigma}})$,
        let $g^*(\ov{\bm\sigma}, \mm(\bm X;\bm Y;\bm Z|\bm G))$ be equal to one of the following three terms, depending on which of the three MM targets from Eq.~\eqref{eq:mm:targets} is missing from $\calT$:
        \begin{multline}
            g^*(\ov{\bm\sigma}, \mm(\bm X;\bm Y;\bm Z|\bm G)) \defeq\\
                \begin{cases}
                    h(\bm X|\bm G)+h(\bm Y|\bm G)+\gamma h(\bm Z|\bm G)+h(\bm G),& \text{if $(\{S_1(\bm X\bm G), T_1(\bm Y\bm G)\}, W_1(\bm Z\bm G))$ is missing from $\calT$}\\
                    h(\bm X|\bm G)+\gamma h(\bm Y|\bm G)+h(\bm Z|\bm G)+h(\bm G),& \text{else if $(\{S_2(\bm X\bm G), T_2(\bm Z\bm G)\}, W_2(\bm Y\bm G))$ is missing from $\calT$}\\
                    \gamma h(\bm X|\bm G)+h(\bm Y|\bm G)+h(\bm Z|\bm G)+h(\bm G),& \text{else if $(\{S_3(\bm Y\bm G), T_3(\bm Z\bm G)\}, W_3(\bm X\bm G))$ is missing from $\calT$}
                \end{cases}
        \end{multline}
        But then by construction, there must exist an $\omega$-DDR of the form~\eqref{eq:omega-ddr:subw:raw} that is associated with $f:=f^*$ and $g_f:=g^*$.
        Hence, the target selector $\calT$ must contain a target from this $\omega$-DDR.
        This contradicts Claim~\ref{clm:target:selector}.
    \end{proof}

    As explained before, we use Theorem~\ref{thm:panda:ddr} to compute a $B$-probabilistic model for each $\omega$-DDR of the form~\eqref{eq:omega-ddr:subw:raw},
    where the value of $B$ depends on the $\omega$-DDR.
    However, $B \leq (\ov B \defeq N^{\osubw(Q)})$,
    hence, by Proposition~\ref{prop:model:omega-ddr:monotonicity},
    we can assume that we have computed a $\ov B$-probabilistic model for each $\omega$-DDR of the form~\eqref{eq:omega-ddr:subw:raw}.
    As a result, every target in every $\omega$-DDR of the form~\eqref{eq:omega-ddr:subw:raw} now has a probabilistic interpretation.
    Hence, every target in every target selector also has a probabilistic interpretation.
    Those interpretations form a {\em probabilistic interpretation} for the target selector, as defined below.

    \begin{definition}[Probabilistic interpretation of a target selector]
        Given a target selector $\calT$, a {\em probabilistic interpretation} $\calP_\calT$ for $\calT$ consists of a probabilistic interpretation of each target in $\calT$ (which could be either a join target or an MM target).
        \label{defn:target-selector:probabilistic-interpretation}
    \end{definition}
    \begin{definition}[$\ov B$-support of a probabilistic interpretation of a target selector]
        Given a probabilistic interpretation $\calP_\calT$ of a target selector $\calT$
        and a number $\ov B \in \R_+$, we define the {\em $\ov B$-support} of $\calP_\calT$ to be the set of tuples $\bm t \in\bigjoin{\Sigma_\inn}$ that are covered\footnote{Recall what ``covered'' means from Section~\ref{subsec:prelims:ddr}.}
        by the $\ov B$-support of the probabilistic interpretation of {\em every} target in $\calT$.
        \label{defn:target-selector:support}
    \end{definition}
    \begin{proposition}
        Every tuple $\bm t \in \bigjoin\Sigma_\inn$ must occur in the $\ov B$-support of the probabilistic interpretation $\calP_\calT$ of some target selector $\calT$.
        \label{prop:target-selector:completeness}
    \end{proposition}
    \begin{example}
        Suppose we computed a $\ov B$-probabilistic model for each of the three $\omega$-DDRs in Eqs.~\eqref{eq:triangle:ddr1}, \eqref{eq:triangle:ddr2}, and~\eqref{eq:triangle:ddr3} for the triangle query $Q_\triangle$.
        By definition of a $\ov B$-probabilistic model, for each
        tuple $(x, y, z) \in R \Join S\Join T$ and for each one of the three $\omega$-DDRs,
        $(x, y, z)$ must be accounted for 
        by either the join target or the MM target of this $\omega$-DDR.
        Hence, $(x, y, z)$ must be accounted for by at least one of the eight target selectors
        that were mentioned before.
    \end{example}
    \begin{proof}[Proof of Proposition~\ref{prop:target-selector:completeness}]
        Fix an arbitrary tuple $\bm t \in \bigjoin\Sigma_\inn$.
        By definition of a $\ov B$-probabilistic model,
        for every $\omega$-DDR of the form~\eqref{eq:omega-ddr:subw:raw}, $\bm t$ must be covered by the $\ov B$-support of the probabilistic interpretation of at least one target of this $\omega$-DDR.
        If we pick those targets, we obtain a target selector $\calT$
        satisfying the proposition for $\bm t$.
    \end{proof}
    Proposition~\ref{prop:target} says that every target selector $\calT$ supports at least one $\omega$-query plan $(\ov{\bm\sigma}, e)$.
    The following proposition roughly says that we can use this $\omega$-query plan
    to evaluate the original query $Q$ filtered by (the $\ov B$-supports of the probabilistic
    interpretations of) the targets that appear in this particular query plan.
    These targets are a subset of the targets in $\calT$.
    \begin{proposition}[Evaluating $Q$ filtered by a target selector $\calT$]
        Given a target selector $\calT$ with a probabilistic interpretation $\calP_\calT$,
        let $\calS_\calT$ be the $\ov B$-support of $\calP_\calT$.
        Then, there exists a superset $\ov \calS_\calT \supseteq \calS_\calT$ where the following BCQ $Q_\calT$ can be computed in time $O(\ov B \cdot \log^2 N)$ in data complexity:
        \begin{align}
            Q_\calT() \quad\cd\quad \ov \calS_\calT(\calV) \wedge \bigwedge_{R(\bm X) \in \atoms(Q)} R(\bm X)
            \label{eq:bcq:target-selector}
        \end{align}
        \label{prop:bcq:target-selector}
    \end{proposition}
    \begin{proof}[Proof of Proposition~\ref{prop:bcq:target-selector}]
    By Proposition~\ref{prop:target}, there must exist an $\omega$-query plan
    $(\ov{\bm\sigma}, e)$ that is supported by $\calT$.
    Let $\ov{\bm\sigma} = (\bm X_1, \ldots, \bm X_{|\ov{\bm\sigma}|})$.
    Let $\calH=(\calV,\calE)$ be the hypergraph of the original query $Q$.
    Initially, each hyperedge $\bm Z\in\calE$ has a corresponding atom $R(\bm Z)\in\atoms(Q)$.
    We use the generalized variable elimination order $\ov{\bm\sigma}$
    to eliminate the variable sets $\bm X_1, \ldots, \bm X_{|\ov{\bm\sigma}|}$ and generate the corresponding hypergraph sequence
    $\calH^{\ov{\bm\sigma}}_1\defeq\calH, \ldots, \calH^{\ov{\bm\sigma}}_{|\ov{\bm\sigma}|+1}$,
    as described in Definition~\ref{defn:gve}.
    Each time we eliminate a variable set $\bm X_i$, we remove adjacent hyperedges
    $\partial^{\ov{\bm\sigma}}_i$ and add a new hyperedge
    $U^{\ov{\bm\sigma}}_i\setminus \bm X_i$. Our target below is to create a new
    corresponding atom $R(U^{\ov{\bm\sigma}}_i\setminus \bm X_i)$
    in time $O(\ov B\cdot \log^2 N)$,
    and use this new atom $R(U^{\ov{\bm\sigma}}_i\setminus \bm X_i)$ to replace old atoms
    $R(\bm Z)$ for $\bm Z \in \partial_i^{\ov{\bm\sigma}}$.
    Mirroring the hypergraph sequence $\calH^{\ov{\bm\sigma}}_1, \ldots, \calH^{\ov{\bm\sigma}}_{|\ov{\bm\sigma}|+1}$, this atom replacement process will create a sequence
    of {\em full} conjunctive queries $Q_1^{\ov{\bm\sigma}}, \ldots, Q_{|\ov{\bm\sigma}|+1}^{\ov{\bm\sigma}}$ (i.e.~ conjunctive queries where all variables are free)
    that are defined as follows, for every $i \in [|\ov{\bm\sigma}|+1]$:
    \begin{align}
        Q_i^{\ov{\bm\sigma}}(\calV_i^{\ov{\bm\sigma}}) \quad\cd\quad \bigwedge_{\bm Z \in \calE_i^{\ov{\bm\sigma}}} R(\bm Z)
    \end{align}
    Note that $Q_1^{\ov{\bm\sigma}}$ is the {\em full} version of the Boolean CQ $Q$.
    Our target is to inductively maintain the following invariant for every $i \in [|\ov{\bm\sigma}|+1]$:
    \begin{quote}
    {\bf Invariant:}
        There exists a superset $\calS_i^{\ov{\bm\sigma}} \supseteq \calS_\calT$ such that $Q_i^{\ov{\bm\sigma}}$ is equivalent to the following query:
        \begin{align}
            Q_i^{\ov{\bm\sigma}}(\calV_i^{\ov{\bm\sigma}}) \quad\cd\quad \calS_i^{\ov{\bm\sigma}}(\calV) \wedge \bigwedge_{R(\bm X) \in \atoms(Q)} R(\bm X)
            \label{eq:invariant:support-superset}
        \end{align}
    \end{quote}
    Assuming that the above invariant holds for $i = |\ov{\bm\sigma}|+1$,
    note that by choosing $\ov \calS_T$ to be $\calS_{|\ov{\bm\sigma}|+1}^{\ov{\bm\sigma}}$, $Q_\calT$ becomes equivalent to $Q_{|\ov{\bm\sigma}|+1}^{\ov{\bm\sigma}}$, thus proving the proposition. We now prove the invariant by induction on $i$.
    The invariant initially holds for $i=1$ because $\calS_1^{\ov{\bm\sigma}}$ can be chosen to be
    $\dom^{\calV}$.
    The elimination process proceeds as follows for $i=1, 2, \ldots, |\ov{\bm\sigma}|$:
    \begin{itemize}[leftmargin=*]
        \item If $U_i^{\ov{\bm\sigma}}\not\subseteq U_j^{\ov{\bm\sigma}}, \forall j \in [i-1]$, we recognize two cases:
        \begin{itemize}[leftmargin=*]
            \item If $e(i) = h(U_i^{\ov{\bm\sigma}})$, then we create the new atom
            $R(U^{\ov{\bm\sigma}}_i\setminus \bm X_i)$ as follows.
            By Definition~\ref{defn:query-plan-supported}, the target selector $\calT$
            must contain a join target $P(U_i^{\ov{\bm\sigma}})$.
            We initialize a corresponding table $P(U_i^{\ov{\bm\sigma}})$ by taking
            the $\ov B$-support of the probabilistic interpretation of $P(U_i^{\ov{\bm\sigma}})$.
            By definition of $\ov B$-support, the size of $P(U_i^{\ov{\bm\sigma}})$ is at most $\ov B$.
            We then compute the new atom $R(U^{\ov{\bm\sigma}}_i\setminus \bm X_i)$ as follows:
            \begin{align*}
                R(U^{\ov{\bm\sigma}}_i\setminus \bm X_i) \quad\cd\quad
                P(U_i^{\ov{\bm\sigma}}) \wedge \bigwedge_{\bm Z\in\partial_i^{\ov{\bm\sigma}}} R(\bm Z)
            \end{align*}
            Computing $R(U^{\ov{\bm\sigma}}_i\setminus \bm X_i)$ above takes
            time proportional to the size of $P(U_i^{\ov{\bm\sigma}})$, which is
            $O(\ov B)$.
            Note that $Q_{i+1}^{\ov{\bm\sigma}}$ is equivalent to:
            \begin{align*}
                Q_{i+1}^{\ov{\bm\sigma}}(\calV_{i+1}^{\ov{\bm\sigma}}) \quad\cd\quad
                Q_{i}^{\ov{\bm\sigma}}(\calV_{i}^{\ov{\bm\sigma}}) \wedge P(U_i^{\ov{\bm\sigma}})
            \end{align*}
            Therefore, invariant~\eqref{eq:invariant:support-superset} is inductively maintained by choosing $\calS_{i+1}^{\ov{\bm\sigma}}$ to be $\calS_i^{\ov{\bm\sigma}}$ filtered by $P(U_i^{\ov{\bm\sigma}})$:
            \begin{align*}
                \calS_{i+1}^{\ov{\bm\sigma}}(\calV) \quad\cd\quad
                \calS_i^{\ov{\bm\sigma}}(\calV) \wedge P(U_i^{\ov{\bm\sigma}})
            \end{align*}
            \item If $e(i) = \mm(\bm Y;\bm Z; \bm X_i|\bm G)$ where $\mm(\bm Y;\bm Z;\bm X_i|\bm G)\in \args(\emm^{\ov{\bm\sigma}}_i)$, then by Eq.~\eqref{eq:emm},
            there must exist $\calA, \calB\subseteq \partial^{\ov{\bm\sigma}}_i$
            such that $\calA \cup \calB = \partial^{\ov{\bm\sigma}}_i$ where $\bm A\defeq\cup \calA, \bm B\defeq \cup \calB, \bm A^{\bm G}\defeq \bm A \setminus \bm G, \bm B^{\bm G}\defeq \bm B \setminus \bm G$ satisfy:
            \begin{align*}
                \bm X_i = \bm A^{\bm G} \cap \bm B^{\bm G},\qquad
                \bm Y = \bm A^{\bm G}\setminus \bm X_i,\qquad
                \bm Z = \bm B^{\bm G}\setminus \bm X_i.
            \end{align*}
            Define two queries $M_1$ and $M_2$ {\em without} materializing their outputs:
            \begin{align*}
                M_1(\bm G \bm Y\bm X_i) &\cd
                \bigwedge_{\bm Z'\in\calA} R(\bm Z'),\\
                M_2(\bm G \bm X_i\bm Z) &\cd
                \bigwedge_{\bm Z'\in\calB} R(\bm Z').
            \end{align*}
            Note that even without materializing the outputs of $M_1$ and $M_2$, we can still
            support membership queries in $O(1)$ time by querying the underlying relations $R(\bm Z')$.
            By Definition~\ref{defn:query-plan-supported}, the target selector $\calT$ must contain three MM targets of the form:
            \begin{align*}
                &(\{S_1(\bm X_i\bm G), T_1(\bm Y\bm G)\}, W_1(\bm Z\bm G)),\\
                &(\{S_2(\bm X_i\bm G), T_2(\bm Z\bm G)\}, W_2(\bm Y\bm G)),\\
                &(\{S_3(\bm Y\bm G), T_3(\bm Z\bm G)\}, W_3(\bm X_i\bm G)).
            \end{align*}
            Let $E \subseteq \dom^{\bm X_i\bm Y\bm Z\bm G}$ be the intersection of the $\ov B$-supports of the probabilistic interpretations of the above three MM targets.
            Now we use Lemma~\ref{lem:omega-join-matrix-mult} to compute the following, for some superset $\ov E \supseteq E$:
            \begin{align}
                    R(\underbrace{\bm G\bm Y\bm Z}_{U^{\ov{\bm\sigma}}_i\setminus \bm X_i}) \quad\cd\quad
            \ov{E}(\bm G\bm X_i\bm Y\bm Z) \wedge M_1(\bm G\bm Y\bm X_i) \wedge M_2(\bm G\bm X_i\bm Z)
            \end{align}
            By Lemma~\ref{lem:omega-join-matrix-mult}, $R(U^{\ov{\bm\sigma}}_i\setminus \bm X_i)$ above can be computed in time $O(\ov B \cdot \log^2 N)$.
            Note that $Q_{i+1}^{\ov{\bm\sigma}}$ is equivalent to:
            \begin{align*}
                Q_{i+1}^{\ov{\bm\sigma}}(\calV_{i+1}^{\ov{\bm\sigma}}) \quad\cd\quad
                Q_{i}^{\ov{\bm\sigma}}(\calV_{i}^{\ov{\bm\sigma}}) \wedge \ov{E}(\bm G\bm X_i\bm Y\bm Z)
            \end{align*}
            Hence, invariant~\eqref{eq:invariant:support-superset} is inductively maintained by choosing $\calS_{i+1}^{\ov{\bm\sigma}}$ to be $\calS_i^{\ov{\bm\sigma}}$ filtered by $\ov E$:
            \begin{align*}
                \calS_{i+1}^{\ov{\bm\sigma}}(\calV) \quad\cd\quad
                \calS_i^{\ov{\bm\sigma}}(\calV) \wedge \ov{E}(\bm G\bm X_i\bm Y\bm Z)
            \end{align*}
        \end{itemize}
        \item If $U^{\ov{\bm\sigma}}_i \subseteq U^{\ov{\bm\sigma}}_j$ for some $j \in [i-1]$,
        then $U^{\ov{\bm\sigma}}_i \subseteq U^{\ov{\bm\sigma}}_j \setminus \bm X_j$.
        Hence, we can obtain $R(U^{\ov{\bm\sigma}}_i\setminus \bm X_i)$ by projecting
        $R(U^{\ov{\bm\sigma}}_j\setminus \bm X_j)$.
        The cost of this operation can be charged to computing $R(U^{\ov{\bm\sigma}}_j\setminus \bm X_j)$.
        Invariant~\eqref{eq:invariant:support-superset} is maintained trivially.
    \end{itemize}
    \end{proof}
    \begin{proposition}[Algorithm~\ref{alg:omega-panda-express:outer} is correct]
        The answer to $Q$ is equivalent to $\bigvee_\calT Q_\calT$.
        \label{prop:outer-algo:correct}
    \end{proposition}
    \begin{proof}[Proof of Proposition~\ref{prop:outer-algo:correct}]
        By definition of $Q_\calT$ from Eq.~\eqref{eq:bcq:target-selector},
        if $Q_\calT$ evaluates to \true for some $\calT$, then $Q$ must also evaluate to \true.
        Now, we prove the opposite direction.
        Suppose that $Q$ evaluates to \true.
        This means that there must exist a tuple $\bm t \in \bigjoin\Sigma_\inn$.
        By Proposition~\ref{prop:target-selector:completeness}, there must exist a target selector $\calT$ such that $\bm t$ belongs to the $\ov B$-support, $\calS_\calT$, of the probabilistic interpretation $\calP_\calT$ of $\calT$.
        This means $Q_\calT$ must evaluate to \true.
    \end{proof}
    \begin{proposition}[Algorithm~\ref{alg:omega-panda-express:outer} runs in the claimed time]
        Algorithm~\ref{alg:omega-panda-express:outer} runs in time $O(N^{\osubw(Q)}\cdot \log^2 N)$ in data complexity.
        \label{prop:outer-algo:time}
    \end{proposition}
    We will see below that the extra $\log^2 N$ factor in the runtime of Algorithm~\ref{alg:omega-panda-express:outer} is inherited from Proposition~\ref{prop:bcq:target-selector}, which in turn inherits it from Lemma~\ref{lem:omega-join-matrix-mult}.
    \begin{proof}[Proof of Proposition~\ref{prop:outer-algo:time}]
        The algorithm has two consecutive loops:
        \begin{itemize}
            \item The first loop (Line~\ref{alg:omega-panda-express:outer:loop1} of Algorithm~\ref{alg:omega-panda-express:outer:loop1}) iterates over every $f \in \calF$ and $g_f \in \calG_f$, where the sizes of $\calF$ and $\calG_f$ are constant in data complexity.
            The only data-dependant operation in this loop is computing a $B$-probabilistic model for an $\omega$-DDR of the form~\eqref{eq:omega-ddr:subw:raw} using Theorem~\ref{thm:panda:ddr}, which takes time $O(B \cdot \log N) = O(\ov B \cdot \log N)= O(N^{\osubw(Q)}\cdot \log N)$.
            \item The second loop (Line~\ref{alg:omega-panda-express:outer:loop2} of Algorithm~\ref{alg:omega-panda-express:outer}) iterates over every target selector $\calT$, where the number of target selectors is also a constant in data complexity.
            For each target selector $\calT$, we compute $Q_\calT$ from Eq.~\eqref{eq:bcq:target-selector} using Proposition~\ref{prop:bcq:target-selector} in time $O(\ov B \cdot \log^2 N) = O(N^{\osubw(Q)}\cdot \log^2 N)$.
        \end{itemize}
    \end{proof}
\end{proof}

\subsection{A more advanced example: The 4-cycle query}
\label{subsec:algo:4cycle}

\newcommand{\hltB}[1]{\colorbox{yellow}{$#1$}}

\begin{figure}
    {\small
\begin{align*}
    \osubw(\fourcycleH) = & \max \bigl(\\
&\textstyle{\max_{h \in \Gamma \cap \ed}}\quad\min(h(XYZ),\quad h(X) + h(Y) + \gamma h(Z),\quad h(YZW),\quad h(Y) + h(Z) + \gamma h(W)),\\
&\textstyle{\max_{h \in \Gamma \cap \ed}}\quad\min(h(XYZ),\quad h(X) + h(Y) + \gamma h(Z),\quad h(YZW),\quad h(Y) + \gamma h(Z) + h(W)),\\
&\textstyle{\max_{h \in \Gamma \cap \ed}}\quad\min(h(XYZ),\quad h(X) + h(Y) + \gamma h(Z),\quad h(YZW),\quad \gamma h(Y) + h(Z) + h(W)),\\
&\textstyle{\max_{h \in \Gamma \cap \ed}}\quad\min(h(XYZ),\quad h(X) + h(Y) + \gamma h(Z),\quad h(WXY),\quad h(W) + h(X) + \gamma h(Y)),\\
&\textstyle{\max_{h \in \Gamma \cap \ed}}\quad\min(h(XYZ),\quad h(X) + h(Y) + \gamma h(Z),\quad h(WXY),\quad h(W) + \gamma h(X) + h(Y)),\\
&\textstyle{\max_{h \in \Gamma \cap \ed}}\quad\min(h(XYZ),\quad h(X) + h(Y) + \gamma h(Z),\quad h(WXY),\quad \gamma h(W) + h(X) + h(Y)),\\
&\textstyle{\max_{h \in \Gamma \cap \ed}}\quad\min(h(XYZ),\quad h(X) + \gamma h(Y) + h(Z),\quad h(YZW),\quad h(Y) + h(Z) + \gamma h(W)),\\
&\hltB{\textstyle{\max_{h \in \Gamma \cap \ed}}\quad\min(h(XYZ),\quad h(X) + \gamma h(Y) + h(Z),\quad h(YZW),\quad h(Y) + \gamma h(Z) + h(W))},\\
&\textstyle{\max_{h \in \Gamma \cap \ed}}\quad\min(h(XYZ),\quad h(X) + \gamma h(Y) + h(Z),\quad h(YZW),\quad \gamma h(Y) + h(Z) + h(W)),\\
&\textstyle{\max_{h \in \Gamma \cap \ed}}\quad\min(h(XYZ),\quad h(X) + \gamma h(Y) + h(Z),\quad h(WXY),\quad h(W) + h(X) + \gamma h(Y)),\\
&\textstyle{\max_{h \in \Gamma \cap \ed}}\quad\min(h(XYZ),\quad h(X) + \gamma h(Y) + h(Z),\quad h(WXY),\quad h(W) + \gamma h(X) + h(Y)),\\
&\textstyle{\max_{h \in \Gamma \cap \ed}}\quad\min(h(XYZ),\quad h(X) + \gamma h(Y) + h(Z),\quad h(WXY),\quad \gamma h(W) + h(X) + h(Y)),\\
&\textstyle{\max_{h \in \Gamma \cap \ed}}\quad\min(h(XYZ),\quad \gamma h(X) + h(Y) + h(Z),\quad h(YZW),\quad h(Y) + h(Z) + \gamma h(W)),\\
&\textstyle{\max_{h \in \Gamma \cap \ed}}\quad\min(h(XYZ),\quad \gamma h(X) + h(Y) + h(Z),\quad h(YZW),\quad h(Y) + \gamma h(Z) + h(W)),\\
&\textstyle{\max_{h \in \Gamma \cap \ed}}\quad\min(h(XYZ),\quad \gamma h(X) + h(Y) + h(Z),\quad h(YZW),\quad \gamma h(Y) + h(Z) + h(W)),\\
&\textstyle{\max_{h \in \Gamma \cap \ed}}\quad\min(h(XYZ),\quad \gamma h(X) + h(Y) + h(Z),\quad h(WXY),\quad h(W) + h(X) + \gamma h(Y)),\\
&\textstyle{\max_{h \in \Gamma \cap \ed}}\quad\min(h(XYZ),\quad \gamma h(X) + h(Y) + h(Z),\quad h(WXY),\quad h(W) + \gamma h(X) + h(Y)),\\
&\textstyle{\max_{h \in \Gamma \cap \ed}}\quad\min(h(XYZ),\quad \gamma h(X) + h(Y) + h(Z),\quad h(WXY),\quad \gamma h(W) + h(X) + h(Y)),\\
&\textstyle{\max_{h \in \Gamma \cap \ed}}\quad\min(h(ZWX),\quad h(Z) + h(W) + \gamma h(X),\quad h(YZW),\quad h(Y) + h(Z) + \gamma h(W)),\\
&\textstyle{\max_{h \in \Gamma \cap \ed}}\quad\min(h(ZWX),\quad h(Z) + h(W) + \gamma h(X),\quad h(YZW),\quad h(Y) + \gamma h(Z) + h(W)),\\
&\textstyle{\max_{h \in \Gamma \cap \ed}}\quad\min(h(ZWX),\quad h(Z) + h(W) + \gamma h(X),\quad h(YZW),\quad \gamma h(Y) + h(Z) + h(W)),\\
&\textstyle{\max_{h \in \Gamma \cap \ed}}\quad\min(h(ZWX),\quad h(Z) + h(W) + \gamma h(X),\quad h(WXY),\quad h(W) + h(X) + \gamma h(Y)),\\
&\textstyle{\max_{h \in \Gamma \cap \ed}}\quad\min(h(ZWX),\quad h(Z) + h(W) + \gamma h(X),\quad h(WXY),\quad h(W) + \gamma h(X) + h(Y)),\\
&\textstyle{\max_{h \in \Gamma \cap \ed}}\quad\min(h(ZWX),\quad h(Z) + h(W) + \gamma h(X),\quad h(WXY),\quad \gamma h(W) + h(X) + h(Y)),\\
&\textstyle{\max_{h \in \Gamma \cap \ed}}\quad\min(h(ZWX),\quad h(Z) + \gamma h(W) + h(X),\quad h(YZW),\quad h(Y) + h(Z) + \gamma h(W)),\\
&\textstyle{\max_{h \in \Gamma \cap \ed}}\quad\min(h(ZWX),\quad h(Z) + \gamma h(W) + h(X),\quad h(YZW),\quad h(Y) + \gamma h(Z) + h(W)),\\
&\textstyle{\max_{h \in \Gamma \cap \ed}}\quad\min(h(ZWX),\quad h(Z) + \gamma h(W) + h(X),\quad h(YZW),\quad \gamma h(Y) + h(Z) + h(W)),\\
&\textstyle{\max_{h \in \Gamma \cap \ed}}\quad\min(h(ZWX),\quad h(Z) + \gamma h(W) + h(X),\quad h(WXY),\quad h(W) + h(X) + \gamma h(Y)),\\
&\textstyle{\max_{h \in \Gamma \cap \ed}}\quad\min(h(ZWX),\quad h(Z) + \gamma h(W) + h(X),\quad h(WXY),\quad h(W) + \gamma h(X) + h(Y)),\\
&\textstyle{\max_{h \in \Gamma \cap \ed}}\quad\min(h(ZWX),\quad h(Z) + \gamma h(W) + h(X),\quad h(WXY),\quad \gamma h(W) + h(X) + h(Y)),\\
&\textstyle{\max_{h \in \Gamma \cap \ed}}\quad\min(h(ZWX),\quad \gamma h(Z) + h(W) + h(X),\quad h(YZW),\quad h(Y) + h(Z) + \gamma h(W)),\\
&\textstyle{\max_{h \in \Gamma \cap \ed}}\quad\min(h(ZWX),\quad \gamma h(Z) + h(W) + h(X),\quad h(YZW),\quad h(Y) + \gamma h(Z) + h(W)),\\
&\textstyle{\max_{h \in \Gamma \cap \ed}}\quad\min(h(ZWX),\quad \gamma h(Z) + h(W) + h(X),\quad h(YZW),\quad \gamma h(Y) + h(Z) + h(W)),\\
&\textstyle{\max_{h \in \Gamma \cap \ed}}\quad\min(h(ZWX),\quad \gamma h(Z) + h(W) + h(X),\quad h(WXY),\quad h(W) + h(X) + \gamma h(Y)),\\
&\textstyle{\max_{h \in \Gamma \cap \ed}}\quad\min(h(ZWX),\quad \gamma h(Z) + h(W) + h(X),\quad h(WXY),\quad h(W) + \gamma h(X) + h(Y)),\\
&\textstyle{\max_{h \in \Gamma \cap \ed}}\quad\min(h(ZWX),\quad \gamma h(Z) + h(W) + h(X),\quad h(WXY),\quad \gamma h(W) + h(X) + h(Y))\bigr)
\end{align*}}
\caption{The $\omega$-submodular width of the 4-cycle query $\fourcycle$ (Eq.~\eqref{eq:intro:4cycle}), expanded by distributing $\min$ over $\max$ in Eq.~\eqref{eq:osubw:4cycle}.}
\label{fig:osubw:4cycle}
\end{figure}

\setlength{\fboxsep}{1pt}
\newcommand{\hltA}[1]{\fcolorbox{black}{gray!35}{$#1$}}

\begin{figure}
{\small
\begin{align*}
\hltA{P_{1}(XYZ)} \quad &\vee\quad (S_{1}(X) \wedge T_{1}(Y) \wedge W_{1}(Z)) \quad&\vee\quad P'_{1}(YZW) \quad &\vee\quad &(S'_{1}(Y) \wedge T'_{1}(Z) \wedge W'_{1}(W)) \quad &\cd\quad \cdots\\
P_{2}(XYZ) \quad &\vee\quad \hltA{(S_{2}(X) \wedge T_{2}(Y) \wedge W_{2}(Z))} \quad&\vee\quad P'_{2}(YZW) \quad &\vee\quad &(S'_{2}(Y) \wedge T'_{2}(W) \wedge W'_{2}(Z)) \quad &\cd\quad \cdots\\
P_{3}(XYZ) \quad &\vee\quad \hltA{(S_{3}(X) \wedge T_{3}(Y) \wedge W_{3}(Z))} \quad&\vee\quad P'_{3}(YZW) \quad &\vee\quad &(S'_{3}(Z) \wedge T'_{3}(W) \wedge W'_{3}(Y)) \quad &\cd\quad \cdots\\
P_{4}(XYZ) \quad &\vee\quad \hltA{(S_{4}(X) \wedge T_{4}(Y) \wedge W_{4}(Z))} \quad&\vee\quad P'_{4}(WXY) \quad &\vee\quad &(S'_{4}(W) \wedge T'_{4}(X) \wedge W'_{4}(Y)) \quad &\cd\quad \cdots\\
P_{5}(XYZ) \quad &\vee\quad \hltA{(S_{5}(X) \wedge T_{5}(Y) \wedge W_{5}(Z))} \quad&\vee\quad P'_{5}(WXY) \quad &\vee\quad &(S'_{5}(W) \wedge T'_{5}(Y) \wedge W'_{5}(X)) \quad &\cd\quad \cdots\\
P_{6}(XYZ) \quad &\vee\quad \hltA{(S_{6}(X) \wedge T_{6}(Y) \wedge W_{6}(Z))} \quad&\vee\quad P'_{6}(WXY) \quad &\vee\quad &(S'_{6}(X) \wedge T'_{6}(Y) \wedge W'_{6}(W)) \quad &\cd\quad \cdots\\
P_{7}(XYZ) \quad &\vee\quad \hltA{(S_{7}(X) \wedge T_{7}(Z) \wedge W_{7}(Y))} \quad&\vee\quad P'_{7}(YZW) \quad &\vee\quad &(S'_{7}(Y) \wedge T'_{7}(Z) \wedge W'_{7}(W)) \quad &\cd\quad \cdots\\
P_{8}(XYZ) \quad &\vee\quad \hltA{(S_{8}(X) \wedge T_{8}(Z) \wedge W_{8}(Y))} \quad&\vee\quad P'_{8}(YZW) \quad &\vee\quad &(S'_{8}(Y) \wedge T'_{8}(W) \wedge W'_{8}(Z)) \quad &\cd\quad \cdots\\
P_{9}(XYZ) \quad &\vee\quad \hltA{(S_{9}(X) \wedge T_{9}(Z) \wedge W_{9}(Y))} \quad&\vee\quad P'_{9}(YZW) \quad &\vee\quad &(S'_{9}(Z) \wedge T'_{9}(W) \wedge W'_{9}(Y)) \quad &\cd\quad \cdots\\
P_{10}(XYZ) \quad &\vee\quad \hltA{(S_{10}(X) \wedge T_{10}(Z) \wedge W_{10}(Y))} \quad&\vee\quad P'_{10}(WXY) \quad &\vee\quad &(S'_{10}(W) \wedge T'_{10}(X) \wedge W'_{10}(Y)) \quad &\cd\quad \cdots\\
P_{11}(XYZ) \quad &\vee\quad \hltA{(S_{11}(X) \wedge T_{11}(Z) \wedge W_{11}(Y))} \quad&\vee\quad P'_{11}(WXY) \quad &\vee\quad &(S'_{11}(W) \wedge T'_{11}(Y) \wedge W'_{11}(X)) \quad &\cd\quad \cdots\\
P_{12}(XYZ) \quad &\vee\quad \hltA{(S_{12}(X) \wedge T_{12}(Z) \wedge W_{12}(Y))} \quad&\vee\quad P'_{12}(WXY) \quad &\vee\quad &(S'_{12}(X) \wedge T'_{12}(Y) \wedge W'_{12}(W)) \quad &\cd\quad \cdots\\
P_{13}(XYZ) \quad &\vee\quad \hltA{(S_{13}(Y) \wedge T_{13}(Z) \wedge W_{13}(X))} \quad&\vee\quad P'_{13}(YZW) \quad &\vee\quad &(S'_{13}(Y) \wedge T'_{13}(Z) \wedge W'_{13}(W)) \quad &\cd\quad \cdots\\
P_{14}(XYZ) \quad &\vee\quad \hltA{(S_{14}(Y) \wedge T_{14}(Z) \wedge W_{14}(X))} \quad&\vee\quad P'_{14}(YZW) \quad &\vee\quad &(S'_{14}(Y) \wedge T'_{14}(W) \wedge W'_{14}(Z)) \quad &\cd\quad \cdots\\
P_{15}(XYZ) \quad &\vee\quad \hltA{(S_{15}(Y) \wedge T_{15}(Z) \wedge W_{15}(X))} \quad&\vee\quad P'_{15}(YZW) \quad &\vee\quad &(S'_{15}(Z) \wedge T'_{15}(W) \wedge W'_{15}(Y)) \quad &\cd\quad \cdots\\
\hltA{P_{16}(XYZ)} \quad &\vee\quad (S_{16}(Y) \wedge T_{16}(Z) \wedge W_{16}(X)) \quad&\vee\quad P'_{16}(WXY) \quad &\vee\quad &(S'_{16}(W) \wedge T'_{16}(X) \wedge W'_{16}(Y)) \quad &\cd\quad \cdots\\
\hltA{P_{17}(XYZ)} \quad &\vee\quad (S_{17}(Y) \wedge T_{17}(Z) \wedge W_{17}(X)) \quad&\vee\quad P'_{17}(WXY) \quad &\vee\quad &(S'_{17}(W) \wedge T'_{17}(Y) \wedge W'_{17}(X)) \quad &\cd\quad \cdots\\
\hltA{P_{18}(XYZ)} \quad &\vee\quad (S_{18}(Y) \wedge T_{18}(Z) \wedge W_{18}(X)) \quad&\vee\quad P'_{18}(WXY) \quad &\vee\quad &(S'_{18}(X) \wedge T'_{18}(Y) \wedge W'_{18}(W)) \quad &\cd\quad \cdots\\
P_{19}(ZWX) \quad &\vee\quad \hltA{(S_{19}(Z) \wedge T_{19}(W) \wedge W_{19}(X))} \quad&\vee\quad P'_{19}(YZW) \quad &\vee\quad &(S'_{19}(Y) \wedge T'_{19}(Z) \wedge W'_{19}(W)) \quad &\cd\quad \cdots\\
P_{20}(ZWX) \quad &\vee\quad \hltA{(S_{20}(Z) \wedge T_{20}(W) \wedge W_{20}(X))} \quad&\vee\quad P'_{20}(YZW) \quad &\vee\quad &(S'_{20}(Y) \wedge T'_{20}(W) \wedge W'_{20}(Z)) \quad &\cd\quad \cdots\\
P_{21}(ZWX) \quad &\vee\quad \hltA{(S_{21}(Z) \wedge T_{21}(W) \wedge W_{21}(X))} \quad&\vee\quad P'_{21}(YZW) \quad &\vee\quad &(S'_{21}(Z) \wedge T'_{21}(W) \wedge W'_{21}(Y)) \quad &\cd\quad \cdots\\
P_{22}(ZWX) \quad &\vee\quad \hltA{(S_{22}(Z) \wedge T_{22}(W) \wedge W_{22}(X))} \quad&\vee\quad P'_{22}(WXY) \quad &\vee\quad &(S'_{22}(W) \wedge T'_{22}(X) \wedge W'_{22}(Y)) \quad &\cd\quad \cdots\\
P_{23}(ZWX) \quad &\vee\quad \hltA{(S_{23}(Z) \wedge T_{23}(W) \wedge W_{23}(X))} \quad&\vee\quad P'_{23}(WXY) \quad &\vee\quad &(S'_{23}(W) \wedge T'_{23}(Y) \wedge W'_{23}(X)) \quad &\cd\quad \cdots\\
P_{24}(ZWX) \quad &\vee\quad \hltA{(S_{24}(Z) \wedge T_{24}(W) \wedge W_{24}(X))} \quad&\vee\quad P'_{24}(WXY) \quad &\vee\quad &(S'_{24}(X) \wedge T'_{24}(Y) \wedge W'_{24}(W)) \quad &\cd\quad \cdots\\
P_{25}(ZWX) \quad &\vee\quad (S_{25}(Z) \wedge T_{25}(X) \wedge W_{25}(W)) \quad&\vee\quad \hltA{P'_{25}(YZW)} \quad &\vee\quad &(S'_{25}(Y) \wedge T'_{25}(Z) \wedge W'_{25}(W)) \quad &\cd\quad \cdots\\
P_{26}(ZWX) \quad &\vee\quad (S_{26}(Z) \wedge T_{26}(X) \wedge W_{26}(W)) \quad&\vee\quad P'_{26}(YZW) \quad &\vee\quad &\hltA{(S'_{26}(Y) \wedge T'_{26}(W) \wedge W'_{26}(Z))} \quad &\cd\quad \cdots\\
P_{27}(ZWX) \quad &\vee\quad (S_{27}(Z) \wedge T_{27}(X) \wedge W_{27}(W)) \quad&\vee\quad \hltA{P'_{27}(YZW)} \quad &\vee\quad &(S'_{27}(Z) \wedge T'_{27}(W) \wedge W'_{27}(Y)) \quad &\cd\quad \cdots\\
P_{28}(ZWX) \quad &\vee\quad (S_{28}(Z) \wedge T_{28}(X) \wedge W_{28}(W)) \quad&\vee\quad P'_{28}(WXY) \quad &\vee\quad &\hltA{(S'_{28}(W) \wedge T'_{28}(X) \wedge W'_{28}(Y))} \quad &\cd\quad \cdots\\
P_{29}(ZWX) \quad &\vee\quad (S_{29}(Z) \wedge T_{29}(X) \wedge W_{29}(W)) \quad&\vee\quad P'_{29}(WXY) \quad &\vee\quad &\hltA{(S'_{29}(W) \wedge T'_{29}(Y) \wedge W'_{29}(X))} \quad &\cd\quad \cdots\\
P_{30}(ZWX) \quad &\vee\quad (S_{30}(Z) \wedge T_{30}(X) \wedge W_{30}(W)) \quad&\vee\quad P'_{30}(WXY) \quad &\vee\quad &\hltA{(S'_{30}(X) \wedge T'_{30}(Y) \wedge W'_{30}(W))} \quad &\cd\quad \cdots\\
P_{31}(ZWX) \quad &\vee\quad \hltA{(S_{31}(W) \wedge T_{31}(X) \wedge W_{31}(Z))} \quad&\vee\quad P'_{31}(YZW) \quad &\vee\quad &(S'_{31}(Y) \wedge T'_{31}(Z) \wedge W'_{31}(W)) \quad &\cd\quad \cdots\\
P_{32}(ZWX) \quad &\vee\quad \hltA{(S_{32}(W) \wedge T_{32}(X) \wedge W_{32}(Z))} \quad&\vee\quad P'_{32}(YZW) \quad &\vee\quad &(S'_{32}(Y) \wedge T'_{32}(W) \wedge W'_{32}(Z)) \quad &\cd\quad \cdots\\
P_{33}(ZWX) \quad &\vee\quad \hltA{(S_{33}(W) \wedge T_{33}(X) \wedge W_{33}(Z))} \quad&\vee\quad P'_{33}(YZW) \quad &\vee\quad &(S'_{33}(Z) \wedge T'_{33}(W) \wedge W'_{33}(Y)) \quad &\cd\quad \cdots\\
P_{34}(ZWX) \quad &\vee\quad \hltA{(S_{34}(W) \wedge T_{34}(X) \wedge W_{34}(Z))} \quad&\vee\quad P'_{34}(WXY) \quad &\vee\quad &(S'_{34}(W) \wedge T'_{34}(X) \wedge W'_{34}(Y)) \quad &\cd\quad \cdots\\
P_{35}(ZWX) \quad &\vee\quad \hltA{(S_{35}(W) \wedge T_{35}(X) \wedge W_{35}(Z))} \quad&\vee\quad P'_{35}(WXY) \quad &\vee\quad &(S'_{35}(W) \wedge T'_{35}(Y) \wedge W'_{35}(X)) \quad &\cd\quad \cdots\\
P_{36}(ZWX) \quad &\vee\quad \hltA{(S_{36}(W) \wedge T_{36}(X) \wedge W_{36}(Z))} \quad&\vee\quad P'_{36}(WXY) \quad &\vee\quad &(S'_{36}(X) \wedge T'_{36}(Y) \wedge W'_{36}(W)) \quad &\cd\quad \cdots
\end{align*}
}
\caption{The $\omega$-DDRs of the form~\eqref{eq:omega-ddr:subw:raw} for the 4-cycle query $\fourcycle$ from Eq.~\eqref{eq:intro:4cycle}.
All the $\omega$-DDRs have the same body $R(X, Y) \wedge S(Y, Z) \wedge T(Z, W) \wedge U(W, X)$,
which is abbreviated as $\cdots$.
Note that each one of the 36 $\omega$-DDRs mirrors one the 36 $\max$-terms in the expression for $\osubw(\fourcycleH)$ from Figure~\ref{fig:osubw:4cycle}.
There are $4^{36}$ target selectors, one of them is \hltA{\text{highlighted}}.
See the discussion in Section~\ref{subsec:algo:4cycle} for how this target selector supports an $\omega$-query plan.
}
\label{fig:omega-ddr:4cycle}
\end{figure}

Consider the 4-cycle BCQ $Q_\square$ from Eq.~\eqref{eq:intro:4cycle}.
As explained earlier in Table~\ref{tab:comparison},
the $\omega$-submodular width of $Q_\square$
is exactly $2 - \frac{3}{2\cdot \min\{\omega,\frac{5}{2}\}+1}$.
See Lemma~\ref{lem:4cycle} in the Appendix for a proof.
When $\omega \leq 5/2$, this expression simplifies to $\frac{4\omega-1}{2\omega+1}$,
thus matching the exponent in the runtime $O\left(N^{\frac{4\omega-1}{2\omega+1}}\right)$ from prior work~\cite{yuster2004detecting, dalirrooyfard2019graph},
as shown in Table~\ref{tab:intro:comparison}.
In contrast, when $\omega \geq 5/2$, the expression simplifies to $3/2$,
thus matching the exponent in the well-known runtime $O\left(N^{3/2}\right)$ from~\cite{DBLP:journals/algorithmica/AlonYZ97,panda-express}.

The hypergraph of $Q_\square$ is as follows:
\begin{align}
    \fourcycleH = (\{X,Y,Z,W\},\quad \{\{X,Y\}, \{Y,Z\}, \{Z,W\}, \{W,X\}\})
    \label{eq:4cycle:H}
\end{align}
The corresponding $\omega$-submodular width expression from Eq.~\eqref{eq:osubw:trimmed} is:
\begin{align}
    \osubw(\fourcycleH) = \max_{h \in \Gamma \cap \ed} \min \bigl(&
        \max(\min(h(XYZ), \mm(X; Y; Z)), \min(h(ZWX), \mm(Z; W; X))),\nonumber\\
        &\max(\min(h(YZW), \mm(Y; Z; W)), \min(h(WXY), \mm(W; X; Y)))
        \bigr)&
    \label{eq:osubw:4cycle}
\end{align}
In particular, Eq.~\eqref{eq:osubw:4cycle} suggests that the 4-cycle query $\fourcycle$ admits basically two kinds of $\omega$-query plans:
\begin{itemize}
    \item $\omega$-query plans that start by eliminating either $Y$ or $W$:
    \begin{itemize}
        \item Eliminating $Y$ involves the variables $X, Y, Z$ and can be done by either a join
        or an MM.
        \item Eliminating $W$ involves the variables $Z, W, X$ and can also be done by either a join or an MM.
    \end{itemize}
    \item $\omega$-query plans that start by eliminating either $X$ or $Z$:
    \begin{itemize}
        \item Eliminating $Z$ involves the variables $Y, Z, W$ and can be done by either a join
        or an MM.
        \item Eliminating $X$ involves the variables $W, X, Y$ and can also be done by either a join or an MM.
    \end{itemize}
\end{itemize}
By distributing every $\min$ over inner $\max$ in Eq.~\eqref{eq:osubw:4cycle} following Eq.~\eqref{eq:osubw:distributed:swapped:raw}, we get the expression for $\osubw(\fourcycleH)$
given in Figure~\ref{fig:osubw:4cycle}, which has 36 terms of the form $\max_{h \in \Gamma \cap \ed}\cdots$.
Each one of these 36 terms is upper bounded by $\osubw(\fourcycleH)$.
Moreover, each one of these terms corresponds to an $\omega$-DDR of the form~\eqref{eq:omega-ddr:subw:raw}.
The 36 $\omega$-DDRs are depicted in Figure~\ref{fig:omega-ddr:4cycle}.
Each $\omega$-DDR has four targets: two join targets and two MM targets.
Recall that in an MM target, e.g., $(S_1(X) \wedge T_1(Y) \wedge W_1(Z))$,
the relations $S_1$ and $T_1$ are interchangeable whereas $W_1$ plays a distinct role
in the definition of a $B$-probabilistic model for this MM target (Definition~\ref{defn:prob:model:omega-ddr}).
A target selector $\calT$ picks one target from each of the 36 $\omega$-DDRs,
hence the total number of target selectors is $4^{36}$.
Proposition~\ref{prop:target} guarantees that if we pick any of those target selectors,
then at least one of the above $\omega$-query plans will be supported.
For example, consider the highlighted target selector in Figure~\ref{fig:omega-ddr:4cycle}.
This target selector is chosen carefully to avoid supporting any $\omega$-query plan that starts with eliminating $Y$ or $W$.
However, while constructing such a target selector, eventually we run out of options and we end up being cornered into supporting some other $\omega$-query plan.
In this particular case, we end up supporting the $\omega$-query plan
that starts with eliminating $Z$ by a join $h(YZW)$ and eliminating $X$ by an MM $\mm(W; X; Y)$.
This is the only supported $\omega$-query plan for this target selector.

Suppose for now that we already evaluated the $\omega$-DDRs from Figure~\ref{fig:omega-ddr:4cycle}; we will come back shortly to explain how to evaluate them.
In order to evaluate the original BCQ $\fourcycle$,
the algorithm is to go over the target selectors, and for each one, to use
the supported $\omega$-query plan to evaluate $\fourcycle$ filtered by the targets in this
$\omega$-query plan, which are a subset of the targets in the target selector.
Completeness follows from
Proposition~\ref{prop:target-selector:completeness}, which guarantees that if we compute a
$\ov B$-probabilistic model for each of the 36 $\omega$-DDRs,
then every tuple $\bm t \in \bigjoin\Sigma_\inn$
is accounted for by some target selector.

We now explain briefly how to evaluate the $\omega$-DDRs from Figure~\ref{fig:omega-ddr:4cycle}.
Let's pick the eighth $\omega$-DDR as an example:
\begin{multline}
P_{8}(XYZ) \quad\vee\quad (S_{8}(X) \wedge T_{8}(Z) \wedge W_{8}(Y)) \quad\vee\quad P'_{8}(YZW) \quad\vee\quad (S'_{8}(Y) \wedge T'_{8}(W) \wedge W'_{8}(Z)) \cd\\
R(X, Y) \wedge S(Y, Z) \wedge T(Z, W) \wedge U(W, X)
\label{eq:omega-ddr:4cycle:example}
\end{multline}
The corresponding term from Figure~\ref{fig:osubw:4cycle} is \hltB{\text{highlighted}}.
We repeat it below:
\begin{align}
    \max_{h \in \Gamma \cap \ed}\quad\min(h(XYZ),\quad h(X) + \gamma h(Y) + h(Z),\quad h(YZW),\quad h(Y) + \gamma h(Z) + h(W))
    \label{eq:osubw:4cycle:example}
\end{align}
By the equation from Figure~\ref{fig:osubw:4cycle}, we know that the quantity in Eq.~\eqref{eq:osubw:4cycle:example} is upper bounded by $\osubw(\fourcycleH)$, which is
$\frac{4\omega-1}{2\omega+1}$ {\em assuming} $\omega \leq 5/2$.
In this particular case, the quantity from Eq.~\eqref{eq:osubw:4cycle:example} is {\em exactly} the $\omega$-submodular width.
This quantity is equivalent to the optimal objective value of the following linear program of the form~\eqref{eq:osubw:inner-lp}:
\begin{align}
    \max_{t, \bm h \in \Gamma \cap \ed}
    \bigl\{
        t \quad\mid\quad &t \leq h(XYZ),\nonumber\\
        &t \leq h(X) + \gamma h(Y) + h(Z),\nonumber\\
        &t \leq h(YZW),\nonumber\\
        &t \leq h(Y) + \gamma h(Z) + h(W)
    \bigr\}
    \label{eq:osubw:4cycle:lp}
\end{align}
Lemma~\ref{lmm:lp-to-omega-shannon} suggests that the optimal dual solution to the above LP
corresponds to the following $\omega$-Shannon-flow inequality:
\begin{multline}
(\omega-1) h(XYZ) + \omega h(YZW) + (h(X) + \gamma h(Y) + h(Z)) + (h(Y) +  \gamma h(Z) + h(W)) \leq\\
        \omega h(XY) + (2\omega-2) h(YZ) + \omega h(ZW) + h(WX)
        \label{eq:osubw:4cycle:shannon-flow}
\end{multline}
Note that the RHS of the above inequality is upper bounded by $4\omega-1$ because $\bm h\in\ed$.
Moreover, the LHS is lower bounded by $(2\omega+1)t$.
Hence, this inequality proves that $t$ is upper bounded by $\frac{4\omega-1}{2\omega+1}$,
which is the optimal objective value of the LP from~\eqref{eq:osubw:4cycle:lp}.
Inequality~\eqref{eq:osubw:4cycle:shannon-flow} is a Shannon inequality specifically because it is a sum of the following basic Shannon inequalities:
\begin{align*}
&h(XY) + h(WX) - h(X) - h(XYW) \geq 0\\
&h(XYW) - h(YW) \geq 0\\
&h(YW) + h(ZW) - h(W) - h(YZW) \geq 0\\
&(\omega-1)h(XY) + (\omega-1)h(YZ) - (\omega-1)h(Y) - (\omega-1)h(XYZ) \geq 0\\
&(\omega-1)h(YZ) + (\omega-1)h(ZW) - (\omega-1)h(Z) - (\omega-1)h(YZW) \geq 0
\end{align*}
By Theorem~\ref{thm:panda:ddr}, the $\opandaexpress$ algorithm (Algorithm~\ref{alg:omega-panda-express}) takes inequality~\eqref{eq:osubw:4cycle:shannon-flow}
along with the collection $\calP$ of sub-probability measures from Eq.~\eqref{eq:initial:measure} which correspond to the RHS of~\eqref{eq:osubw:4cycle:shannon-flow}, and computes a $B$-probabilistic model for the $\omega$-DDR from Eq.~\eqref{eq:omega-ddr:4cycle:example}, where $B = N^{\frac{4\omega-1}{2\omega+1}}=N^{\osubw(\fourcycle)}$, in time $O(B \cdot \log N)$.
The other $\omega$-DDRs from Figure~\ref{fig:omega-ddr:4cycle} are evaluated similarly.

\section{Related Work}
  Fast matrix multiplication has recently attracted a lot of attention in the database community for processing join-project queries.
Let $\OUT$ be the number of query results. Note that for Boolean conjunctive queries that we study in this paper, the output is a Boolean value, hence $\out = 1$.
Let $\rectOmega(a,b,c)$ be the smallest exponent for multiplying two rectangular matrices of dimensions $n^a \times n^b$ and $n^b \times n^c$ within $O\left(n^{\rectOmega(a,b,c)}\right)$ time.
For the purpose of analyzing the complexity of rectangular matrix multiplication,
several constants have been defined,
 namely, $\alpha \le 1$ defined as the largest constant such that $\rectOmega(1,\alpha,1)=2$, and $\mu$ defined as the (unique) solution to the equation $\rectOmega(\mu, 1, 1) = 2\mu + 1$.
Note that $\alpha = 1$ if and only if $\omega=2$, and the current best bounds on $\alpha$ are ${0.321334 <} \alpha \le 1$~{\cite{williams2024new}}.
Moreover, $\mu = \frac{1}{2}$ if $\omega=2$, and the current best bounds on $\mu$ are $\frac{1}{2} \le \mu  {< 0.527661}$~{\cite{williams2024new}}.
Amossen and Pagh~\cite{amossen2009faster} first proposed an algorithm for the Boolean matrix multiplication by using fast matrix multiplication techniques, which runs in $\tilde O\left(N^{\frac{2}{3}}\cdot \OUT^{\frac{2}{3}} + N^{\frac{(2-\alpha)\omega -2}{(1+\omega)(1-\alpha)}} \cdot \OUT^{\frac{2-\alpha \omega}{(1+\omega)(1-\alpha)}} +\OUT\right)$ time when $\OUT \ge N$, and $\tilde O\left(N \cdot \OUT^{\frac{\omega-1}{\omega+1}}\right)$ time when $\OUT < N$, where $\tilde O$ hides a factor of $O(N^{o(1)})$.
If $\omega = 2$, this result degenerates to $\tilde O\left(N^{\frac{2}{3}}\cdot \OUT^{\frac{2}{3}} + N\cdot \OUT^{\frac{1}{3}}\right)$.
Very recently, Abboud et al.~\cite{abboud2023time} have completely improved this to $O\left(N \cdot \OUT^{\frac{\mu}{1+\mu}} + \OUT + N^{\frac{(2+\alpha)\mu}{1+\mu}} \cdot \OUT^{\frac{1-\alpha \mu}{1+\mu}}\right)$.
If $\omega = 2$, this complexity can be simplified to $O\left(N \cdot \OUT^{\frac{1}{3}} + \OUT\right)$.
On the other hand, improving this result for any value of $\OUT$ is rather difficult, assuming the hardness of the {\em all-edge-triangle} problem~\cite{abboud2023time}.
Many algorithms have been proposed for Boolean matrix multiplication, whose complexity is also measured by the domain size of attributes; we refer readers to~\cite{abboud2023time} for details.
Deep et al.~\cite{deep2020fast} and Huang et al. \cite{huang2022density} also investigated the efficient implementation of these algorithms on sparse matrix multiplication in practice.
Very recently, Hu \cite{hu2024fast} first applied fast matrix multiplication to speed up acyclic join-project queries and showed a large polynomial improvement over the combinatorial Yannakakis algorithm~\cite{DBLP:conf/vldb/Yannakakis81}.
As an independent line of research, fast matrix multiplication has been extensively used in the algorithm community to speed up detecting, counting, and listing subgraph patterns, such as cliques~\cite{DBLP:journals/algorithmica/AlonYZ97, patrascu2010towards, dalirrooyfard2024towards} and cycles~\cite{jin2024listing}.
In addition, this technique has been used to approximately count triangles~\cite{tvetek2022approximate} and cycles~\cite{censor2024fast}, k-centering in graphs~\cite{jin2025beyond}, etc.

\section{Conclusion}
\label{sec:conclusion}
We proposed a general framework for evaluating Boolean conjunctive queries that naturally
subsumes both combinatorial and non-combinatorial techniques under the umbrella of information
theory.
Our framework generalizes the notion of submodular width by incorporating matrix multiplication,
and provides a matching algorithm for evaluating queries in the corresponding time complexity.
Using this framework, we show how to recover the best known complexity for various classes of
queries as well as improve the complexity for some of them.

Our framework can be straightforwardly extended from Boolean conjunctive queries to count and sum queries, albeit {\em without} free variables.
In particular, in the combinatorial world, the transition from Boolean to count/sum queries
(and more generally to aggregate queries over any semiring)
is done by adapting the submodular width to become the {\em Sharp-submodular width}~\cite{10.1145/3426865}.
The latter is a variant of the submodular width that is obtained by relaxing the notion of polymatroids.
In the presence of matrix multiplication, a similar transition from Boolean to count/sum queries is possible by adapting 
the $\omega$-submodular width to become the {\em Sharp-$\omega$-submodular width}.
Nevertheless, this transition only applies to count/sum queries, i.e., aggregate queries over the semiring
of real numbers $(\R, +, \times)$, and does {\em not} extend to aggregate queries over other semirings like the tropical semiring
$(\R_+, \min, +)$, since fast matrix multiplication does not apply in the latter.

It is not clear how to extend our framework to full conjunctive queries due to the use of fast matrix multiplication.
For example, consider the full version of the triangle query from Eq.~\eqref{eq:intro:triangle}
where we replace the head $Q_\triangle()$ with $Q_\triangle(X, Y, Z)$.
While we could use matrix multiplication between $S$ and $T$ to compute the number of triangles
containing every edge $(X, Y)$ in $R$, the multiplication result cannot help us recover a list of
those triangles.
Nevertheless, we could still use this multiplication result to answer the version of the query where the head is $Q_\triangle(X, Y)$.
Prior results from Table~\ref{tab:intro:comparison} are not known to apply to the corresponding full conjunctive queries with an additional $+O(\out)$
factor in the runtime where $\out$ is the output size.
We leave characterizing (non-full) conjunctive queries that can benefit from fast matrix multiplication as an open problem.

\bibliographystyle{siam}
\bibliography{bib}

\appendix
\section{Computing the Submodular Width}
\label{app:computing-subw}
We summarize here how
to compute the submodular width of a given hypergraph $\calH=(\calV,\calE)$, as defined in Eq.~\eqref{eq:subw}.
To that end, we need to apply a couple of tricks.
Let $\calT(\calH) =$ $\{(T_1, \chi_1), (T_2, \chi_2), \ldots,$ $(T_K, \chi_K)\}$
be the list of all tree decompositions of $\calH$.
Let $\bm t = (t_1, t_2, \ldots, t_K) \in \nodes(T_1)\times\ldots\times \nodes(T_K)$ denote a tuple of nodes, one from each tree decomposition. (Note that $t_1, \ldots, t_K$ are {\em not} nodes from the same tree decomposition.)
By distributing\footnote{Note that $(\R, \max, \min)$ is a {\em commutative semiring}.} the min over the max in Eq.~\eqref{eq:subw}, we can equivalently write (where $[K]$ denotes $\{1, \ldots, K\}$):
\begin{align}
    \subw(\calH) &\quad\defeq\quad
            \max_{\bm h \in \Gamma \cap \ed}\quad
            \min_{(T,\chi) \in \calT(\calH)}\quad
            \max_{t \in \nodes(T)}\quad
            h(\chi(t))\nonumber\\
        &\quad=\quad
            \max_{\bm h \in \Gamma \cap \ed}\quad
            \max_{\substack{(t_1, \ldots, t_K)\in\\\nodes(T_1)\times\ldots\times\nodes(T_K)}}\quad
            \min_{i \in [K]}\quad
            h(\chi_i(t_i))\label{eq:subw:min-over-max}\\
        &\quad=\quad
            \max_{\substack{(t_1, \ldots, t_K)\in\\\nodes(T_1)\times\ldots\times\nodes(T_K)}}\quad
            \underbrace{\max_{\bm h \in \Gamma \cap \ed}\quad
            \min_{i \in [K]}\quad
            h(\chi_i(t_i))}_{\text{Equivalent to LP~\eqref{eq:subw:lp}}}\label{eq:subw:swap-max}
\end{align}
Eq.~\eqref{eq:subw:swap-max} follows by swapping the two max operators in Eq.~\eqref{eq:subw:min-over-max}.
Now for a {\em fixed} tuple $(t_1, \ldots, t_K)$, consider the inner optimization problem:
\begin{align}
    \max_{\bm h \in \Gamma \cap \ed}\quad
            \min_{i \in [K]}\quad
            h(\chi_i(t_i))
    \label{eq:subw:pre-lp}
\end{align}
Although the constraints $\bm h \in \Gamma \cap \ed$ are linear,
Eq.~\eqref{eq:subw:pre-lp} is still not a linear program (LP) because the objective is a minimum of linear functions. However, we can convert~\eqref{eq:subw:pre-lp} into an LP
by introducing a new variable $w$ that is upper bounded by each term in the minimum:
\begin{align}
    \max_{w, \bm h \in \Gamma \cap \ed}\{
        w\mid w \leq h(\chi_1(t_1)), \ldots, w \leq h(\chi_K(t_K))
        \}
    \label{eq:subw:lp}
\end{align}
As a result, computing the submodular width can be reduced to taking the maximum solution of a finite number of LPs, one for each tuple $(t_1, \ldots, t_K) \in \nodes(T_1)\times\ldots\times\nodes(T_K)$.

\begin{example}
    \label{ex:4-cycle:computing-subw}
    Consider the 4-cycle hypergraph from Example~\ref{ex:4cycle:td}. The submodular width of this hypergraph can be computed by considering the two tree decompositions $(T_1, \chi_1)$ and $(T_2,\chi_2)$ from Example~\ref{ex:4cycle:td}.
    Applying Eq.~\eqref{eq:subw}, we get:
    \begin{align*}
        \subw(\calH) =
            \max_{\bm h \in \Gamma \cap \ed}
            \min(
                \max(h(ABC), h(CDA)),
                \max(h(BCD), h(DAB))
            )
    \end{align*}
    By distributing the min over the inner max and then swapping the two max operators, we get:
    \begin{align*}
        \subw(\calH) = \max\biggl(
            &\max_{\bm h \in \Gamma \cap \ed}
            \min(h(ABC), h(BCD)),\quad
            \max_{\bm h \in \Gamma \cap \ed}
            \min(h(ABC), h(DAB)),\\
            &\max_{\bm h \in \Gamma \cap \ed}
            \min(h(CDA), h(BCD)),\quad
            \max_{\bm h \in \Gamma \cap \ed}
            \min(h(CDA), h(DAB))\biggr)
    \end{align*}
    Let's take the first term inside the max as an example $\max_{\bm h \in \Gamma \cap \ed}
            \min(h(ABC), h(BCD))$. This term is equivalent to the following LP:
    \begin{align*}
        \max_{w, \bm h \in \Gamma \cap \ed}\{
            w\mid w \leq h(ABC), w \leq h(BCD)
        \}
    \end{align*}
    The above LP has an optimal objective value of $3/2$. The other three terms inside the max are similar. Therefore, the submodular width of the 4-cycle hypergraph is $3/2$.
\end{example}

\section{Missing details from Section~\ref{sec:upper}}
\label{app:upper}
We show here how to compute the $\omega$-submodular width for the classes of queries that
are given by Table~\ref{tab:comparison} in Section~\ref{sec:upper}.
We start with some preliminaries: 

\begin{definition}[Fractional Edge Covering Number]
    For a hypergraph $\calH = (\calV, \calE)$, a fractional edge covering is a function $W:\calE \to [0,1]$ such that $\sum_{e \in \calE: X \in e} W(e) \ge 1$ holds for each vertex $X \in \calV$. The fractional edge covering number of $\calH$, denoted as $\rho^*(\calH)$, is defined as the minimum weight assigned to hyperedges over all possible fractional edge covering, i.e., $\rho^*(\calH) = \min_{W} \sum_{e \in \calE} W(e)$.
\end{definition}
\begin{proposition}
    \label{prop:agm-bound}
        In any hypergraph $\calH=(\calV,\calE)$, for any polymatroid $\bm h \in \Gamma \cap \ed$, $h(\calV) \le \rho^*(\calH)$.
\end{proposition}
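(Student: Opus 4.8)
The plan is to derive this as the information-theoretic incarnation of the AGM bound, i.e.\ via Shearer's lemma phrased for polymatroids. First I would prove the stronger statement that for \emph{every} fractional edge cover $W:\calE\to[0,1]$ of $\calH$,
\[
    h(\calV) \;\le\; \sum_{e\in\calE} W(e)\, h(e).
\]
Granting this, the proposition follows immediately: edge-dominance gives $h(e)\le 1$ for all $e\in\calE$ and $W(e)\ge 0$, so the right-hand side is at most $\sum_{e\in\calE}W(e)$; choosing $W$ to be an optimal fractional edge cover (of weight $\rho^*(\calH)$) yields $h(\calV)\le \rho^*(\calH)$.

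To establish the displayed inequality I would fix an arbitrary ordering $X_1,\ldots,X_k$ of $\calV$ and set $\calV_{<i}\defeq\{X_1,\ldots,X_{i-1}\}$. By telescoping (a chain rule valid for any set function), $h(\calV)=\sum_{i\in[k]} h(X_i\mid \calV_{<i})$, and likewise for a hyperedge $e=\{X_{i_1},\ldots,X_{i_m}\}$ with $i_1<\cdots<i_m$ we have $h(e)=\sum_{j} h(X_{i_j}\mid \{X_{i_1},\ldots,X_{i_{j-1}}\})$. The crucial step is to replace the ``local'' conditioning set $\{X_{i_1},\ldots,X_{i_{j-1}}\}$ by the ``global'' prefix $\calV_{<i_j}$, which only shrinks each conditional term: since $\{X_{i_1},\ldots,X_{i_{j-1}}\}\subseteq \calV_{<i_j}$, submodularity (Eq.~\eqref{eq:submod}) gives $h(X\mid\bm A)\ge h(X\mid\bm B)$ whenever $\bm A\subseteq\bm B$, and hence $h(e)\ge \sum_{X_{i_j}\in e} h(X_{i_j}\mid \calV_{<i_j})$.

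Finally I would sum this bound over all edges weighted by $W(e)$ and interchange the order of summation, so that the coefficient of each $h(X_i\mid\calV_{<i})$ becomes $\sum_{e\ni X_i}W(e)\ge 1$ by the covering constraint. Since each $h(X_i\mid\calV_{<i})\ge 0$ by monotonicity (Eq.~\eqref{eq:monotone}), these coefficients may be replaced by $1$, giving $\sum_{e}W(e)h(e)\ge \sum_{i\in[k]} h(X_i\mid\calV_{<i})=h(\calV)$, which is exactly the inequality sought.

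The main obstacle is the ``globalization'' step $h(e)\ge\sum_{X_{i_j}\in e} h(X_{i_j}\mid\calV_{<i_j})$: one has to apply the submodular ``conditioning-on-more-helps'' inequality with respect to the single fixed global ordering (rather than each edge's induced order), since it is precisely this common ordering that lets the covering constraint $\sum_{e\ni X}W(e)\ge 1$ be applied uniformly after the two sums are swapped. The remaining steps are routine manipulations with the chain rule together with the nonnegativity and edge-dominance of $\bm h$.
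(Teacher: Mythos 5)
Your proof is correct. There is nothing in the paper to compare it against: the paper states Proposition~\ref{prop:agm-bound} without any proof, treating it as the known polymatroid (entropic) form of the AGM/Shearer bound, as the label \texttt{prop:agm-bound} suggests. Your argument is the standard proof of exactly that fact, and every step checks out. The chain rule $h(\calV)=\sum_{i\in[k]} h(X_i\mid\calV_{<i})$ is valid for any set function with $h(\emptyset)=0$ (Eq.~\eqref{eq:emptyset}); the ``globalization'' step $h\bigl(X_{i_j}\mid \{X_{i_1},\ldots,X_{i_{j-1}}\}\bigr)\ge h\bigl(X_{i_j}\mid\calV_{<i_j}\bigr)$ is precisely submodularity (Eq.~\eqref{eq:submod}) applied to the sets $\{X_{i_j}\}\cup\{X_{i_1},\ldots,X_{i_{j-1}}\}$ and $\calV_{<i_j}$, using that $X_{i_j}\notin\calV_{<i_j}$, which holds because $\calV_{<i_j}$ is a strict prefix; and the final exchange of sums needs only nonnegativity of the conditional terms (monotonicity, Eq.~\eqref{eq:monotone}), nonnegativity of $W$, and the covering constraint $\sum_{e\ni X_i}W(e)\ge 1$. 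Note also that the intermediate inequality $h(\calV)\le\sum_{e\in\calE}W(e)\,h(e)$ you establish is strictly stronger than the proposition itself (it is Shearer's lemma for polymatroids) and specializes to it via edge-dominance $h(e)\le 1$; this stronger form is in fact what the paper implicitly relies on when it applies Proposition~\ref{prop:agm-bound} to vertex subsets in the clique lemmas (e.g., to bound $h(\bm X)$, $h(\bm Y)$, $h(\bm Z)$ in Lemma~\ref{lem:clique}), so proving it in this generality is a sound choice.
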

    
\subsection{Clique Hypergraphs}

\begin{proposition}
\label{prop:clique-dominated}
    In a $k$-clique hypergraph $\calH = (\{X_1,X_2,\cdots,X_k\}, \{\{X_i,X_j\}: i,j \in [k], i \neq j\})$, for any generalized elimination ordering $\ov{\bm \sigma} \in \pi(\calV)$ and any $i \in[\ov{\bm \sigma}]$, $U^{\ov{\bm \sigma}}_i \subseteq \calV = U^{\ov{\bm \sigma}}_1$.
\end{proposition}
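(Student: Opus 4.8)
The plan is to separate the statement into two parts: the containment $U^{\ov{\bm\sigma}}_i \subseteq \calV$, which I expect to be immediate, and the equality $\calV = U^{\ov{\bm\sigma}}_1$, which is where the clique structure is actually used.

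For the containment, first I would observe that the generalized elimination hypergraph sequence only ever shrinks the vertex set: by Definition~\ref{defn:gve} we have $\calV^{\ov{\bm\sigma}}_{i+1} = \calV^{\ov{\bm\sigma}}_i \setminus \bm X_i$, so a trivial induction gives $\calV^{\ov{\bm\sigma}}_i \subseteq \calV^{\ov{\bm\sigma}}_1 = \calV$ for every $i$. Moreover, every hyperedge in $\calE^{\ov{\bm\sigma}}_i$ is a subset of $\calV^{\ov{\bm\sigma}}_i$; this too is preserved by the recursion, since the only new hyperedge introduced at each step, $N_{\calH^{\ov{\bm\sigma}}_i}(\bm X_i)$, is by construction a subset of the remaining vertices. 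Since $U^{\ov{\bm\sigma}}_i = U_{\calH^{\ov{\bm\sigma}}_i}(\bm X_i)$ is by definition a union of hyperedges of $\calH^{\ov{\bm\sigma}}_i$, it is contained in $\calV^{\ov{\bm\sigma}}_i \subseteq \calV$. This establishes the containment for all $i$, and in particular $U^{\ov{\bm\sigma}}_1 \subseteq \calV$.

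The substantive step is the reverse inclusion $\calV \subseteq U^{\ov{\bm\sigma}}_1$, which is where completeness of the clique matters. Since $\ov{\bm\sigma}$ is an ordered partition, its first block $\bm X_1$ is nonempty, so I would fix some vertex $X \in \bm X_1$. For every other vertex $Y \in \calV \setminus \set{X}$, the pair $\set{X, Y}$ is a hyperedge of $\calH^{\ov{\bm\sigma}}_1 = \calH$ (because $\calH$ is a clique), and it overlaps $\bm X_1$ since $X \in \bm X_1$; hence $\set{X, Y} \in \partial_\calH(\bm X_1)$ and therefore $Y \in U_\calH(\bm X_1) = U^{\ov{\bm\sigma}}_1$. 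As $X$ itself also lies in $U^{\ov{\bm\sigma}}_1$, we conclude $\calV \subseteq U^{\ov{\bm\sigma}}_1$, and combining with the reverse containment above gives $\calV = U^{\ov{\bm\sigma}}_1$. Putting the two parts together yields $U^{\ov{\bm\sigma}}_i \subseteq \calV = U^{\ov{\bm\sigma}}_1$ for every $i$.

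There is no real obstacle here: the proof is essentially a direct unfolding of the definitions, with the clique property doing all the work in a single line. The only points worth a moment's care are ensuring $\bm X_1 \neq \emptyset$ (guaranteed by the definition of an ordered partition) so that the vertex $X$ exists, and noting that the argument tacitly assumes $k \ge 2$ so that $\calH$ has edges incident to $X$ — harmless, since the clique query classes of interest all satisfy $k \ge 3$.
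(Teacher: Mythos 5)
Your proof is correct. The paper itself states Proposition~\ref{prop:clique-dominated} without any proof (it is treated as immediate from the definitions, as is the closely related Lemma~\ref{lem:clustered}), and your argument is exactly the definitional unfolding being relied upon: the vertex sets $\calV^{\ov{\bm\sigma}}_i$ and all hyperedges of $\calH^{\ov{\bm\sigma}}_i$ stay inside $\calV$, giving $U^{\ov{\bm\sigma}}_i \subseteq \calV$, while the clique property applied to any $X \in \bm X_1$ gives $\calV \subseteq U^{\ov{\bm\sigma}}_1$. Worth noting: your argument only uses that every pair of vertices lies in a common hyperedge, so it proves verbatim the second bullet of the paper's more general Lemma~\ref{lem:clustered} on clustered hypergraphs, and your caveat about $k \ge 2$ is a genuine (if harmless) restriction that the paper's statement glosses over.
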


From Proposition~\ref{prop:clique-dominated} and Definition~\ref{defn:osubw}, we can simplify $\osubw$ for clique hypergraphs $\calH$ as below:
    \begin{align}
        {\osubw}(\calH) \quad = \quad
            \max_{\bm h \in \Gamma \cap \ed}\quad
            {\min\left(h(\calV), \min_{\ov{\bm\sigma} \in \ov{\pi}(\calV)} \emm^{\ov{\bm\sigma}}_1\right)}
            \label{eq:osubw:trimmed:clique}
    \end{align}

\begin{proposition}
    \label{prop:random-group}
    In a clique hypergraph $\calH = (\calV, \calE)$, with a polymatroid $\bm h \in \Gamma \cap \ed$ such that $h(\bm A | \bm B) = h(\bm A)$ for arbitrary $\bm A, \bm B\subseteq \calV$ with $\bm A \cap \bm B = \emptyset$, for any generalized elimination ordering $\ov{\bm \sigma} \in \pi(\calV)$,
    \[{\emm^{\ov{\bm \sigma}}_1 = \min_{(\bm A, \bm B, \bm C) \textsf{ is a partition of } \calV} \mm(\bm A, \bm B, \bm C)}\]
\end{proposition}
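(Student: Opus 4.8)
The plan is to fix the generalized elimination order $\ov{\bm\sigma}$ once and for all, write $\bm X\defeq\bm X_1$ for its first block, and prove the identity directly for this particular $\bm X$, so that the argument never appeals to the outer minimization over $\ov{\bm\sigma}$ (the value will depend on $\ov{\bm\sigma}$ only through $\bm X$). Since $\calH$ is a clique and $\bm X\neq\emptyset$, every vertex of $\calV\setminus\bm X$ is joined to $\bm X$ by an edge, so $\partial_\calH(\bm X)$ is the set of all edges meeting $\bm X$, $U_\calH(\bm X)=\calV$, and $\calH^{\ov{\bm\sigma}}_1=\calH$; hence $\emm^{\ov{\bm\sigma}}_1=\emm_\calH(\bm X)$. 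By symmetry of $\mm$ in its three arguments, I read the right-hand side as $\min\{\mm(\bm A;\bm B;\bm X)\mid \calV=\bm A\sqcup\bm B\sqcup\bm X\}$, i.e.\ the minimum over partitions of $\calV$ in which the eliminated block $\bm X$ is itself one of the three blocks (the common matrix dimension); this is the role $\bm X$ plays inside $\mm(\cdot;\cdot;\bm X\mid\cdot)$ in Definition~\ref{defn:emm}.

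First I would record the consequence of the hypothesis $h(\bm A\mid\bm B)=h(\bm A)$, which says $\bm h$ is modular: $h(\bm S)=\sum_{v\in\bm S}h(\{v\})$ and $h(\cdot\mid\bm G)=h(\cdot)$ whenever the argument is disjoint from $\bm G$. Take any admissible $\calA,\calB,\bm G$ as in~\eqref{eq:emm} and set $\bm P\defeq(\bm A\setminus\bm B)\setminus\bm G$, $\bm Q\defeq(\bm B\setminus\bm A)\setminus\bm G$ with $\bm A=\cup\calA$, $\bm B=\cup\calB$. Because $\bm A\cup\bm B=U_\calH(\bm X)=\calV$ and $\bm X\subseteq\bm A\cap\bm B$, the four sets $\bm P,\bm Q,\bm X,\bm G$ partition $\calV$. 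Substituting $h(\cdot\mid\bm G)=h(\cdot)$ into Eq.~\eqref{eq:mm} and using $h(\bm P)+h(\bm Q)+h(\bm X)+h(\bm G)=h(\calV)$, each of the three terms collapses to $h(\calV)$ minus $(1-\gamma)$ times one of $h(\bm P),h(\bm Q),h(\bm X)$, whence
\[
\mm(\bm P;\bm Q;\bm X\mid\bm G)=h(\calV)-(1-\gamma)\min\bigl(h(\bm P),h(\bm Q),h(\bm X)\bigr).
\]
This closed form, valid because $\gamma=\omega-2\in[0,1]$, is the workhorse of the whole argument.

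For the inequality $\emm_\calH(\bm X)\ge\mathrm{RHS}$ I would show that the group-by set can always be absorbed into one matrix dimension without increasing the min. Given any admissible $(\calA,\calB,\bm G)$, consider the partition $(\bm P\cup\bm G,\ \bm Q,\ \bm X)$, which still has $\bm X$ as one block; by the closed form and $h(\bm P\cup\bm G)=h(\bm P)+h(\bm G)\ge h(\bm P)$ the minimum can only grow, so $\mm(\bm P\cup\bm G;\bm Q;\bm X)\le\mm(\bm P;\bm Q;\bm X\mid\bm G)$. Thus every admissible cost dominates the value of some partition with $\bm X$ as a block, giving $\emm_\calH(\bm X)\ge\mathrm{RHS}$. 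Conversely, for $\emm_\calH(\bm X)\le\mathrm{RHS}$ I would realize an arbitrary partition $\calV=\bm A\sqcup\bm B\sqcup\bm X$ as an admissible choice with $\bm G=\emptyset$: assign each edge of $\partial_\calH(\bm X)$ meeting $\bm A$ to $\calA$, each edge meeting $\bm B$ to $\calB$, and each edge internal to $\bm X$ to $\calA$. Then $\cup\calA=\bm A\cup\bm X$ and $\cup\calB=\bm B\cup\bm X$ have intersection exactly $\bm X$ (so $\bm G=\emptyset$ is admissible), and the three MM dimensions are exactly $\bm A$, $\bm B$, $\bm X$; hence $\emm_\calH(\bm X)\le\mm(\bm A;\bm B;\bm X)$ for every such partition. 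Combining the two inequalities proves the identity.

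I expect the main obstacle to be the realizability bookkeeping in the $(\le)$ direction: verifying that the constructed $\calA,\calB$ meet every constraint of Definition~\ref{defn:emm} (coverage $\calA\cup\calB=\partial_\calH(\bm X)$, $\bm X\subseteq\cup\calA\cap\cup\calB$, and the group-by bounds $\emptyset\subseteq\bm G\subseteq(\bm A\cup\bm B)\setminus\bm X$), and handling the degenerate partitions in which $\bm A$ or $\bm B$ is empty. These correspond to trivial, non-MM configurations of cost $h(\calV)$, which never beat a genuine two-sided split and hence affect neither minimum; isolating this edge case and confirming that at least one nontrivial split exists (which holds since $|\calV\setminus\bm X|\ge 2$ for the single-vertex orderings in $\pi(\calV)$) is the one place where care is required.
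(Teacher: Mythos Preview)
Your closed form $\mm(\bm P;\bm Q;\bm X\mid\bm G)=h(\calV)-(1-\gamma)\min\bigl(h(\bm P),h(\bm Q),h(\bm X)\bigr)$ for modular $\bm h$ is correct and is the engine of both arguments. However, you have changed the statement: the paper's right-hand side ranges over \emph{all} 3-partitions of $\calV$, not only those having the eliminated block $\bm X=\bm X_1$ as one part (note the right-hand side does not mention $\ov{\bm\sigma}$ at all). What your two inequalities actually establish is the restricted identity
\[
\emm^{\ov{\bm\sigma}}_1 \;=\; \min_{\calV=\bm A\sqcup\bm B\sqcup\bm X}\mm(\bm A;\bm B;\bm X)
\;\ge\;
\min_{(\bm A,\bm B,\bm C)\text{ partition of }\calV}\mm(\bm A;\bm B;\bm C).
\]
The displayed $\ge$ is precisely what the paper records as ``follows from Definition~\ref{defn:emm}'', and your absorption of $\bm G$ into $\bm P$ is a clean way to make that step explicit. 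For the reverse inequality the paper does something different from your realization argument: it passes through 4-partitions $(\bm A,\bm B,\bm C,\bm D)$ and, using modularity to replace $h(\bm C)+h(\bm D)$ by $h(\bm C\bm D)$, merges the group-by block $\bm D$ into the \emph{eliminated} slot $\bm C$ rather than into a matrix-dimension slot as you do. That merge is how 3-partitions in which $\bm X$ is merely a subset of one block are meant to enter.

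Your $(\le)$ argument only realizes partitions with $\bm X$ itself as a block, so it cannot reach the paper's unrestricted minimum; and indeed the unrestricted equality can fail for a fixed $\ov{\bm\sigma}$ (a singleton first block in the 6-clique with $h(X_i)=\tfrac12$ gives $\emm^{\ov{\bm\sigma}}_1=\tfrac{3+\omega}{2}$, while the minimum over all 3-partitions is $\omega$). What matters downstream is only the $\ge$ direction, which is all that Lemmas~\ref{lem:clique-4}--\ref{lem:clique} invoke to lower-bound $\osubw$; your proof delivers that cleanly.
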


\begin{proof} The direction $\displaystyle{\emm^{\ov{\bm \sigma}}_1 \ge \min_{(\bm A, \bm B, \bm C) \textsf{ is a partition of } \calV} \mm(\bm A, \bm B, \bm C;\emptyset)}$ follows (\ref{defn:emm}). Below, we focus on:
    \begin{align*}
        \emm^{\ov{\bm \sigma}}_1 & = \min_{(\bm A, \bm B, \bm C, \bm D) \textsf{ is a partition of } \calV} \mm(\bm A, \bm B, \bm C|\bm D)\\
        & = \min_{(\bm A, \bm B, \bm C, \bm D) \textsf{ is a partition of } \calV} \min \begin{cases}
             \gamma h(\bm A|\bm D) + h(\bm B|\bm D) + h(\bm C|\bm D) + h(\bm D) & \\
             h(\bm A|\bm D) + \gamma  h(\bm B|\bm D) + h(\bm C|\bm D) + h(\bm D) & \\
             h(\bm A|\bm D) + h(\bm B|\bm D) + \gamma h(\bm C|\bm D) + h(\bm D) & \\
        \end{cases} \\
        & = \min_{(\bm A, \bm B, \bm C, \bm D) \textsf{ is a partition of } \calV} \min \begin{cases}
             \gamma h(\bm A) + h(\bm B) + h(\bm C) + h(\bm D) & \\
             h(\bm A) + \gamma  h(\bm B) + h(\bm C) + h(\bm D) & \\
             h(\bm A) + h(\bm B) + \gamma h(\bm C) + h(\bm D) & \\
        \end{cases} \\
        &\le \min_{(\bm A, \bm B, \bm C, \bm D) \textsf{ is a partition of } \calV}  \min \begin{cases}
             \gamma h(\bm A) + h(\bm B) + h(\bm C\bm D)& \\
             h(\bm A) + \gamma  h(\bm B) + h(\bm C \bm D) & \\
             h(\bm A) + h(\bm B) + \gamma h(\bm C \bm D) & \\
        \end{cases} \\
        & = \min_{(\bm A, \bm B, \bm C) \textsf{ is a partition of } \calV} \mm(\bm A, \bm B, \bm C)
    \end{align*}
    Together, we have completed the proof.
\end{proof}

\begin{lemma}
\label{lem:clique-3}
    For the 3-clique hypergraph as defined in (\ref{eq:intro:triangle}), $\osubw(\calH) =\frac{2\omega}{\omega+1}$. 
\end{lemma}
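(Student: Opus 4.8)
The plan is to reduce $\osubw(\calH)$ for the triangle to a small family of linear programs and then bound each one by pairing a matching primal polymatroid with a dual (Shannon-inequality) certificate. Since $\calH$ is a clique, Proposition~\ref{prop:clique-dominated} together with Eq.~\eqref{eq:osubw:trimmed:clique} collapses the definition to $\osubw(\calH)=\max_{\bm h\in\Gamma\cap\ed}\min(h(XYZ),\min_{\ov{\bm\sigma}}\emm_1^{\ov{\bm\sigma}})$, and eliminating any single vertex first admits only one non-trivial assignment of $\calA,\calB$ (with $\bm G=\emptyset$), so by symmetry of $\mm$ this reduces to
\[
\osubw(\calH)\;=\;\max_{\bm h\in\Gamma\cap\ed}\;\min\bigl(h(XYZ),\,\mm(X;Y;Z)\bigr),
\]
i.e.\ Eq.~\eqref{eq:intro:osubw:triangle}. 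Because $\mm(X;Y;Z)$ is itself a maximum of three terms (Definition~\ref{defn:mm}), I would distribute the inner $\min$ over this $\max$ and swap the resulting $\max$ operators, yielding Eq.~\eqref{eq:intro:osubw:triangle:distributed}: a maximum over three structurally identical LPs, the first being Eq.~\eqref{eq:intro:inner-lp:triangle}. The three differ only in which of $X,Y,Z$ carries the coefficient $\gamma=\omega-2$, so by symmetry it suffices to compute one, and I claim its optimum is $\frac{2\omega}{\omega+1}$.

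For the lower bound $\osubw(\calH)\ge\frac{2\omega}{\omega+1}$, I would exhibit the symmetric polymatroid with $h(\emptyset)=0$ and
\[
h(X)=h(Y)=h(Z)=\tfrac{2}{\omega+1},\qquad
h(XY)=h(YZ)=h(XZ)=1,\qquad
h(XYZ)=\tfrac{2\omega}{\omega+1}.
\]
I would verify it lies in $\Gamma\cap\ed$: edge-domination is immediate since each $h(XY)=1$; monotonicity reduces to $\tfrac{2}{\omega+1}\le 1\le\tfrac{2\omega}{\omega+1}$, both holding for $\omega\ge 1$; and submodularity reduces to the representative checks $h(XY)+h(XZ)\ge h(XYZ)+h(X)$ (which is $2=2$) and $h(X)+h(Y)\ge h(XY)$ (which is $\tfrac{4}{\omega+1}\ge 1$, i.e.\ $\omega\le 3$), both valid on $[2,3]$. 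Finally each $\mm$-term equals $\tfrac{2}{\omega+1}(2+\gamma)=\tfrac{2\omega}{\omega+1}=h(XYZ)$, so the objective is exactly $\tfrac{2\omega}{\omega+1}$.

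For the upper bound I would produce the dual certificate of Eq.~\eqref{eq:intro:shannon:triangle}, obtained by summing the three submodularities
\begin{align*}
h(XYZ)+h(X)&\le h(XY)+h(XZ),\\
h(XYZ)+h(Y)&\le h(XY)+h(YZ),\\
\gamma\, h(XYZ)+\gamma\, h(Z)&\le \gamma\, h(XZ)+\gamma\, h(YZ),
\end{align*}
which yields $\omega\, h(XYZ)+h(X)+h(Y)+\gamma h(Z)\le 2h(XY)+(\omega-1)h(YZ)+(\omega-1)h(XZ)$. Edge-domination bounds the right-hand side by $2+(\omega-1)+(\omega-1)=2\omega$. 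For any LP-feasible point, combining $t\le h(XYZ)$ and $t\le h(X)+h(Y)+\gamma h(Z)$ gives $(\omega+1)t\le \omega h(XYZ)+h(X)+h(Y)+\gamma h(Z)\le 2\omega$, so $t\le\tfrac{2\omega}{\omega+1}$; the symmetric certificates dispatch the other two LPs. Together the two bounds give $\osubw(\calH)=\tfrac{2\omega}{\omega+1}$. I expect the main obstacle to be guessing the correct dual combination — the edge weights $(2,\omega-1,\omega-1)$ and the particular three submodularities whose sum produces the coefficient $\omega$ on the for-loop term $h(XYZ)$; once that certificate is in hand, the primal validity checks and the symmetry reduction across the three $\mm$-terms are routine.
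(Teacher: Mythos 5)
Your proof is correct, but your upper-bound argument takes a genuinely different route from the paper's own proof of Lemma~\ref{lem:clique-3}. The paper proves $\osubw(\calH)\le\frac{2\omega}{\omega+1}$ by a direct case analysis over polymatroids: either some conditional term, say $h(Y|X)$, is at most $\frac{\omega-1}{\omega+1}$, in which case $h(XYZ)\le h(XZ)+h(Y|X)\le 1+\frac{\omega-1}{\omega+1}=\frac{2\omega}{\omega+1}$; or every such conditional exceeds $\frac{\omega-1}{\omega+1}$, in which case every singleton satisfies $h(X)\le\frac{2}{\omega+1}$ and hence $\mm(Y;Z;X)\le(2+\gamma)\frac{2}{\omega+1}=\frac{2\omega}{\omega+1}$. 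You instead distribute $\min$ over $\max$ to obtain the three LPs of Eq.~\eqref{eq:intro:osubw:triangle:distributed} and bound each via the dual Shannon-inequality certificate of Eq.~\eqref{eq:intro:shannon:triangle} --- which is exactly the route the paper sketches in its overview (Section~\ref{sec:overview}), though not the one used in its actual proof of the lemma in Appendix~\ref{app:upper}. The two are equivalent in outcome: the case analysis mirrors the algorithmic light/heavy data partitioning directly and is the template the paper reuses for the harder example queries (4-clique, 5-clique, 4-cycle), while your LP-dual route is more mechanical (solve the LP, read off the dual) and produces the exact certificate that drives the proof-sequence-to-algorithm translation of Figure~\ref{fig:ps-algo:triangle}. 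Your lower bound coincides with the paper's: it is the same symmetric polymatroid (values $\frac{2}{\omega+1}$, $1$, $\frac{2\omega}{\omega+1}$); the only difference is that the paper realizes it as the entropy of independent random variables $a,b,c,d$ with $X=(ad)$, $Y=(bd)$, $Z=(cd)$, so the Shannon inequalities hold for free, whereas you verify monotonicity and the two local submodularity types by hand --- and your verification is complete, since for a fully symmetric function on three variables the local checks $h(X)+h(Y)\ge h(XY)$ and $h(XY)+h(XZ)\ge h(XYZ)+h(X)$ (together with the monotonicity chain) do imply all Shannon inequalities.
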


\begin{proof}
  \underline{\em Direction $\osubw(\calH) \le \frac{2\omega}{\omega+1}$.} We partition all polymatroids $h \in  \Gamma \cap \ed$ into the following two cases:
    \begin{itemize}[leftmargin=*]
        \item Case 1: there exists a pair of vertices with their conditional entropy is smaller than $\frac{\omega-1}{\omega+1}$, say $h(Y|X)\le \frac{\omega-1}{\omega+1}$. 
        In this case, we have $h(XYZ) = h(ZX) + h(Y|ZX) \le h(ZX) + h(Y|X) \le 1 + \frac{\omega-1}{\omega+1} = \frac{2\omega}{\omega+1}$, where $h(Y|ZX) \le h(Y|X)$ follows the sub-modularity of $h$ and $h(ZX) \le 1$ follows the edge-domination of $h$. 
        \item Case 2: None of the above, i.e., for each pair of vertices, their conditional entropy is larger than $\frac{\omega-1}{\omega+1}$. As $h(Y|X) > \frac{\omega-1}{\omega+1}$ and $h(XY) \le 1$, we have $h(X) = h(XY) - h(Y|X) \le 1 - \frac{\omega-1}{\omega+1} = \frac{2}{\omega+1}$. Similarly, we have $h(Y), h(Z) \le \frac{2}{\omega+1}$. Consider a generalized elimination ordering $\ov{\bm \sigma}= (\{X\},\{Y\},\{Z\})$. We have $\emm^{\ov{\bm \sigma}}_1 \le \mm(Y;Z;X) = \frac{2\omega}{\omega+1}$ by choosing $\calA = \{\{X,Y\}\}$ and $\calB = \{\{X,Z\}\}$.
    \end{itemize}
    
    \underline{\em Direction $\osubw(\calH) \ge \frac{2\omega}{\omega+1}$.} We identify a polymatroid $\bm h$ below. Let $a,b,c,d$ be independently random variables with $h(a)=h(b)=h(c) =\frac{\omega-1}{\omega+1}$ and $h(d)=\frac{3-\omega}{\omega+1}$. Let $X = (ad), Y =(bd), Z=(cd)$.
        \begin{itemize}[leftmargin=*]
        \item $h(X) = h(Y) = h(Z) =\frac{2}{\omega+1}$;
        \item $h(XY) = h(XZ) = h(YZ)= 1$;
        \item $h(XYZ) = \frac{2\omega}{\omega+1}$.
        \end{itemize}

    \begin{figure}
        \centering
    \includegraphics[scale=1.0]{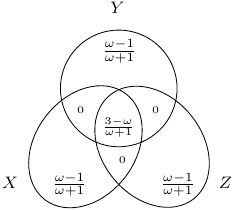}
        \vspace{-1em}
        \caption{Diagram of the polymatroid for the clique-3 hypergraph.}
        \label{fig:enter-label}
    \end{figure}
    Consider a generalized elimination ordering $\ov{\bm \sigma} = (\{X\}, \{Y\}, \{Z\})$, $\emm^{\ov{\bm\sigma}}_1 = \mm(Y,Z,X) = \frac{2\omega}{\omega+1}$. Hence, $\displaystyle{{\min\left(h(\calV), \min_{\ov{\bm\sigma} \in \ov{\pi}(\calV)} \emm^{\ov{\bm\sigma}}_1\right)}}=$ $ \frac{2\omega}{\omega+1}$ holds for $\bm h$. From (\ref{eq:osubw:trimmed:clique}), we conclude $\osubw(\calH) \ge \frac{2\omega}{\omega+1}$. The other generalized elimination orderings are the same.
\end{proof}

\begin{lemma}
\label{lem:clique-4}
    For the 4-clique hypergraph as defined in (\ref{eq:H:4clique}), $\osubw(\calH) = \frac{\omega+1}{2}$. 
\end{lemma}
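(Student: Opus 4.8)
The plan is to start from the clique-specific form of the $\omega$-submodular width. By Proposition~\ref{prop:clique-dominated} and Eq.~\eqref{eq:osubw:trimmed:clique}, it suffices to analyze
\[
\osubw(\calH) = \max_{\bm h \in \Gamma \cap \ed} \min\left(h(XYZW),\ \min_{\ov{\bm\sigma}} \emm_1^{\ov{\bm\sigma}}\right),
\]
where the inner minimum runs over the eleven terms listed in Eq.~\eqref{eq:osubw:4clique}: the ``for-loop'' term $h(XYZW)$, six ``pair'' terms $\mm(\text{pair};\text{single};\text{single})$, and four ``group-by'' terms $\mm(\text{single};\text{single};\text{single}\mid\text{single})$. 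The one computational fact I would use repeatedly is the identity, immediate from Eq.~\eqref{eq:mm}, that for pairwise disjoint $\bm A,\bm B,\bm C$ one has $\mm(\bm A;\bm B;\bm C) = h(\bm A)+h(\bm B)+h(\bm C) - (1-\gamma)\min(h(\bm A),h(\bm B),h(\bm C))$, since the outer maximum always places the $\gamma$-discount on the smallest of the three entropies (and likewise in the conditional version).

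For the lower bound $\osubw(\calH)\ge\frac{\omega+1}{2}$, I would exhibit the modular polymatroid $h(\bm S)=|\bm S|/2$ (equivalently, four mutually independent variables each of entropy $\tfrac12$). It is edge-dominated because every pair has value $1$. Using the identity above, every pair term equals $1+\tfrac12+\tfrac12-(1-\gamma)\tfrac12=\frac{\omega+1}{2}$ and, by a symmetric computation, every group-by term equals $\frac{\omega+1}{2}$ as well, while $h(XYZW)=2\ge\frac{\omega+1}{2}$. Hence the objective equals $\frac{\omega+1}{2}$ at this $\bm h$.

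For the upper bound I would prove the stronger statement that $\emm_1\le\frac{\omega+1}{2}$ for \emph{every} $\bm h\in\Gamma\cap\ed$, so the for-loop term is never needed. Relabel the vertices (by symmetry of the clique) so that $h(X)\le h(Y)\le h(Z)\le h(W)$, and split on the threshold $h(Y)=\tfrac12$. If $h(Y)\le\tfrac12$, use the pair term on the two heaviest vertices: $\mm(ZW;X;Y)=h(ZW)+h(Y)+\gamma h(X)\le 1+h(Y)+\gamma h(X)\le 1+(1+\gamma)\tfrac12=\frac{\omega+1}{2}$. If $h(Y)>\tfrac12$ (so $h(W)\ge h(Y)>\tfrac12$), use the group-by term on the heaviest vertex: each conditional obeys $h(\cdot\mid W)\le 1-h(W)$ by edge-domination of the pairs containing $W$, so the bracketed part of $\mm(X;Y;Z\mid W)$ is at most $(2+\gamma)(1-h(W))=\omega(1-h(W))$ (bound the two largest conditionals by $1-h(W)$ and absorb the smallest through the $(1-\gamma)$ discount), giving $\mm(X;Y;Z\mid W)\le h(W)+\omega(1-h(W))=\omega-(\omega-1)h(W)\le\frac{\omega+1}{2}$. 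Either way some matrix-multiplication term meets the bound, so $\osubw(\calH)\le\frac{\omega+1}{2}$, matching the lower bound.

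The step I expected to be the obstacle — reconciling the for-loop term, which can be as large as $\rho^*(\calH)=2$, with the ten matrix-multiplication options — dissolves once one sees that the for-loop term is never binding. The genuine crux is choosing the right MM term per regime and, crucially, retaining the $(1-\gamma)\min$ discount in the group-by cost: the weaker bound $h(W)+3(1-h(W))=3-2h(W)$ obtained by dropping it fails exactly in the window $h(W)\in(\tfrac12,\tfrac{5-\omega}{4})$, which is precisely where the discounted bound $\omega-(\omega-1)h(W)$ is needed.
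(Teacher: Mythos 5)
Your proof is correct, and its lower-bound half coincides with the paper's: both exhibit the uniform modular polymatroid $h(X)=h(Y)=h(Z)=h(W)=\tfrac12$ (four independent variables of entropy $\tfrac12$), the paper evaluating all MM terms at once via Proposition~\ref{prop:random-group}, while you compute the eleven terms of Eq.~\eqref{eq:osubw:4clique} directly through the identity $\mm(\bm A;\bm B;\bm C)=h(\bm A)+h(\bm B)+h(\bm C)-(1-\gamma)\min\bigl(h(\bm A),h(\bm B),h(\bm C)\bigr)$. The upper bounds, however, use genuinely different case decompositions. The paper splits on \emph{conditional} entropies: Case 1 assumes some vertex $W$ has $h(X|W),h(Y|W),h(Z|W)\le\tfrac12$ and bounds the group-by term $\mm(Y;Z;X|W)$ by pairing one conditional with $h(W)$ (using $h(Z|W)+h(W)=h(ZW)\le 1$) and bounding the other two by $\tfrac12$; Case 2 assumes every vertex has some conditional exceeding $\tfrac12$, deduces that all four marginals are below $\tfrac12$, and bounds the same group-by term. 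You instead sort the \emph{marginals} and split on whether the second-smallest exceeds $\tfrac12$: when it does not, you invoke a rectangular pair term $\mm(ZW;X;Y)$ (a term the paper's proof never uses), and when it does, you invoke the group-by term on the heaviest vertex with the bound $h(\,\cdot\,|W)\le 1-h(W)$, giving $\omega-(\omega-1)h(W)\le\tfrac{\omega+1}{2}$. Both decompositions are exhaustive and every computation in yours checks out. What the paper's split buys is a closer analogy with the heavy/light degree partitioning that the matching algorithm actually performs; what yours buys is a more mechanical argument driven by a single sorted sequence of marginals, an explicit accounting of where the $(1-\gamma)\min$ discount is essential, and an explicitly stated version of a fact the paper leaves implicit, namely that $\min_{\ov{\bm\sigma}}\emm_1^{\ov{\bm\sigma}}\le\tfrac{\omega+1}{2}$ for every $\bm h\in\Gamma\cap\ed$, so the for-loop term $h(XYZW)$ is never the binding one.
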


\begin{proof}
    \underline{\em Direction $\osubw(\calH) \le \frac{\omega+1}{2}$.} We partition all polymatroids $\bm h \in \Gamma \cap \ed$ into the following two cases:
    
    {\bf Case 1:} there exists a vertex such that its all conditional entropies are smaller than $\frac{1}{2}$, say, $W$ is such a vertex with $h(X|W), h(Y|W), h(Z|W) \le \frac{1}{2}$. Consider a generalized elimination ordering $\ov{\bm \sigma}= (\{X\},\{Y\},\{Z\},\{W\})$. By choosing $\calA = \{\{X,W\}, \{Y,W\}\}$ and $\calB = \{\{X,W\}, \{Z,W\}\}$, 

        \begin{align*} \emm^{\ov{\bm \sigma}}_1 \le \mm(Y; Z; X|W) &= \max\begin{cases}
        \gamma h(Y|W) + h(X|W) + h(Z|W) +h(W)&\\
        h(Y|W) + \gamma h(Z|W) + h(X|W) +h(W)&\\
        h(Y|W) + h(Z|W) + \gamma h(X|W) + h(W)&\end{cases}
        \le \frac{\gamma}{2} + \frac{1}{2} +1 = \frac{\omega+1}{2}.
        \end{align*}
    {\bf  Case 2:} None of the above, i.e., each vertex has at least one conditional entropy larger than $\frac{1}{2}$. Wlog, suppose $h(Y|X) > \frac{1}{2}$. As $h(XY) \le 1$, we have $h(X) = h(XY) - h(Y|X) < 1 - \frac{1}{2} = \frac{1}{2}$. Similarly, we have $h(Y), h(Z), h(W) < \frac{1}{2}$. Consider a generalized elimination ordering $\ov{\bm \sigma}=(\{X\},\{Y\},\{Z\},\{W\})$. We have $\emm^{\ov{\bm \sigma}}_1 \le \mm(Y; Z; X|W) =\frac{\omega+1}{2}$. 

    \underline{\em Direction $\osubw(\calH) \ge \frac{\omega+1}{2}$.}
    We identify a polymatroid $\bm h\in \Gamma \cap \ed$ as follows: let $X,Y,Z,W$ be independently random variables with $h(X) = h(Y) = h(Z) = h(W) =\frac{1}{2}$. Then, we have $h(\calV) = 2$. From Proposition~\ref{prop:random-group}, for an arbitrary generalized elimination ordering $\ov{\bm \sigma} \in \pi(\calV)$, $\emm^{\ov{\bm \sigma}}_1 = \frac{\omega+1}{2}$. From (\ref{eq:osubw:trimmed:clique}), we conclude $\osubw(\calH) \ge \frac{\omega+1}{2}$.
\end{proof}
\begin{lemma}
\label{lem:clique-5}
    For the following 5-clique hypergraph $\calH$, $\osubw(\calH) =\frac{\omega}{2} + 1$.
    \begin{align}
     \calH = (\{X, Y, Z, W, L\}, \{\{X, Y\}, \{X, Z\}, \{X, W\}, \{X, L\}, \{Y, Z\}, \{Y, W\}, \{Y, L\}, \{Z, W\}, \{Z, L\}, \{W, L\}\})
     \label{eq:H:5clique}
    \end{align} 
\end{lemma}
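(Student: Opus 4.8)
The plan is to exploit the clique-specific simplification of the $\omega$-submodular width. By Proposition~\ref{prop:clique-dominated} every $U^{\ov{\bm\sigma}}_i$ is contained in $U^{\ov{\bm\sigma}}_1 = \calV$, so Eq.~\eqref{eq:osubw:trimmed:clique} reduces the claim to showing
\[
\max_{\bm h \in \Gamma \cap \ed} \min\Bigl(h(\calV),\ \min_{\ov{\bm\sigma}\in\ov{\pi}(\calV)} \emm^{\ov{\bm\sigma}}_1\Bigr) = \tfrac{\omega}{2}+1 .
\]
I would then prove a matching lower and upper bound separately, writing $\gamma = \omega - 2 \in [0,1]$ and noting the convenient identity $\tfrac{\omega}{2}+1 = 2 + \tfrac{\gamma}{2}$.

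For the lower bound I would take the ``uniform'' polymatroid built from five independent random variables $X,Y,Z,W,L$, each of entropy $\tfrac12$. It is edge-dominated since every edge has entropy $1$, and $h(\calV) = \tfrac52 \ge 2+\tfrac{\gamma}{2}$. Since this polymatroid is additive, Proposition~\ref{prop:random-group} reduces $\emm^{\ov{\bm\sigma}}_1$ to a minimum over $3$-partitions $(\bm A,\bm B,\bm C)$ of $\mm(\bm A;\bm B;\bm C)$; a direct computation shows each such value equals $\tfrac12\bigl(5-(1-\gamma)\min(|\bm A|,|\bm B|,|\bm C|)\bigr)$, which is minimized by the balanced split $(2,2,1)$ and equals exactly $2+\tfrac{\gamma}{2}$. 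Hence the inner $\min$ equals $2+\tfrac{\gamma}{2}$, giving $\osubw(\calH) \ge \tfrac{\omega}{2}+1$.

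For the upper bound I would show that every $\bm h\in\Gamma\cap\ed$ admits an elimination of cost at most $2+\tfrac{\gamma}{2}$, splitting on whether some vertex has small singleton entropy. In Case~1, if some vertex $v$ has $h(v)\le\tfrac12$, then eliminating $v$ first and splitting the other four vertices into two pairs $\bm P_1,\bm P_2$ makes $\mm(\bm P_1;\bm P_2;v)$ (with empty group-by) a valid term of $\emm_\calH(\{v\})$; using $h(\bm P_i)\le 1$ and $h(v)\le\tfrac12$, the three terms of Eq.~\eqref{eq:mm} are bounded by $\max(2+\tfrac{\gamma}{2},\ \tfrac32+\gamma)=2+\tfrac{\gamma}{2}$. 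In Case~2, every vertex has $h(v)>\tfrac12$, which together with edge-domination forces every pairwise conditional $h(u\mid w)<\tfrac12$. Here I would use a group-by: eliminate a vertex $e$, group by a vertex $g$, and split the remaining three vertices $\{a,b,c\}$ so that $\mm(a;bc;e\mid g)$ is a valid term of $\emm_\calH(\{e\})$. Its three terms are controlled by folding a conditional into an edge ($h(a\mid g)+h(g)=h(ag)\le 1$, and likewise $h(bc\mid g)\le h(bc)\le 1$) and, for the last term, by the triangle bound $h(bcg) \le h(bg)+h(cg)-h(g) < \tfrac32$ from submodularity (equivalently Proposition~\ref{prop:agm-bound}); all three again come out at most $2+\tfrac{\gamma}{2}$.

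I expect the upper bound, and specifically the Case~2 group-by term, to be the main obstacle. It is exactly where a naive partition with empty group-by fails, since $\mm(\bm P_1;\bm P_2;e)$ only yields $\omega > 2+\tfrac{\gamma}{2}$ when $h(e)$ is large; one must instead simultaneously use the smallness of the pairwise conditionals, the edge-domination of the matrix dimensions, and the AGM-style bound $h(bcg)\le\tfrac32$ on the grouped triangle. A secondary point requiring care is verifying that the chosen $\calA,\calB,\bm G$ meet all the validity constraints of Definition~\ref{defn:emm}, so that $\mm(a;bc;e\mid g)$ genuinely belongs to $\args(\emm_\calH(\{e\}))$ and hence bounds $\emm^{\ov{\bm\sigma}}_1$ from above.
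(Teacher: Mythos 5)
Your proof is correct, and it shares the paper's skeleton: the same reduction via Proposition~\ref{prop:clique-dominated} and Eq.~\eqref{eq:osubw:trimmed:clique}, the identical lower-bound witness (five independent entropy-$\tfrac{1}{2}$ variables, analyzed through Proposition~\ref{prop:random-group}), and an upper bound built from exactly the same two matrix-multiplication configurations, namely a pair--pair split $\mm(\bm P_1;\bm P_2;v)$ with empty group-by and a single--pair split $\mm(a;bc;e\mid g)$ with one group-by vertex. Where you genuinely diverge is the case decomposition. The paper thresholds \emph{conditional} entropies: its Case~1 (some vertex $L$ with $h(X|L),h(Y|L),h(Z|L),h(W|L)\le\tfrac{1}{2}$) triggers the group-by term, and its Case~2 (every vertex has some conditional $>\tfrac{1}{2}$, hence every singleton $<\tfrac{1}{2}$) triggers the pair--pair term. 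You threshold \emph{singleton} entropies, and the roles of the two terms are swapped: one light vertex $h(v)\le\tfrac{1}{2}$ suffices for the pair--pair term, while all singletons $>\tfrac{1}{2}$ (hence all pairwise conditionals $<\tfrac{1}{2}$) triggers the group-by term. The swaps are not interchangeable for free: in your group-by case the third branch of Eq.~\eqref{eq:mm}, $\gamma h(a\mid g)+h(bc\mid g)+h(e\mid g)+h(g)=\gamma h(a\mid g)+h(bcg)+h(e\mid g)$, cannot be closed by folding conditionals into edges alone, and you correctly supply the missing ingredient $h(bcg)\le\tfrac{3}{2}$ (submodularity $h(bcg)\le h(bg)+h(cg)-h(g)$ together with $h(g)>\tfrac{1}{2}$); the paper's split avoids any such bound because its Case~1 hypothesis hands it two small conditionals onto the group-by vertex directly. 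What your organization buys is a more symmetric trigger and a weaker hypothesis in the pair--pair case; what the paper's buys is that every branch is bounded purely by edge-domination and $\le\tfrac{1}{2}$ conditionals. Your closing concern --- verifying that the chosen $(\calA,\calB,\bm G)$ satisfy Definition~\ref{defn:emm} so that both expressions genuinely lie in $\args(\emm_\calH(\cdot))$ --- is indeed the right thing to insist on, and both of your configurations pass that check.
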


\begin{proof}
     \underline{\em Direction $\osubw(\calH) \le \frac{\omega}{2}+1$.} We partition all polymatroids $\bm h \in \Gamma \cap \ed$ into the following five cases:
    \begin{itemize}[leftmargin=*]
        \item Case 1: there exists a vertex such that all its conditional entropies are smaller than $\frac{1}{2}$, say, $L$ is such a vertex with $h(X|L), h(Y|L), h(Z|L), h(W|L) \le \frac{1}{2}$. Consider a generalized elimination ordering $\ov{\bm \sigma}= (\{X\},\{YZ\},\{W\},\{L\})$.
        By choosing $\calA = \{\{X,W\}, \{Y,W\}, \{X,L\}\}$ and $\calB = \{\{X,W\}, \{Z,W\}\}$, 
        \begin{align*}    
        \emm^{\ov{\bm \sigma}}_1 \le \mm(YZ, W, X|L) & = 
        h(YZ) + \gamma \cdot \min\{h(W|L), h(X|L)\} + \max\{h(W|L), h(X|L)\} + h(L)\\
        & \le 1 +\frac{\gamma}{2} + 1 = \frac{\omega}{2}+1.
        \end{align*}     
        \item Case 2: None of the above, i.e., each vertex has at least one conditional entropy larger than $\frac{1}{2}$.  Suppose $h(Y|X) > \frac{1}{2}$ for vertex $x$. As $h(XY) \le 1$, we have $h(X) = h(XY) - h(Y|X) < 1 - \frac{1}{2} = \frac{1}{2}$. Similarly, we have $h(Y), h(Z), h(W), h(L) < \frac{1}{2}$.  Consider a generalized elimination ordering $\ov{\bm \sigma}= (\{X\},\{Y\},\{Z\},\{W\},\{L\})$. We have $\emm^{\ov{\bm \sigma}}_1  \le \mm(YZ; WL; X) \le \frac{\omega}{2}+1$. 
    \end{itemize}
    
    \underline{\em Direction $\osubw(\calH) \ge \frac{\omega}{2}+1$.} We identify a polymatroid $\bm h\in \Gamma \cap \ed$ as follows: let $X,Y,Z,W,L$ be independently random variables with $h(X) = h(Y) = h(Z) = h(W) = h(L) = \frac{1}{2}$. Then, we have $h(\calV) = \frac{5}{2}$. From Proposition~\ref{prop:random-group}, for an arbitrary generalized elimination ordering $\ov{\bm \sigma} \in \pi(\calV)$, $\emm^{\ov{\bm \sigma}}_1 = \frac{\omega}{2} + 1$.
    \end{proof}

    \begin{lemma}
    \label{lem:clique}
    For a $k$-clique hypergraph $\calH =(\calV, \calE)$ for $k \ge 6$, where $\calV = \{X_1,X_2,\cdots, X_k\}$ and $\calE = \{\{X_i, X_j\}: i,j \in [k], i \neq j\}$, $\osubw(\calH) = \frac{1}{2} \cdot \lceil \frac{k}{3}\rceil +  \frac{1}{2} \cdot \lceil \frac{k-1}{3}\rceil + \frac{1}{2} \cdot \lfloor \frac{k}{3}\rfloor \cdot (\omega-2)$.
    \end{lemma}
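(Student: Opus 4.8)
The plan is to prove both inequalities making up $\osubw(\calH)=T$, where I abbreviate
$T \defeq \tfrac{1}{2}\bigl(n_1+n_2+\gamma n_3\bigr)$ with
$n_1 \defeq \lceil k/3\rceil \ge n_2 \defeq \lceil (k-1)/3\rceil \ge n_3 \defeq \lfloor k/3\rfloor$, the sizes of the most balanced partition of the $k$ vertices into three groups (so $n_1+n_2+n_3=k$). Since $\calH$ is a clique, Proposition~\ref{prop:clique-dominated} lets me use the simplified form~\eqref{eq:osubw:trimmed:clique}, so throughout I only reason about eliminating the first group, i.e.\ about $\min\bigl(h(\calV),\ \min_{\ov{\bm\sigma}}\emm^{\ov{\bm\sigma}}_1\bigr)$. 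The structural fact I expect to drive everything, and the reason the statement needs $k\ge 6$, is that $n_3=\lfloor k/3\rfloor\ge 2$ exactly when $k\ge 6$; hence in the balanced partition \emph{every} group has size at least two, which is precisely what makes the clean entropy bounds below go through (for $k\le 5$ one group is a singleton, which is why Lemmas~\ref{lem:clique-3}--\ref{lem:clique-5} need the group-by tricks instead).

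For the upper bound $\osubw(\calH)\le T$, I fix one balanced partition $\calV=\bm A\sqcup\bm B\sqcup\bm C$ with $|\bm A|=n_1$, $|\bm B|=n_2$, $|\bm C|=n_3$, chosen independently of $\bm h$. First I exhibit a GVEO that eliminates $\bm C$ first by a single matrix multiplication with no group-by variables: taking $\calA$ to be the edges inside $\bm C$ together with the edges between $\bm C$ and $\bm A$, and $\calB$ the edges inside $\bm C$ together with the edges between $\bm C$ and $\bm B$, one checks $\calA\cup\calB=\partial_\calH(\bm C)$, that the resulting two matrix dimensions $(\cup\calA)\setminus(\cup\calB)$ and $(\cup\calB)\setminus(\cup\calA)$ equal $\bm A$ and $\bm B$, and that $(\cup\calA)\cap(\cup\calB)=\bm C$, so $\bm G=\emptyset$ is admissible in Definition~\ref{defn:emm} and yields $\emm^{\ov{\bm\sigma}}_1\le \mm(\bm A;\bm B;\bm C)$. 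Next, because each group has size $\ge 2$ and the induced subhypergraph on it is again a clique, Proposition~\ref{prop:agm-bound} applied to that subclique gives $h(\bm A)\le\rho^*(K_{n_1})=n_1/2$, and likewise $h(\bm B)\le n_2/2$, $h(\bm C)\le n_3/2$, for every edge-dominated polymatroid. Substituting these three bounds into the three terms of $\mm(\bm A;\bm B;\bm C)$ from Eq.~\eqref{eq:mm} and using $n_3\le n_2\le n_1$ with $\gamma\le 1$, each term is at most $T$ (the nontrivial comparisons reduce to $(1-\gamma)(n_3-n_2)\le 0$ and $(1-\gamma)(n_3-n_1)\le 0$). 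Hence $\mm(\bm A;\bm B;\bm C)\le T$ for all $\bm h$, and taking the maximum over $\bm h$ gives $\osubw(\calH)\le T$ (the $\min$ with $h(\calV)$ only helps).

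For the lower bound $\osubw(\calH)\ge T$, I use the modular polymatroid $h(\bm S)=|\bm S|/2$, which is a valid edge-dominated polymatroid since every edge has measure $1$. Then $h(\calV)=k/2$, and since $h(\bm A\mid \bm B)=h(\bm A)$ for disjoint sets, Proposition~\ref{prop:random-group} gives $\min_{\ov{\bm\sigma}}\emm^{\ov{\bm\sigma}}_1=\min_{(\bm A,\bm B,\bm C)}\mm(\bm A;\bm B;\bm C)$ over all three-partitions. For this $h$ one has $\mm(\bm A;\bm B;\bm C)=\tfrac12\bigl(k-(1-\gamma)\min(|\bm A|,|\bm B|,|\bm C|)\bigr)$, which is minimized by maximizing the smallest block size; over integer partitions of $k$ into three nonempty parts that maximum is exactly $\lfloor k/3\rfloor=n_3$, so $\min_{\ov{\bm\sigma}}\emm^{\ov{\bm\sigma}}_1=\tfrac12(k-(1-\gamma)n_3)=T$. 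Since $T\le k/2=h(\calV)$, the inner $\min$ equals $T$, and~\eqref{eq:osubw:trimmed:clique} yields $\osubw(\calH)\ge T$, completing the proof.

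I expect the main obstacle to be a conceptual one in the upper bound: one must bound the three $\mm$-terms separately using the per-block estimates $h(\bm A)\le n_1/2$, $h(\bm B)\le n_2/2$, $h(\bm C)\le n_3/2$, rather than only the weaker aggregate $h(\bm A)+h(\bm B)+h(\bm C)\le k/2$, since the aggregate discards which block carries the $\gamma$ coefficient and is therefore too lossy. The remaining delicate points are merely technical: verifying that the chosen $(\calA,\calB,\bm G)$ is admissible in Definition~\ref{defn:emm} (routine once $n_3\ge 2$ guarantees edges inside $\bm C$) and the bookkeeping $n_1+n_2+n_3=k$ with $n_1\ge n_2\ge n_3$ across the three residues of $k\bmod 3$; everything else is a short substitution.
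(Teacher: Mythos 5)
Your proof is correct and follows essentially the same route as the paper's: both eliminate the smallest block of a balanced tripartition via a single matrix multiplication with $\bm G=\emptyset$, bound each block by the per-block AGM bound $h(\cdot)\le|\cdot|/2$ (Proposition~\ref{prop:agm-bound}, which is where $k\ge 6$ enters) for the upper bound, and use the modular polymatroid $h(\bm S)=|\bm S|/2$ together with Proposition~\ref{prop:random-group} and Eq.~\eqref{eq:osubw:trimmed:clique} for the lower bound. If anything, your write-up is slightly more careful than the paper's, which writes $\mm(\bm Y;\bm Z;\bm X)=h(\bm Y)+h(\bm Z)+\gamma h(\bm X)$ without explicitly checking the other two terms of the max, a verification you carry out via the comparisons $(1-\gamma)(n_3-n_2)\le 0$ and $(1-\gamma)(n_3-n_1)\le 0$.
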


    \begin{proof}
    \underline{\em Direction $\osubw(\calH) \le  \frac{1}{2} \cdot \lceil \frac{k}{3}\rceil +  \frac{1}{2} \cdot \lceil \frac{k-1}{3}\rceil + \frac{1}{2} \cdot \lfloor \frac{k}{3}\rfloor \cdot (\omega-2)$}. Let $i= \lfloor \frac{k}{3}\rfloor$ and $j = \lceil \frac{k-1}{3}\rceil $. Let $\bm X = \{X_1,\cdots,X_i\}$, $\bm Y = \{X_{i+1}, \cdots, X_{i+j}\}$ and $\bm Z = \{X_{i+j+1},\cdots, X_k\}$. we consider a generalized elimination ordering $\ov{\bm \sigma}=(\bm X, \bm Y, \bm Z)$.  Implied by Proposition~\ref{prop:agm-bound}, $h(\bm X) \le \frac{1}{2} \cdot \lfloor \frac{k}{3} \rfloor$, $h(\bm Y) \le \frac{1}{2} \cdot \lceil \frac{k-1}{3} \rceil$, and $h(\bm Z) \le \frac{1}{2} \cdot \lceil \frac{k}{3} \rceil$. 
    \begin{align*}    
        \emm^{\ov{\bm \sigma}}_1 \le \mm(\bm Y; \bm Z; \bm X) =  h(\bm Y) + h(\bm Z) + \gamma h(\bm X) = \frac{1}{2} \cdot \lceil \frac{k}{3}\rceil +  \frac{1}{2} \cdot \lceil \frac{k-1}{3}\rceil + \frac{1}{2} \cdot \lfloor \frac{k}{3}\rfloor \cdot (\omega-2)
    \end{align*}
    
    \underline{\em Direction $\osubw(\calH) \ge  \frac{1}{2} \cdot \lceil \frac{k}{3}\rceil +  \frac{1}{2} \cdot \lceil \frac{k-1}{3}\rceil + \frac{1}{2} \cdot \lfloor \frac{k}{3}\rfloor \cdot (\omega-2)$}.   We identify a polymatroid $\bm h$ as follows: let $X_1,X_2,\cdots, X_k$ be independently random variables with $h(X_i) = \frac{1}{2}$ for $i \in [k]$. Then, we have $h(\calV) = \frac{k}{2}$. From Proposition~\ref{prop:random-group}, for an arbitrary generalized elimination ordering $\ov{\bm \sigma} \in \pi(\calV)$, $\emm^{\ov{\bm \sigma}}_1 = \frac{1}{2} \cdot \lceil \frac{k}{3}\rceil +  \frac{1}{2} \cdot \lceil \frac{k-1}{3}\rceil + \frac{1}{2} \cdot \lfloor \frac{k}{3}\rfloor \cdot (\omega-2)$. From (\ref{eq:osubw:trimmed:clique}), we conclude $\osubw(\calH) \ge \frac{\omega}{2} + 1$.
\end{proof}

    \subsection{Cycle Hypergraphs}

    \begin{lemma}
    \label{lem:4cycle}
        For the following hypergraph representing a $4$-cycle $\calH$, $\osubw(\calH) =
        2 -\frac{3}{2 \cdot \min\{\omega, \frac{5}{2}\} + 1}$.
        \begin{align}
        \calH = (\{X,Y,Z,W\},\quad \{\{X,Y\}, \{Y,Z\}, \{Z,W\}, \{W,X\}\})
        \label{eq:H:4cycle}
        \end{align}
        
    \end{lemma}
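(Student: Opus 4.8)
The plan is to prove $\osubw(\calH)\le\beta$ and $\osubw(\calH)\ge\beta$ separately, where I abbreviate $\beta\defeq 2-\frac{3}{2\min\{\omega,5/2\}+1}$. Two preliminary reductions organize the argument. First, $\osubw(\calH)\le\subw(\calH)=3/2$ by Proposition~\ref{prop:osubw<=subw} and Example~\ref{ex:4-cycle:computing-subw}; since $\frac{4\omega-1}{2\omega+1}=2-\frac{3}{2\omega+1}$ increases with $\omega$ and equals $3/2$ exactly at $\omega=5/2$, this already settles the upper bound for $\omega\ge 5/2$ and isolates $\beta=\frac{4\omega-1}{2\omega+1}$, for $\omega\le 5/2$, as the real target; the threshold $5/2$ in the statement is precisely this crossing point. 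Second, by Proposition~\ref{prop:osubw:trimmed} I only track eliminations whose $U_i^{\ov{\bm\sigma}}$ is not subsumed, and since in a $4$-cycle every elimination turns the remaining query into a triangle, each GVEO contributes at most two relevant steps — its first variable and one vertex of the created triangle — which keeps the min–max in~\eqref{eq:osubw:trimmed} finite and explicit.

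For the lower bound I would write down one edge-dominated polymatroid and evaluate~\eqref{eq:osubw:trimmed} on it. The conceptual point is that the extremal $\bm h$ is symmetric not under rotation of the cycle but under the reflection that fixes the two opposite edges $\{Y,Z\}$ and $\{W,X\}$ setwise, so I set $h(Y)=h(Z)=\frac{3}{2\omega+1}$ and $h(X)=h(W)=\frac{\omega+2}{2\omega+1}$, every edge to $1$, and every triple and the full set to its largest feasible value; one verifies directly that this is monotone, submodular, and edge-dominated. The binding eliminations are the two low-degree centers $Y$ and $Z$: eliminating $Y$ costs $\min\bigl(h(XYZ),\mm(X;Z;Y)\bigr)$, and with the above values $h(XYZ)=2-h(Y)=\beta$ while $\mm(X;Z;Y)=h(X)+(\omega-1)h(Y)=\beta$ (the third term $\gamma h(X)+2h(Y)$ of~\eqref{eq:mm} is $\le\beta$ exactly because $(\omega-1)(\omega-3)\le0$), so the step costs $\beta$; symmetrically for $Z$. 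Any GVEO first eliminates some vertex: if it is $Y$ or $Z$ the cost is already $\ge\beta$; if it is $X$ (or $W$) the created triangle is $\{Y,Z,W\}$ (or $\{X,Y,Z\}$), and each of its three MM contractions as well as its for-loop cost again equal $\beta$; finally the two-variable eliminations $\{Y,W\}$ and $\{X,Z\}$ are bounded below by the matching single-center step using Proposition~\ref{prop:mm:output:size} and monotonicity. Hence every GVEO costs $\ge\beta$ and $\osubw(\calH)\ge\beta$. For $\omega\ge5/2$ the same family with $h(Y)=h(Z)=\frac12$, $h(X)=h(W)=\frac34$ certifies $\osubw(\calH)\ge 3/2$.

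For the upper bound with $\omega\le 5/2$ I would show that for every $\bm h\in\Gamma\cap\ed$ one of the variable-elimination orders has cost $\le\beta$, by a case analysis on conditional entropies in the style of Lemmas~\ref{lem:clique-3}--\ref{lem:clique-5}, with threshold $\theta\defeq\beta-1=\frac{2\omega-2}{2\omega+1}\le\frac12$. Roughly: if some cycle edge $\{A,B\}$ has $h(B\mid A)\le\theta$, then eliminating the endpoint that leaves a triangle through $\{A,B\}$ has for-loop cost $1+h(B\mid A)\le\beta$, and I complete that diagonal's remaining triangle by the cheaper of a for-loop or a matrix multiplication; if instead all such conditionals exceed $\theta$, then every singleton is $<1-\theta$, so each term of~\eqref{eq:mm} is at most $\omega\cdot\max_v h(v)\le\beta$ and both steps of a diagonal order can be done by matrix multiplication. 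The constant $\beta$ is pinned most cleanly, as in the triangle example of Section~\ref{sec:overview}, by exhibiting for each branch of the distributed min–max an explicit feasible dual, i.e.\ a Shannon inequality built from submodularities that certifies the objective $\le\beta$; for $\omega\ge5/2$ the binding branch becomes the purely combinatorial one, recovering the $3/2$ bound and explaining the $\min\{\omega,5/2\}$.

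The hardest part is exactly what distinguishes the cycle from the cliques of Lemmas~\ref{lem:clique-3}--\ref{lem:clique-5}: there $U_i^{\ov{\bm\sigma}}=\calV$ for every $i$, so only the first elimination matters, whereas here every GVEO has a genuine second non-subsumed step on a triangle one of whose edges is \emph{created} and therefore not edge-dominated, so the matrix-multiplication cost and the for-loop cost on that triangle must both be controlled. Making a single case analysis cover both diagonals and both steps simultaneously, and checking in the lower bound that no two-variable or group-by elimination of Definition~\ref{defn:emm} undercuts $\beta$, is where the work concentrates; recognizing that the extremal polymatroid has the reflection (rather than rotation) symmetry, with two ``light'' adjacent vertices, is the key idea that makes the value come out to $\frac{4\omega-1}{2\omega+1}$ rather than the naive symmetric guess $\frac{2\omega}{\omega+1}$.
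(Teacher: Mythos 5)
Your lower bound is essentially the paper's own argument: the polymatroid you describe (two adjacent ``light'' vertices at $\frac{3}{2\omega+1}$, two adjacent ``heavy'' ones at $\frac{\omega+2}{2\omega+1}$, all cycle edges at $1$) is exactly the paper's extremal construction up to relabeling, and your threshold $\theta=\frac{2(\omega-1)}{2\omega+1}$ is the paper's $\Delta$. Indeed, on the point where a GVEO first eliminates a heavy vertex you are \emph{more} careful than the paper's write-up: the first step there costs only $\frac{3\omega}{2\omega+1}<\beta$, and the bound must be charged to the second (triangle) step, which you do explicitly. Two caveats: ``set every triple and the full set to its largest feasible value'' is not a construction --- jointly maximal choices need not form a polymatroid, so you should realize the values via independent random variables as the paper does; and your dispatch of multi-variable first steps is incomplete: you only mention the diagonals, and for the heavy adjacent pair $\{X,W\}$ Proposition~\ref{prop:mm:output:size} gives only $\frac{3\omega}{2\omega+1}<\beta$, so one must evaluate $\mm(Y;Z;XW)=\frac{5\omega-2}{2\omega+1}\ge\beta$ directly from Eq.~\eqref{eq:mm}.

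The genuine gap is in the upper bound for $\omega\le\frac52$. Your two-case split (some edge conditional $\le\theta$ versus all $>\theta$) merges the paper's Cases 2--4 into one case that you close with ``the cheaper of a for-loop or a matrix multiplication,'' and that inference does not follow from the bounds you have. Concretely, consider the configuration of the paper's Case 4: two \emph{adjacent} vertices $Z,W$ with all neighboring conditionals $\le\theta$, while $X,Y$ each have a large one, so $h(X),h(Y)\le1-\theta$ but $h(Z),h(W)$ may be close to $1$. The only orders whose triangle step is controlled by $1+\theta$ are those of the form $(\{Y\},\{W\},\dots)$ or $(\{X\},\{Z\},\dots)$, and for the \emph{remaining} step --- say eliminating $Y$ --- your available inequalities give only $h(XYZ)\le h(X)+h(YZ)\le2-\theta>\beta$ for the for-loop and, for the MM, a dominant term $h(X)+h(Z)+\gamma h(Y)\le(\omega-1)(1-\theta)+h(Z)$, which approaches $\frac{5\omega-2}{2\omega+1}>\beta$ when $h(Z)\to1$. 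So neither alternative is certified to be $\le\beta$. The paper closes exactly this hole with a further sub-split that you omit: either $h(Y|Z)\le\theta/2$ \emph{and} $h(X|W)\le\theta/2$, in which case $h(XYZW)\le h(X|W)+h(Y|Z)+h(ZW)\le1+\theta$ and every bag is subsumed; or, wlog, $h(Y|Z)>\theta/2$, which forces $h(Z)\le1-\theta/2$ and is precisely what makes $\mm(X;Z;Y)\le(\omega-1)(1-\theta)+(1-\theta/2)=\beta$. This refined bound on $h(Z)$ (or the global bound on $h(\calV)$) is the missing idea; without it your Case B does not go through, and it is exactly the portion you deferred as ``where the work concentrates.'' The duality/Shannon-inequality route you mention as an alternative is plausible in principle but is likewise not carried out.
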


    \begin{proof}                          \underline{\em Direction $\osubw(\calH) \le 2 -\frac{3}{2 \cdot \min\{\omega, \frac{5}{2}\} + 1}$.} 
    For simplicity, for vertex $X$, we denote $h(Y|X)$, $h(W|X)$ as the {\em neighboring conditional entropy} of $X$. Similar denotations apply to $Y,Z,W$. We partition all polymatroids $\bm h \in \Gamma \cap \ed$ into the following four cases:
    \begin{itemize}[leftmargin=*]
        \item Case 1: each vertex has some neighboring conditional entropy larger than $\Delta$. Wlog, assume vertex $X$ has $h(Y|X) > \Delta$. As $h(XY) \le 1$, we have $h(X) = h(XY) - h(Y|X) \le 1 - \Delta$. Similarly, we can show $h(Y), h(Z), h(W) < 1- \Delta$. Consider a generalized elimination ordering $\ov{\bm \sigma} = (\{X\},\{Z\}, \{Y\}, \{W\})$. By choosing $\calA = \{\{X,Y\}\}$ and $\calB= \{\{X,W\}\}$, we have $\emm^{\ov{\bm \sigma}}_1 \le \mm(Y;W;X) \le \omega(1-\Delta)$. Meanwhile, $h\left(U^{\ov{\bm \sigma}}_1\right) = h(XYW) \le h(XY)+ h(W) \le 2 - \Delta$. By choosing $\calA = \{\{Z,Y\}\}$ and $\calB= \{\{Z,W\}\}$, we have $\emm^{\ov{\bm \sigma}}_2 \le \mm(Y;W;Z) \le \omega(1-\Delta)$.
        Meanwhile, $h\left(U^{\ov{\bm \sigma}}_2\right) = h(YZW) \le h(YZ)+ h(W) \le 2 - \Delta$.  
        As $U^{\ov{\bm \sigma}}_3, U^{\ov{\bm \sigma}}_4\subseteq U^{\ov{\bm \sigma}}_1$, we have $\displaystyle{\max_{i \in [|\ov{\bm \sigma}|]} \min\left(h(U^{\ov{\bm \sigma}}_i) ,\emm_i^{\ov{\bm \sigma}}\right) \le \min\left\{\omega(1-\Delta), 2 - \Delta\right\}}$.
        \item Case 2: one vertex has all neighboring conditional entropies smaller than $\Delta$, and the other three vertices have some neighboring conditional entropies larger than $\Delta$. Wlog, assume $Z$ is the vertex with $h(Y|Z), h(W|Z) \le \Delta$.  Similar to Case 1, we can have $h(X), h(Y), h(W) < 1- \Delta$. Consider a generalized elimination ordering $\ov{\bm \sigma} = (\{X\},\{Z\},\{Y\},\{W\})$. By choosing $\calA = \{\{X,Y\}\}$ and $\calB= \{\{X,W\}\}$, we have $\emm^{\ov{\bm \sigma}}_1 \le \mm(Y;W;X) \le \omega(1-\Delta)$. Meanwhile, $h\left(U^{\ov{\bm \sigma}}_1\right) = h(XYW) \le h(XY)+ h(W) \le 2 - \Delta$. We have $h(U^{\ov{\bm \sigma}}_2) = h(YZW) \le h(Y|ZW) + h(ZW) \le h(Y|Z) + h(ZW) \le 1+ \Delta$. As $U^{\ov{\bm \sigma}}_3, U^{\ov{\bm \sigma}}_4\subseteq U^{\ov{\bm \sigma}}_1$, we have:
        $$\max_{i \in [|\ov{\bm \sigma}|]} \min\left(h(U^{\ov{\bm \sigma}}_i) ,\emm_i^{\ov{\bm \sigma}}\right) \le \max\left\{\min\left\{\omega(1-\Delta), 2 - \Delta\right\}, 1+ \Delta\right\}.$$
        \item Case 3: two non-tangent vertices have all their neighboring conditional entropies smaller than $\Delta$. Suppose $X, Z$ are two non-tangent vertices with all their neighboring conditional entropies smaller than $\Delta$. Consider a generalized elimination ordering $\ov{\bm \sigma} = (\{X\},\{Z\},\{Y\},\{W\})$. We have $h(U^{\ov{\bm \sigma}}_1) = h(XYW) \le h(W|XY) + h(XY) \le h(W|X) + h(XY) \le 1 + \Delta$. Similarly, $h(U^{\ov{\bm \sigma}}_2) = h(ZYW) \le h(W|ZY) + h(ZY) \le h(W|Z) + h(ZY) \le 1 + \Delta$. As $U^{\ov{\bm \sigma}}_3, U^{\ov{\bm \sigma}}_4\subseteq U^{\ov{\bm \sigma}}_1$,  we have $\displaystyle{\max_{i \in [|\ov{\bm \sigma}|]} \min\left(h(U^{\ov{\bm \sigma}}_i) ,\emm_i^{\ov{\bm \sigma}}\right) \le 1+ \Delta}$.
        \item Case 4: two tangent vertices have all their neighboring conditional entropies smaller than $\Delta$, and the other two have some neighboring conditional entropy larger than $\Delta$. Wlog, assume $Z,W$ are two tangent vertices with all their neighboring conditional entropies smaller than $\Delta$. Similar as above, we have $h(X), h(Y) \le  1-\Delta$. 
            We also further distinguish two more cases:
            \begin{itemize}[leftmargin=*]
                \item If $h(Y|Z) \le \frac{\Delta}{2}$ and $h(X|W) \le \frac{\Delta}{2}$, we have $h(XYZW) \le h(X|YZW) + h(Y|ZW) + h(ZW) \le  h(X|W) + h(Y|Z) + h(ZW) \le 1 + \Delta$.
                \item Otherwise, either $h(Y|Z) > \frac{\Delta}{2}$ or $h(X|W) > \frac{\Delta}{2}$. Wlog, assume $h(Y|Z) > \frac{\Delta}{2}$. Similar as above, we have $h(Z)\ge 1 -\frac{\Delta}{2}$. Consider a generalized elimination ordering $\ov{\bm \sigma}= (\{Y\},\{W\},\{Z\},\{X\})$. By choosing $\calA = \{\{X,Y\}\}$ and $\calB = \{\{Y,Z\}\}$, we have $\emm^{\ov{\bm \sigma}}_1 \le \mm(X;Z;Y|\emptyset) = (\omega-1)(1-\Delta) + (1-\frac{\Delta}{2})$ and $h(U^{\ov{\bm \sigma}}_1) = h(XYZ)\le h(X)+ h(YZ) \le 2-\Delta$. Moreover, $h(U^{\ov{\bm \sigma}}_2) = h(ZWX) \le h(Z|WX)+ h(WX) \le h(Z|W) + h(WX) \le 1 + \Delta$. As $U^{\ov{\bm \sigma}}_3, U^{\ov{\bm \sigma}}_4\subseteq U^{\ov{\bm \sigma}}_1$, we have:
                $$\max_{i \in [|\ov{\bm \sigma}|]} \min\left(h(U^{\ov{\bm \sigma}}_i) ,\emm_i^{\ov{\bm \sigma}}\right) \le \max\left\{1+\Delta, \min\left\{ (\omega-1)(1-\Delta) + \left(1-\frac{\Delta}{2}\right), 2-\Delta\right\}\right\}.$$
            \end{itemize}
        \end{itemize}
    Putting all cases together, we obtain the upper bound: 
    \[\max\left\{1 + \Delta, \min\left\{(\omega-1)(1-\Delta) + 1-\frac{\Delta}{2}, 2-\Delta\right\}\right\}\] If $\frac{5}{2} \le \omega \le 3$, we set $\Delta = \frac{1}{2}$ to obtain the upper bound as $\frac{3}{2}$. If $2\le \omega \le \frac{5}{2}$, we set $\Delta = \frac{2(\omega-1)}{2\omega+1}$ to obtain the upper bound as $\frac{4\omega-1}{2\omega+1}$. Putting two cases together, we obtain the desired upper bound.

    \underline{\em Direction $\osubw(\calH) \ge 2 -\frac{3}{2 \cdot \min\{\omega, \frac{5}{2}\} + 1}$.} Correspondingly, we identify a polymatroid $\bm h \in \Gamma \cap \ed$ by distinguishing the following two cases:
   
    {\bf Case 1: $\frac{5}{2} \le \omega \le 3$.} Let $a,b,c,d,e$ be independently random variables with $h(a)=h(b) =h(c) =h(d) =\frac{1}{4}$ and $h(e) = \frac{1}{2}$. Let $h(X) = (ab)$, $h(Y) = (cd)$, $h(Z) = (de)$ and $h(W) =(ae)$. We have
                \begin{itemize}[leftmargin=*]
                \item $h(X) = h(Y) = \frac{1}{2}$, $h(Z) = h(W) = \frac{3}{4}$;
                \item $h(XY) = h(YZ) = h(ZW) = h(WX) = 1$ and $h(XZ) = h(YW) = \frac{5}{4}$;
                \item $h(XZW) =h(YZW) = \frac{5}{4}$ and $h(XYZ) = h(XYW) = \frac{3}{2}$;
                \item $h(XYZW) = \frac{3}{2}$;
            \end{itemize}
            \noindent Consider any generalized elimination ordering $\ov{\bm \sigma} \in \pi(\calV)$. We distinguish the following cases:
            \begin{itemize}[leftmargin=*]
                \item $\ov{\bm \sigma}[1] = \{X\}$. $h(U^{\ov{\bm \sigma}}_1) = h(XYW)=\frac{3}{2}$ and $\emm^{\ov{\bm \sigma}}_1 = \mm(Y,W,X) = \frac{\omega}{2} + \frac{1}{4} \ge \frac{3}{2}$ when $\omega \ge \frac{5}{2}$. The case with $\ov{\bm \sigma}[1] = \{W\}$ is the same.
                \item $\ov{\bm \sigma}[1] = \{W\}$. $h(U^{\ov{\bm \sigma}}_1) = h(XZW)=\frac{5}{4}$ and $\emm^{\ov{\bm \sigma}}_1 = \mm(X,Z,W) = \frac{\omega}{2} + \frac{1}{4} \ge \frac{3}{2}$ when $\omega \ge \frac{5}{2}$. 
                The case with $\ov{\bm \sigma}[1] = \{Z\}$ is the same.
                \item $|\bm{\bm \sigma}[1]|=2$. $h(U^{\ov{\bm \sigma}}_1) = h(\calV) = \frac{3}{2}$ and $\emm^{\ov{\bm \sigma}}_1 \ge \mm(X,W,Z) = \frac{\omega}{2} + \frac{1}{4} \ge \frac{3}{2}$.
            \end{itemize}
            \ \
           \begin{figure}[t]
                \centering                
                \includegraphics[scale=0.8]{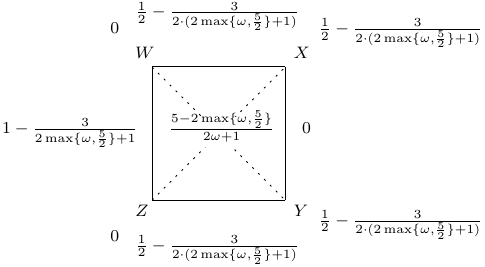}
                \caption{Diagram of the polymatroid for cycle-4 hypergraph.}
        \end{figure}
        
        {\bf Case 2:} $\omega < \frac{5}{2}$. Let $a,b,c,d,e,f$ be independently random variables with $h(a) = \frac{2(\omega-1)}{2\omega+1}$, $h(b)=h(c) = h(d) = h(e) =\frac{\omega-1}{2\omega+1}$ and $h(f) = \frac{5-2\omega}{2\omega+1}$. Let $X=(bcf), Y=(def), Z =(aef), W=(abf)$. We have
        \begin{itemize}[leftmargin=*]
            \item $h(W)= h(Z) = \frac{\omega+2}{2\omega+1}$ and $h(X) = h(Y) = \frac{3}{2\omega+1}$;
            \item $h(WX) = h(XY) = h(YZ) = h(ZW) = 1$, $h(WY) = h(XZ) = \frac{3\omega}{2\omega+1}$;
            \item $h(WXY) = h(XYZ)= \frac{4\omega-1}{2\omega+1}$, $h(XZW) = h(YZW) = \frac{3\omega}{2\omega+1} $;
            \item $h(XYZW) = \frac{4\omega-1}{2\omega+1}$;
        \end{itemize}
       Consider any generalized elimination ordering $\ov{\bm \sigma} \in \pi(\calV)$. We distinguish the following cases:
       \begin{itemize}[leftmargin=*]
            \item $\ov{\bm \sigma}[1] = \{X\}$. $h(U^{\ov{\bm \sigma}}_1) = h(XYW)=\frac{4\omega-1}{2\omega+1}$ and $\emm^{\ov{\bm \sigma}}_1 = \mm(Y,W,X) = \frac{\omega+2}{2\omega+1} + \frac{3(\omega-1)}{2\omega+1} = \frac{4\omega-1}{2\omega+1}$. The case with $\ov{\bm \sigma}[1] = \{Y\}$ is the same.
            \item $\ov{\bm \sigma}[1] = \{W\}$. $h(U^{\ov{\bm \sigma}}_1) = h(XZW)=\frac{3\omega}{2\omega+1}$ and $\emm^{\ov{\bm \sigma}}_1 = \mm(X,Z,W) = \frac{2(\omega+2)}{2\omega+1} + \frac{3(\omega-2)}{2\omega+1} = \frac{5\omega-2}{2\omega+1}> \frac{4\omega-1}{2\omega+1}$. The case with $\ov{\bm \sigma}[1] = \{Z\}$ is the same.
            \item $|\ov{\bm \sigma}[1]| = 2$. $h(U^{\ov{\bm \sigma}}_1) = h(\calV)=\frac{4\omega-1}{2\omega+1}$ and $\emm^{\ov{\bm \sigma}}_1 \ge \mm(Y,W,X) = \frac{4\omega-1}{2\omega+1}$.
       \end{itemize}
    \end{proof}

    \cite{dalirrooyfard2019graph} defines an exponent $\rectC_k$ for detecting a $k$-cycle in a graph, and the definition is based on {\em rectangular} matrix multiplication. We define below an upper bound,
    $\squareC_k$, on $\rectC_k$ that uses only {\em square} matrix multiplication and show that
    $\osubw(\calH) \le \squareC_k$ for any $k$-cycle graph $\calH$.
    In particular, $\rectC_k \leq \squareC_k$ and this becomes an equality when 
    $\omega = 2$.
    Moreover, \cite{dalirrooyfard2019graph} shows that $\rectC_k = \squareC_k$ when $k$ is odd as well as $k=4$ or $6$. More details can be found in \cite{dalirrooyfard2019graph}.

    First, we recall the definition of $\rectC_k$ from~\cite{dalirrooyfard2019graph}.
    Recall from Table~\ref{tab:intro:comparison} that $\rectOmega(a,b,c)$ is the smallest exponent for multiplying two rectangular matrices of sizes $n^a \cdot n^b$ and $n^b \cdot n^c$ within $O\left(n^{\rectOmega(a,b,c)}\right)$ time,
    whereas $\squareOmega(a,b,c)$ is the smallest upper bound on $\rectOmega(a,b,c)$
    that is obtained through {\em square} matrix multiplication, i.e.~ $\squareOmega(a, b, c) \defeq \max\{a + b+(\omega-2)c, a+(\omega-2)b + c, (\omega-2)a + b + c\}$.
    In particular, $\rectOmega(a,b,c) \leq \squareOmega(a, b, c)$, and this becomes an equality when $\omega = 2$ or when $a=b=c$.
    Given a graph of size $N$, let $\mathbf{D} = \{0,\log_N 2, 2\log_N 2, \cdots, 1\}^{2k}$.
    For each vector $\vec{d}$, \cite{dalirrooyfard2019graph} defines a function $\rectP^{\vec{d}}_{i,j}$ (for $i < j$) in a recursive way as follows:
    \begin{equation}
        \label{eq:rect-ck-pij}
        \rectP^{\vec{d}}_{i,j} = \min\left\{\rectP^{\vec{d}}_{i,j-1} + d^+_{j-1}, \rectP^{\vec{d}}_{i+1,j} + d^{-}_{i+1}, \min_{i<r < j}\left\{\rectP^{\vec{d}}_{i,r}, \rectP^{\vec{d}}_{r,j}, \rectOmega(1-d_i, 1-d_r,  1-d_j)\right\}\right\}
    \end{equation}
    with $\rectP^{\vec{d}}_{i,i+1} = 1$. 
    Finally, \cite{dalirrooyfard2019graph} defines $\rectC_k$ as follow:
    \begin{equation}
        \label{eq:rect-ck}
        \rectC_k = \max_{\vec{d} =\left(d^-_1, d^+_1, d^-_2, d^+_2, \cdots, d^-_k, d^+_k\right) \in \mathbf{D}}\min\left\{\min_{i \in [k]} 2 - d_i, \min_{i,j \in [k]: i<j} \max\left\{\rectP^{\vec{d}}_{i,j}, \rectP^{\vec{d}}_{j,i}\right\}\right\}
    \end{equation}

    In contrast, by replacing $\rectOmega(1-d_i, 1-d_r,  1-d_j)$ with $\squareOmega(1-d_i, 1-d_r,  1-d_j)$ in Eq.~\eqref{eq:rect-ck-pij}, we obtain our variants $P^{\vec{d}}_{i,j} $ and $\squareC_k$ defined below (with $P^{\vec{d}}_{i,i+1} = 1$):

    \begin{equation}
        \label{eq:square-ck-pij}
        P^{\vec{d}}_{i,j} = \min\left\{P^{\vec{d}}_{i,j-1} + d^+_{j-1}, P^{\vec{d}}_{i+1,j} + d^{-}_{i+1}, \min_{i<r < j}\left\{P^{\vec{d}}_{i,r}, P^{\vec{d}}_{r,j}, \squareOmega(1-d_i, 1-d_r,  1-d_j)\right\}\right\}
    \end{equation}
    \begin{equation}
        \label{eq:square-ck}
        \squareC_k = \max_{\vec{d} =\left(d^-_1, d^+_1, d^-_2, d^+_2, \cdots, d^-_k, d^+_k\right) \in \mathbf{D}}\min\left\{\min_{i \in [k]} 2 - d_i, \min_{i,j \in [k]: i<j} \max\left\{P^{\vec{d}}_{i,j}, P^{\vec{d}}_{j,i}\right\}\right\}
    \end{equation}
    
    \begin{lemma}
    \label{lem:cycle}
        For the following hypergraph representing a $k$-cycle:
        \begin{align}
        \calH = (\{X_1,X_2,\cdots,X_k\},\quad \{\{X_1,X_2\}, \{X_2,X_3\}, \cdots, \{X_{k-1},X_k\}, \{X_k,X_1\}\})
        \label{eq:H:cycle}
        \end{align}
        $\osubw(\calH) \le \squareC_k$ as defined in Eq.~\eqref{eq:square-ck}. 
    \end{lemma}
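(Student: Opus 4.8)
The plan is to prove only the upper bound, i.e.\ to exhibit, for every edge-dominated polymatroid, a good generalized variable elimination order. Fix an arbitrary $\bm h \in \Gamma \cap \ed$ and write $\Phi(\bm h) \defeq \min_{\ov{\bm\sigma}\in\ov{\pi}(\calV)}\max_i \min(h(U^{\ov{\bm\sigma}}_i),\emm^{\ov{\bm\sigma}}_i)$ for the inner quantity, so that $\osubw(\calH)=\max_{\bm h\in\Gamma\cap\ed}\Phi(\bm h)$. I would read off a degree vector $\vec{d}=\vec{d}(\bm h)$ from $\bm h$ by setting $d^+_i \defeq h(X_{i+1}\mid X_i)$, $d^-_i \defeq h(X_{i-1}\mid X_i)$ (indices mod $k$), and $d_i \defeq \max(d^+_i,d^-_i)$, and then show $\Phi(\bm h)\le \min\{\min_i(2-d_i),\ \min_{i<j}\max\{P^{\vec{d}}_{i,j},P^{\vec{d}}_{j,i}\}\}$. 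Since the right-hand side is the value of the expression defining $\squareC_k$ in Eq.~\eqref{eq:square-ck} at the single configuration $\vec{d}(\bm h)$, it is at most $\squareC_k=\max_{\vec{d}}(\cdots)$, and maximizing over $\bm h$ gives the claim. Edge-domination yields the two facts that drive everything: $h(X_i)\le 1-d^{\pm}_i$, hence $h(X_i)\le 1-d_i$. These make the matrix-multiplication terms line up exactly, because by Eq.~\eqref{eq:mm} we have $\mm(X_i;X_j;X_r)=\squareOmega(h(X_i),h(X_j),h(X_r))$, and by the symmetry and coordinatewise monotonicity of $\squareOmega$ together with $h(X_\bullet)\le 1-d_\bullet$ this is $\le \squareOmega(1-d_i,1-d_r,1-d_j)$, the term appearing in Eq.~\eqref{eq:square-ck-pij}.

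The combinatorial core is an induction that matches the dynamic program for $P^{\vec{d}}_{i,j}$ to partial GVEOs. For each arc $[i,j]$ I would prove, by induction on $j-i$, that the interior vertices $X_{i+1},\ldots,X_{j-1}$ can be eliminated by a GVEO all of whose step costs are $\le P^{\vec{d}}_{i,j}$, producing a relation on $\{X_i,X_j\}$; the strengthened hypothesis also records the output-size bound $h(X_iX_j)\le P^{\vec{d}}_{i,j}$, which follows for free from Proposition~\ref{prop:mm:output:size} (for an MM step) and from monotonicity, Eq.~\eqref{eq:monotone} (for a for-loop step). The base case $P^{\vec{d}}_{i,i+1}=1$ is the input edge, of size $\le 1$ by edge-domination. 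The three cases of the recursion translate directly: the term $P^{\vec{d}}_{i,j-1}+d^+_{j-1}$ corresponds to eliminating $X_{j-1}$ by a for-loop, of cost $h(X_iX_{j-1}X_j)\le h(X_iX_{j-1})+h(X_j\mid X_{j-1})\le P^{\vec{d}}_{i,j-1}+d^+_{j-1}$ (using submodularity for $h(X_j\mid X_iX_{j-1})\le h(X_j\mid X_{j-1})$ and the inductive size bound on $[i,j-1]$); the term $P^{\vec{d}}_{i+1,j}+d^-_{i+1}$ is symmetric; and $\squareOmega(1-d_i,1-d_r,1-d_j)$ corresponds to splitting at $X_r$ and eliminating it by matrix multiplication, realized in the form of Definition~\ref{defn:emm} with $\calA=\{\{X_i,X_r\}\}$, $\calB=\{\{X_r,X_j\}\}$, $\bm G=\emptyset$, at cost $\mm(X_i;X_j;X_r)$. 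Closing the cycle by choosing a split $(i,j)$, computing the two complementary arcs $[i,j]$ and $[j,i]$, and intersecting them on $\{X_i,X_j\}$ (whose cost $h(X_iX_j)\le\min\{P^{\vec{d}}_{i,j},P^{\vec{d}}_{j,i}\}$ by the size bound) then gives $\Phi(\bm h)\le\min_{i<j}\max\{P^{\vec{d}}_{i,j},P^{\vec{d}}_{j,i}\}$.

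It remains to account for the term $\min_i(2-d_i)$ so the bound is against the full minimum rather than only its arc-split argument. For the vertex $X_{i^*}$ attaining $\max_i d_i$ I would exhibit a for-loop–based GVEO whose maximum step cost is $\le 2-d_{i^*}$, exploiting $h(X_{i^*})\le 1-d_{i^*}$ exactly as the size-type bounds of the form $2-\Delta$ are obtained in the proof of Lemma~\ref{lem:4cycle}; combined with the arc bound this yields $\Phi(\bm h)\le\min\{\min_i(2-d_i),\ \min_{i<j}\max\{P^{\vec{d}}_{i,j},P^{\vec{d}}_{j,i}\}\}\le\squareC_k$.

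The main obstacle is global consistency of the GVEO: the arc induction selects, at each sub-arc, whichever of the three recursion cases is minimal, and these nested local choices must assemble into one legal ordered partition of $\calV$ realizing both complementary arcs and the closing join simultaneously. I expect this to work because the recursion ranges over nested sub-arcs whose interiors are disjoint, but verifying it cleanly (and that each MM step is a valid instance of $\emm$) is the delicate part. A secondary technical point is the discretization of $\mathbf{D}$ in steps of $\log_N 2$: one must round $\vec{d}(\bm h)$ onto the grid and argue the perturbation to the exponent is $o(1)$ as $N\to\infty$, and also pin down the intended reading of the innermost $\min_{i<r<j}$ in Eq.~\eqref{eq:square-ck-pij} as a minimum over $r$ of the maximum of its three arguments.
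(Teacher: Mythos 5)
Your proposal is correct and follows essentially the same route as the paper's proof: the paper likewise associates a (grid-discretized) degree vector with each edge-dominated polymatroid, uses one-vertex-at-a-time path eliminations to obtain the $2-d_i$ terms, and builds arc orderings $\ov{\bm\sigma}^{\vec{d}}_{i,j}$ recursively to mirror the dynamic program for $P^{\vec{d}}_{i,j}$, realizing the split-at-$X_r$ case by the matrix multiplication $\mm(X_i;X_j;X_r)$ with $\calA=\{\{X_i,X_r\}\}$, $\calB=\{\{X_r,X_j\}\}$, and closing the cycle via $\ov{\bm\sigma}^{\vec{d}}=\left(\ov{\bm\sigma}^{\vec{d}}_{i,j},\ov{\bm\sigma}^{\vec{d}}_{j,i},X_i,X_j\right)$. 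The two points you flag as delicate resolve exactly as you expect: the nested sub-arcs have disjoint interiors, so the local choices concatenate into a single legal GVEO (this is precisely how the paper's induction is organized), the innermost term of Eq.~\eqref{eq:square-ck-pij} is indeed intended as a minimum over $r$ of the maximum of its three arguments, and your explicit output-size bound $h(X_iX_j)\le P^{\vec{d}}_{i,j}$ is the same fact the paper uses implicitly in its inductive step.
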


    \begin{proof}[Proof of Lemma~\ref{lem:cycle}]
    We partition all polymatroids $\bm h$ into the following $(1+ \log N)^{|\calV|}$ cases. For simplicity, let $X_1 = X_{k+1}$. Each polymatroid $\bm h \in \Gamma \cap \ed$ is associated with a $k$-tuple vector $\vec{d} = (d^-_1, d^+_1, d^-_2, d^+_2, \cdots, d^-_k, d^+_k) \in \mathbf{D}$ such that $d^{-}_i \le h(X_{i-1}|X_i)< d^-_{i} + \log_N 2$ and $d^{+}_i \le h(X_{i+1}|X_i) < d^+_{i}+ \log_N 2$. We have the following two observations:
    \begin{itemize}[leftmargin=*]
        \item For each $i \in [k]$, consider a generalized elimination ordering $\ov{\bm \sigma} =(\{X_{i-1}\}, \cdots, \{X_1\},$ $\{X_k\}, \cdots, \{X_{i}\}).$ As $h(X_{i-1}|X_i) \ge d^-_i$ and $h(X_{i-1}X_i) \le 1$, we have $h(X_i)= h(X_{i-1}X_i) - h(X_{i-1}|X_i) < 1- d^{-}_i$. Similarly, we have $h(X_i) < 1 - d^{+}_i$. For simplicity, we define $d_i = \max\{d^{-}_i, d^{+}_i\}$. For each $i \in [k]$, we have $h\left(X_iX_{j}X_{j+1}\right) \le h(X_i) + h(X_jX_{j+1}) < \min\{1 - d^-_{i}, 1 - d^+_{i}\} + 1 = 2 - d_i$. So, $h\left(U^{\ov{\bm \sigma}}_{j}\right) \le 2 - d_i$ for any $j \in [|\ov{\bm \sigma}|]$, i.e., $\displaystyle{\max_{j \in [|\ov{\bm \sigma}|]} h\left(U^{\ov{\bm \sigma}}_{j}\right) \le 2 - d_i}$.
        \item For each pair of distinct values $i,j \in [k]$ (wlog assume $i < j$), we define $\ov{\bm \sigma}^{\vec{d}}_{i,j}$ as an ordering of $\{X_{i+1},X_{i+2},\cdots,X_{j-1}\}$ as follows:
        \[\ov{\bm \sigma}^{\vec{d}}_{i,j} = \left\{
        \begin{array}{rl}
        \left(\ov{\bm \sigma}^{\vec{d}}_{i,j-1}, \ X_{j-1}\right) & \text{if } P^{\vec{d}}_{i,j} = P^{\vec{d}}_{i,j-1} + d^+_{j-1},\\
        \left(\ov{\bm \sigma}^{\vec{d}}_{i+1,j}, \ X_{i+1}\right) & \text{if } P^{\vec{d}}_{i,j} = P^{\vec{d}}_{i+1,j} + d^-_{i+1},\\
        \left(\ov{\bm \sigma}^{\vec{d}}_{i,r}, \ \ov{\bm \sigma}^{\vec{d}}_{r,j}, \ X_{r}\right) & \text{if } P^{\vec{d}}_{i,j} = \min\left\{P^{\vec{d}}_{i,r}, P^{\vec{d}}f_{r,j}, \squareOmega(1-d_i, 1-d_r,  1-d_j)\right\}, 
        \end{array} \right. \]
        and $\ov{\bm \sigma}^{\vec{d}}_{i,i+1} = \emptyset$. At last, we construct a generalized elimination order $\ov{\bm \sigma}^{\vec{d}} = \left(\ov{\bm \sigma}^{\vec{d}}_{i,j}, \ov{\bm \sigma}^{\vec{d}}_{j,i}, X_i, X_j\right)$. It can be proved that $\displaystyle{\max_{\ell \in \left[\left|\ov{\bm \sigma}_{i,j}^{\vec{d}}\right|\right]} \min\left\{h\left({U^{\ov{\bm \sigma}_{i,j}^{\vec{d}}}_\ell}\right), \emm^{\ov{\bm \sigma}_{i,j}^{\vec{d}}}_\ell\right\} \le P^{\vec{d}}_{i,j}}$. In the base case when $j = i+1$, $h(\emptyset) = 0 \le 1$ trivially holds. In general, we distinguish three cases:
        \begin{itemize}[leftmargin=*]
            \item If $\ov{\bm \sigma}^{\vec{d}}_{i,j} = \left(\ov{\bm \sigma}^{\vec{d}}_{i,j-1}, \ X_{j-1}\right)$: by hypothesis, assume this claim holds for $\ov{\bm \sigma}^{\vec{d}}_{i,{j-1}}$, i.e., \[\displaystyle{\max_{\ell \in \left[\left|\ov{\bm \sigma}_{i,j-1}^{\vec{d}}\right|\right]} \min\left\{h\left({U^{\ov{\bm \sigma}_{i,j-1}^{\vec{d}}}_\ell}\right), \emm^{\ov{\bm \sigma}_{i,j-1}^{\vec{d}}}_\ell\right\} \le P^{\vec{d}}_{i,j-1}}\]
            For any $\ell \in \left[\left|\ov{\bm \sigma}_{i,j}^{\vec{d}}\right|-1\right]$, $U^{\ov{\bm \sigma}_{i,j}^{\vec{d}}}_{\ell} = U^{\ov{\bm \sigma}_{i,j-1}^{\vec{d}}}_{\ell}$, and $\emm^{\ov{\bm \sigma}_{i,j-1}^{\vec{d}}}_\ell = \emm^{\ov{\bm \sigma}_{i,j}^{\vec{d}}}_\ell$. 
            For $\ell = \left|\ov{\bm \sigma}_{i,j}^{\vec{d}}\right|$, $h(U_\ell^{\ov{\bm \sigma}_{i,j}^{\vec{d}}}) = h(X_iX_{j-1}X_j) \le h(X_iX_{j-1}) + h(X_j|X_{j_1}) \le P^{\vec{d}}_{i,j-1} + d^+_{j-1}$. 
            Together, we have:
            \[\displaystyle{\max_{\ell \in \left[\left|\ov{\bm \sigma}_{i,j}^{\vec{d}}\right|\right]} \min\left\{h\left({U^{\ov{\bm \sigma}_{i,j}^{\vec{d}}}_\ell}\right), \emm^{\ov{\bm \sigma}_{i,j}^{\vec{d}}}_\ell\right\} \le P^{\vec{d}}_{i,j-1}} + d^+_{j-1}\]
            \item If $\ov{\bm \sigma}^{\vec{d}}_{i,j} = \left(\ov{\bm \sigma}^{\vec{d}}_{i+1,j}, \ X_{j-1}\right)$: the case is similar as above.
            \item If $\ov{\bm \sigma}^{\vec{d}}_{i,j} =\left(\ov{\bm \sigma}^{\vec{d}}_{i,r}, \ \ov{\bm \sigma}^{\vec{d}}_{r,j}, \ X_{r}\right)$: by hypothesis, assume this claim holds for $\ov{\bm \sigma}^{\vec{d}}_{i,r}$ and $\ov{\bm \sigma}^{\vec{d}}_{r,j}$, 
            i.e., \[\displaystyle{\max_{\ell \in \left[\left|\ov{\bm \sigma}_{i,r}^{\vec{d}}\right|\right]} \min\left\{h\left({U^{\ov{\bm \sigma}_{i,r}^{\vec{d}}}_\ell}\right), \emm^{\ov{\bm \sigma}_{i,r}^{\vec{d}}}_\ell\right\} \le P^{\vec{d}}_{i,r}}\]
            \[\displaystyle{\max_{\ell \in \left[\left|\ov{\bm \sigma}_{r,j}^{\vec{d}}\right|\right]} \min\left\{h\left({U^{\ov{\bm \sigma}_{r,j}^{\vec{d}}}_\ell}\right), \emm^{\ov{\bm \sigma}_{r,j}^{\vec{d}}}_\ell\right\} \le P^{\vec{d}}_{r,j}}\]
            For any $\ell \in \left[\left|\ov{\bm \sigma}_{i,r}^{\vec{d}}\right|\right]$, $U^{\ov{\bm \sigma}_{i,r}^{\vec{d}}}_{\ell} = U^{\ov{\bm \sigma}_{i,j}^{\vec{d}}}_{\ell}$, and $\emm^{\ov{\bm \sigma}_{i,r}^{\vec{d}}}_\ell = \emm^{\ov{\bm \sigma}_{i,j}^{\vec{d}}}_\ell$. 
            For any $\ell \in \left[\left|\ov{\bm \sigma}_{r,j}^{\vec{d}}\right|\right]$, $U^{\ov{\bm \sigma}_{r,j}^{\vec{d}}}_{\ell} = U^{\ov{\bm \sigma}_{i,j}^{\vec{d}}}_{\ell+ \left|\ov{\bm \sigma}_{i,r}^{\vec{d}}\right|}$, and $\emm^{\ov{\bm \sigma}_{r,j}^{\vec{d}}}_\ell = \emm^{\ov{\bm \sigma}_{i,j}^{\vec{d}}}_{\ell+ \left|\ov{\bm \sigma}_{i,r}^{\vec{d}}\right|}$.
            For $\ell = \left|\ov{\bm \sigma}_{i,j}^{\vec{d}}\right|$, $\emm_\ell^{\ov{\bm \sigma}_{i,j}^{\vec{d}}} = \mm(X_i, X_r, X_j) \le \squareOmega(1-d_i, 1-d_r, 1-d_j)$, since $h(X_i) \le 1- d_i$, $h(X_r) \le 1- d_r$ and $h(X_j) \le 1- d_j$. Together, we have:
            \[\displaystyle{\max_{\ell \in \left[\left|\ov{\bm \sigma}_{i,j}^{\vec{d}}\right|\right]} \min\left\{h\left({U^{\ov{\bm \sigma}_{i,j}^{\vec{d}}}_\ell}\right), \emm^{\ov{\bm \sigma}_{i,j}^{\vec{d}}}_\ell\right\} \le \min\left\{P^{\vec{d}}_{i,r}, P^{\vec{d}}_{r,j}, \squareOmega(1-d_i, 1-d_r, 1-d_j)\right\}}\]
        \end{itemize}
        We define a generalized elimination ordering $\ov{\bm \sigma}^{\vec{d}} = \left(\ov{\bm \sigma}^{\vec{d}}_{i,j},\ov{\bm \sigma}^{\vec{d}}_{j,i}, X_i, X_j\right)$. Hence, \[\displaystyle{\max_{\ell \in \left[\left|\ov{\bm \sigma}^{\vec{d}}\right|\right]} \min\left\{h\left({U^{\ov{\bm \sigma}^{\vec{d}}}_\ell}\right), \emm^{\ov{\bm \sigma}^{\vec{d}}}_\ell\right\} \le \max\left\{P^{\vec{d}}_{i,j}, P^{\vec{d}}_{j,i}\right\}}.\]
    \end{itemize}    
    Combining these two cases and applying these arguments to all possible $\vec{d}$, we have
    \[\osubw(\calH) \le \min_{\vec{d} = (d^-_1, d^+_1, d^-_2, d^+_2, \cdots, d^-_k, d^+_k)}\min\left\{\min_{i \in [k]} 2 - d_i, \min_{i,j \in [k]: i<j} \max\left\{P^{\vec{d}}_{i,j}, P^{\vec{d}}_{j,i}\right\}\right\} = \squareC_k.\]
    \end{proof}

    \subsection{Clustered Hypergraphs}

    \begin{definition}[Clustered Hypergraph]
    A hypergraph $\calH = (\calV, \calE)$ is clustered if for any pair of vertices $X_i,X_j\in \calV$, there exists some hypergraph $e\in \calE$ such that $X_i, X_j \in e$.    
    \end{definition}
    
    \begin{lemma}
    \label{lem:clustered}
        A clustered hypergraph $\calH = (\calV, \calE)$ satisfies the following properties:
        \begin{itemize}[leftmargin=*]
            \item $\subw(\calH) = \rho^*(\calH)$.
            \item for any generalized elimination ordering $\ov{\bm \sigma} \in \pi(\calV)$, $U^{\ov{\bm \sigma}}_1 = \calV$; and therefore $U^{\ov{\bm \sigma}}_i \subseteq U^{\ov{\bm \sigma}}_1$ for any $i \in [|\ov{\bm \sigma}|]$. 
        \end{itemize}
    \end{lemma}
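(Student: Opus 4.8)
The plan is to prove the second property first, since the first property follows from it once we identify $\subw(\calH)$ with the information-theoretic form of $\rho^*(\calH)$.

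For the second property, I would fix any GVEO $\ov{\bm\sigma} = (\bm X_1, \ldots, \bm X_{|\ov{\bm\sigma}|})$ and recall that $U_1^{\ov{\bm\sigma}} = U_\calH(\bm X_1) = \bigcup_{\bm Z \in \partial_\calH(\bm X_1)} \bm Z$, where $\partial_\calH(\bm X_1)$ collects the hyperedges overlapping $\bm X_1$. Since $\bm X_1$ is non-empty, pick any $X \in \bm X_1$. For every other vertex $Y \in \calV$, the clustered hypothesis supplies a hyperedge $e \in \calE$ with $X, Y \in e$; because $X \in \bm X_1 \cap e$ we have $e \in \partial_\calH(\bm X_1)$, and hence $Y \in e \subseteq U_1^{\ov{\bm\sigma}}$. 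Applying the same argument to any second vertex also places $X$ itself in $U_1^{\ov{\bm\sigma}}$, so $U_1^{\ov{\bm\sigma}} = \calV$. The ``therefore'' clause is then immediate: every $U_i^{\ov{\bm\sigma}}$ is a subset of the remaining vertex set $\calV_i^{\ov{\bm\sigma}} \subseteq \calV = U_1^{\ov{\bm\sigma}}$.

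For the first property, I would start from the GVEO form of the submodular width (Proposition~\ref{prop:subw:vo}), namely $\subw(\calH) = \max_{\bm h \in \Gamma \cap \ed} \min_{\ov{\bm\sigma}} \max_i h(U_i^{\ov{\bm\sigma}})$. Using the second property together with the monotonicity of $\bm h$, for every $\ov{\bm\sigma}$ the inner maximum equals $h(\calV)$: it is at least $h(U_1^{\ov{\bm\sigma}}) = h(\calV)$, and at most $h(\calV)$ since each $U_i^{\ov{\bm\sigma}} \subseteq \calV$. Hence the $\min$ over $\ov{\bm\sigma}$ also equals $h(\calV)$, collapsing the nested expression to $\subw(\calH) = \max_{\bm h \in \Gamma \cap \ed} h(\calV)$. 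It then remains to show that this polymatroid bound coincides with $\rho^*(\calH)$. The upper bound $\max_{\bm h} h(\calV) \le \rho^*(\calH)$ is exactly Proposition~\ref{prop:agm-bound}. For the matching lower bound I would use LP duality: the fractional edge cover LP has as its dual optimum a fractional independent set $(\lambda_X)_{X \in \calV}$ with $\sum_{X \in \calV} \lambda_X = \rho^*(\calH)$ and $\sum_{X \in e} \lambda_X \le 1$ for every $e \in \calE$. Defining the additive set function $h(\bm S) \defeq \sum_{X \in \bm S} \lambda_X$ yields a polymatroid (additive functions are monotone, submodular with equality, and vanish at $\emptyset$) that is edge-dominated precisely by the dual feasibility constraints and satisfies $h(\calV) = \rho^*(\calH)$, so $\max_{\bm h} h(\calV) \ge \rho^*(\calH)$.

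I expect the only genuinely delicate step to be the lower-bound direction $\max_{\bm h} h(\calV) \ge \rho^*(\calH)$: one must exhibit a \emph{feasible} edge-dominated polymatroid whose value at $\calV$ attains $\rho^*$, and the cleanest route is the duality argument above, taking care that the dual of the fractional-cover LP is exactly the fractional-independent-set LP (so its optimum is $\rho^*$ by strong duality) and that the induced additive function is indeed edge-dominated. Everything else---the clustered argument for $U_1^{\ov{\bm\sigma}} = \calV$ and the collapse of the $\max/\min/\max$ nest---is routine. It is worth noting that the equality $\max_{\bm h} h(\calV) = \rho^*(\calH)$ holds for \emph{arbitrary} hypergraphs; the clustered hypothesis is used solely to force $U_1^{\ov{\bm\sigma}} = \calV$ and thereby identify $\subw(\calH)$ with this polymatroid bound.
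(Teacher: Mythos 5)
The paper gives no proof of Lemma~\ref{lem:clustered}: the statement appears bare in the appendix, so there is no paper argument to compare yours against, and it must stand on its own. It does; your proof is correct. The second bullet is handled exactly right: for $X \in \bm X_1$ and any other $Y \in \calV$, the clustered hypothesis yields a hyperedge containing both; that hyperedge meets $\bm X_1$, hence lies in $\partial_\calH(\bm X_1)$, so $U_1^{\ov{\bm\sigma}} = \calV$, and $U_i^{\ov{\bm\sigma}} \subseteq \calV_i^{\ov{\bm\sigma}} \subseteq \calV$ gives the rest. The collapse of the GVEO form of $\subw$ (Proposition~\ref{prop:subw:vo}) to $\max_{\bm h \in \Gamma \cap \ed} h(\calV)$ by monotonicity is likewise correct. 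You also correctly isolate the only step needing real content: the paper's Proposition~\ref{prop:agm-bound} supplies only the direction $h(\calV) \le \rho^*(\calH)$, and the matching lower bound requires exhibiting a witness polymatroid. Your construction, taking an optimal fractional vertex packing $(\lambda_X)_{X \in \calV}$ (the LP dual of the fractional edge cover) and setting $h(\bm S) \defeq \sum_{X \in \bm S} \lambda_X$, is the standard argument and works: modularity gives a polymatroid, dual feasibility gives edge-domination, and strong duality gives $h(\calV) = \rho^*(\calH)$. Two harmless pedantries: the paper restricts covers to $W : \calE \to [0,1]$, but the cap at $1$ is never binding at an optimum, so duality against the packing LP is unaffected; and both the lemma and your proof implicitly assume every vertex lies in some hyperedge (automatic for a clustered $\calH$ with $|\calV| \ge 2$), without which $U_1^{\ov{\bm\sigma}} = \calV$ fails and the cover LP is infeasible. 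Your closing observation, that $\max_{\bm h \in \Gamma \cap \ed} h(\calV) = \rho^*(\calH)$ holds for arbitrary hypergraphs and clusteredness is used only to force $U_1^{\ov{\bm\sigma}} = \calV$, is exactly the right reading of why the lemma is true.
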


    \begin{lemma}
    \label{lmm:3-pyramid}
        For the following hypergraph representing a 3-pyramid: \begin{align}
            \calH = \left(\{Y, X_1, X_2,X_3\},\quad \{\{Y,X_1\}, \{Y,X_2\}, \{Y, X_3\}, \{X_1,X_2,X_3\}\}\right)
            \label{eq:H:pyramid-3}
        \end{align}
        $\osubw(\calH) = 2 - \frac{1}{\omega}$. 
    \end{lemma}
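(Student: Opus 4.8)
The plan is to exploit the fact that the $3$-pyramid is a \emph{clustered} hypergraph in the sense of Lemma~\ref{lem:clustered}: every pair of vertices lies in a common edge (the pairs $\{Y,X_i\}$ in the tip edges, and the pairs $\{X_i,X_j\}$ in the base edge $\{X_1,X_2,X_3\}$). Hence $U_1^{\ov{\bm\sigma}}=\calV$ for every GVEO and $U_i^{\ov{\bm\sigma}}\subseteq U_1^{\ov{\bm\sigma}}$ for all $i$, so by Proposition~\ref{prop:osubw:trimmed} only the first elimination step survives and
\[
  \osubw(\calH)=\max_{\bm h\in\Gamma\cap\ed}\ \min\!\Bigl(h(\calV),\ \min_{\bm X_1\subseteq\calV}\emm_\calH(\bm X_1)\Bigr),
\]
exactly as in Eq.~\eqref{eq:osubw:trimmed:clique}. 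I would then enumerate the relevant first eliminations: eliminating $Y$ splits its three tip edges into two matrices, yielding $\mm(X_1;X_2X_3;Y)$ (one-vs-two) or, with a group-by, $\mm(X_1;X_3;Y\mid X_2)$, while eliminating a base vertex $X_1$ yields $\mm(Y;X_2X_3;X_1)$. By symmetry of $\mm$ and of the base vertices, all single-dimension costs collapse to $A\defeq\mm(X_1;X_2X_3;Y)$, so the only two expressions to track are $A$ and the grouped cost $M_2\defeq\mm(X_1;X_3;Y\mid X_2)$.

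For the upper bound $\osubw(\calH)\le 2-\tfrac1\omega$ I would split on $\max_i h(X_i)$. If $h(X_i)\ge\tfrac1\omega$ for some $i$, the for-loop cost already suffices: using base edge-domination, monotonicity of conditioning, and tip edge-domination $h(YX_i)\le1$,
\[
  h(\calV)=h(X_1X_2X_3)+h(Y\mid X_1X_2X_3)\le 1+h(Y\mid X_i)\le 1+(1-h(X_i))=2-h(X_i)\le 2-\tfrac1\omega .
\]
Otherwise $h(X_i)<\tfrac1\omega$ for all $i$, and I would bound the cost of eliminating $Y$ by $M_2=\mm(X_1;X_3;Y\mid X_2)$. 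Using $h(X_\ell\mid X_2)\le h(X_\ell)<\tfrac1\omega$ and $h(Y\mid X_2)+h(X_2)\le h(YX_2)\le1$, each term discounting an \emph{outer} dimension is $\le h(X_1)+\gamma h(X_3)+1<(1+\gamma)/\omega+1=2-\tfrac1\omega$, while the term discounting the eliminated dimension $Y$ is $<2/\omega+\gamma+(1-\gamma)/\omega=(3-\gamma)/\omega+\gamma$. The latter is $\le 2-\tfrac1\omega$ precisely when $(\omega-2)(\omega-3)\le0$, i.e.\ throughout $\omega\in[2,3]$. Hence every MM term is $\le 2-\tfrac1\omega$, so $M_2\le 2-\tfrac1\omega$ and the min is bounded in either case.

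For the matching lower bound I would exhibit the polymatroid
\[
  h(S)\ \defeq\ \min\!\bigl(\tfrac1\omega\cdot|S\cap\{X_1,X_2,X_3\}|,\ 1\bigr)\ +\ \mathbf 1\{Y\in S\}\cdot\bigl(1-\tfrac1\omega\bigr),
\]
i.e.\ a modular base truncated at $1$ with $h(X_i)=\tfrac1\omega$, plus an independent coordinate $Y$ of entropy $1-\tfrac1\omega$. Truncating a modular function and adding an independent coordinate both preserve submodularity, so $\bm h\in\Gamma$; edge-domination holds since $h(X_1X_2X_3)=1$ and $h(YX_i)=\tfrac1\omega+(1-\tfrac1\omega)=1$. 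I would confirm monotonicity and submodularity by the diminishing-returns test, which goes through exactly for $\tfrac1\omega\in[\tfrac13,\tfrac12]$. Then $h(\calV)=2-\tfrac1\omega$, and a direct calculation gives $A=h(Y)+h(X_2X_3)+\gamma h(X_1)=(1-\tfrac1\omega)+\tfrac2\omega+\tfrac{\gamma}\omega=2-\tfrac1\omega$, while the $X_3$-discounting term of $M_2$ equals $h(X_1\mid X_2)+\gamma h(X_3\mid X_2)+h(Y\mid X_2)+h(X_2)=\tfrac{1+\gamma}\omega+1=2-\tfrac1\omega$, so $M_2\ge 2-\tfrac1\omega$; since every first elimination reduces (by symmetry and Definition~\ref{defn:emm}) to $A$ or $M_2$, the cheapest elimination is $\ge 2-\tfrac1\omega$, yielding $\osubw(\calH)\ge 2-\tfrac1\omega$.

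The main obstacle I anticipate is twofold. On the upper-bound side, the whole argument turns on the single scalar inequality $(\omega-2)(\omega-3)\le0$ governing the ``discount-$Y$'' MM term; getting the right elimination ($M_2$ with a group-by, rather than the naive $A$) is what makes this term controllable, since $A$ by itself is \emph{not} $\le 2-\tfrac1\omega$ in the low-degree case. On the lower-bound side, the delicate step is verifying completeness of the enumeration of first eliminations---checking that all choices of the first group $\bm X_1$, all edge-splits $\calA,\calB$, and all group-by sets $\bm G$ (including multi-vertex eliminations) reduce to $A$ or $M_2$ and that none falls strictly below $2-\tfrac1\omega$---together with the routine but necessary confirmation of the polymatroid axioms for the truncated construction across the full range $\omega\in[2,3]$.
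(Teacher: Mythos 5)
Your proof is correct, and its skeleton is the same as the paper's: both reduce everything to the first elimination step via the clustered property (Lemma~\ref{lem:clustered} and Proposition~\ref{prop:osubw:trimmed}), both prove the upper bound by a threshold case analysis pitting $h(\calV)$ against the grouped term $\mm(X_i;X_j;Y\mid X_k)$, and your lower-bound witness (modular base weights $\tfrac1\omega$ truncated at $1$, plus an independent $Y$ of entropy $1-\tfrac1\omega$) is \emph{exactly} the polymatroid the paper exhibits. The one substantive difference is the organization of the upper bound: the paper splits into three cases keyed to conditional entropies with $\Delta=1-\tfrac1\omega$ (some $h(Y|X_i)\le\Delta$; all $h(X_i|Y)\le\Delta/2$; neither), and in its MM case the hypothesis $h(Y|X_k)>\Delta\ge 1-\Delta\ge h(X_i|X_k)$ guarantees that the $\gamma$-discount in the maximum defining $\mm$ never lands on the eliminated variable $Y$, so the ``discount-$Y$'' term never needs to be estimated; your leaner two-case split on $\max_i h(X_i)$ versus $\tfrac1\omega$ forces you to bound that term, which is exactly where your $(\omega-2)(\omega-3)\le 0$ inequality enters. (Incidentally, your argument shows the paper's second case is redundant for $k=3$; it is only needed for the general $k$-pyramid, Lemma~\ref{lmm:k-pyramid}.) On the lower bound, the step you defer---completeness of the enumeration of first eliminations---is where the paper does explicit (if somewhat sloppy) work: e.g.\ $\ov{\bm\sigma}[1]=\{X_1,X_2\}$ produces $\mm(Y;X_3;X_1X_2)$, which is not literally your $A$ or $M_2$ as an expression, but whose dimension multiset $\{1-\tfrac1\omega,\ \tfrac1\omega,\ \tfrac2\omega\}$ on this particular polymatroid coincides with $A$'s and hence also evaluates to $2-\tfrac1\omega$; blocks of the form $\{X_i,Y\}$ and larger admit only trivial splits under Definition~\ref{defn:emm} (the base edge forces one side's union to be all of $\calV$), so they contribute nothing below $h(\calV)$. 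So your asserted reduction does close after this finite check, and there is no gap---you correctly identified it as the delicate step.
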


    \begin{proof}[Proof of Lemma~\ref{lmm:3-pyramid}]
        \underline{\em Direction $\osubw(\calH) \le 2 - \frac{1}{\omega}$.} We partition all polymatroids $\bm h$ into the following cases:
        \item Case 1: there exists some $i \in [3]$ such that $h(Y|X_i) \le \Delta$. In this case, $h(\calV) \le h(Y|X_1X_2X_3) + h(X_1X_2X_3) \le h(Y|X_i) + h(X_1X_2X_3) \le \Delta + 1$;
        \item Case 2: $h(X_i|Y) \le \frac{\Delta}{2}$ for each $i \in [3]$. In this case, $h(\calV) \le h(X_1|Y) + h(X_2|Y) + h(X_3Y) \le \Delta + 1$.
        \item Case 3: none of the cases above, i.e., $h(Y|X_i) > \Delta$ for each $i \in [3]$. As $h(X_iY) \le 1$ and $h(Y|X_i) > \Delta$, we have $h(X_i) = h(X_iY) - h(Y|X_i) \le 1-\Delta$. As $h(X_iY) \le 1$ and $h(X_i|Y) > \frac{\Delta}{k-1}$, we have $h(Y) = h(X_iY) - h(X_i|Y) \le 1- \frac{\Delta}{2}$. Let $\bm A = \left\{X_2\right\}$ and $\bm B = \left\{X_3\right\}$. Note that $h(\bm A) \le (1-\Delta)$ and $h(\bm B) \le (1-\Delta)$. We assume that $\Delta$ is a parameter such that $\Delta \ge (1-\Delta)$ holds, i.e., $\Delta \ge \frac{1}{2}$. Consider a generalized elimination ordering $\ov{\bm \sigma} = (\{Y\},\bm A, \bm B, \{X_1\})$. We have 
            \begin{align*}
                \emm^{\ov{\bm \sigma}}_1  \le \mm(\bm A; \bm B; Y|X_1)  & \le 
                \gamma h(\bm A) + h(\bm B) + h(X_1Y) \le \gamma (1-\Delta)  + (1-\Delta) + 1
            \end{align*}
            Combining all the cases, we obtain the desired upper bound $\displaystyle{1+ \displaystyle{\max\left\{\Delta,\gamma(1-\Delta)+ (1-\Delta)\right\}} = 2 - \displaystyle{\frac{1}{\omega}}}$, 
            by setting $\Delta = 1 - \frac{1}{\omega}$ (note that $\Delta \ge \frac{1}{2}$).

            \smallskip
            
            \underline{\em Direction $\osubw(\calH) \ge 2 - \frac{1}{\omega}$.} We identify the following polymatroid $\bm h \in \Gamma \cap \ed$:
            \begin{itemize}[leftmargin=*]
                \item$h(X_1) = h(X_2) = h(X_3) =\frac{1}{\omega}$;
                \item$h(Y) = 1- \frac{1}{\omega}$;
                \item$h(X_1X_2) = h(X_1X_3) = h(X_2X_3) = \frac{2}{\omega}$;
                \item $h(X_1Y) = h(X_2Y) = h(X_3Y)=1$;
                \item$h(X_1X_2X_3) = 1$;
                \item$h(X_1X_2Y) = h(X_1X_3Y) = h(X_2X_3Y) =  \frac{\omega+1}{\omega}$;
                \item$h(X_1X_2X_3Y) = 2 - \frac{1}{\omega}$;
            \end{itemize}

     \begin{figure}
            \centering
        \includegraphics[width=0.3\linewidth]{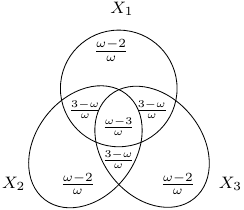}
        \caption{Diagram of the polymatroid for 3-Pyramid hypergraph.}
        \end{figure}     
    
     We distinguish an arbitrary generalized elimination ordering $\ov{\bm \sigma}$ into the following cases:
    \begin{itemize}[leftmargin=*]
        \item $\ov{\bm \sigma}[1] = \{Y\}$. $h(U^{\ov{\bm \sigma}}_1) = h(\calV) = 2 - \frac{1}{\omega}$.  $\emm^{\ov{\bm \sigma}}_1 = \min\left\{\mm(X_1, X_2, Y; X_3), \mm(X_1,X_2X_3,Y)\right\} = 2 - \frac{1}{\omega}$;
        \item $\ov{\bm \sigma}[1] = \{X_1\}$. $h(U^{\ov{\bm \sigma}}_1) = h(\calV) = 2 - \frac{1}{\omega}$. $\emm^{\ov{\bm \sigma}}_1 = \min\{\mm(X_2,Y,X_1|X_3),$ $\mm(X_2X_3, Y ,X_1),$ $\mm(X_2,X_3Y,X_1)\} = $
        
        The case with $\ov{\bm \sigma}[1] = \{X_2\}$ or $\ov{\bm \sigma}[1] = \{X_3\}$ is the same.
        \item $\ov{\bm \sigma}[1] = \{X_1, Y\}$. $h(U^{\ov{\bm \sigma}}_1) = h(\calV) = 2 - \frac{1}{\omega}$.  $\emm^{\ov{\bm \sigma}}_1 = \mm(X_2,X_3,X_1Y) = 2 - \frac{1}{\omega}$. The case with $\ov{\bm \sigma}[1] = \{X_2, Y\}$ or $\ov{\bm \sigma}[1] = \{X_3, Y\}$ is the same.
        \item $\ov{\bm \sigma}[1] = \{X_1, X_2\}$. $h(U^{\ov{\bm \sigma}}_1) = h(\calV) = 2 - \frac{1}{\omega}$.  $\emm^{\ov{\bm \sigma}}_1 = \mm(X_3,Y,X_1X_2) = 2 - \frac{1}{\omega}$. The case with $\ov{\bm \sigma}[1] = \{X_1, X_3\}$ or $\ov{\bm \sigma}[1] = \{X_2, X_3\}$ is the same.
    \end{itemize}
    Hence, for such $\bm h$, we have $\displaystyle{\max_{i \in [|\ov{\bm \sigma}|]} \min\left(h\left(U^{\ov{\bm \sigma}}_i\right), \emm^{\ov{\bm \sigma}}_i\right)} \ge 2 - \frac{1}{\omega}$.
    \end{proof}

    \begin{lemma}
        \label{lmm:k-pyramid}
        For the following hypergraph representing a $k$-pyramid: \begin{align}
            \calH = \left(\{Y, X_1, X_2,\cdots, X_k\}, \ \{\{Y,X_1\}, \{Y,X_2\}, \cdots, \{Y, X_k\}, \{X_1,X_2,\cdots,X_k\}\}\right)
        \label{eq:H:pyramid-k}
        \end{align}
        for $k \ge 3$, $\osubw(\calH) \le \displaystyle{2 - \frac{2}{\omega \cdot (k-1) -k + 3}}$. 
    \end{lemma}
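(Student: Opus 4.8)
The plan is to follow the template of the $3$-pyramid (Lemma~\ref{lmm:3-pyramid}) and exploit the fact that the $k$-pyramid is a clustered hypergraph: every pair of vertices lies in a common edge ($Y,X_i$ in the side edge $\{Y,X_i\}$, and $X_i,X_j$ in the base edge $\{X_1,\dots,X_k\}$). Hence Lemma~\ref{lem:clustered} applies, so for every GVEO $\ov{\bm\sigma}$ we have $U_1^{\ov{\bm\sigma}}=\calV$ and $U_i^{\ov{\bm\sigma}}\subseteq U_1^{\ov{\bm\sigma}}$ for all $i$. Exactly as for cliques (Eq.~\eqref{eq:osubw:trimmed:clique}), this collapses the definition to
\[
\osubw(\calH)=\max_{\bm h\in\Gamma\cap\ed}\ \min\Bigl(h(\calV),\ \min_{\emptyset\neq\bm X_1\subsetneq\calV}\emm_\calH(\bm X_1)\Bigr).
\]
Writing $D\defeq\omega(k-1)-k+3$, it then suffices to show that for every $\bm h\in\Gamma\cap\ed$ either $h(\calV)\le 2-\tfrac2D$, or some first block $\bm X_1$ (we will take $\bm X_1=\{Y\}$) has matrix-multiplication elimination cost at most $2-\tfrac2D$.

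Next I would fix a threshold $\Delta$ (optimized at the end) and split on conditional entropies. In the light case some side edge is ``thin'', i.e. $h(Y\mid X_j)\le\Delta$ for some $j$. Writing the base edge as $\bm B=\{X_1,\dots,X_k\}$ and using $h(Y\mid\bm B)\le h(Y\mid X_j)$ together with the edge-domination $h(\bm B)\le 1$, we get $h(\calV)=h(Y\mid\bm B)+h(\bm B)\le \Delta+1$, so the light case contributes at most $1+\Delta$.

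In the heavy case $h(Y\mid X_i)>\Delta$ for every $i$, which (since $h(X_iY)\le 1$) gives $h(X_i)<1-\Delta$ for all $i$. I would eliminate $Y$ by matrix multiplication, grouping by a \emph{single} base variable $X_j$ and splitting the remaining $k-1$ base variables into two sets $\bm A',\bm B'$, so that $\emm_\calH(Y)\le \mm(\bm A';\bm B';Y\mid\{X_j\})$ by Definition~\ref{defn:emm}. The crucial point is that with a single group-by variable all three ``group$+$dimension'' sets are edge-dominated: $h(\bm A'X_j)\le1$, $h(\bm B'X_j)\le1$ (subsets of the base) and $h(YX_j)\le1$ (the side edge). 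Writing $\alpha=h(\bm A'\mid X_j),\beta=h(\bm B'\mid X_j),\eta=h(Y\mid X_j)$, the value $\mm=\alpha+\beta+\eta+h(X_j)-(1-\gamma)\min(\alpha,\beta,\eta)$ can, in each of the three cases for the minimizer, absorb the shared $h(X_j)$ into whichever of $h(\bm A'X_j),h(\bm B'X_j),h(YX_j)$ is \emph{not} scaled by $\gamma$, yielding the clean bound
\[
\emm_\calH(Y)\ \le\ 1+\max\bigl(h(\bm A'),h(\bm B')\bigr)+\gamma\cdot\min\bigl(h(\bm A'),h(\bm B')\bigr).
\]
Using $h(\bm A')\le\min(1,|\bm A'|(1-\Delta))$ and $h(\bm B')\le\min(1,|\bm B'|(1-\Delta))$ (from $h(X_i)<1-\Delta$ and subadditivity) and balancing the split gives at most $1+(1-\Delta)\bigl(\lceil\frac{k-1}2\rceil+\gamma\lfloor\frac{k-1}2\rfloor\bigr)$.

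Finally one optimizes $\Delta$ (as in the $3$-pyramid, where $\Delta=1-\frac1\omega\ge\frac12$) to equate $1+\Delta$ with the heavy bound and match $2-\tfrac2D$; for $k=3$ this already recovers $2-\frac1\omega$. The main obstacle I anticipate is the heavy case for larger $k$: the single-group-by/balanced-split estimate above is lossy when $k$ is even, and more generally does not by itself reach $2-\tfrac2D$ for all $\omega\in[2,3]$, because bounding each base group only through subadditivity ignores the strong correlations forced by the single base edge ($h(\bm B)\le1$). Overcoming this seems to require either a finer, possibly multi-level, choice of the matrix-multiplication elimination (larger group-by sets combined with the bound $h(Y)<1-\delta$ coming from a second threshold on $h(X_i\mid Y)$, as in the auxiliary Case~2 of Lemma~\ref{lmm:3-pyramid}), or, equivalently, the explicit construction of a dual Shannon inequality for the LP underlying Eq.~\eqref{eq:osubw:trimmed:clique} whose coefficients witness the exponent $2-\tfrac2D$. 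I expect the bookkeeping of this optimization — tracking which of the three terms in Eq.~\eqref{eq:mm} dominates while balancing the base-edge cap against the per-variable bounds $h(X_i)<1-\Delta$ — to be the technically demanding step, whereas the clustered reduction and the single-variable-group-by estimate are routine.
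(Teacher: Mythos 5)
Your setup is sound and, for odd $k$, your proposal is essentially the paper's proof: the clustered collapse to the first elimination step, the light case $h(Y|X_j)\le\Delta$ giving $h(\calV)\le 1+\Delta$, and the heavy-case elimination of $Y$ via $\mm(\bm A';\bm B';Y|X_j)$ with a single group-by vertex are all correct (your ``clean bound'' $1+\max(h(\bm A'),h(\bm B'))+\gamma\min(h(\bm A'),h(\bm B'))$ checks out), and balancing $\Delta=(1-\Delta)\frac{(k-1)(\omega-1)}{2}$ yields exactly $2-\frac{2}{\omega(k-1)-k+3}$. The genuine gap, which you yourself flag, is the even-$k$ case: your bound $1+(1-\Delta)\bigl(\lceil\frac{k-1}{2}\rceil+\gamma\lfloor\frac{k-1}{2}\rfloor\bigr)$ optimizes to $2-\frac{2}{\omega(k-2)-k+6}$, and since $\omega(k-2)-k+6$ exceeds $\omega(k-1)-k+3$ by $3-\omega$, this is strictly weaker than the claimed exponent for every $\omega<3$. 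So as written the proposal proves the lemma only for odd $k$.

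Moreover, the repair you suggest (``larger group-by sets'') goes in the wrong direction: in the two terms of Eq.~\eqref{eq:mm} where $\gamma$ does not sit on $Y$, the quantity $h(Y|\bm G)+h(\bm G)=h(Y\bm G)$ is no longer edge-dominated once $|\bm G|\ge 2$, because $\{Y\}\cup\bm G$ lies in no hyperedge; each additional group-by vertex then costs as much as a dimension vertex, and the bound degrades further. The paper's fix for even $k$ is the opposite: an \emph{empty} group-by. One adds the second-threshold case as a separate combinatorial branch: if $h(X_i|Y)\le\frac{\Delta}{k-1}$ for all $i$, then $h(\calV)\le\sum_{i\in[k-1]}h(X_i|Y)+h(X_kY)\le 1+\Delta$ outright; otherwise $h(Y)<1-\frac{\Delta}{k-1}$, and one eliminates $Y$ via $\mm(\bm A;\bm B;Y)$ where $\bm A,\bm B$ split \emph{all} $k$ base vertices into halves of size $\frac k2$ (valid in Definition~\ref{defn:emm} with $\bm G=\emptyset$ since $\bm A\cap\bm B=\{Y\}$ there). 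Under $(1-\Delta)\frac k2\le 1-\frac{\Delta}{k-1}$ — which holds at the optimum because $\omega\ge 2$ — this gives $\mm(\bm A;\bm B;Y)\le(\omega-1)(1-\Delta)\frac k2+1-\frac{\Delta}{k-1}$, and with the same $\Delta=\frac{(\omega-1)(k-1)}{\omega(k-1)-k+3}$ (so $1-\Delta=\frac 2D$ and $\frac{\Delta}{k-1}=\frac{\omega-1}{D}$) the right-hand side equals $1+\Delta$: the improved bound on $h(Y)$ exactly compensates the parity loss. Your proposal needs this extra case split and the empty-group-by elimination to become a complete proof.
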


    \begin{proof}[Proof of Lemma~\ref{lmm:k-pyramid}]We partition all polymatroids $\bm h$ into the following two cases:
    \begin{itemize}[leftmargin=*]
        \item Case 1: there exists some $i \in [k]$ such that $h(Y|X_i) \le \Delta$. In this case, $h(\calV) \le h(Y|X_1X_2\cdots X_k) + h(X_1X_2\cdots X_k) \le h(Y|X_i) + h(X_1X_2\cdots X_k) \le \Delta + 1$;
        \item Case 2: $h(X_i|Y) \le \frac{\Delta}{k-1}$ for each $i \in [k]$. In this case, $h(\calV) \le \sum_{i\in [k-1]}h(X_1|Y) + h(X_kY) \le \Delta + 1$.
        \item Case 3: none of the cases above, i.e., $h(Y|X_i) > \Delta$ for each $i \in [k]$. As $h(X_iY) \le 1$ and $h(Y|X_i) > \Delta$, we have $h(X_i) = h(X_iY) - h(Y|X_i) \le 1-\Delta$. As $h(X_iY) \le 1$ and $h(X_i|Y) > \frac{\Delta}{k-1}$, we have $h(Y) = h(X_iY) - h(X_i|Y) \le 1- \frac{\Delta}{k-1}$.

        We further distinguish two more cases depending on whether $k$ is even or odd. 
        \begin{itemize}[leftmargin=*]
            \item {\bf When $k$ is odd.} 
            Let $\bm A = \left\{X_2, X_3, \cdots, X_{\frac{k+1}{2}}\right\}$ and $\bm B = \left\{X_{\frac{k+1}{2}+1}, X_{\frac{k+1}{2}+2}, \cdots, X_k \right\}$. Note that $h(\bm A) \le (1-\Delta) \cdot \frac{k-1}{2}$ and $h(\bm B) \le (1-\Delta) \cdot \frac{k-1}{2}$. In this case, we assume that $\Delta > (1-\Delta) \cdot\frac{k-1}{2}$. Consider a generalized elimination ordering $\ov{\bm \sigma} = (\{Y\},\bm A, \bm B, \{X_1\})$. We have 
            \begin{align*}
                \emm^{\ov{\bm \sigma}}_1  \le \mm(\bm A; \bm B; Y|X_1)  & \le 
                \gamma h(\bm A) + h(\bm B) + h(X_1Y) \le (\gamma+1)(1-\Delta) \cdot \frac{k-1}{2} + 1
            \end{align*}
            Combining all the cases, we obtain the desired upper bound: \[1+ \displaystyle{\max\left\{\Delta, (\gamma+1)(1-\Delta) \cdot \frac{k-1}{2} \right\}} = \displaystyle{2 - \displaystyle{\frac{2}{\omega \cdot (k-1) - k + 3}}}.\]
            
            \item {\bf When $k$ is even.} We partition all polymatroids $\bm h$ into the following two cases:  Let $\bm A = \left\{X_1, X_2, X_3, \cdots, X_{\frac{k}{2}}\right\}$ and $\bm B = \left\{X_{\frac{k}{2}+1}, X_{\frac{k}{2}+1}, \cdots, X_k \right\}$. Note that $h(\bm A) \le (1-\Delta) \cdot \frac{k}{2}$ and $h(\bm B) \le (1-\Delta) \cdot \frac{k}{2}$. We assume that $\Delta$ is a parameter such that $1-\frac{\Delta}{k-1} > (1-\Delta) \cdot\frac{k}{2}$. Consider a generalized elimination ordering $\ov{\bm \sigma} = (\{Y\}, \bm A, \bm B)$. We have \begin{align*} \emm^{\ov{\bm \sigma}}_1  \le \mm(\bm A; \bm B; Y)  & \le \gamma h(\bm A) + h(\bm B) + h(Y) \le \gamma (1-\Delta) \cdot \frac{k}{2} + (1-\Delta) \cdot \frac{k}{2} + 1 -\frac{\Delta}{k-1}\end{align*} Combining all the cases, we obtain the desired upper bound: \[1+ \displaystyle{\max\left\{\Delta,\gamma (1-\Delta) \cdot \frac{k}{2} + (1-\Delta) \cdot \frac{k}{2} - \frac{\Delta}{k-1}\right\}} \ge \displaystyle{2 - \displaystyle{\frac{2}{\omega \cdot (k-1) - k + 3}}}\]
         \end{itemize}

    \end{itemize} 
    \end{proof}

    \begin{lemma}
    \label{lem:example-2}
    For the following hypergraph 
    \[\calH = \left(\{X,Y,Z,W,L\},\quad \{\{X,Y,W\},\{X,Y, L\},\{X,Z\}, \{Y,Z\}, \{Z,W,L\}\}\right)\] 
    $\osubw(\calH) \le 2- \frac{1}{2(\omega -2) +3}$.
    \end{lemma}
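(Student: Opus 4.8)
The plan is to exploit the fact that $\calH$ is a \emph{clustered} hypergraph: a direct check shows that all ten pairs of vertices are covered by a hyperedge (e.g.\ $\{X,Y\},\{X,W\},\{Y,W\}\subseteq\{X,Y,W\}$ and $\{Z,W\},\{Z,L\},\{W,L\}\subseteq\{Z,W,L\}$, while $\{X,Z\},\{Y,Z\},\{X,L\},\{Y,L\}$ are edges or sub-edges). Hence Lemma~\ref{lem:clustered} applies, giving $U^{\ov{\bm\sigma}}_1=\calV$ and $U^{\ov{\bm\sigma}}_i\subseteq U^{\ov{\bm\sigma}}_1$ for every GVEO $\ov{\bm\sigma}$ and every $i$. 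Combined with Proposition~\ref{prop:osubw:trimmed}, only the first elimination step survives the trimming, so the problem collapses to
\[\osubw(\calH)=\max_{\bm h\in\Gamma\cap\ed}\min\Bigl(h(\calV),\ \min_{\emptyset\neq\bm X\subseteq\calV}\emm_\calH(\bm X)\Bigr).\]
Writing $\gamma\defeq\omega-2$ and $\tau\defeq 2-\tfrac1{2\gamma+3}$, it then suffices to bound this quantity by $\tau$ for each fixed $\bm h\in\Gamma\cap\ed$. I would abbreviate $p\defeq h(XY)$, $q\defeq h(WL)$, $r\defeq h(Z)$, all of which lie in $[0,1]$ by edge-domination.

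First I would record two ``for-loop'' bounds on $h(\calV)$ from the chain rule, monotonicity of conditioning, and edge-domination. Peeling off $Z$ last yields
\[h(\calV)=h(XZ)+h(Y\mid XZ)+h(WL\mid XYZ)\le 1+(1-r)+(1-r)=3-2r,\]
using $h(Y\mid XZ)\le h(Y\mid Z)\le 1-r$ and $h(WL\mid XYZ)\le h(WL\mid Z)\le 1-r$; peeling off $L$ then $Z$ yields
\[h(\calV)=h(XYW)+h(L\mid XYW)+h(Z\mid XYWL)\le 1+(1-p)+(1-q)=3-p-q,\]
using $h(L\mid XYW)\le h(L\mid XY)\le 1-p$ and $h(Z\mid XYWL)\le h(Z\mid WL)\le 1-q$. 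Second, eliminating $Z$ by matrix multiplication with the split $\calA=\{\{X,Z\},\{Y,Z\}\}$, $\calB=\{\{Z,W,L\}\}$ (so $\bm A\cap\bm B=\{Z\}$ and $\bm G=\emptyset$) is a valid option in the minimum defining $\emm_\calH(\{Z\})$, giving
\[\min_{\bm X}\emm_\calH(\bm X)\le \emm_\calH(\{Z\})\le \mm(XY;WL;Z)= p+q+r-(1-\gamma)\min(p,q,r)=:M.\]

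Combining these, the inner value is at most $\min(3-p-q,\,3-2r,\,M)$, and what remains is the numerical claim that this is $\le\tau$ for all $p,q,r\in[0,1]$. I would argue by contradiction: if both $3-p-q>\tau$ and $3-2r>\tau$, then with $\kappa\defeq\tfrac{3-\tau}2=\tfrac{\gamma+2}{2\gamma+3}$ we get $p+q<2\kappa$ and $r<\kappa$, and it remains to show $M\le\tau=3-2\kappa$. I would split on the minimizer of $\{p,q,r\}$. If $r=\min(p,q,r)$ then $M=p+q+\gamma r<(2+\gamma)\kappa=\omega\kappa$, which is $\le 3-2\kappa$ since $\kappa\le\tfrac{3}{\gamma+4}$ (equivalently $\gamma\le 1$). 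Otherwise, by the $p\leftrightarrow q$ symmetry assume $p=\min$, so $M=\gamma p+q+r$; using $r<\kappa$, $q\le 1$, $p<2\kappa-q$ and $1-\gamma\ge 0$ gives $M<(1-\gamma)+(2\gamma+1)\kappa=3-2\kappa$, where the final equality is exactly the identity $(2\gamma+3)\kappa=\gamma+2$. The main obstacle is precisely this last case: the crude estimate $M\le p+q+r$ discards the $(1-\gamma)\min$ term and only delivers a bound of the weaker $2-\tfrac2{\gamma+3}$ type, so one must retain the matrix-multiplication savings and, critically, use the constraint $q=h(WL)\le 1$ rather than merely $p+q<2\kappa$ to close the gap. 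No other elimination option ever needs to be inspected, which is what makes the clustered reduction the decisive simplification; indeed the same computation shows the sharper bound $\osubw(\calH)\le\tfrac{3\omega}{\omega+2}\le\tau$, with equality at $\omega=3$.
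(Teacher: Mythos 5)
Your argument is correct and proves exactly the stated bound, but by a genuinely different route than the paper. Both proofs implicitly rest on the same reduction to the first elimination step (the paper's proof only ever bounds $h(\calV)$ or $\emm^{\ov{\bm\sigma}}_1$; you make this explicit by checking that all ten vertex pairs are covered by hyperedges and invoking Lemma~\ref{lem:clustered} together with Proposition~\ref{prop:osubw:trimmed}). The real divergence is in the choice of certificates. The paper fixes $\Delta=\frac{2(\omega-1)}{2\omega-1}$ and runs a five-way case analysis on conditional entropies such as $h(XY|W)$, $h(X|Z)$, $h(L|XY)$, and in the surviving case eliminates the block $\{X,Y\}$ via the term $\mm(Z;W;XY|L)$, i.e., contracted dimension $XY$ with group-by variable $L$. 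You instead work with only three scalars $p=h(XY)$, $q=h(WL)$, $r=h(Z)$, two chain-rule bounds $h(\calV)\le 3-p-q$ and $h(\calV)\le 3-2r$, and the single matrix-multiplication option $\mm(XY;WL;Z)$, i.e., contracted dimension $Z$ with empty group-by --- a different decomposition from the paper's. I verified your numerical finish: the case $r=\min(p,q,r)$ needs precisely $\gamma\le 1$, and the case $p=\min(p,q,r)$ closes exactly at the identity $(2\gamma+3)\kappa=\gamma+2$ defining your $\kappa$, with the edge-domination bound $q=h(WL)\le 1$ being the crucial extra ingredient, as you note. Your version is shorter and more transparent than the paper's Case 1 through Case 5; what the paper's template buys is uniformity with its pyramid proofs (Lemmas~\ref{lmm:3-pyramid} and~\ref{lmm:k-pyramid} follow the same pattern).

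One caveat: your closing claim that ``the same computation shows the sharper bound $\osubw(\calH)\le\frac{3\omega}{\omega+2}$'' is not supported by your computation when $\omega<3$. In your own case split, the binding case is $p=\min$: it forces the threshold to satisfy $(2\gamma+3)\kappa\le\gamma+2$, i.e., a bound no better than $\frac{4\omega-3}{2\omega-1}=2-\frac{1}{2\omega-1}$; only the case $r=\min$ closes at $\frac{3\omega}{\omega+2}$. Concretely, at $\omega=2$ take $p$ slightly below $\frac12$, $q=1$, $r$ slightly below $\frac34$: then $3-p-q$ and $3-2r$ both exceed $\frac32$ while $M=q+r$ approaches $\frac74$, so your three inequalities cannot certify $\frac32$. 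This flaw is confined to the final aside and does not affect the lemma, which your main argument fully establishes.
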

    \paragraph{Remark.} For the hypergraph $\calH$ in Lemma~\ref{lem:example-2}, $\subw(\calH) = \frac{9}{5}$.  If $\omega =3$, $\osubw(\calH) \le 2 - \frac{1}{2 \omega -1} = \subw(\calH) = \frac{9}{5}$. If $\omega < 3$, $\osubw(\calH) < \subw(\calH)$.
    \begin{proof}
     Let $\Delta = \frac{2(\omega-1)}{2\omega -1}$.  We partition all polymatroids $\bm h$ into the following cases:
        \begin{itemize}[leftmargin=*]
            \item Case 1: $h(XY|W) \le \Delta$ or $h(ZL|W) \le \Delta$. wlog, suppose $h(XY|W) \le \Delta$. We have $h(XYZWL) \le h(XY|ZWL) + h(ZWL) \le h(XY|W) + h(ZWL) \le 1 + \Delta$.
            \item Case 2: $h(XY|L) \le \Delta$ or $h(ZW|L) \le \Delta$. This case is similar to Case 1.
            \item Case 3: $h(X|Z) \le \frac{\Delta}{2}$ and $h(Y|Z) \le \frac{\Delta}{2}$. We have  $h(XYZWL) \le h(X|ZWL) + h(Y|ZWL) + h(ZWL) \le h(X|Z) + h(Y|Z) + h(ZWL) \le 1 + \Delta$.
            \item Case 4: either $h(L|XY) \le \frac{2\Delta-1}{3}$ or $h(W|XY) \le \frac{2\Delta-1}{3}$, and either $h(Z|X) \le \frac{1+\Delta}{3}$ or $h(Z|Y) \le \frac{1+\Delta}{3}$. Wlog, suppose $h(L|XY) \le \frac{2\Delta-1}{3}$ and $h(Z|X) \le \frac{1+\Delta}{3}$. In this case, we have $h(XYZWL)\le h(L|XYW) + h(Z|XYW) + h(XYW) \le 1 + \Delta$.  
            \item Case 5: none of the cases above. As $h(XY|W) > \Delta$ and $h(XYW) \le 1$, we have $h(W) = h(XYW) -  h(XY|W) < 1- \Delta$. As $h(XY|L) > \Delta$ and $h(XYL) \le 1$, we have $h(L) < 1 -\Delta$. As $h(X|Z) > \frac{\Delta}{2}$ and $h(XZ) \le 1$, we have $h(Z) < 1- \frac{\Delta}{2}$. If either $h(L|XY) \le \frac{2\Delta-1}{3}$ or $h(W|XY) \le \frac{2\Delta-1}{3}$, but $h(Z|X) \le \frac{1+\Delta}{3}$ and $h(Z|Y) \le \frac{1+\Delta}{3}$, we have $h(X) \le 1 -\frac{1+\Delta}{3} = \frac{2-\Delta}{3}$ and $h(Y) \le 1 -\frac{1+\Delta}{3} = \frac{2-\Delta}{3}$. Hence, $h(XY)\le \frac{2(2-\Delta)}{3}$. If both $h(L|XY) > \frac{2\Delta-1}{3}$ and $h(W|XY) > \frac{2\Delta-1}{3}$, we have $h(XY) \le 1 -\frac{2\Delta-1}{3} = \frac{2(2-\Delta)}{3}$. Consider a generalized elimination ordering $\ov{\bm \sigma} = (\{XY\}, \{Z\}, \{W\}, \{L\})$. 

            \begin{align*}    
        \emm^{\ov{\bm \sigma}}_1 \le \mm(Z;W;XY|L)
        &\le\max\begin{cases}
        \gamma h(Z) + h(W) + h(XYL)&\\
        h(Z) + \gamma h(W) + h(XYL) &\\
        h(Z) + h(W) +  \gamma h(XYL)&\end{cases} \le (\omega-2) \cdot (1-\Delta) + 1 - \frac{\Delta}{2} + 1
        \end{align*}
        \end{itemize}
        Putting everything together, we obtain the desired upper bound 
        \[\max\left\{1+\Delta, (\omega-2) \cdot (1-\Delta) + 2 - \frac{\Delta}{2}\right\} = 2- \frac{1}{2(\omega -2) +3}.\] 
    \end{proof}

\end{document}